\newif\ifsubmission
\ifsubmission
\documentclass{llncs}

\else
\documentclass[11pt]{article}
\usepackage{fullpage}
\usepackage[margin=1in]{geometry}
\fi

\usepackage[utf8]{inputenc}
\usepackage[T1]{fontenc}

\usepackage{lmodern}
\usepackage{amsmath,amssymb,amsthm,mathtools}
\usepackage{enumitem}
\usepackage{bm}
\usepackage{xcolor}
\usepackage{mathtools}
\usepackage{xparse}
\usepackage{braket}
\usepackage{graphicx} 
\usepackage{verbatim}
\usepackage{amsfonts}
\usepackage{tcolorbox}
\tcbuselibrary{skins, breakable}
\usepackage{pgfplots}
\usepgfplotslibrary{fillbetween,groupplots}
\pgfplotsset{compat=1.18}

\ifsubmission \else
\usepackage{palatino}
\fi
\usepackage[normalem]{ulem}
\usepackage{framed}
\usepackage[utf8]{inputenc}
\usepackage[pdfstartview=FitH,pdfpagemode=UseNone,colorlinks,linkcolor=blue,filecolor=blue,citecolor=Violet,urlcolor=red]{hyperref}
\usepackage{cmap}
\usepackage[T1]{fontenc}
\usepackage{multirow}
\usepackage{float}
\usepackage[section]{placeins} 
\usepackage{mathtools} 
\usepackage[dvipsnames]{xcolor}
\usepackage{amsmath,amsthm,amssymb,algpseudocode,algorithm,cryptocode}
\usepackage{mathrsfs}
\usepackage[capitalize]{cleveref}
\usepackage{dashbox}
\usepackage{physics}
\usepackage{xspace} 
\usepackage{braket}
\usepackage[normalem]{ulem}
\usepackage{soul,xcolor}
\usepackage{tikz}
\usetikzlibrary{quantikz} 
\usepackage{soul}
\usepackage{enumitem}
\usepackage{breakcites}
\newtcolorbox{protocolbox}[1]{
  enhanced, breakable,
  colback=white, colframe=black, boxrule=.8pt, arc=1mm,
  left=2mm,right=2mm,top=1.2mm,bottom=1.2mm,
  title=\textbf{#1}
}

\newcommand{\E}{\mathbb{E}}

\newcommand{\bits}{\{0,1\}}
\newcommand{\Soft}{\mathrm{Soft}}

\providecommand{\secp}{\lambda}
\providecommand{\negl}{\mathsf{negl}}
\providecommand{\Adv}{\mathsf{Adv}}
\providecommand{\bbN}{\mathbb{N}}
\providecommand{\Ext}{\mathsf{Ext}}

\newcommand{\Com}{\mathsf{Com}}

\newcommand{\state}{\mathsf{st}}

\newcommand{\Ver}{\mathsf{Ver}}

\newcommand{\Rec}{\mathsf{Rec}}

\newcommand{\poly}{\mathsf{poly}}
\newcommand{\NReps}{\mathsf{NReps}}

\usepackage{aliascnt}
\input{james-macros}

\begin{document}

\ifsubmission
\title{On the Cryptographic Structure Required for Verifying Qubits}
\author{}
\institute{}
\else
\title{On the Cryptographic Structure Required for \\ Verifying Qubits}

\renewcommand{\thefootnote}{\fnsymbol{footnote}}

\author{
James Bartusek\footnotemark[1]
\and
Itay Shalit\footnotemark[2]
}

\footnotetext[1]{Columbia University. Email: \texttt{bartusek.james@gmail.com}.}
\footnotetext[2]{Stanford University. Email: \texttt{ishalit@stanford.edu}. Supported by a Shoucheng Zhang Graduate Fellowship.}
\maketitle

\setcounter{footnote}{0}
\renewcommand{\thefootnote}{\arabic{footnote}}
\date{}
\fi

\begin{abstract}

    Classically testing for the presence of anti-commuting operators on a quantum device is a critical tool underpinning recent progress in classical verification of quantum computation. While we can base such tests on certain cryptographic assumptions, known results require highly structured assumptions, e.g.\ trapdoor claw-free functions.

    In this work, we seek to explain this state of affairs by \emph{constructing} strong cryptography from (certain forms of) classical tests of anti-commutation. In particular, we formulate the notion of a ``test of non-commutation'' (ToNC), which is an interactive protocol between a quantum prover and classical verifier where, in the final round, the prover applies one of two binary observables $P_0,P_1$ depending on the verifier's challenge bit $c$. An $(\epsilon,\delta)$-ToNC is any such protocol where there exists a quantum strategy that attains $\epsilon$ advantage in making the verifier accept, but any quantum strategy with \emph{commuting} $P_0,P_1$ can attain at most $\delta$ advantage. We then show the following results.

    \begin{itemize}
        \item $(\epsilon,\delta)$-ToNC implies (classical-communication) key agreement (KA) for any $\delta < \frac{5\epsilon-1}{4}$.
        \item $(\epsilon,\delta)$-ToNC plus one-way functions implies (classical-communication) oblivious transfer (OT) for any $\delta < \frac{\epsilon^2 + \epsilon}{2}$.
    \end{itemize}

    Along the way, we develop tools for and provide the first known results on hardness amplification for \emph{post-quantum} KA and OT, where the communication is classical but the adversary may be quantum. In particular, we prove the following results of independent interest.

    \begin{itemize}
        \item Post-quantum hard-core measure theorem: For any efficiently sampleable, high min-entropy distribution $D$ over classical $(x,b)$ such that quantum circuits have advantage at most $\delta$ in predicting $b$ given $x$, there exists a sub-distribution $M \preceq D$ of density $1-\delta$ such that $b$ is (nearly) optimally quantum-hard to predict on $(x,b) \gets M$.
        \item Post-quantum interactive XOR lemma: Given any classically-interactive protocol such that quantum adversaries have advantage at most $\delta$ in guessing a private challenger bit $b$, repeating sequentially yields a protocol where quantum adversaries have advantage at most $\delta^2 + \negl(\secp)$ in guessing the XOR of the two challenger bits $b_1 \oplus b_2$.
    \end{itemize}

\end{abstract}

\ifsubmission
\else
\newpage
\tableofcontents
\newpage 
\fi

\section{Introduction}

The qubit, being the basic unit of quantum information, lies at the foundation of quantum computation and the theory of quantum mechanics itself. While one can define a qubit as a unit vector in $\mathbb{C}^2$, it can alternatively be understood as any pair of anti-commuting binary observables \cite{vidick2020fsmp}. This idea traces its origins back to the Heisenberg view of quantum mechanics, and is justified by the fact that any pair of anti-commuting binary observables is, up to an isometry, equivalent to the Pauli observables $\sigma_X$ and $\sigma_Z$ operating on $\mathbb{C}^2$. It turns out that this definition has particularly nice operational properties, as discussed at length in \cite{vidick2020fsmp}.

\paragraph{Verifying qubits.} A natural question of importance to the theory of quantum mechanics, and especially so in the early days of quantum information-processing, is whether it is possible to \emph{classically} test for the presence of a qubit on a purported quantum device. That is, can we validate the existence of anti-commuting operators just by sending classical signals and receiving classical responses? This motivates the following definition of a ``qubit test''.

\begin{quote}
\noindent\emph{\textbf{Qubit test (informal).}}
A qubit test is a classical interactive protocol between a classical (polynomial-time) verifier and a quantum (polynomial-time) prover, whose description includes two binary observables $P_0,P_1$. In the final round, the verifier sends a challenge $c\in\{0,1\}$, and the prover returns the bit obtained by applying $P_c$ to its current state. For any prover that makes the verifier accept with probability greater than some threshold $\alpha$, it must be the case that $P_0P_1 \approx -P_1P_0$.
\end{quote}

While qubit tests are of inherent interest as a bridge between the classical and quantum worlds, in recent years they have also played a key role in the construction of several more advanced classical verification protocols. For example, methods developed to test for the presence of qubits underpin certifiable randomness protocols \cite{BCMVV}, classical verification of general BQP computation \cite{Mahadev2018ClassicalVerification,natarajan2023boundingquantumvaluecompiled}, verifiable remote state preparation (and applications) \cite{Gheorghiu2019ComputationallySecureAC,zhang:LIPIcs.ITCS.2025.96,cryptoeprint:2023/1492}, and computational self-testing (and applications) \cite{metger_et_al:LIPIcs.ITCS.2021.19,Metger:2020pem}, to name a few.

\paragraph{Cryptographic instantiations.} By now, the community has developed several constructions of qubit tests. The first result was due to \cite{BCMVV}, who instantiated them\footnote{Strictly speaking, this construction does not conform to the qubit test defined above, as the final prover answers are strings instead of single bits.} from the assumption of learning with errors (LWE), or more generally from trapdoor claw-free functions that satisfy an ``adaptive hard-core bit'' property. Followup work has diversified and broadened the set of assumptions known to imply qubit tests, yielding constructions from ``plain'' trapdoor claw-free functions \cite{kahanamokuMeyer2022, 10.1007/978-3-031-38554-4_6}, quantum fully-homomorphic encryption \cite{kalai2022quantumadvantagenonlocalgame}, cryptographic group actions \cite{10.1007/978-3-031-22318-1_10}, oblivious state preparation \cite{BK25}, and the lattice isomorphism problem (LIP) \cite{10.1007/978-3-032-12287-2_8}.

However, a common pattern can be identified behind these results. In particular, all instantiations rely on mathematically ``structured'' assumptions (LWE, group actions, LIP) that imply strong forms of asymmetric cryptography such as (classical-communication) oblivious transfer. In fact, it remains open to construct qubit tests even given some basic forms of asymmetric structure, such as a (plain) trapdoor function.

\paragraph{Comparison with proofs of quantumness.} This situation can be contrasted with what is known about the relaxed goal of (classically) verifiable \emph{proofs of quantumness}. Here, the goal is to devise a protocol where quantum provers have a strictly greater advantage than classical provers, but do not guarantee anything beyond the fact that the prover is non-classical. It is known that efficiently-verifiable proofs of quantumness exist in the random oracle model (that is, from a heuristic use of unstructured cryptography) \cite{10.1145/3658665} and \emph{inefficiently}-verifiable proofs of quantumness exist if (and only if) one-way puzzles exist~\cite{MoriameCryptoCharacterization2025}, which are believed to be weaker even than one-way functions~\cite{khurana2024commitmentsquantumonewayness, Kretschmer2021, Morimae2022}. 

While proofs of quantumness have relevance to the goal of establishing quantum advantage, they do not appear to serve as a bedrock for obtaining more advanced forms of classical verification. This state of affairs brings into focus a potential dividing line between proofs of quantumness and qubit tests, both in terms of functionality (the methods behind qubit tests appear to have much more powerful applications) and feasibility (proofs of quantumness can be instantiated with hash functions, while qubit tests are only known from structured assumptions). The goal of this work is to gain a more principled understanding of why this line exists.

\paragraph{Tests of non-commutation.} In fact, we consider a weaker primitive which merely certifies that the prover's observables do not \emph{fully} commute, rather than that they approximately anti-commute. We refer to this primitive as a test of non-commutation (ToNC). It differs from a qubit test only in the soundness guarantee: For any quantum polynomial-time prover that succeeds with probability greater than $\alpha$, we require that $P_0P_1 \neq P_1P_0$ in place of the stronger conclusion that $P_0P_1 \approx -\,P_1P_0.$ We proceed to show that ToNCs imply certain strong forms of cryptography, which establishes the same for full-fledged qubit tests.

\subsection{Results}\label{subsec:results}

We give constructions of fully-secure oblivious transfer and key agreement protocols with purely classical communication, starting from (even weakly-secure) tests of non-commutation (ToNC). Along the way, we obtain the first \emph{post-quantum} hardness amplification results for bit agreement (BA) and oblivious transfer (OT), which are of independent interest. Our main novel techniques go into building weakly-secure OT, and amplifying both weakly-secure BA and OT to their fully-secure variants. In order to present these results precisely, we first fix definitions of ToNC, weak bit agreement, and OT.

\paragraph{Definitions.} We begin by formalizing our notion of ToNC. As discussed above, we consider interactive protocols where the prover's strategy consists of three parts: a quantum polynomial-time (QPT) interactive machine $P_\prep$ that interacts with a PPT verifier $V$, and two (potentially non-commuting) binary observables $P_0,P_1$. The interactive phase is denoted as \[\ket{\psi},c,a^* \gets \langle P_\prep(1^\secp),V(1^\secp)\rangle,\] where $\ket{\psi}$ is the prover's final state, $c \in \{0,1\}$ is a challenge bit sent to the prover, and $a^*$ is the ``correct'' answer bit kept private by the verifier. We then define a $(\epsilon,\delta)$-ToNC as satisfying the following properties.\footnote{The reader may notice that the verification procedure implicit in this definition is not fully general. Indeed, rather than apply an arbitrary predicate on its state and the prover's response $a$, the verifier simply checks whether $a = a^*$. In fact, in the body we define a ToNC more generally, and then show that any ToNC can be compiled into the form written here (which we call a ``normal form'' ToNC) while preserving the completeness-soundness gap.
}

\begin{itemize}
        \item \textbf{ToNC Completeness}: It holds that
        \[\Pr\left[a = a^*: \begin{array}{r}\ket{\psi},c,a^* \gets \langle P_\prep(1^\secp),V(1^\secp)\rangle \\ a \gets P_c(\ket{\psi})\end{array}\right] \geq \frac{1}{2} + \frac{\epsilon}{2}.\]
        \item \textbf{ToNC Soundness}: For any QPT strategy $\widetilde{P}_\prep, \widetilde{P}_0,\widetilde{P}_1$ such that $\widetilde{P}_0\widetilde{P}_1 = \widetilde{P}_1\widetilde{P}_0$, \[\Pr\left[a = a^* : \begin{array}{r}\ket{\psi},c,a^* \gets \langle \widetilde{P}_\prep(1^\secp),V(1^\secp)\rangle \\ a \gets \widetilde{P}_c(\ket{\psi})\end{array}\right] \leq \frac{1}{2} + \frac{\delta}{2} + \negl(\secp).\]
    \end{itemize}

It may be useful to keep a running example of a ToNC in mind. Consider the ``compiled'' CHSH game \cite{kahanamokuMeyer2022,kalai2022quantumadvantagenonlocalgame}. Here, the (honest) state $\ket{\psi}$ is a single qubit either in the standard or the Hadamard basis, and the challenge $c$ tells the (honest) prover whether to measure it in the $X+Z$ or $X-Z$ basis. In the compiled CHSH game, we have $\epsilon = 2\cos^2(\pi/8)-1 \approx 0.7$, while there are several instantiations under various cryptographic assumptions attaining the soundness bound $\delta = 0.5$.

Next, we consider standard game-based definitions for (weak) bit agreement and (weak) OT. Bit agreement is an interactive protocol that takes place between two parties $A$ and $B$, denoted as \[k_A,k_B,\tau \gets \langle A(1^\secp),B(1^\secp)\rangle,\] where $k_A \in \{0,1\}$ is $A$'s output, $k_B \in \{0,1\}$ is $B$'s output, and $\tau$ denotes the (classical) transcript of interaction that occurs between $A$ and $B$. We then define a $(\epsilon,\delta)$-BA as satisfying the following properties.
    \begin{itemize}
        \item \textbf{Lack of bias:} \[\Bigg|\Pr_{(k_A,k_B,\tau) \gets \langle A(1^\secp),B(1^\secp)\rangle}[k_A = 0] - \frac{1}{2} \Bigg| = \negl(\secp), \ \ \ \Bigg|\Pr_{(k_A,k_B,\tau) \gets \langle A(1^\secp),B(1^\secp)\rangle}[k_B = 0] - \frac{1}{2}\Bigg| = \negl(\secp),\] 
        \item \textbf{BA Correctness}: 
        \[\Pr_{(k_A,k_B,\tau) \gets \langle A(1^\secp),B(1^\secp)\rangle}[k_A = k_B] \geq \frac{1}{2} + \frac{\epsilon}{2}-\negl(\secp).\]
        \item \textbf{BA Security}: For any QPT adversary $\widetilde{E}$, 
        \[\Pr_{(k_A,k_B,\tau) \gets \langle A(1^\secp),B(1^\secp)\rangle}\left[\widetilde{E}(\tau) = k_A \ | \ k_A = k_B \right] \leq \frac{1}{2}+\frac{\delta}{2} + \negl(\secp).\]
        
    \end{itemize}

    Note that $(1,0)$-BA is the fully-secure notion, which we will just call bit agreement, or sometimes key agreement.

Finally, (weak) OT is an interactive protocol that takes place between a sender $S$ and receiver $R$, denoted as \[(b,r),(r_0,r_1) \gets \langle R(1^\secp),S(1^\secp)\rangle,\] where $(b,r) \in \{0,1\}^2$ is the output of $R$ and $(r_0,r_1) \in \{0,1\}^2$ is the output of $S$. We then define a $(\epsilon,\delta,\gamma)$-OT as satisfying the following properties.
    \begin{itemize}
        \item \textbf{OT Correctness}: It holds that \[\Pr[r = r_b : \begin{array}{r} (b,r),(r_0,r_1) \gets \langle R(1^\secp),S(1^\secp)\rangle \\ \end{array}] \geq \frac{1}{2} + \frac{\epsilon}{2} - \negl(\secp).\]
        \item \textbf{OT Receiver security}: For any QPT adversarial sender $\widetilde{S}$,
       \[\Pr[\widetilde{b} = b : (b,r),\widetilde{b} \gets \langle R(1^\secp),\widetilde{S}\rangle] \leq \frac{1}{2} + \frac{\gamma}{2} + \negl(\secp).\]
        \item \textbf{OT Sender security}: For any QPT adversarial receiver $\widetilde{R}$,
        \[\Pr[\widetilde{r} = r_0 \oplus r_1 : \begin{array}{r} \widetilde{r},(r_0,r_1) \gets \langle \widetilde{R},S(1^\secp)\rangle\end{array}] \leq \frac{1}{2} + \frac{\delta}{2} + \negl(\secp).\]
    \end{itemize}

    Note that $(1,0,0)$-OT is the fully-secure notion, which we just refer to as OT.

\paragraph{Main results: BA and OT from ToNC.} We are now in a position to state our main results, establishing that strong forms of cryptography are inherent to tests of non-commutation. In this section, whenever we say ToNC, we always mean \emph{normal-form} ToNC as defined above, where there is always exactly one correct answer bit (as discussed above).

Before stating the theorems, we note that $(\epsilon,\delta)$-ToNC is impossible whenever $\delta < \epsilon/2$. This is because, given any (potentially non-commuting) strategy that attains advantage $\epsilon$, one can derive a commuting strategy with advantage $\epsilon/2$, by sampling a random bit $c'$, running the $\epsilon$-good strategy when $c = c'$, and outputting a uniformly random bit when $c = 1-c'$. 

We first consider an implication to (classical-communication) key agreement, which holds unconditionally.

\begin{theorem}\label{thm:informal-KA}
    $(\epsilon,\delta)$-ToNC implies classical-communication key agreement for any constants $\epsilon,\delta$ such that
    \[\delta < \frac{5\epsilon-1}{4}.\] Moreover, $(\epsilon,\delta)$-ToNC with \emph{robust completeness}\footnote{Robust completeness (\cref{def:robust-completeness}) demands that the honest strategy succeeds with the same probability for any sampling of the preamble, which is a natural property that is satisfied by schemes such as the compiled CHSH and magic square games.} implies classical-communication key agreement for \emph{any} non-trivial parameters, i.e., whenever $\delta < \epsilon$.
\end{theorem}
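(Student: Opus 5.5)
The plan is to build a weak bit-agreement protocol directly from the ToNC, with Alice playing the verifier and Bob the honest prover, and then to amplify it to full key agreement using the post-quantum hardness-amplification tools (hard-core measure theorem and interactive XOR lemma) developed later in the paper.

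\textbf{Step 1 (weak BA from ToNC).} Alice runs $V(1^\secp)$ and Bob runs $P_\prep(1^\secp)$ over a public classical channel. At the end Alice holds $(c,a^*)$ and sends $c$ in the clear. Bob applies $P_c$ to $\ket{\psi}$ and obtains $a$. The outputs are $k_A := a^*$ and $k_B := a$. To remove bias, Alice additionally samples a uniformly random bit $s$, publishes it, and both parties XOR it into their outputs.

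By ToNC completeness, $\Pr[k_A=k_B]\ge \frac12+\frac\epsilon2$.

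\textbf{Step 2 (security reduction).} Suppose a QPT eavesdropper $\widetilde E$, given $\tau$ (which includes $c$), guesses $a^*$ with advantage $\eta$ unconditionally. Build a commuting prover as follows. Run the honest $P_\prep$ and keep a classical copy of the transcript. Before the final round, run $\widetilde E$ on $(\tau,0)$ and, independently, on $(\tau,1)$, and store both outputs in classical registers. Then let $\widetilde P_c$ simply read register $c$. The two observables are diagonal in a classical register, so they commute. This prover achieves advantage $\eta$, and ToNC soundness gives $\eta\le\delta$.

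\textbf{Step 3 (conditioning on agreement).} The BA definition measures security conditioned on $k_A=k_B$, so we must convert the unconditional bound. Agreement occurs with probability $p=\frac{1+\epsilon}{2}$. On the complementary event, $\widetilde E$ can be assumed in the worst case to succeed fully. This gives a conditional advantage of at most roughly $\delta' = (\delta+1-p\cdot\text{stuff})/p$. Concretely, if $q$ is the conditional success probability, then $p\,q - (1-p) \le \frac{\delta}{2}+\frac12-(1-p)$ in the appropriate form. The result is a $(\epsilon,\delta')$-BA with $\delta'$ an explicit function of $(\epsilon,\delta)$.

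\textbf{Step 4 (amplification).} Feed this weak BA into the amplification theorem. Use sequential repetition together with the post-quantum XOR lemma, so that security advantage decays like $\delta'^k$ while correctness advantage decays like $\epsilon^k$. Then use the hard-core measure theorem to extract near-uniformly hidden key bits on a dense subset, with information reconciliation / privacy amplification in the style of the classical Holenstein-type weak-KA amplification. The amplification condition, of the form $\delta'$ being sufficiently smaller than the appropriate power of $\epsilon$, should translate after plugging in $\delta'$ into exactly $\delta<\frac{5\epsilon-1}{4}$.

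\textbf{Robust-completeness case.} Here the honest success probability is the same for every preamble transcript. Eve's knowledge of $\tau$ is therefore independent of the agreement event, so conditioning costs nothing and $\delta'=\delta$ up to normalization. Any $\delta<\epsilon$ then suffices for the amplification.

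\textbf{Main obstacle.} The expected difficulty is Step 3 together with the precise amplification condition. The lossy worst-case conditioning is what produces the $\frac{5\epsilon-1}{4}$ threshold, and matching the amplification theorem's hypothesis exactly requires care with the correlation between Eve's success and the disagreement event. I would first try the crude union-type bound above, and then check whether the amplifier only needs an unconditional security gap, which might give a better threshold.
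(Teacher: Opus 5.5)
There is a genuine gap, and it is in Step~2 rather than Step~3. Your commuting prover answers both challenges with $\widetilde E$. It never uses the honest measurement, so it only yields $\eta\le\delta$.

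The paper's reduction (Lemma~\ref{lem:wba_security}) mixes the two strategies. It samples a uniform $b$, measures the honest $P_b$ on $\ket{\psi}$ to get $\hat a_b$, sets $\hat a_{1-b}\gets\widetilde E(\tau_{1-b})$, and on challenge $c$ answers $\hat a_c$. Both answers are classical, so the final observables commute. Since $b=c$ with probability $1/2$ independently of the preamble, this prover's advantage is $\frac{\epsilon+\eta}{2}$. Soundness then gives $\eta\le 2\delta-\epsilon$, which is strictly better than $\delta$ whenever $\delta<\epsilon$.

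Dividing by $\Pr[k_A=k_B]\ge\frac{1+\epsilon}{2}$ gives $\delta'=\frac{1+4\delta-3\epsilon}{1+\epsilon}$. The tight post-quantum Holenstein condition $\delta'<\frac{2\epsilon}{1+\epsilon}$ (Theorem~\ref{thm:ba-to-ka-params}) then becomes exactly $\delta<\frac{5\epsilon-1}{4}$. With your bound you get $\delta'=\frac{1+2\delta-\epsilon}{1+\epsilon}$ and hence only $\delta<\frac{3\epsilon-1}{2}$. This is smaller than $\frac{5\epsilon-1}{4}$ by $\frac{1-\epsilon}{4}$ for every $\epsilon<1$. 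Because $\frac{2\epsilon}{1+\epsilon}$ is tight for black-box $(\epsilon,\delta')$-BA, no amplifier in Step~4 can make up this loss; it has to be fixed in the reduction.

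There are also two smaller issues. In Step~3 the worst case is Eve \emph{failing} on the disagreement event. That gives $\Pr[\widetilde E(\tau)=k_A\mid k_A=k_B]\le\Pr[\widetilde E(\tau)=k_A]/\Pr[k_A=k_B]$. Assuming she succeeds fully there would give an invalid, over-optimistic bound. Your displayed inequality does reduce to the correct $pq\le\frac{1+\delta}{2}$, but the prose says the opposite.

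In Step~4 the XOR lemma is not the right tool; the paper uses it only for OT. Saying the condition ``should translate into exactly'' the threshold is not a derivation. The paper instead combines the post-quantum hard-core measure theorem with Holenstein's information-theoretic protocol, which is what yields $\delta'<\frac{2\epsilon}{1+\epsilon}$. Once that condition is in hand, your robust-completeness argument goes through even with your weaker bound. There the conditional advantage equals $\eta$, and $\eta\le\delta<\epsilon\le\frac{2\epsilon}{1+\epsilon}$.
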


Due to the nature of our techniques, we establish the above in the presence of adversaries with non-uniform \emph{classical} advice, but not necessarily non-uniform quantum advice.

Next, we build (classical-communication) OT. Here, we make use of (post-quantum) one-way functions in addition to the ToNC.

\begin{theorem}\label{thm:informal-OT}
    $(\epsilon,\delta)$-ToNC plus one-way functions implies classical-communication oblivious transfer for any constants $\epsilon,\delta$ such that
    \[\delta < \frac{\epsilon^2+\epsilon}{2}.\]
\end{theorem}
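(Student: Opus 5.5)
The plan is to build a \emph{weak} OT directly from the ToNC, and then amplify it to full OT. The amplification uses the post-quantum hardness amplification tools announced in the abstract: the post-quantum interactive XOR lemma, together with one-way functions for the standard commitment and zero-knowledge-free combiners. The weak OT comes from the following observation. In a normal-form ToNC, the prover's final state $\ket{\psi}$ determines answers to both challenges. A commuting strategy is exactly one that could output both answers $(a_0,a_1)$ simultaneously.

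\textbf{Weak OT construction.} The receiver $R$ plays the honest ToNC prover $P_\prep$ against the sender $S$, who plays $V$. The sender does not send the challenge. Instead, for each $c\in\{0,1\}$, $S$ computes the correct answer $a^*_c$ that $V$ would hold for that challenge and sets $r_c := a^*_c$. This needs $V$ to fix $a^*$ for both challenges before the last round, which we can arrange by sampling $c$ only at the end; normal form gives a well-defined $a^*_c$ given $V$'s state. The receiver picks $b$ uniformly, measures $P_b$ on $\ket{\psi}$, and outputs $r$.
\begin{itemize}
\item \emph{Correctness:} this is ToNC completeness, so the parameter is $\epsilon$.
\item \emph{Receiver security:} $S$ sees nothing depending on $b$, so $\gamma=0$.
\item \emph{Sender security:} suppose a malicious receiver guesses $r_0\oplus r_1$ with advantage $\delta'$. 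We turn it into a commuting ToNC strategy. Run $\widetilde R$ to get a guess $g$ for $a_0\oplus a_1$. To answer $c$, additionally guess $a_{1-c}$ via a fixed classical rule and output $g\oplus(\text{that guess})$. Every such strategy is classical post-processing of one measurement, hence commuting. The cleaner variant mixes: with some probability answer from the honest-like observable, otherwise use $g$. Combining $P_c$ measured honestly with the XOR guess, a counting argument shows a commuting strategy achieving roughly $(\epsilon+\delta'\cdot\epsilon)/2$ or similar. Soundness then forces $\delta'\le (2\delta-\epsilon)/\epsilon$-type bounds. I would calibrate this so that the gap condition becomes exactly $\delta<(\epsilon^2+\epsilon)/2$, i.e.\ $\delta'<\epsilon$: sender security strictly below correctness.
\end{itemize}

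\textbf{Amplification.} Given $(\epsilon,\delta',0)$-OT with $\delta'<\epsilon$, I would follow the classical Wullschleger / Damgård–Kilian–Salvail style reductions. XOR-repetition (sender secret shares $r_0,r_1$ across $k$ sequential instances) drives sender advantage to $\delta'^k$ by the post-quantum interactive XOR lemma, while correctness drops to $\epsilon^k$. Alternately, a receiver-side combiner reduces correctness errors. Choosing the repetition counts against the polynomial gap achieves $\delta'^k\ll\epsilon^k$, after which standard combiners give $(1,\negl,\negl)$ OT. One-way functions enter for two purposes. First, to make the XOR lemma applicable against quantum adversaries requiring efficient sampling/hard-core measures. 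Second, to handle malicious-sender behavior in repetition via commitments, so that receiver security stays negligible.

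\textbf{Main obstacle.} I expect the hardest step to be the sender-security reduction in the weak OT. The issue is turning an arbitrary quantum predictor of $a^*_0\oplus a^*_1$ into a \emph{commuting} pair $\widetilde P_0,\widetilde P_1$ with a quantitative gain. The difficulty is that the predictor's guess is correlated with the honest answers in an unknown way. I would first try the strategy: measure $P_c$ with probability $p$, otherwise output $g\oplus P_{1-c}$-measurement. This is invalid as stated because measuring both is not commuting. So I would instead use the hard-core measure theorem to find a dense sub-distribution on which $a^*_0$ alone is maximally hard. On its complement, the predictor yields correlated guesses that can be combined consistently, optimizing over $p$ to obtain the $(\epsilon^2+\epsilon)/2$ threshold.
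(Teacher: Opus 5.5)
\emph{Weak OT.} There is a genuine gap in both halves of the plan. A normal-form ToNC specifies the correct answer $a^*$ only for the challenge the verifier actually outputs, so $r_{1-b}:=a^*_{1-b}$ is not defined in general.

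More importantly, even when both answers exist, your sender-security reduction cannot work. The obstacle is not commutation: a prover that pre-samples $c'$, measures only $P_{c'}$, and XORs the outcome with a classical guess is commuting. The real obstacle is that $\widetilde R$ may deviate during the preamble. Then there is no honest post-preamble state to measure, and nothing forces $\widetilde R$ to keep any ability to answer either challenge.

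Concretely, add to compiled CHSH an optional branch in which the prover asks the verifier to reveal the first question $x$, which equals $a^*_0\oplus a^*_1$. The verifier then re-masks both correct answers with a fresh hidden bit. Completeness and soundness are unchanged, yet in your protocol $\widetilde R$ learns $r_0\oplus r_1$ exactly. The hard-core measure theorem does not rescue this: it concerns classical inputs, which is precisely why the paper cannot use it for OT.

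The missing idea, which is how the paper proceeds, has several parts:
\begin{itemize}
\item Run $2\lambda$ ToNC instances. The receiver's challenges $c_{R,i}=x_i\oplus z_i$ are fixed by coin flipping with committed $x_i$, and its answers are also committed.
\item The sender reveals the challenges of a pair $\{i^*,j^*\}$ only when $c_{R,i^*}\oplus c_{R,j^*}\neq c_{i^*}\oplus c_{j^*}$. So exactly one challenge matches, and $b$ stays hidden.
\item Cut-and-choose on the other pairs, plus random termination, certifies that the receiver's answers on matched challenges have advantage about $\epsilon$.
\item The reduction embeds the external instance at a random position. If the external challenge equals $c_{R,i^*}$, it outputs the committed answer, extracted inefficiently and pushed into non-uniform advice. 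Otherwise it outputs $\widetilde R$'s guess for $r_{1-b}$.
\end{itemize}
This prover is commuting because one branch outputs a stored classical bit, and it yields $\delta'\le 2\delta-\epsilon$. This is also where one-way functions enter (Naor commitments), not in the XOR lemma.

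\emph{Amplification.} Your threshold $\delta'<\epsilon$ is false. Suppose the sender transmits $r_0,r_1$ with each bit independently flipped with probability $(1-\theta)/2$. This is an unconditional $(\theta,\theta^2,0)$-OT, and $\theta^2<\theta$.

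The correct threshold is $\delta'<\epsilon^2$. After $\ell$-fold XOR you have security $\delta'^\ell$ and correctness $\epsilon^\ell$. Restoring correctness by majority needs $t\approx\epsilon^{-2\ell}$ copies with the same inputs, and the hybrid over those copies costs $t\cdot\delta'^\ell$.

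Against malicious quantum receivers you must also enforce that the same choice bit is used in every copy, since no semi-honest-to-malicious compiler is known in this setting. The paper gets this from the committed-bit property together with a second cut-and-choose, reaching $(1,1/2,0)$. It then applies the sequential XOR lemma to reach $(1,0,0)$.

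With the correct pieces, $2\delta-\epsilon<\epsilon^2$ is exactly $\delta<(\epsilon^2+\epsilon)/2$. You reproduce this target only because two errors cancel: the unproved bound $\delta'=(2\delta-\epsilon)/\epsilon$ and the invalid threshold $\epsilon$. With a correct amplifier, your bound would give only $\delta<(\epsilon^3+\epsilon)/2$.
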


Here, we actually establish the above in the presence of adversaries with non-uniform \emph{quantum} advice, but not necessarily non-uniform classical advice.\footnote{Note that the classical advice case does not necessarily follow as a special case of the quantum advice case, and indeed our particular reduction requires the use of quantum advice.}

\begin{figure}[t]
\centering
\begin{tikzpicture}
\begin{axis}[
    width=0.68\textwidth,
    height=0.52\textwidth,
    xmin=0, xmax=1.18,
    ymin=0, ymax=1,
    axis lines=left,
    xlabel={$\epsilon$},
    ylabel={$\delta$},
    xtick={0,0.2,0.4,0.6,0.8,1},
    ytick={0,0.2,0.4,0.6,0.8,1},
    domain=0:1,
    samples=300,
    clip=false,
]

% Invisible paths
\addplot[name path=upper, draw=none] {x};          % delta = epsilon
\addplot[name path=lower, draw=none] {x/2};        % delta = epsilon/2
\addplot[name path=ka,    draw=none, domain=1/3:1] {(5*x - 1)/4};
\addplot[name path=ot,    draw=none, domain=0:1]   {(x^2 + x)/2};

% Dark gray KA region, restricted to plausible ToNC region
\addplot[
    fill=gray!55,
    draw=none,
    forget plot
] fill between[
    of=ka and lower,
    soft clip={domain=1/3:1}
];

% Dark gray OT region, restricted to plausible ToNC region
\addplot[
    fill=gray!55,
    draw=none,
    forget plot
] fill between[
    of=ot and lower,
    soft clip={domain=0:1}
];

% Boundary curves
\addplot[very thick, black] {x};
\addplot[very thick, black, dashed] {x/2};
\addplot[very thick, blue!70!black, domain=1/3:1] {(5*x - 1)/4};
\addplot[very thick, red!75!black] {(x^2 + x)/2};

% Compiled CHSH game point
\addplot[
    only marks,
    mark=*,
    mark size=2.2pt,
    black
] coordinates {(0.7,0.5)};
\node[
    anchor=north,
    font=\scriptsize,
    align=center
] at (axis cs:0.7,0.47) {Compiled\\CHSH game};

% Compiled magic square game point
\addplot[
    only marks,
    mark=*,
    mark size=2.2pt,
    black
] coordinates {(1,0.5)};
\node[
    anchor=north,
    font=\scriptsize,
    align=center
] at (axis cs:1,0.47) {Compiled magic\\square game};

\end{axis}
\end{tikzpicture}
\caption{
Visualization of the parameter regimes captured by \Cref{thm:informal-KA} and \Cref{thm:informal-OT}. The solid black line is $\delta=\epsilon$, and the
dashed black line is $\delta=\epsilon/2$, so the region between is the region of parameters where (normal form) ToNC plausibly exists. The dark gray shade denotes the sub-region where we show that ToNC implies KA or OT (additionally assuming one-way functions for the latter). The blue curve is the key agreement threshold $\delta = (5\epsilon-1)/4$, and the red curve is the oblivious transfer
threshold $\delta = (\epsilon^2+\epsilon)/2$. It remains open to show that ToNC in the white region below the solid black line implies KA or OT. However, we note that ToNC with \emph{robust completeness} and any $\epsilon / 2 \leq \delta < \epsilon$ implies KA, which is not reflected in the figure above.
}
\end{figure}

Checking parameters for our running example of the compiled CHSH game where $\epsilon \approx 0.7$ and $\delta = 0.5$, we see that \[0.5 < \frac{5(0.7)-1}{4} \approx 0.63, \quad \text{and} \quad 0.5 < \frac{0.7^2 + 0.7}{2} \approx 0.6.\] As another example, one can consider the compiled \emph{magic square} game \cite{Cui2026computational}, which fits our definition of $(\epsilon,\delta)$-ToNC with $\epsilon = 1$ and $\delta = 0.5$. These are strictly better parameters than the compiled CHSH game and thus also fall in the range supported by our theorem statements.

\paragraph{Post-quantum hardness amplification.} Our main results are obtained in two steps: (1) construct weak forms of BA and OT from ToNC, and (2) amplify the hardness of the BA or OT. 

Even though our protocols utilize only classical communication, the honest parties may be quantum, and the adversaries of course may be quantum as well. Thus, we need \emph{post-quantum} hardness amplification for BA and OT. While BA and OT amplification enjoys a long history of study in the purely classical setting, to the best of our knowledge there is no prior work that considers post-quantum hardness amplification for BA or OT.

We first consider post-quantum BA amplification. Here, the adversarial Eve receives a classical input (the transcript of the protocol) and applies a quantum circuit in an attempt to guess the shared bit. In the purely classical setting, Holenstein \cite{Holenstein} gave a tight amplification result, establishing that $(\epsilon,\delta)$-BA implies fully-secure key agreement whenever $\delta < \frac{2\epsilon}{1+\epsilon}$. This is tight since $(\epsilon,\delta)$-BA exists unconditionally whenever $\delta \geq \frac{2\epsilon}{1+\epsilon}$ \cite{Holenstein}. 

This result was derived from a tight hard-core set theorem, improving on that of \cite{10.5555/795662.796290}, and stated roughly as follows. Consider any function $f$ and predicate $P$ such that for any polynomial-size classical circuit, $\Pr_{x \gets \{0,1\}^n}[C(f(x)) = P(x)] \leq 1-\frac{\delta}{2}$. Then there exists a set $S \subset \{0,1\}^n$ of density roughly $\delta$ such that $C$ has very close to $1/2$ advantage on $x \gets S$.

We give a post-quantum analogue of this tight hard-core set theorem. The fact that the predictor is now a quantum predictor requires modifying the proof of the original theorem, as well as slightly modifying the statement. We discuss these changes further in \cref{subsec:BA-amp}.

\begin{theorem}(Informal: Post-quantum hard-core measure)
    Let $\{D_\secp\}_{\secp\in \mathbb{N}}$  be any family of distributions over $(x,b)$ that can be sampled in quantum polynomial time, such that $\max_{(x,b)}\{D_\secp(x,b)\}= \negl(\secp).$ If for any polynomial-size quantum circuit $Q$,
    \[\Pr_{(x,b) \gets D_\secp}[Q(x) = b] \leq \frac{1}{2} + \frac{\delta}{2},\] then there exists a measure $M_\secp \preceq D_\secp$ of density roughly $1-\delta$ such that \[\Pr_{(x,b) \gets M_\secp}[Q(x) = b] \approx \frac{1}{2}.\]
\end{theorem}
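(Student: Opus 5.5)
We prove the contrapositive: if no measure $M \preceq D_\secp$ of density about $1-\delta$ is hard, then we build a single polynomial-size quantum circuit that predicts $b$ with advantage noticeably more than $\delta$. The natural route is Holenstein's tight hard-core set proof, which goes through the boosting/multiplicative-weights construction (Impagliazzo, Barak--Hardt--Kale). The quantum setting forces one change. A quantum predictor $Q$ is described not by a Boolean function but by an acceptance probability $q(x) = \Pr[Q(x)=1] \in [0,1]$. We therefore work with the real-valued advantage
\[
\mathrm{adv}_M(Q) = \E_{(x,b)\gets M}\bigl[(2q(x)-1)(-1)^{b}\bigr] \cdot (\text{sign conventions}).
\]
We also allow measures rather than sets. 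This is why the statement uses a measure and not a set.

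The construction is iterative. Start with $M_0$ equal to $D_\secp$ scaled to density $1-\delta$, with weight capped at $D_\secp(x,b)$. At step $t$, if $M_t$ is hard, we are done. Otherwise some circuit $Q_t$ has advantage at least $\gamma$ on $M_t$. Update multiplicatively: set $m_{t+1}(x,b) \propto m_t(x,b)\exp(-\eta\,\mathrm{corr}_t(x,b))$, where $\mathrm{corr}_t(x,b) = (2q_t(x)-1)(-1)^{b}$ sign-adjusted. Then Bregman-project back onto the set of measures $\preceq D_\secp$ with density $1-\delta$. The standard KL/regret analysis shows this cannot continue beyond $T = O(\log(1/\delta)/\gamma^2)$ rounds. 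The regret bound needs only bounded real-valued losses, so real-valued correlations are harmless. After $T$ rounds we obtain circuits $Q_1,\dots,Q_T$ whose average correlation is at least $\gamma/2$ against every density-$(1-\delta)$ submeasure of $D_\secp$.

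Next we convert this into a predictor, following Holenstein's tight step. Define $g(x) = \frac{1}{T}\sum_t(2q_t(x)-1)$. Because $g$ correlates on every heavy submeasure, the set of points where $g$ has the wrong sign, or is small, has $D$-mass less than $\delta$, up to slack. In particular, no density-$(1-\delta)$ measure can concentrate on badly predicted points. The combined predictor outputs according to a threshold or a randomized rounding of $g(x)$. Its success probability is then at least $(1-\delta)\cdot$(near $1$) plus $\delta\cdot \tfrac12$, up to slack. One then shows this exceeds $\tfrac12 + \tfrac{\delta}{2}$ by a noticeable amount, giving the contradiction.

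The remaining step, and the one I expect to be the main obstacle, is making the quantities $q_t(x)$ and the projection efficient. The combined circuit must evaluate $g(x)$. It can do so by running each $Q_t$ polynomially many times and averaging, using Chernoff bounds. This requires classical input $x$ and fresh copies of each circuit, which is fine because we assume classical rather than quantum advice.

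The harder part is that the contrapositive requires the final predictor to be uniform or of bounded size. The boosting weights depend on $D_\secp(x,b)$ through the projection, but the threshold rule needs only the $Q_t$'s. The measure itself need only exist, so no efficient sampling of $M$ is required for the non-uniform statement.

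Where sampleability of $D_\secp$ and negligible max-probability enter is in approximating normalizers. Finding each $Q_t$ is handled non-uniformly. The min-entropy condition ensures the capped measure of density $1-\delta$ exists and that the rounding losses, from discreteness of $D$, are negligible.

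Handling the real-valued rounding so that the tight constant $1-\delta$ is preserved is where most care is needed. I would first try the randomized rounding: output $1$ with probability $(1+g(x))/2$, clipped.
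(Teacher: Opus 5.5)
Your first half is a sound alternative to the paper's route. Multiplicative weights with KL projections onto $\{M\preceq D_\secp : \mu(M)=1-\delta\}$ replaces von Neumann's minimax plus sparsification (Lemma~\ref{lem:minimax-fix}). It produces the same object: a bounded function $g=\frac1T\sum_t(2q_t-1)$ whose signed correlation $\sigma(x,b):=(2b-1)g(x)$ has normalized average at least $\gamma/2$ under every density-$(1-\delta)$ submeasure of $D_\secp$. (Incidentally, the capped measure $(1-\delta)D_\secp$ exists without any min-entropy assumption.)

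The genuine gap is the conversion to a predictor, which is the whole content of the \emph{tight} bound. First, your accounting is swapped. The guarantee shows that $\{\sigma\le\theta\}$, for $\theta<\gamma/2$, has mass below $1-\delta$, not below $\delta$. So ``near $1$ on good points and $\tfrac12$ on bad ones'' would give exactly $\tfrac12+\tfrac\delta2$ up to slack, with no noticeable surplus. Second, neither rule you propose achieves even that.
\begin{itemize}
\item Predicting $\mathrm{sign}(g(x))$ is \emph{wrong}, not random, on $\{\sigma<0\}$, so it only gives success above $\delta$. Equivalently, this route certifies hard-core density only about $\tfrac{1-\delta}{2}$, the factor-$2$ loss of Impagliazzo's original lemma.
\item Outputting $1$ with probability $\tfrac12(1+g(x))$ gives success $\tfrac12+\tfrac12\E_{D_\secp}[\sigma]$. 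The guarantee only bounds this below by $\tfrac12+\tfrac\gamma4$.
\end{itemize}

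The missing idea is Holenstein's rescaled ramp with a quantile threshold, which is exactly the predictability step of Theorem~\ref{thm:Q3}. Let $\tau$ be the $(1-\delta)$-quantile of $\sigma$ under $D_\secp$. Let $S\preceq D_\secp$ be the lowest-$\sigma$ measure of mass exactly $1-\delta$, with fractional weight at the boundary. The guarantee applies to $S$, so $\E_S[\sigma]\ge\gamma/2$ and hence $\tau\ge\gamma/2$. Let $Q$ output $1$ with probability $\tfrac12(1+g(x)/\tau)$, clipped to $[0,1]$. Then $\Pr[Q(x)=b]$ equals $1$ whenever $\sigma(x,b)\ge\tau$, in particular on $D_\secp-S$. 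When $\sigma(x,b)\le\tau$ it is at least $\tfrac12(1+\sigma(x,b)/\tau)$, since clipping at $-\tau$ only helps. Therefore
\[
\Pr_{(X,B)\gets D_\secp}[Q(X)=B]\;\ge\;\delta+(1-\delta)\Bigl(\tfrac12+\tfrac{\gamma}{4\tau}\Bigr)\;=\;\tfrac12+\tfrac\delta2+\tfrac{(1-\delta)\gamma}{4\tau}.
\]
This is a noticeable surplus: losses on wrong-sign points are paid for by gains on right-sign points inside the same quantile set.

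Note that $Q$ needs $\tau$, which depends on $D_\secp$ and on the labels. It must be hardwired or estimated from samples; this, not normalization, is where the paper uses sampleability. Estimating $g(x)$ to accuracy $\alpha$ then costs only $\alpha/(2\tau)\le\alpha/\gamma$.

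The paper runs the same ramp in the other direction, for one fixed circuit. It picks $\tau$ so that $F(\tau)=\E[\Delta_x w_\tau(x)]$ is small and positive. This makes the circuit nearly useless on $M=D\cdot w_\tau$, and, through the ramp predictor, forces $\mu(M)$ up to the target density. The soft weight $w_\tau$ is what stands in for Holenstein's derandomized hard set when the distinguisher is quantum.
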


By combining our post-quantum hard-core measure theorem with the techniques of \cite{Holenstein}, we derive tight post-quantum key agreement amplification (the fact that this bound is tight is shown in \cite{Holenstein}).

\begin{theorem}
    For constants $\epsilon,\delta$, post-quantum $(\epsilon,\delta)$-BA secure against adversaries with non-uniform classical advice, implies post-quantum $(1,0)$-BA if and only if \[\delta < \frac{2\epsilon}{1+\epsilon}.\]
\end{theorem}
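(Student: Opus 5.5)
The plan follows Holenstein's classical argument, with the post-quantum hard-core measure theorem (stated just above) replacing the classical hard-core set theorem. The statement has two directions. The ``only if'' direction is inherited from \cite{Holenstein}. For $\delta \geq \frac{2\epsilon}{1+\epsilon}$ there is an information-theoretically secure $(\epsilon,\delta)$-BA: with some probability the parties share a random key through a perfect (ideal) channel, and otherwise $A$ publishes its bit. No adversary, quantum or not, can beat this security. If such a protocol could be amplified to $(1,0)$-BA, we would get unconditional key agreement, which is impossible against unbounded Eve. A quantum Eve given a classical transcript is at least as strong as a classical one, so this direction needs no new work. All the effort goes into the ``if'' direction.

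For the ``if'' direction I would use Holenstein's two-stage amplification.

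\textbf{Stage 1 (correctness amplification).} Run $k$ independent copies of the weak protocol. $A$ publishes the XOR pattern of its bits relative to a random reference, and $B$ uses the published information to decide agreement by majority/repetition. This drives correctness toward $1$. Security worsens, but in a way one can track: conditioned on agreement, the per-copy hard-core densities combine so that the new protocol has parameters $(\epsilon',\delta')$ with $\delta' < \frac{2\epsilon'}{1+\epsilon'}$ preserved and $\epsilon' \to 1$.

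\textbf{Stage 2 (security amplification).} Take XOR of independent copies. $A$ and $B$ output the XOR of their keys, which reduces the advantage multiplicatively, and then a final correctness-boosting step finishes the construction.

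The security proof of each stage uses the hard-core measure theorem as follows.
\begin{enumerate}
\item Let $D$ be the distribution of $(\tau, k_A)$ conditioned on $k_A = k_B$. This is quantum-sampleable by running the honest parties with rejection sampling.
\item Ensure $D$ has negligible max-probability by padding the transcript with $\secp$ random bits.
\item Apply the theorem to get a measure $M \preceq D$ of density about $1-\delta$ on which every polynomial-size quantum circuit has advantage about $0$. In the analysis, view each copy as hitting $M$ independently with probability $1-\delta$, and on $M$ treat the key as truly uniform given $\tau$.
\item Do the information-theoretic calculation as in \cite{Holenstein}, now over this idealized ``erasure'' model. If at least one copy lands in $M$, the XOR is uniform. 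The correctness-boosting steps keep the hard event sufficiently likely exactly when $\delta < \frac{2\epsilon}{1+\epsilon}$, and the constants give $(1,0)$-BA after polynomially many repetitions.
\end{enumerate}

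The main obstacle is the hybrid step in item 3: replacing each copy's key by uniform on $M$ inside a multi-copy quantum distinguisher. Classically, Holenstein's reduction embeds a challenge instance and simulates the other copies by sampling from $M$ or its complement, which requires the measure's density and membership to be approximately efficiently computable. The hard-core measure theorem must therefore be proved in a form that yields an efficiently approximable measure. My plan is to obtain it by a boosting-style argument (multiplicative weights or a min-max over quantum circuits), where each iteration adds a quantum predictor. Membership in $M$ is then estimated by running these predictors on $x$ and estimating their acceptance probabilities, which gives a polynomial-size simulator.

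That simulator needs samples $(x,b)$ conditioned on $x$, and the predictors must be fixed in the reduction. This is why the result is stated for non-uniform \emph{classical} advice: the predictor descriptions obtained from the boosting argument are classical circuit descriptions that can be hardwired as advice, whereas quantum advice states could not be reused across the many invocations the simulation requires. With that in place, a standard hybrid over the $k$ copies with a $1/\poly$ loss finishes the proof.
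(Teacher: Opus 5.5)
\paragraph{Gap: the hybrid step.} Your architecture matches the paper's: tightness is inherited from \cite{Holenstein}, and the ``if'' direction uses a hard-core measure for the base protocol, Holenstein's information-theoretic protocol on $\ell$ independent copies, and a hybrid over the copies. The proposal breaks down at exactly the step you flag as the main obstacle. You assert that the hybrid must simulate the other copies by sampling from $M$ and its complement. You then conclude that the hard-core theorem has to be re-proved, via quantum boosting, so that it yields an efficiently approximable measure of the tight density $1-\delta$. That is strictly stronger than the existence-only hard-core measure theorem you say you are using, and you give only a plan for it. Getting the tight density by boosting is the main technical difficulty of Holenstein's uniform hard-core lemma even classically. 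So as written, the crux of your proof is unsupported.

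\paragraph{Missing idea: fix the other copies non-uniformly.} The stronger theorem is also unnecessary with non-uniform advice. The paper only needs the contrapositive form (\cref{cor:hardcore-contrap}): for \emph{each} fixed $M\preceq D$ of mass slightly below $1-\delta$, exhibit \emph{some} small predictor that beats $\frac12+\gamma$ on normalized $M$.
\begin{itemize}
\item Take an attacker that guesses the final key with probability $\frac12+\beta$ conditioned on agreement.
\item Define hybrids $\mathcal{H}_i$ in which each agreeing coordinate $j\le i$ is replaced by a fresh uniform equal pair $(R_j,R_j)$ with probability $M(Z_j,X_j)/D(Z_j,X_j)$.
\item Public masks $G_j$ make the inputs to the information-theoretic protocol, and hence $\Pr[k_A=k_B]$, identical across hybrids.
\item In $\mathcal{H}_\ell$ the coordinates are secure random variables with hard-core density $\mu(M)\approx 1-\delta>\frac{1-\epsilon}{1+\epsilon}$. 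This is where $\delta<\frac{2\epsilon}{1+\epsilon}$ enters, and the attacker has negligible advantage there.
\item Pick $i$ with gap at least $\beta/(2\ell)$. By averaging, hardwire the post-randomization values $w$ of all coordinates $j\neq i$ as classical advice.
\item On input $z$, the predictor guesses a uniform $R_i$ and runs the information-theoretic protocol on $(R_i\oplus G_i,(z,G_i))$ together with $w$. It outputs $R_i$ or $1\oplus R_i$ according to whether the attacker succeeds. Its success on normalized $M$ is at least $\frac12+\frac{\beta}{2\ell}$.
\end{itemize}
No sampling from $M$, no membership estimation, and no efficiency property of $M$ is ever used. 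So the existence-only measure obtained from \cref{thm:Q3}, minimax, and sparsification suffices. The non-uniformity comes from the hardwired $(i,w)$ and the non-constructive sparsification of \cref{thm:convex-combo}, not from hardwiring boosting rounds.

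\paragraph{Tightness example.} Your tightness example is misstated. The model has no ``perfect (ideal) channel'', and a branch in which the parties share a secret key would itself be key agreement. The correct unconditional protocol is: with probability $\epsilon$, $A$ publishes $k_A$ and $B$ copies it; otherwise both output independent uniform bits. This has agreement probability $\frac12+\frac{\epsilon}{2}$ and conditional eavesdropping advantage exactly $\frac{2\epsilon}{1+\epsilon}$ against an unbounded Eve. Hence a generic amplification at or above the threshold would give unconditional key agreement, which is impossible.
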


Next, we consider post-quantum OT amplification. Here, additional complications arise in the post-quantum setting due to the fact that the adversary is \emph{interactive}, which we discuss in depth in \cref{subsec:OT-amp-overview}.

Our main technique here is a sequential repetition theorem from which our OT amplification results can be derived, which can be seen as a post-quantum analogue of the ``interactive XOR lemma'' from \cite[Lemma 3.1]{HR08}. Informally, the theorem can be stated as follows.

\begin{theorem}(Informal: Post-quantum sequential XOR lemma)
    Consider any interactive protocol between a (potentially quantum) adversary and a classical challenger who obtains a private bit $b$ at the end of interaction. Suppose that any QPT adversary has advantage at most $\delta$ in guessing $b$. Then, if we repeat the protocol $\ell$ times sequentially, no QPT adversary has advantage better than $\delta^\ell + \negl(\secp)$ in guessing $\bigoplus_{i \in [\ell]}b_i$, where $b_i$ is the challenger's (private) bit obtained in the $i$'th repetition.
\end{theorem}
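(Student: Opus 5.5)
By induction on $\ell$, it suffices to handle $\ell=2$ in the general form: if one protocol $\Pi_1$ has adversarial advantage at most $\delta_1$ and another $\Pi_2$ has advantage at most $\delta_2$, then running $\Pi_1$ followed by $\Pi_2$ gives advantage at most $\delta_1\delta_2+\negl(\secp)$ for guessing $b_1\oplus b_2$. Here we take \emph{advantage} to be $|2\Pr[\text{correct}]-1|$. For the induction, $\Pi_1$ will be the $(\ell-1)$-fold repetition.

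Fix a QPT adversary $\widetilde{A}$ for the sequential protocol. After the first execution ends, $\widetilde{A}$ holds a quantum state $\rho_{\tau_1}$ that depends on the classical transcript $\tau_1$. The challenger holds the bit $b_1$, whose conditional distribution given $\tau_1$ is determined.

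Define the ``hidden bias'' $\beta(\tau_1)=\Pr[b_1=0\mid\tau_1]-\Pr[b_1=1\mid\tau_1]$. Also define the adversary's conditional second-stage advantage $\alpha(\tau_1)$: its correlation with $b_2$ when it runs its second-stage strategy from $\rho_{\tau_1}$ and outputs the bit it would XOR in. Because the challenger for $\Pi_2$ uses fresh randomness independent of $b_1$ given $\tau_1$, the output guess $g$ decomposes as $g=g_1\oplus g_2$ up to correlation. More precisely, the overall correlation
\[
\E[(-1)^{g\oplus b_1\oplus b_2}]
\]
can be written as an expectation over $\tau_1$ of a bilinear expression. In it, the $b_1$-part contributes at most $|\beta(\tau_1)|$ and the $b_2$-part contributes at most the best second-stage correlation achievable from $\rho_{\tau_1}$.

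The key step is a Holenstein-style/HR08 argument run as a reduction. If the total correlation exceeds $\delta_1\delta_2+\eta$, then either (a) the continuation breaks $\Pi_2$, or (b) the first stage breaks $\Pi_1$. For (a), an adversary $B_2$ against $\Pi_2$ simulates the entire first execution itself, playing the challenger of $\Pi_1$, so it knows $b_1$. It then runs $\widetilde{A}$ and outputs $g\oplus b_1$. This gives correlation $\E_{\tau_1}[\alpha'(\tau_1)]$, where $\alpha'$ is the correlation conditioned on $b_1$. Since $\Pi_2$ is secure, $\alpha'$ has average at most $\delta_2$. This alone is not enough, because we need the product.

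So the reduction $B_1$ against $\Pi_1$ works as follows. It interacts with the real $\Pi_1$ challenger using $\widetilde{A}$'s first stage. It then estimates, by repeated fresh simulation of $\Pi_2$, the correlation $\gamma(\tau_1)$ between $\widetilde{A}$'s final output and $b_2$ from the current state. Finally it outputs a guess for $b_1$ with bias proportional to $\gamma(\tau_1)/\delta_2$. Write $\E[(-1)^{g\oplus b_2}\mid \tau_1,b_1]=\gamma_{b_1}(\tau_1)$. Then the total correlation is
\[
\E_{\tau_1}\big[\tfrac{\gamma_0+\gamma_1}{2}\beta+\tfrac{\gamma_0-\gamma_1}{2}\big].
\]
The standard algebra of HR08 then bounds this by $\delta_1\delta_2$, using both reductions: one with the outcome truncated to the region where the estimated $|\gamma|>\delta_2$, and one using randomized rounding of $\gamma/\delta_2$ into a guess.

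The main obstacle is that estimating $\gamma(\tau_1)$ requires running $\widetilde{A}$ many times from the same post-first-stage quantum state $\rho_{\tau_1}$, which cannot be cloned and is disturbed by measurement. Classically HR08 simply rewinds. My first attempt is to avoid estimation entirely. Since $\widetilde{A}$'s output $g$ is produced by a two-outcome measurement at the end, the second stage defines a binary observable $O$ on $\rho_{\tau_1}$, averaged over $\Pi_2$'s randomness. I would instead have $B_1$ continue running $\widetilde{A}$ once with a self-simulated $\Pi_2$ and output $g\oplus b_2$, accepting a randomized guess whose bias is exactly $\gamma_{b_1}$. This yields the linear bound above directly, and the product bound then follows from the convexity argument: for each $\tau_1$, maximize $\tfrac{\gamma_0+\gamma_1}{2}\beta+\tfrac{\gamma_0-\gamma_1}{2}$ subject to the averaged constraints $\E|\cdot|\le\delta_1$ and $\E\le\delta_2$. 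If the constraints only bound averages rather than pointwise quantities, I would instead fall back on Marriott–Watrous/Zhandry-style projective implementation to estimate $\gamma$ nondestructively up to inverse-polynomial error. That accounts for the $\negl(\secp)$ slack, taking $\eta$ inverse-polynomial and arbitrary.
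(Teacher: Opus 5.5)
Your reduction to $\ell=2$ by induction, and your rewriting of the XOR correlation as $\E_{\tau_1}[\beta(\tau_1)\,\gamma(\rho_{\tau_1})]$, match the paper. In fact, conditioned on the classical transcript the adversary's state is independent of $b_1$, and $\Pi_2$ uses fresh coins, so $\gamma_0=\gamma_1$ and your cross term vanishes. However, your main line of argument does not give the product. If $B_1$ runs $\widetilde{A}$ once against a self-simulated $\Pi_2$ and outputs $g\oplus b_2$, its correlation with $b_1$ is \emph{exactly} the XOR correlation, so security of $\Pi_1$ only yields the trivial bound $\delta_1$. The convexity step cannot repair this, for two reasons. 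First, $\E|\beta(\tau_1)|$ is an information-theoretic quantity that computational security does not bound (the transcript may well determine $b_1$). Second, even granting averaged constraints $\E|\beta|\le\delta_1$ and $\E[\gamma]\le\delta_2$, setting $\beta=\gamma=1$ on a $p$-fraction of transcripts and $0$ elsewhere satisfies both with $p=\min(\delta_1,\delta_2)$, yet gives correlation $p$ rather than $\delta_1\delta_2$.

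The product needs two ingredients that your fallback only gestures at. The first is a \emph{pointwise} bound $|\gamma(\rho)|\le\delta_2$ for every state $\rho$ that could be held after the first stage. The paper gets this by assuming $\Pi_2$ is secure against adversaries with non-uniform quantum advice (any such $\rho$ can be handed over as advice), not from any estimation procedure. The second is a single-copy procedure that outputs a real $\alpha\in[-\delta_2,\delta_2]$ with $\E[\alpha]\approx\gamma(\rho)$, so that $B_1$ can output a bit with bias $\alpha/\delta_2$. This $1/\delta_2$ rescaling is what turns correlation $>\delta_1\delta_2+\kappa$ into $>\delta_1+\kappa/\delta_2$; a single run yields a $\pm1$ sample, which cannot be rescaled. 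Marriott--Watrous alternation between $\Pi_0=I\otimes|0\rangle\langle0|$ and $\Pi_1=U^\dagger\Pi_{\mathsf{match}}U$ provides such a procedure, but not by ``estimating $\gamma$ nondestructively to inverse-polynomial error'', which is not possible in general from a single copy. Instead it samples a Jordan parameter $p_j$ with probability $|\alpha_j|^2$, so the repeat fraction has expectation exactly $\langle\phi|\Pi_1|\phi\rangle=\frac{1+\gamma(\rho)}{2}$. Each $p_j$ is an eigenvalue of $\Pi_0\Pi_1\Pi_0$ restricted to $\mathrm{im}(\Pi_0)$, and hence lies in $[\frac{1-\delta_2}{2},\frac{1+\delta_2}{2}]$ by the pointwise bound. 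Only this range property makes clipping $2\widetilde p-1$ to $[-\delta_2,\delta_2]$ cost just $O(\varepsilon+\eta)$. That eigenvalue argument is the crux of the proof and is absent from your proposal. Your HR08-style truncation reduction for the region $|\gamma|>\delta_2$ also becomes unnecessary once the pointwise bound comes from non-uniform advice, and it would be problematic quantumly, since estimating $\gamma$ disturbs the state.
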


Even given the recent progress on parallel repetition in the (post-)quantum setting \cite{10.1145/3618260.3649603,cryptoeprint:2025/1027}, the above theorem does not follow directly from prior work (see more discussion in \cref{subsec:related}). To prove it, we use Marriott-Watrous \cite{marriott2005} style rewinding arguments to establish a post-quantum analogue of Levin's isolation lemma \cite{levinIsolation85,GoldreichOnYao2011}.

Finally, we use our sequential XOR lemma to amplify weak OT to fully-secure OT. For technical reasons, the first half of this amplification process only applies to a specialized form of OT we define called ``committed-bit OT'' (\cref{def:committed-bit}), which we build from ToNC. This technique amplifies $(\epsilon,\delta,0)$ committed-bit OT to $(1,1/2,0)$ committed-bit OT for any $\delta < \epsilon^2$. The next part follows essentially directly from the sequential XOR lemma, and applies to any OT satisfying the standard game-based definition. That is, we prove the following theorem.

\begin{theorem}
    For any $\delta < 1$, post-quantum $(1,\delta,0)$-OT implies post-quantum $(1,0,0)$-OT.
\end{theorem}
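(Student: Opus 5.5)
The amplification from $(1,\delta,0)$-OT to $(1,0,0)$-OT will use the standard XOR-based combiner, with sender security argued via the post-quantum sequential XOR lemma stated above. Given a $(1,\delta,0)$-OT $\Pi$, define $\Pi^\ell$ for $\ell = \ell(\secp)$ as follows.

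\textbf{The protocol.} The receiver samples a choice bit $b$ and runs $\ell$ independent sequential executions of $\Pi$, each time with the same choice bit $b$. In the $i$-th execution the sender obtains $(r_0^i,r_1^i)$ and the receiver obtains $r^i$. The sender outputs
\[
R_0=\bigoplus_i r_0^i,\qquad R_1=\bigoplus_i r_1^i,
\]
and the receiver outputs $(b,\bigoplus_i r^i)$.

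The standard game-based definition has the receiver sampling $b$ internally. I will therefore assume $\Pi$ can be run with a receiver-chosen $b$; otherwise the receiver derandomizes by sending $b\oplus b^i$ to the sender, who swaps its outputs accordingly. This extra message leaks nothing, since each $b^i$ is uniform and hidden.

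\textbf{Correctness and receiver security.} Correctness is immediate. Each execution is correct except with negligible probability, so a union bound over polynomially many executions gives $\bigoplus_i r^i = R_b$ with probability $1-\negl(\secp)$. Receiver security follows by a hybrid over the $\ell$ executions. In hybrid $i$, the choice bit used in the first $i$ executions is replaced by an independent uniform bit. A sender who distinguishes consecutive hybrids yields a QPT sender for a single execution of $\Pi$ that guesses $b$ with non-negligible advantage. This reduction simulates the other executions itself, which is possible because every party runs in QPT. Its existence contradicts $\gamma=0$.

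\textbf{Sender security.} This is the main step. Consider the interactive game in which the challenger plays the honest sender of $\Pi$ and holds the private bit $r_0\oplus r_1$. By hypothesis, every QPT adversary guesses this bit with advantage at most $\delta$. Sequential repetition $\ell$ times is exactly the game underlying $\Pi^\ell$, since
\[
R_0\oplus R_1=\bigoplus_i (r_0^i\oplus r_1^i).
\]
A malicious receiver may behave arbitrarily and adaptively across executions, including ignoring the common-$b$ structure. The informal post-quantum sequential XOR lemma covers arbitrary interactive QPT adversaries in the repeated game, so it applies directly and bounds the advantage by $\delta^\ell+\negl(\secp)$. Taking $\ell=\omega(\log\secp)$ makes $\delta^\ell$ negligible because $\delta<1$ is a constant. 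For $\Pi^\ell$ to remain efficient, $\ell$ must also be polynomial; I must check that the XOR lemma holds for super-constant $\ell$ with only negligible additive loss. If instead it is stated only for two repetitions, I will apply it recursively $\log\ell$ times to get $\delta^{2^k}$. Since $\delta$ is constant, $k=O(\log\log\secp)$ suffices, and the accumulated $\negl$ terms remain negligible.

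\textbf{Main obstacle.} The delicate point is ensuring that the sender-security game of $\Pi$ matches the interface of the XOR lemma exactly. The lemma requires a classical challenger whose private bit is fixed at the end of each interaction, and here the sender is classical-interactive, so this holds. The adversary must also be allowed to carry its quantum state across the repetitions, which the sequential lemma does allow.
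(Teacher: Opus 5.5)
Your proposal is correct and follows the paper's proof essentially step for step: sequential repetition with the receiver sending $b\oplus b_i$, the sender XORing (and swapping) its outputs, receiver security by a hybrid, and sender security from the $\ell$-fold post-quantum XOR lemma (\cref{cor:xor-amplification}), which the paper states directly for $\ell=\poly(\secp)$ and applies with $\ell=\secp/(1-\delta)$, a choice that also covers $\delta=1-1/\poly(\secp)$. One caution: avoid your doubling fallback, because iterating the two-fold theorem with a composite first protocol nests the Marriott--Watrous estimators $\omega(1)$ levels deep and makes the reduction superpolynomial; the safe way to iterate is to take $\Gamma_1$ to be a single copy and $\Gamma_2$ the remaining copies, so each reduction only rewinds the adversary's actual tail.
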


\paragraph{Discussion.} Our work is motivated by the apparent gap between the type of cryptography known to be sufficient for qubit tests (and related forms of classical verification) and the type of cryptography sufficient for weaker primitives such as tests of quantumness.

Starting with the work of \cite{IR89}, cryptographers have studied an often elucidating characterization of cryptographic primitives into two groups: (1) those that exist unconditionally in the random oracle model (sometimes called ``minicrypt'' primitives), and (2) those that \emph{provably do not} exist unconditionally in the random oracle model (sometimes called ``cryptomania'' primitives). While the group (1) primitives can be heuristically instantiated with unstructured cryptography such as hash functions, group (2) primitives seem to require highly structured, e.g. algebraic, forms of cryptographic hardness.

While tests of quantumness are known to exist unconditionally in the (quantum) random oracle model \cite{10.1145/3658665}, and thus exist in group (1), we so far do not know such a result for qubit tests. Our work shows that (at least, in the parameter ranges covered by our theorem statements above) establishing qubit tests as a group (1) or ``minicrypt'' primitive would \emph{also} establish classical-communication key agreement and oblivious transfer as minicrypt primitives. This would be a highly surprising result, for the following reasons. First of all, even though the study of (classical-communication) key agreement goes back to the earliest days of modern cryptography \cite{diffie1976new}, there have been no candidates proposed from unstructured cryptography. This is partially explained by the result of \cite{IR89}, who showed that there is no black-box construction from one-way functions assuming that the honest parties and adversaries make only \emph{classical} queries to the one-way function. Extending this result to the setting of quantum queries remains an open question, though there has been recent progress towards this goal \cite{10.1007/978-3-031-15979-4_6,cryptoeprint:2025/639}.

In fact, current results suggest that qubit tests may require \emph{more} structure than even classical-communication key agreement. For example, consider that injective trapdoor functions, which are known to imply key agreement, are not known to imply qubit tests. Instead, current constructions of qubit tests utilize \emph{two-to-one} trapdoor functions, at the very least.

This situation can be partially explained by our result establishing that tests of non-commutation plus one-way functions imply oblivious transfer. Indeed, while injective trapdoor functions imply key agreement, they are black-box separated from oblivious transfer \cite{10.5555/795666.796557}.\footnote{Again, note that this result only considers honest parties and adversaries that make classical queries to the primitives, and the quantum-query case remains open} Thus, if one could build qubit tests from injective trapdoor functions (which trivially also imply one-way functions), this would yield classical-communication OT from injective trapdoor functions, which would be considered a very surprising result.

\subsection{Related work}\label{subsec:related}

\paragraph{Proofs of quantum memory \cite{hhan2025proofsquantummemory}.} Recently, \cite{hhan2025proofsquantummemory} introduced the notion of a proof of quantum memory (PoQM), which is a classical-communication protocol that occurs in two stages. After the first stage, there is a one-message challenge from the verifier that is answered by the prover. It has completeness $\alpha$ if there exists a prover that can cause the verifier to accept with probability $\alpha$, and it has soundness $\beta$ if any prover that maintains only \emph{classical} memory between stages can cause the verifier to accept with probability at most $\beta$.\footnote{They also consider a more general variant, where there is a number of qubits of memory that suffice for completeness, and a fewer number of qubits for which the soundness bound holds.}

PoQM is, in general, a less structured primitive than ToNC. For example, it is not hard to see that (normal form) $(\epsilon,\delta)$-ToNC implies PoQM with completeness $\frac{1}{2} + \frac{\epsilon}{2}$ and soundness $\frac{1}{2} + \frac{\delta}{2}$, as well as ``extraction probability'' $\frac{1}{2} + \frac{\epsilon}{2}$, which refers to the probability that the verifier can predict the prover's final message. Moreover, \cite{hhan2025proofsquantummemory} establish that PoQM with (1) extraction probability $1-1/\poly$ and, (2) a $1/\poly$ gap between completeness and soundness, implies (fully-secure) classical-communication key agreement. Thus, our result that $(1,\delta)$-ToNC implies KA for any $\delta < 1$ can also be achieved by an alternate route that goes through PoQM. 

However, the majority of our results do not follow from \cite{hhan2025proofsquantummemory}. This includes our constructions of KA from $(\epsilon,\delta)$-ToNC for any setting of $\epsilon < 1$ (covering important examples such as the compiled CHSH game), as well as all of our constructions of OT from ToNC (plus one-way functions), explaining the lack of approaches to building ToNC from weaker forms of asymmetric cryptography such as plain trapdoor functions.

\paragraph{Oblivious state preparation \cite{BK25}.} Recently, \cite{BK25} introduced the notion of oblivious state preparation (OSP), which is a classical-communication protocol that allows a client to prepare a BB84 quantum state in the server's memory in such a manner that the server has negligible advantage in determining whether it is in the standard or Hadamard basis. They show that OSP implies both (classical-communication) key agreement and OT.\footnote{In fact, they require one-way functions as well to obtain the notion of OT that we consider in this paper.} 

OSP is a stronger primitive than ToNC: While OSP is known to imply $(2\cos^2(\pi/8)-1,0.5)$-ToNC \cite{BK25} by compiling the CHSH game, the reverse direction is not known. Whether we can obtain (some form of) OSP from ToNC is an interesting question, though appears somewhat difficult.

\paragraph{Foundations of interactive quantum advantage \cite{TZ25}.} In another recent work, \cite{TZ25} study the cryptographic implications of verifiable quantum advantage. Their result most relevant to this work establishes that there is no construction of constant-round interactive quantum advantage from the black-box use of indistinguishability obfuscation and one-way permutations. Since constant-round $(\epsilon,\delta)$-ToNC immediately implies constant-round quantum advantage for any $\delta < \epsilon$, this establishes the same for any non-trivial constant-round ToNC. Our results differ from theirs in the following ways:
\begin{itemize}
    \item Our cryptographic implications of ToNC apply to ToNC with any number of rounds, and are not limited to black-box constructions of ToNC.
    \item Our results give evidence that ToNC cannot be constructed from random oracles, while \cite{TZ25}'s results do not rule out the possibility that ToNC can be constructed from collision-resistant hashing (which exists in the random oracle model).
\end{itemize}

\paragraph{(Post-)quantum parallel repetition \cite{10.1145/3618260.3649603,cryptoeprint:2025/1027}.} A couple of recent works have ushered in the study of hardness amplification in the post-quantum and quantum settings, in particular when protocols are \emph{parallel} repeated. However, as we discuss further in \cref{sec:tech-overview}, these results do not have implications to post-quantum key agreement and oblivious transfer amplification, for the following reasons. \cite{cryptoeprint:2025/1027} focus primarily on \emph{public-coin arguments}, where the job of the adversary is to \emph{convince the verifier to accept}, as opposed to guess a hidden bit known to a private-coin challenger. While they do have a private-coin result, it is limited to three-message protocols. \cite{10.1145/3618260.3649603} focus on \emph{three-message, quantum-communication} protocols (they also have a round-compression theorem for quantum-communication arguments, but this does not apply to classical-communication protocols). While they do derive a quantum analogue of Yao's XOR lemma, it does not apply to the classically-interactive private-coin setting that we require for OT amplification.

\subsection{Open problems}

Our work motivates several directions for future research, which we outline here.

\begin{enumerate}
    \item Can we tighten our implications from ToNC to key agreement and OT? For example, there exists a region of $\delta < \epsilon$ for which we don't know how to derive OT from $(\epsilon,\delta)$-ToNC. Moreover, can we hope to remove the reliance on one-way functions and establish OT from ToNC unconditionally?

    \item As mentioned earlier, our BA amplification techniques only apply to adversaries with non-uniform classical advice, while our OT amplification techniques only apply to adversaries with non-uniform quantum advice. Can we give post-quantum BA and OT amplification for \emph{both} types of advice? Can we give post-quantum BA and OT amplification in the \emph{uniform} setting?

    \item Can we continue to improve our understanding of the relationships between PoQM, ToNC, OSP, and OT? For example, does some variant of ToNC imply OSP? Does OT imply ToNC, or is there a separation? One current gap between what is known about OT vs. ToNC is the fact that trapdoor permutations are known to imply OT, but not known to imply ToNC.
    
    \item In this work, we consider ToNC where the prover's final-round strategy consists of \emph{binary-outcome observables}. While this is a natural setting, one could consider a more general type of protocol where the prover's final-round strategy outputs a string of bits as opposed to a single bit. In fact, this is the setting considered by the original qubit test protocol of \cite{BCMVV}. Can we show cryptographic implications from such multi-bit answer ToNCs?

    \item As we are the first to study post-quantum hardness amplification for key agreement and OT, several questions remain. For example, can we derive more general amplification results for OT? Such results would presumably extend the parameter regime in which our construction of OT from ToNC remains valid. Additionally, are there other applications of our post-quantum hard-core measure theorem or post-quantum interactive XOR lemma?
\end{enumerate}
\section{Technical overview}\label{sec:tech-overview}

In this overview, we will first cover our constructions of weak bit agreement (BA) and weak oblivious transfer (OT) from tests of non-commutation (ToNC), and proceed to discuss our post-quantum hardness amplification techniques. 

\paragraph{Normal-form ToNC.} Recall from the previous section that we work with a notion that we call ``normal-form'' ToNC. In such a ToNC, the interaction between quantum prover and classical verifier always reaches a final stage where the prover receives a challenge bit $c$ and the verifier obtains a (private) answer bit $a^*$. The prover then applies one of two binary observables $P_0$ or $P_1$ to its state $\ket{\psi}$, depending on the choice of $c$, to obtain an answer $a$. We say that the prover has $\epsilon$ advantage if $\Pr[a = a^*] = \frac{1}{2} + \frac{\epsilon}{2}$. 

In \Cref{subsec:normal-form}, we show that a more-general variant of ToNC (where the verifier may sometimes accept both or neither possible answers) can be compiled into normal-form ToNC with the same completeness-soundness gap. Hence, in the remainder of this section, we will focus on building cryptography from normal-form ToNC.

\subsection{Weak BA from ToNC}

It turns out that (normal-form) ToNC can essentially already be seen as a weak bit agreement between prover and verifier. Indeed, suppose that the prover computes $a$ but keeps it private. Then we have that the prover and verifier agree on a bit $a = a^*$ with advantage $\epsilon$.

Now consider an eavesdropper $E$ who may observe the (classical) transcript $\tau$ of interaction between the prover and verifier, and suppose that they are able to guess the value of $a^*$ with advantage $\gamma$ (meaning probability $\frac{1}{2} + \frac{\gamma}{2}$). Since the transcript $\tau$ is disjoint from the prover's state $\ket{\psi}$, such an $E$ can be used to derive a \emph{commuting} strategy with advantage $\frac{\epsilon + \gamma}{2}$. In particular, consider a prover that, prior to seeing the final challenge bit $c$, flips a random coin $b \gets \{0,1\}$. If $b = c$, it runs the honest prover strategy on its final state $\ket{\psi}$ while if $b = 1-c$, it runs $E$ on the classical transcript of the protocol. Such a prover is commuting by inspection and has advantage exactly $\frac{\epsilon + \gamma}{2}$.

Now, suppose we start with a $(\epsilon,\delta)$-ToNC according to the definition provided in the previous section (and formally in \cref{def:ToNC} and \cref{def:normal-form}). Then we obtain a bit agreement protocol with correctness advantage $\epsilon$, and where any $E$ has advantage at most $2\delta - \epsilon$ in guessing the verifier's bit (call her Alice). Due to technical reasons in the ensuing bit agreement amplification, we care in particular about $E$'s advantage \emph{conditioned} on Alice and Bob outputting the same bit. Thus, the best \emph{conditional} advantage bound we can derive comes out to \[\delta' = \frac{1+4\delta-3\epsilon}{1+\epsilon},\] yielding a $(\epsilon,\delta')$-BA protocol for such $\delta'$ according to the definition given in the previous section (and formally in \cref{def:bit-agreement}). The formal protocol and analysis are given in \cref{subsec:WBA}.

\subsection{Weak OT from ToNC}

Next, we discuss our construction of weak OT from ToNC. Building OT presents several complications that did not arise in the construction of BA. For instance, the adversary \emph{actively} participates (potentially maliciously) in the protocol. Moreover, to prepare for our OT amplification step, we will actually build a stronger notion of OT than what was presented in the previous section, which we call committed-bit OT. 

\paragraph{Skeleton protocol.} In this overview, we give a sense of the core of our idea by presenting a bare-bones OT protocol that works only if several conditions are satisfied. In the body, we utilize techniques such as cryptographic commitments (which introduces an additional assumption of one-way functions), cut-and-choose, and random-termination to ``compile'' the skeleton protocol into a full-fledged committed-bit OT. However, the main intuition can be appreciated with the following skeleton, which uses a (normal-form) $(\epsilon,\delta)$-ToNC.

\begin{itemize}
    \item Let the OT sender be the verifier and the OT receiver be the prover in two runs of the ToNC. That is, run \[\ket{\psi_0},c_0,a^*_0 \gets \langle P_\prep(1^\secp),V_\prep(1^\secp)\rangle,\quad \ket{\psi_1},c_1,a^*_1 \gets \langle P_\prep(1^\secp),V_\prep(1^\secp)\rangle.\] However, the sender (verifier) does not send $(c_0,c_1)$ to the receiver (prover) yet.
    \item The receiver samples $c_{R,0},c_{R,1} \gets \{0,1\}$ and computes answers \[a_{R,0} \gets P_{c_{R,0}}(\ket{\psi_0}), \quad a_{R,1} \gets P_{c_{R,1}}(\ket{\psi_1}).\]
    %\itaycomment{I think the condition below should be $c_{R,0}\oplus c_{R,1}\neq c_0\oplus c_1$.}
    \item The receiver sends $c_{R,0} \oplus c_{R,1}$ and the protocol only continues if $c_{R,0} \oplus c_{R,1} \neq c_0 \oplus c_1$. In this case, the sender sends $c_0,c_1$ to the receiver and outputs $r_0 = a_0^*$ and $r_1 = a_1^*$ as its OT bits.
    \item The receiver sets $b$ to be the unique bit such that $c_{R,b} = c_b$, and sets $r = a_{R,b}$.
\end{itemize}

First, note that if the receiver uses the honest prover strategy, then $r = r_b$ with advantage $\epsilon$, as in this case the challenge $c_{R,b}$ it used to produce $a_{R,b}$ was the same as the verifier's challenge $c_b$. 

Next, note that receiver security is optimal: The sender learns nothing about the receiver's bit $b$ since it learns $c_{R,0} \oplus c_{R,1}$ but neither bit individually.

Finally, it remains to prove a bound on the advantage that the (potentially adversarial) receiver has in guessing $r_{1-b}$.

\paragraph{Enforcing good behavior.} We say that an adversarial receiver is ``well-behaved'' if (1) its advantage in guessing $a_{R,b} = a^*_b$ is indeed (roughly) $\epsilon$, (2) $c_{R,0},c_{R,1}$ are chosen uniformly at random, and (3) it honestly reports the value of $c_{R,0} \oplus c_{R,1}$. In the full protocol, we enforce that the receiver is well-behaved by using a coin-flipping protocol to sample $c_{R,0},c_{R,1}$, having the receiver commit to its challenges $c_{R,0},c_{R,1},c_{R,0} \oplus c_{R,1}$ and answers $a_{R,0},a_{R,1}$, and running a cut-and-choose plus random-termination strategy that repeats the protocol and requires them to open all but one of these sets of commitments.

\paragraph{Reduction to ToNC.} Now, we argue that any well-behaved receiver with advantage $\gamma$ in guessing $r_{1-b}$ implies a \emph{commuting} prover strategy in the ToNC with advantage roughly $\frac{\epsilon + \gamma}{2}$. To see this, consider a prover $\widetilde{P}$ that utilizes the well-behaved receiver as follows. $\widetilde{P}$ simulates the receiver's interaction with the sender in two ToNC executions except that it performs the verifier itself in one of them, and forwards messages to its own verifier in the other. Which one is forwarded to the ``real'' verifier will be chosen uniformly at random, but for the sake of discussion let's say protocol 0 is forwarded and protocol 1 is simulated in $\widetilde{P}$'s head.

%\itaycomment{Here too, I think the condition should be $c_{R,0}\oplus c_{R,1}\neq c_0\oplus c_1$.}

Condition on $c_{R,0} \oplus c_{R,1} \neq c_0 \oplus c_1$ and let $b$ be such that $c_{R,b} = c_b$. If $b = 0$, then we know the receiver's answer $a_{R,b}$ will yield advantage $\epsilon$ in predicting the verifier's bit $a^*$, so $\widetilde{P}$ will return $a_{R,b}$ to its verifier. Otherwise, if $b = 1$, then we know that the receiver's guess for $r_{b-1} = r_0$ has advantage $\gamma$ in predicting the verifier's bit $a^*$, so $\widetilde{P}$ will return its guess for $r_{b-1}$ instead. Since $c_{R,0},c_{R,1}$ are random, and we randomize the choice of which of the two protocols are forwarded to the verifier, both cases happen with probability exactly 1/2, yielding an overall advantage of $\frac{\epsilon+\gamma}{2}$. By soundness of the ToNC, we have that $\frac{\epsilon + \gamma}{2} < \delta$, and thus $\gamma < 2\delta - \epsilon$. 

\paragraph{Using non-uniformity.} This is the main idea that underlies our result that $(\epsilon,\delta)$-ToNC implies $(\epsilon,2\delta - \epsilon,0)$ (committed-bit) OT. However, before moving on, we point out one complication that arises in the above strategy. On the one hand, $\widetilde{P}$ actually has to \emph{learn} the values $c_{R,0},c_{R,1},a_{R,0},a_{R,1}$ used by the adversarial receiver in the OT protocol, but on the other hand, these cannot be revealed in the transcript as they may leak the receiver's choice bit. To remedy this, we have the receiver \emph{commit} to these values and then have $\widetilde{P}$ \emph{extract} them in the reduction. As we use only one-way functions for the commitment, this extraction procedure is not efficient. However, it turns out that we can push all of the inefficiency required to \emph{before} the beginning of the ToNC protocol, and have $\widetilde{P}$ make use of non-uniformity to perform an inefficient operation before the beginning of the protocol. Thus, our final result is actually that $(\epsilon,\delta)$-ToNC secure against adversaries with non-uniform (potentially quantum) advice implies  $(\epsilon,2\delta - \epsilon,0)$ (committed-bit) OT. The formal protocol and analysis are given in \cref{subsec:WOT}.

\subsection{BA amplification}\label{subsec:BA-amp}

In the remainder of the paper, we prove amplification theorems for BA and OT in the post-quantum setting. That is, we address the setting where adversaries and the communicating parties are quantum (rather than classical) polynomial-time machines. Due to structural differences between BA and OT, we were compelled to take different amplification routes for each. We begin by describing at a high level how we amplify weak BA into full-fledged key agreement, presented in \cref{subsec:hardcore} and \cref{subsec:KA-amp}.

\paragraph{The classical setting.} Our starting point is Holenstein's tight classical amplification theorem~\cite{Holenstein}, which characterizes exactly when weakly-secure BA can be amplified to fully-secure key agreement. We briefly recall the classical approach and then explain the modifications needed in our (post-quantum) setting.

\paragraph{Secure random variables.} Following
Holenstein, we consider the notion of an \emph{information-theoretically} secure weak BA, where a single execution produces correlated bits
$(X,Y)$ for Alice and Bob, together with side information $Z$ held by Eve -- namely, the transcript.
One says that $(X,Y,Z)$ is an $(\epsilon,\delta)$-\emph{secure random variable} if (i) $X$ and $Y$
are unbiased, (ii) $\Pr[X=Y]\ge \tfrac12+\tfrac\epsilon2$, and (iii) there exists an event
$E$ that implies $X=Y$ such that $\Pr[E\mid X=Y]\ge \delta$ and $I(X;Z\mid E)=0$.
That is, conditioned on $E$, Eve's view gives \emph{no information} about the shared bit. Definition \ref{def:secure-random-variable} formalizes this notion.

\paragraph{An information-theoretic amplification protocol.} Given $\ell$ independent $(\epsilon,\delta)$-secure random variables, Holenstein gives an efficient
protocol that compiles them into a perfect (up to negligible factors) secure random variable -- and hence fully-secure key agreement, assuming the following condition on $(\epsilon, \delta)$ holds:
\[\delta>\frac{1-\epsilon}{1+\epsilon}.\]

It is now left to show that applying such a compilation protocol to multiple independent \emph{computationally-secure} weak BA instances results in a fully-secure BA instance. To do so, Holenstein proves a variant of Impagliazzo's hard-core set lemma against non-uniform classical circuits, with improved parameters~\cite[Lemma~2.1]{Holenstein}. Let us state it in terms of bit agreement for the sake of discussion.

\paragraph{Hard-core set lemma.} Fix some arbitrarily small constant $\gamma>0$, and suppose that any adversarial circuit of size $s$ given the communication transcript can succeed in guessing the output bit (conditioned on the parties agreeing on the same bit) with probability at most $1-\delta/2$. Then, there exists a set $S$ of transcripts whose total mass depends on $\delta,s$, such that when sampling a weak BA instance from $S$, all circuits of size somewhat smaller than $s$ succeed in guessing the agreed bit (conditioned on agreement) with probability at most $\frac{1+\gamma}{2}$.

\paragraph{Amplification in the (classical) computational security setting.} The amplified protocol works exactly as in the information-theoretic case. 

\begin{enumerate}
    \item The parties perform $\ell(\secp)=\poly(\secp)$ independent instances of the weak computationally-secure BA protocol, producing pairs of bits $(x_i,y_i)_{i=1,\ldots,\ell}$.
    \item They run the compilation protocol on $(x_i,y_i)_{i=1,\ldots,\ell}$, producing final output bits $k_A,k_B$. 
\end{enumerate}

Using the hard-core set lemma, one proves that any potential adversary attempting to break the amplified protocol over weak computationally-secure BA instances, cannot do significantly better than it would on a series of information theoretic $(\epsilon,\delta)$-secure random variables. 

Indeed, given a hard-core set $S$, define the event $E$ to be "the transcript corresponds to $x\in S$ and the parties output the same bit". Conditioned on $E$, the weak BA transcript $z$ carries essentially no efficiently-computable information about the shared bit $x=y$, so for the computationally-bounded adversary, the triplet $(x,y,z)$ is indistinguishable from an information-theoretically secure $(\epsilon,\delta_{\sec})$-secure random variable, with $\delta_{\sec}\approx \Pr[E\mid X=Y]$. The proof is completed using an appropriate hybrid argument.

%A hybrid argument shows that any adversary that predicts the final key with noticeable advantage in the amplified protocol based on weak BA instances, would yield a circuit that predicts one of the underlying $\ell$ weak BA output bits noticeably better than allowed, contradicting weak BA security.

\paragraph{Moving to the quantum setting.} It turns out that given an appropriate post-quantum analogue of the hard-core set lemma, the BA amplification works using an argument similar to that described above. So we will focus on the hard-core set lemma itself. In particular, our goal is to re-prove it in the setting where the adversary can apply a quantum circuit rather than a classical circuit.

The approach laid out by \cite{Holenstein,HolensteinPuzzles} for proving the hard-core set lemma in the classical setting, consists of two main steps.

\begin{itemize}
\item \textbf{A hard set for each predictor.}
Assume that no efficient adversary can guess the output bit of the protocol from the communication transcript
with probability better than $1-\delta/2$, conditioned on agreement.
Then for any fixed predictor $Q$ and any slack $\gamma\in(0,\delta)$, there is a set of transcripts of mass at least $\delta-\gamma$ on which $Q$ has
essentially no advantage, i.e.\ its success is at most $1/2+\gamma/2$.

\item \textbf{Per-set predictors $\Rightarrow$ one universal predictor.}
Now suppose that for every large set there exists
\emph{some} adversary of size $s$ that computes the output bit with noticeable advantage $\gamma$ on that measure.
Then one can ``fix'' the predictor: There exists a \emph{single} predictor (with a moderate size blowup)
that achieves advantage better than $\gamma/2$ \emph{simultaneously on all} sets of mass at least $\delta$. This contradicts the previous item.
\end{itemize}

Together, these two components imply the existence of a hard-core set on which \emph{all} predictors of size $s$ have advantage close to $1/2$. The key component in the proof of the second step is von Neumann's minimax theorem, which translates to the setting of quantum circuits without requiring significant adaptations. In what follows, we describe how we adapt the proof of the first step in some more detail.

\paragraph{A hard \emph{measure} for any quantum predictor.} Fix a deterministic predicate $P:\{0,1\}^n\rightarrow \bits$ and any randomized classical circuit  $C(x,b)$ that is intended to output $1$ when $b=P(x)$ and $0$ otherwise. Theorem~3 in \cite{HolensteinPuzzles} (which is similar in spirit to Lemma 2.4 from \cite{Holenstein}) shows that one can
\emph{decompose} the input space relative to this particular $C$ into a large
``hard'' subset of the domain $S\subseteq \bits^n$ of probability mass $\delta$ and its complement. That is, on $S$, $C$'s acceptance probability is
essentially \emph{insensitive} to whether $b=P(x)$ or $b=1-P(x)$, meaning that flipping
$P$ on $S$ is nearly indistinguishable to $C$. Moreover,
one can convert $C$ into an explicit predictor $Q$ that recovers $P(x)$ with
an overall success at least $1-\delta/2$ on the entire domain. 

In the classical hard-core set proof, the first component is a statement of the form:
\emph{given a fixed predictor $A$, there exists a large region on which $A$ has only tiny advantage.}
The above decomposition can be used as an alternative route to such a statement, as follows.
Define a distinguisher $C_A(x,b)$ that outputs $1$ iff $A(x)=b$. Applying the decomposition to $C_A$
yields a set $S$ where $C_A$ cannot reliably distinguish $b=P(x)$ from $b=1-P(x)$, which exactly means
that on $x\leftarrow S$, the predictor $A$ has success close to $1/2$.
Moreover, under the \emph{assumed global hardness} parameter $\delta$ (namely, no predictor can achieve
success $\ge 1-\delta/2$ overall), the theorem's predictor $Q$ forces the extracted set $S$
to have density at least about $\delta$ -- otherwise $Q$ would contradict hardness.

Our Theorem \ref{thm:Q3} is a generalization of Theorem~3 in \cite{HolensteinPuzzles}, which differs in several respects:
\begin{itemize}
    \item \textbf{Quantum circuits.} We allow the distinguisher $C$ to be a \emph{quantum} circuit, and thus the resulting predictor $Q$ is quantum.
    \item \textbf{Arbitrary input distributions.} Rather than working only with
    the uniform distribution over $\bits^n$, we formulate the statement with
    respect to an arbitrary distribution $\mathcal{D}$ over inputs. This enables applying the theorem to the case of bit agreement protocols, where the distribution over communication transcripts is typically not uniform. We note that in the classical case, one can often utilize the uniform distribution over the participant's random coins, thus avoiding the necessity of generalizing the distribution. However, this is not possible if the participants are quantum.
    \item \textbf{A hard measure rather than a hard set.} As a consequence of allowing a quantum distinguisher $C$, we construct a quantum predictor which maps a pair $(x,P(x))$ \emph{probabilistically} to a bit. Since the predictor is non-deterministic, it does not immediately give rise to a hard subset. Instead, its prediction statistics constitute a 
    \emph{sub-distribution measure}  of $\mathcal{D}$
    that designates the portion of the domain on which $C$ is
    insensitive to flipping $P$. The measure is explicitly constructed in Corollary \ref{cor:soft-hardcore-random-labels}.
\end{itemize}

The measure vs. set discrepancy comes from the fact that the proof in \cite{HolensteinPuzzles} relies on derandomizing the distinguisher $C$ through fixing a set $R$ of random seeds. Evaluations of $C(x,r)$ on any $r\in R$ are used to compute a criterion which determines whether $x\in S$---thus $S$ becomes a deterministic set. This perspective breaks down in the quantum setting, where the distinguisher's behavior is inherently probabilistic and cannot be derandomized in the same manner. Instead, the hard region is represented by a weight function $w(x) \in [0,1]$ derived from the prediction statistics of the quantum distinguisher, yielding a sub-distribution $M \preceq D$. Consequently, the remainder of the amplification argument is reformulated so that Holenstein's hard-core-set reasoning continues to hold when the hard object is a weighted measure rather than a deterministic subset.\footnote{As mentioned before, \cite{Holenstein}'s proof also relied on a randomness fixing argument, though we found \cite{HolensteinPuzzles}'s approach easier to adapt to the quantum setting while maintaining optimal parameters.} 

A further subtlety is that the weak-BA distribution is not necessarily a deterministic predicate. After conditioning on the two parties agreeing on the same bit, we obtain a distribution over transcript/bit pairs $(Z,X)$. To apply a predicate-style hard-core theorem, we lift the domain to $(Z,b_0)$, where $b_0$ is the agreed-upon bit, and define the deterministic predicate $P(Z,b_0)=b_0$. Given a transcript predictor $C(Z)$, we form a distinguisher $C((Z,b_0),b)$ that ignores $b_0$ and checks whether $C(Z)=b$. The obliviousness of the predictor returned by our decomposition theorem is essential for using weak BA security to prove a hard-core measure: if the hard-core measure has too little mass, the resulting predictor on the lifted domain collapses back to a predictor of $X$ from $Z$, contradicting weak-BA security.

Lemma \ref{lem:minimax-fix} establishes the minimax fixing step needed for the hard-core measure theorem, which appears as Theorem \ref{thm:hardcore-Q3}. Theorem \ref{thm:wba-to-ka} establishes the generic amplification statement with a hybrid argument, and Theorem \ref{thm:ba-to-ka-params} states that the amplifiable parameter range in the quantum setting is identical to Holenstein's tight result in the classical setting.

\subsection{OT amplification}\label{subsec:OT-amp-overview}

Finally, we discuss our approach to post-quantum OT amplification, which requires a new set of techniques, presented in \cref{subsec:XOR-lemma} and \cref{subsec:OT-amp}.

\paragraph{New challenges.} The fundamental differences with BA amplification stem from the fact that the adversary participates in the protocol itself. That is, one must consider the ``input'' of the weakly hard distinguishing problem to be the \emph{quantum state} produced at the end of the protocol, as opposed to just the classical transcript. Moreover, as we allow the adversary to be malicious,\footnote{One might wonder about the possibility of post-quantum \emph{semi-honest} OT amplification, where both the honest and adversarial participants are quantum but the communication is classical. This appears to be a bit tricky to reason about; in particular, the accepted model of quantum semi-honest adversaries allow them to \emph{purify} their actions in the protocol \cite{10.1007/978-3-642-14623-7_37}, resulting in an interaction that now involves quantum communication. We believe that several of the non-trivialities involved in the setting of malicious post-quantum amplification would also arise in any reasonable semi-honest quantum model, and thus we focus on the malicious case.} it may influence the generation of this (transcript, state) pair.

While hard-core set theorems have been applied successfully to obtain (semi-honest)\footnote{In the classical setting, semi-honest amplification is sufficient, since maliciously-secure OT follows generically from semi-honest OT \cite{10.1145/28395.28420,10.1145/62212.62215}. No such semi-honest to malicious compiler is known in the quantum setting.} OT amplification in the purely classical setting \cite{10.1007/978-3-540-72540-4_32}, the fact that we must consider \emph{quantum} inputs now prevents us from applying the techniques from last section.\footnote{One might wonder about the possibility of obtaining some analogue of a hard-core measure for distributions over quantum states, but we leave this possibility to future work (and indeed, are not convinced that reasonable formulations of this statement are possible).} Instead, we prove a novel post-quantum XOR lemma that applies to classical-communication interactive protocols, and put it to work to obtain our OT amplification results. 

\paragraph{Interactive XOR lemma.} The interactive XOR lemma can be motivated by considering the following desired form of OT amplification. Suppose we have a $(1,\delta,0)$-OT protocol (according to the definition given in \cref{subsec:results} and formally in \cref{def:standard-OT}) for some constant $\delta < 1$, and our goal is to obtain fully-secure OT, i.e. (1,0,0)-OT. A natural idea is to repeat the protocol $\secp$ times sequentially, instructing the sender to XOR together their bits across runs. That is, letting $\{r_{i,0},r_{i,1}\}_{i \in [\secp]}$ be the set of sender bits obtained from the $\secp$ repetitions, we know that any adversarial receiver has $\delta$ advantage in guessing each $r_{i,0} \oplus r_{i,1}$, and it is natural to conjecture that any adversarial receiver thus has $\delta^\secp = \negl(\secp)$ advantage in guessing $\bigoplus_{i \in [\secp]} r_{i,0} \oplus r_{i,1}$. Crucially, note that we are not asking that the receiver make a guess for each $r_{i,0} \oplus r_{i,1}$ directly after the $i$'th repetition of the protocol (in which case it would be trivial to show that all guesses are correct with probability at most $\delta^\secp$), but are rather asking that they guess a single bit at the end of all $\secp$ repetitions.

A useful abstraction of this setting can be phrased as follows. Suppose we have a protocol $\Gamma$ that takes place between a (potentially quantum) adversary $A$ and a classical verifier $V$, who obtains an output $b \in \{0,1\}$ at the end of the protocol. Let $\delta$ be the maximum advantage that any $A$ can obtain in guessing the verifier's output $b$. Then, if we sequentially repeat the protocol twice, letting $b_1$ and $b_2$ denote the two verifier outputs, the maximum advantage that any $A$ can obtain in guessing $b_1 \oplus b_2$ is $\delta^2 + \negl(\secp)$. Note that in the full statement (\cref{thm:sequential-rep}), we allow for different protocols $\Gamma_1,\Gamma_2$ with different advantage bounds $\delta_1,\delta_2$.

In fact, this is precisely the post-quantum analogue of a classical claim due to Halevi and Rabin \cite[Lemma~3.1]{HR08}, which itself is an interactive analogue of Levin's Isolation Lemma \cite{levinIsolation85}, as stated in \cite{GoldreichOnYao2011}. Unfortunately, the classical proof makes heavy use of rewinding, which is non-trivial and sometimes impossible in the quantum setting. However, the core rewinding idea \emph{has} been adapted to the quantum setting in a recent work of \cite{10.1145/3618260.3649603}, though we note that their theorems do not directly apply to our setting as they are stated for three-message parallel-repeated quantum-communication protocols. We proceed to give an overview of our proof technique, which is somewhat related to \cite{10.1145/3618260.3649603}, but ultimately different and self-contained. 

\paragraph{Proof technique.} For convenience, let's move to the $\{+1,-1\}$ basis for classical bits, and suppose that the adversary's output is $b_A \in \{+1,-1\}$, while the two verifier outputs are $b_1,b_2 \in \{+1,-1\}$. Then, for contradiction, we assume the existence of an adversary $A$ such that \[\E[b_A \cdot b_1 \cdot b_2] > \delta^2.\] Rearranging, we have that \[\E[b_1 \cdot \E[b_A \cdot b_2 \ \  | \ \ket{\psi}]] > \delta^2 \quad \Rightarrow \quad \E\left[b_1 \cdot \E\left[\frac{b_A \cdot b_2}{\delta} \ \ \bigg| \  \ket{\psi}\right]\right] > \delta,\] where $\ket{\psi}$ is the intermediate state that $A$ has between the executions of $\Gamma_1$ and $\Gamma_2$. By security of $\Gamma_2$ we know that for any such $\ket{\psi}$, it holds that \[-\delta \leq \E[b_A \cdot b_2 \ \ | \ \ket{\psi}] \leq \delta.\] This suggests the following strategy for breaking security of the \emph{first} protocol:
\begin{enumerate}
    \item Run the first stage of $A$ until it finishes interacting with $\Gamma_1$ and produces state $\ket{\psi}$.
    \item Use $\ket{\psi}$ to obtain a real number $\alpha \in [-\delta,\delta]$ whose expectation is distributed identically to $\E[b_A \cdot b_2]$.
    \item Output $b_A\in\{1,-1\}$ sampled as a Bernoulli random variable which takes the value $1$ with probability $\frac{1}{2}(\frac{\alpha}{\delta}+1)$.
\end{enumerate}

The only step that is potentially problematic in the quantum setting is (2). Indeed, in the classical setting, this is accomplished by repeatedly simulating executions with $\Gamma_2$ and estimating the probability of success. Note that it does \emph{not} suffice to run a single execution, as this would output a value in $\{+1,-1\}$ which has the correct expectation but is not within the range $[-\delta,\delta]$ required to get step (3) to work.

Fortunately, Marriott-Watrous \cite{marriott2005} style rewinding applied to $\ket{\psi}$ allows us to implement step (2). Indeed, alternating the projector $\Pi_0$ corresponding to initializing the verifier for $\Gamma_2$ and the projector $\Pi_1$ corresponding to running the interaction and checking whether $b_A = b_2$ allows us to (approximately) sample from a distribution with the correct expectation \emph{and} the correct support. However, to prove that the distribution is supported on $[-\delta,\delta]$, we must actually assume that $\Gamma_2$ is secure against adversaries with non-uniform \emph{quantum} advice. In particular, this guarantees that the eigenvalues of $\Pi_0\Pi_1\Pi_0$ all fall in the range $[-\delta,\delta]$. Indeed, Marriott-Watrous (approximately) samples from a distribution supported only on eigenvalues, and thus the output $\alpha$ will be in $[-\delta,\delta]$ (with high probability).

\paragraph{Completing OT amplification.} Recursively applying the interactive XOR lemma  allows us to prove our motivating example, that $(1,\delta,0)$-OT implies $(1,0,0)$-OT for any $\delta < 1$.\footnote{In fact this also establishes that $(1,0,\delta)$-OT implies $(1,0,0)$-OT by first applying a standard OT reversal.} While this result is already of inherent interest, it does not suffice to complete our desired implications from ToNC. In particular, if the ToNC does not have completeness 1 (e.g.\ the compiled CHSH game), then we obtain some $(\epsilon,\delta,0)$ for both $\epsilon < 1$ and $\delta < 1$.

To boost correctness of $(\epsilon,\delta,0)$-OT, we adapt an approach of \cite{10.1007/978-3-540-72540-4_32} establishing that for any $\delta < \epsilon^2$, \emph{classical semi-honest} $(\epsilon,\delta,0)$-OT implies \emph{classical semi-honest} $(1,\delta',0)$-OT for some $\delta' < 1$. The approach involves two steps:
\begin{enumerate}
    \item Boost security by applying a sequential XOR lemma (which also brings correctness down with it). We already have the tool for analyzing this.
    \item Boost correctness by repeating the protocol several times with the \emph{same} input bits from each party.
\end{enumerate}

In step (2), it is imperative for security that the receiver indeed uses the same choice bit $b$ in each execution. While this holds by definition in the classical semi-honest model, we have to somehow enforce that our quantum malicious receiver exhibits this behavior. To do so, we rely on another layer of cut-and-choose, utilizing the fact that our weak OT protocol is in fact a ``committed-bit'' OT. This means that after the protocol has completed, the receiver can \emph{open} the interaction to its choice bit $b$, convincing the sender that it indeed used this $b$ during the interaction. This allows us to implement step (2) securely, and we thus establish that $(\epsilon,\delta,0)$ committed-bit OT implies $(1,1/2,0)$-OT for any $\delta < \epsilon^2$.

\ifsubmission\else\section{Preliminaries}

Let $\secp$ be the security parameter. By default, we consider non-uniform families of quantum adversaries, but the \emph{type} of non-uniformity will vary. In particular, we will sometimes consider adversaries with non-uniform \emph{classical} advice, given by $\{Q_\secp\}_{\secp \in \bbN}$, and sometimes consider adversaries with non-uniform \emph{quantum} advice, given by $\{\rho_\secp,Q_\secp\}_{\secp \in \bbN}$. In both cases, we will often suppress the parameterization over $\secp$ and write just $Q$ or $\rho,Q$.

\begin{definition}[Sub-distribution measure]\label{def:measure}
Let $\Omega$ be a finite set. A \emph{sub-distribution} (equivalently, a \emph{sub-probability measure}) on $\Omega$
is a function $M:\Omega\to[0,1]$ such that its \emph{total mass}
\[
\mu(M)\;:=\;\sum_{\omega\in\Omega} M(\omega)
\]
satisfies $\mu(M)\le 1$. If $\mu(M)>0$, we write $\omega\leftarrow M$ to denote sampling from the normalized
distribution $M/\mu(M)$, i.e.,
\[
\Pr[\omega=\omega_0]\;=\;\frac{M(\omega_0)}{\mu(M)}\qquad(\omega_0\in\Omega).
\]
Given a distribution $D$ on $\Omega$, we write $M\preceq D$ if $M(\omega)\le D(\omega)$ for all $\omega\in\Omega$.
\end{definition}

\begin{theorem}[von Neumann's minimax theorem {\cite{vonNeumann1928}}]
Let $\mathcal{A}$ be a finite set, let $Y \subseteq \R^d$ be a nonempty compact convex set,
and let $u : \mathcal{A}\times Y \to \R$ be such that $u(a,\cdot)$ is affine 
on $Y$ for every $a\in\mathcal{A}$.
Write $\Delta(\mathcal{A})$ for the simplex of distributions over $\mathcal{A}$, and define
$\bar u(p,y) := \E_{a\sim p}[u(a,y)]$.
Then
\[
\min_{y\in Y}\ \max_{a\in\mathcal{A}} u(a,y)
\;=\;
\max_{p\in \Delta(\mathcal{A})}\ \min_{y\in Y} \bar u(p,y).
\]
In particular, there exists $p^\star\in\Delta(\mathcal{A})$ such that for all $y\in Y$,
$\bar u(p^\star,y)\ \ge\ \min_{y'\in Y}\max_{a\in\mathcal{A}}u(a,y')$.
\end{theorem}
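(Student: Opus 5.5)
The statement to prove is von Neumann's minimax theorem, with a finite pure-strategy set $\mathcal{A}$ for the maximizer and a compact convex $Y\subseteq\R^d$ for the minimizer, with $u(a,\cdot)$ affine. I would deduce it from the separating hyperplane theorem.

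\textbf{The easy inequality.} For any $p\in\Delta(\mathcal{A})$ and any $y\in Y$ we have
\[
\min_{y'}\bar u(p,y')\le \bar u(p,y)\le \max_a u(a,y).
\]
Taking the maximum over $p$ and then the minimum over $y$ gives
\[
\max_p\min_y \bar u(p,y)\le \min_y\max_a u(a,y).
\]
Both extrema are attained. The map $y\mapsto\max_a u(a,y)$ is a maximum of finitely many affine functions on a finite-dimensional set, hence continuous, and $Y$ is compact. The map $p\mapsto\min_y\bar u(p,y)$ is a minimum of linear functions of $p$, hence upper semicontinuous, and $\Delta(\mathcal{A})$ is compact. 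The ``in particular'' clause then follows from the equality by taking $p^\star$ to be a maximizer.

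\textbf{The reverse inequality.} Let $v=\min_y\max_a u(a,y)$ and write $\mathcal{A}=\{a_1,\dots,a_n\}$. Consider the set
\[
K=\{(u(a_i,y))_{i\in[n]} : y\in Y\}\subseteq\R^n.
\]
The map $y\mapsto(u(a_i,y))_i$ is affine, so $K$ is the affine image of a compact convex set and is itself compact and convex. For $\eta>0$, let $O_\eta=\{z\in\R^n: z_i<v-\eta\ \text{for all } i\}$, an open convex set. By the definition of $v$, every point of $K$ has some coordinate at least $v$, so $K\cap O_\eta=\emptyset$.

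The separating hyperplane theorem gives a nonzero $q\in\R^n$ and $c\in\R$ with $q\cdot z\ge c$ for all $z\in K$ and $q\cdot z\le c$ for all $z\in O_\eta$. Since $O_\eta$ is unbounded below in each coordinate, the second condition forces $q\ge 0$. Normalizing, set $p=q/\sum_i q_i\in\Delta(\mathcal{A})$ and rescale $c$ accordingly. Letting $z\to(v-\eta)\mathbf{1}$ from inside $O_\eta$ gives $c\ge v-\eta$. Therefore, for every $y\in Y$,
\[
\bar u(p,y)=p\cdot(u(a_i,y))_i\ge c\ge v-\eta.
\]

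\textbf{Removing the slack.} This shows $\max_p\min_y\bar u(p,y)\ge v-\eta$ for every $\eta>0$. One can instead pass to the limit directly: take a sequence of such $p_\eta$ and extract a convergent subsequence by compactness of the simplex. Either way, $\max_p\min_y\bar u(p,y)\ge v$, which combined with the easy inequality proves the theorem.

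\textbf{Main obstacle.} The only real work is in the separation step. One must check three things: that affineness makes $K$ convex; that the disjointness of $K$ and $O_\eta$ follows exactly from the definition of $v$; and that the recession structure of $O_\eta$ forces the separating normal to be nonnegative, so that it normalizes to a probability vector.
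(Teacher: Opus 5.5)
Your proof is correct and complete. The paper does not prove this statement. It quotes it from von Neumann as a black box in the preliminaries and uses it once, in the minimax fixing lemma (\cref{lem:minimax-fix}), with $\mathcal{A}$ the finite set of size-$t$ circuits and $Y=R_\alpha=\{M\preceq D:\mu(M)=\alpha\}$. So the comparison is between your self-contained separating-hyperplane argument and an appeal to the classical theorem.

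Von Neumann's original result concerns matrix games, where both players mix over finite sets. It still covers the paper's use, because $R_\alpha$ is a polytope (a box intersected with a hyperplane), and affineness reduces the game to the matrix game on its finitely many vertices. Your argument buys two things beyond that:
\begin{enumerate}
\item It proves the statement exactly as written, for an arbitrary compact convex $Y$.
\item It explicitly verifies that the outer $\min$ and $\max$ are attained. The paper silently relies on this when it passes from ``for every $M$ some $Q'$ has payoff $>2\gamma$'' to $\min_{M}\max_{Q'}\Pi(Q',M)>2\gamma$.
\end{enumerate}

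Two small points. First, the slack $\eta$ and the limiting step are unnecessary: $\{z: z_i<v\ \forall i\}$ is already disjoint from $K$, and the same separation gives $c\ge v$ directly. Second, you use continuity of $u(a,\cdot)$ on $Y$ for attainment and for compactness of $K$, and this deserves a line of justification. An affine function on a convex subset of $\mathbb{R}^d$ extends to an affine, hence continuous, function on its affine hull.
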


\subsection{Cryptographic primitives}
\begin{definition}[Inefficiently-extractable commitment]\label{def:com}
    An inefficiently-extractable commitment between a classical committer and classical receiver consists of an interaction \[\state_\Com,\tau \gets \langle \Com(1^\secp,b),\Rec(1^\secp)\rangle,\] where $\state_\Com$ is the private state of the committer, and $\tau$ is the public transcript of interaction produced by the protocol, along with a PPT algorithm 
    \[\Ver(\state_\Com,\tau,b) \to \{\top, \bot\}.\] It should satisfy the following properties.

    \begin{itemize}
        \item \textbf{Correctness}: for any $b \in \{0,1\}$,
        \[\Pr\left[\Ver(\state_\Com,\tau,b) = \top : \begin{array}{r}\state_\Com,\tau \gets \langle \Com(1^\secp,b),\Rec(1^\secp)\rangle \end{array}\right] = 1-\negl(\secp).\]
        \item \textbf{Hiding}: For any QPT adversarial receiver $\rho,\widetilde{\Rec}$ outputting a bit $\widetilde{b}$,
        \begin{align*}
        &\bigg|\Pr\left[\widetilde{b} = 0 : \widetilde{b} \gets \langle \Com(1^\secp,0),\widetilde{\Rec}(\rho)\rangle\right] - \Pr\left[\widetilde{b} = 0 : \widetilde{b} \gets \langle \Com(1^\secp,1),\widetilde{\Rec}(\rho)\rangle\right]\bigg| = \negl(\secp).
        \end{align*}
        \item \textbf{Binding:} There exists a (potentially inefficient) function $\Ext$ such that for any unbounded adversarial committer $\widetilde{\Com}$,
    \begin{align*}
        \Pr\left[\Ver(\widetilde{\state}_\Com,\tau,1-b) = \top : \begin{array}{r} \widetilde{\state}_\Com,\tau \gets \langle \widetilde{\Com},\Rec(1^\secp)\rangle \\ b \coloneqq \Ext(\tau) \end{array}\right] = \negl(\secp).
    \end{align*}
    \end{itemize}

\end{definition}

\begin{remark}
    We remark that Naor's commitment \cite{10.5555/646754.705040} based on any (post-quantum) pseudo-random generator satisfies the above definition.
\end{remark}

\begin{definition}[Bit Agreement]\label{def:bit-agreement}
    An $(\epsilon,\delta)$-weak bit agreement (BA) protocol with classical communication consists of an interaction between polynomial-time (potentially quantum) parties $A$ and $B$, denoted by
    \[(k_A,k_B,\tau) \gets \langle A(1^\secp),B(1^\secp)\rangle ,\]

    where $k_A \in \{0,1\}$ is $A$'s output, $k_B \in \{0,1\}$ is $B$'s output, and $\tau$ denotes the (classical) transcript of interaction that occurs between $A$ and $B$. The protocol satisfies the following properties.
    \begin{itemize}
        \item \textbf{Lack of bias.} \[\Bigg|\Pr_{(k_A,k_B,\tau) \gets \langle A(1^\secp),B(1^\secp)\rangle}[k_A = 0] - \frac{1}{2} \Bigg| = \negl(\secp), \ \ \ \Bigg|\Pr_{(k_A,k_B,\tau) \gets \langle A(1^\secp),B(1^\secp)\rangle}[k_B = 0] - \frac{1}{2}\Bigg| = \negl(\secp),\] 
        \item \textbf{Correctness.}
        \[\Pr_{(k_A,k_B,\tau) \gets \langle A(1^\secp),B(1^\secp)\rangle}[k_A = k_B] \geq \frac{1}{2} + \frac{\epsilon}{2}-\negl(\secp).\]
        \item \textbf{Security.} For any QPT adversary $\rho,\widetilde{E}$, 
        \[\Pr_{(k_A,k_B,\tau) \gets \langle A(1^\secp),B(1^\secp)\rangle}\left[\widetilde{E}(\rho,\tau) = k_A \ | \ k_A = k_B \right] \leq \frac{1}{2}+\frac{\delta}{2} + \negl(\secp).\]
    \end{itemize}
\end{definition}

We introduce two definitions of oblivious transfer (OT). First, we give a standard game-based definition, specified as follows.

\begin{definition}[Oblivious transfer]\label{def:standard-OT}
    An $(\epsilon,\delta,\gamma)$-weak oblivious transfer (OT) with classical communication is a protocol that takes place between a polynomial-time (potentially quantum) sender $S$ and a polynomial-time (potentially quantum) receiver $R$, denoted by 
    \[(b,r),(r_0,r_1),\tau \gets \langle R(1^\secp),S(1^\secp)\rangle,\] where $(b,r)$ is the output of $R$, $(r_0,r_1)$ is the output of $S$, and $\tau$ is the (classical) transcript of interaction produced by the protocol. It should satisfy the following properties.
    \begin{itemize}
        \item \textbf{Correctness}: It holds that \[\Pr[r = r_b : \begin{array}{r} (b,r),(r_0,r_1),\tau \gets \langle R(1^\secp),S(1^\secp)\rangle \\ \end{array}] \geq \frac{1}{2} + \frac{\epsilon}{2} - \negl(\secp).\]
        \item \textbf{Receiver security}: For any QPT adversarial sender $\rho,\widetilde{S}$,
       \[\Pr[\widetilde{b} = b : (b,r),\widetilde{b} \gets \langle R(1^\secp),\widetilde{S}(\rho)\rangle] \leq \frac{1}{2} + \frac{\gamma}{2} + \negl(\secp).\]
        \item \textbf{Sender security}: For any QPT adversarial receiver $\rho,\widetilde{R}$,
        \[\Pr[\widetilde{r} = r_0 \oplus r_1 : \begin{array}{r} \widetilde{r},(r_0,r_1) \gets \langle \widetilde{R}(\rho),S(1^\secp)\rangle\end{array}] \leq \frac{1}{2} + \frac{\delta}{2} + \negl(\secp).\]

        %\item \textbf{Sender security}: There exists a (potentially unbounded) function $\Ext$ such that for any QPT adversarial receiver $\widetilde{R}$,
        %\[\Pr[\widetilde{r} = r_{1-b} : \begin{array}{r} \widetilde{r},(r_0,r_1),\tau \gets \langle \widetilde{R},S(1^\secp)\rangle \\ b \coloneqq \Ext(\tau)\end{array}] \leq \frac{1}{2} + \negl(\secp).\]
    \end{itemize}
\end{definition}

\begin{remark}
    We refer to $(1,0,0)$-weak OT as ``standard OT'', or just ``OT'', as it satisfies the standard notions of correctness and security.
\end{remark}

 Next, we give a specialized variant of OT that we call \emph{committed-bit} OT.

\begin{definition}[Committed-bit OT]\label{def:committed-bit}

    An $(\epsilon,\delta,\gamma)$-weak committed-bit OT with classical communication is a protocol that takes place between a polynomial-time (potentially quantum) sender $S$ and a polynomial-time (potentially quantum) receiver $R$. The protocol begins with a setup phase, denoted by \[\init_R,\tau_\Setup \gets \Setup\langle R(1^\secp),S(1^\secp)\rangle,\] where $\init_R$ is $R$'s private state and $\tau_\Setup$ is the public (classical) transcript of interaction. Next, for any $b \in \{0,1\}$, the parties run \[(r,\state_R),(r_0,r_1,\state_S),\tau \gets \OT\langle R(b,\init_R),S(\tau_\Setup)\rangle.\] In addition, there exists a (potentially inefficient) function $\TrapGen$, a polynomial-time function $\Ext$, and a verification protocol \[\{\top,\bot\} \gets \Ver\langle R(\state_R),S(b,\state_S)\rangle,\]

    such that the following properties hold.

    \begin{itemize}
        \item \textbf{Correctness}: For any $b \in \{0,1\}$, \[\Pr[r = r_b : \begin{array}{r}\init_R,\tau_\Setup \gets \Setup\langle R(1^\secp),S(1^\secp)\rangle \\ (r,\state_R),(r_0,r_1,\state_S),\tau \gets \OT\langle R(b,\init_R),S(\tau_\Setup)\rangle \end{array}] \geq \frac{1}{2} + \frac{\epsilon}{2} - \negl(\secp).\]
        \item \textbf{Completeness}: For any $b \in \{0,1\}$, \[\Pr\left[\Ver\langle R(\state_R),S(b,\state_S)\rangle = \top : \begin{array}{r}\init_R,\tau_\Setup \gets \Setup\langle R(1^\secp),S(1^\secp)\rangle \\ (r,\state_R),(r_0,r_1,\state_S),\tau \gets \OT\langle R(b,\init_R),S(\tau_\Setup)\rangle \end{array}\right] = 1-\negl(\secp).\]
        
        \item \textbf{Receiver security}: For any QPT adversarial sender $\rho,\widetilde{S}$,
        \begin{align*}&\Bigg|\Pr[\widetilde{b} = 0 : \begin{array}{r} \init_R,\init_S,\tau_\Setup \gets \Setup\langle R(1^\secp),\widetilde{S}(\rho)\rangle \\ (r,\state_R),(\widetilde{b},\state_S),\tau \gets \OT\langle R(0,\init_R),\widetilde{S}(\init_S)\rangle \end{array}] \\ &- \Pr[\widetilde{b} = 0 : \begin{array}{r} \init_R,\init_S,\tau_\Setup \gets \Setup\langle R(1^\secp),\widetilde{S}(\rho)\rangle \\ (r,\state_R),(\widetilde{b},\state_S),\tau \gets \OT\langle R(1,\init_R),\widetilde{S}(\init_S)\rangle \end{array}]\Bigg| \leq \gamma + {\negl(\secp)}.\end{align*}
        \item \textbf{Sender security}: For any unbounded algorithm $\widetilde{R}_\init$, with probability $1-\negl(\secp)$ over \[\init_R,\tau_\Setup \gets \Setup\langle \widetilde{R}_\init,S(1^\secp)\rangle, \quad \td \coloneqq \TrapGen(\tau_\Setup),\] it holds that for any QPT adversarial receiver $\widetilde{R}$,
        \begin{align*}\Pr\left[\widetilde{R}(\state_R,r_b) = r_{1-b} : \begin{array}{r} \state_R,(r_0,r_1,\state_S),\tau \gets \OT\langle \widetilde{R}(\init_R),S(\tau_\Setup)\rangle \\ b \coloneqq \Ext(\td,\tau) \end{array}\right] \leq \frac{1}{2} + \frac{\delta}{2} + \negl(\secp),\end{align*}
        and, \[\Pr\left[\Ver\langle \widetilde{R}(\state_R),S(1-b,\state_S)\rangle = \top : \begin{array}{r} \state_R,(r_0,r_1,\state_S),\tau \gets \OT\langle \widetilde{R}(\init_R),S(\tau_\Setup)\rangle \\ b \coloneqq \Ext(\td,\tau) \end{array}\right] = \negl(\secp).\]
    \end{itemize}

\end{definition}

\begin{remark}
    We note that any $(\epsilon,\delta,\gamma)$-weak committed-bit OT implies $(\epsilon,\delta,\gamma)$-weak OT secure against adversaries with non-uniform quantum advice. Correctness and receiver security are immediate, while sender security follows due to the fact that if $\widetilde{R}$ can compute $r_0 \oplus r_1$ with some advantage, then, given $r_b$, it can output $r_{1-b}$ with (at least) the same advantage.
\end{remark}

\subsection{Alternating projectors}
Marriott and Watrous introduce algorithms based on alternating projectors, which they used for witness-preserving QMA amplification \cite{marriott2005}. Chiesa et al.\ present tools for analyzing these algorithms \cite{ChiesaRewinding}; we now review these tools, which will later be used for proving Theorem \ref{thm:sequential-rep}. 

\paragraph{Jordan decomposition.}
By Jordan's lemma applied to $(\Pi_A,\Pi_B)$, $\mathcal{H}$ decomposes orthogonally as
\[
\mathcal{H} \;=\; \bigoplus_j S_j,
\]
where each $S_j$ has dimension at most $2$, and is preserved by both projections.  For a $2$-dimensional block $S_j$ choose
orthonormal pairs $\{\ket{v^{A}_{j,1}},\ket{v^{A}_{j,0}}\}$ and
$\{\ket{v^{B}_{j,1}},\ket{v^{B}_{j,0}}\}$ such that
$\ket{v^{A}_{j,1}}\in \mathrm{im}(\Pi_A)$, $\ket{v^{A}_{j,0}}\in \ker(\Pi_A)$ and similarly for $B$.
Define the overlap parameter
\[
p_j := \bigl|\bracket{v^{A}_{j,1}}{v^{B}_{j,1}}\bigr|^2 \in [0,1].
\]

\noindent By convention, phases may be fixed so that within $S_j$, the bases are related by a planar rotation:
\[
\ket{v^{A}_{j,1}} = \sqrt{p_j}\,\ket{v^{B}_{j,1}} + \sqrt{1-p_j}\,\ket{v^{B}_{j,0}},
\qquad
\ket{v^{B}_{j,1}} = \sqrt{p_j}\,\ket{v^{A}_{j,1}} + \sqrt{1-p_j}\,\ket{v^{A}_{j,0}}.
\]

% --- Corollary (of the Jordan decomposition paragraph above), with minimal repetition ---

\begin{corollary}
\label{cor:jordan-param-eig}
In the Jordan decomposition notation above, fix a $2$-dimensional block $S_j$ and let
$p_j := |\langle v^A_{j,1}\mid v^B_{j,1}\rangle|^2$, where the bases are chosen so that
\begin{equation}
\label{eq:rotation-relation}
|v^A_{j,1}\rangle \;=\; \sqrt{p_j}\,|v^B_{j,1}\rangle + \sqrt{1-p_j}\,|v^B_{j,0}\rangle.
\end{equation}
Then
\[
(\Pi_A\Pi_B\Pi_A)\,|v^A_{j,1}\rangle \;=\; p_j\,|v^A_{j,1}\rangle.
\]
In particular, each Jordan parameter $p_j$ is an eigenvalue of the restriction
$\left.\Pi_A\Pi_B\Pi_A\right|_{\mathrm{im}(\Pi_A)}$. Consequently, if
\[
\mathrm{spec}\!\left(\left.\Pi_A\Pi_B\Pi_A\right|_{\mathrm{im}(\Pi_A)}\right)\subseteq [a,b],
\]
then every Jordan parameter $p_j$ lies in $[a,b]$.
\end{corollary}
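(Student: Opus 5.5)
The plan is to compute $\Pi_A\Pi_B\Pi_A$ on $|v^A_{j,1}\rangle$ directly, working inside the two-dimensional block $S_j$, and then read off the spectral consequence.

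\textbf{Step 1 (reduce to the block).} Both projectors preserve $S_j$, so we may restrict to $S_j$. There $\Pi_A$ acts as $|v^A_{j,1}\rangle\langle v^A_{j,1}|$ and $\Pi_B$ acts as $|v^B_{j,1}\rangle\langle v^B_{j,1}|$. This holds because $\{|v^A_{j,1}\rangle,|v^A_{j,0}\rangle\}$ is an orthonormal basis of $S_j$ with one vector in $\mathrm{im}(\Pi_A)$ and the other in $\ker(\Pi_A)$, and similarly for $B$.

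\textbf{Step 2 (apply $\Pi_B$ then $\Pi_A$).} Since $|v^A_{j,1}\rangle\in\mathrm{im}(\Pi_A)$, we have $\Pi_A|v^A_{j,1}\rangle=|v^A_{j,1}\rangle$. Applying $\Pi_B$ and using \eqref{eq:rotation-relation} kills the $|v^B_{j,0}\rangle$ component, leaving $\sqrt{p_j}\,|v^B_{j,1}\rangle$. Next, take the inner product with $|v^A_{j,1}\rangle$. By \eqref{eq:rotation-relation} with real coefficients,
\[
\langle v^A_{j,1}|v^B_{j,1}\rangle=\sqrt{p_j}.
\]
Applying $\Pi_A$ therefore gives $\sqrt{p_j}\cdot\sqrt{p_j}\,|v^A_{j,1}\rangle=p_j|v^A_{j,1}\rangle$. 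One could equally use the companion relation expressing $|v^B_{j,1}\rangle$ in the $A$-basis; the phase convention ensures the two are consistent.

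\textbf{Step 3 (spectral consequence).} The vector $|v^A_{j,1}\rangle$ is nonzero and lies in $\mathrm{im}(\Pi_A)$, and $\Pi_A\Pi_B\Pi_A$ maps $\mathrm{im}(\Pi_A)$ into itself. Hence $p_j$ is an eigenvalue of the restriction $\left.\Pi_A\Pi_B\Pi_A\right|_{\mathrm{im}(\Pi_A)}$. The final claim follows immediately: if that spectrum lies in $[a,b]$, then so does every $p_j$.

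For one-dimensional blocks, $p_j$ is not defined in the statement and so needs no argument. If one adopts the convention $p_j\in\{0,1\}$ for a one-dimensional block inside $\mathrm{im}(\Pi_A)$, the same eigenvector argument applies trivially.

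\textbf{Main obstacle.} The only delicate point is the phase bookkeeping: making sure the inner product is exactly $\sqrt{p_j}$ rather than some complex number of modulus $\sqrt{p_j}$. This is guaranteed by the fixed rotation convention in \eqref{eq:rotation-relation}. Even without that convention, the product of the inner product with its conjugate would give $|\langle\cdot|\cdot\rangle|^2=p_j$, so the conclusion holds regardless.
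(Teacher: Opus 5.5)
Your proposal is correct and follows essentially the same computation as the paper. In both, $\Pi_A$ fixes $|v^A_{j,1}\rangle$, $\Pi_B$ sends it to $\sqrt{p_j}\,|v^B_{j,1}\rangle$, $\Pi_A$ sends $|v^B_{j,1}\rangle$ back to $\sqrt{p_j}\,|v^A_{j,1}\rangle$, and the eigenvalue-of-the-restriction observation finishes the argument. The only cosmetic difference is that you get the last step from the rank-one form of $\Pi_A$ on $S_j$ and the inner product $\langle v^A_{j,1}|v^B_{j,1}\rangle=\sqrt{p_j}$, whereas the paper uses the companion rotation relation expressing $|v^B_{j,1}\rangle$ in the $A$-basis.
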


\begin{proof}
Fix $j$ and abbreviate $|v^A\rangle:=|v^A_{j,1}\rangle$ and $|v^B\rangle:=|v^B_{j,1}\rangle$.
Since $|v^A\rangle\in\mathrm{im}(\Pi_A)$ we have $\Pi_A|v^A\rangle=|v^A\rangle$.
Using the rotation relation \eqref{eq:rotation-relation} and that $\Pi_B|v^B_{j,1}\rangle=|v^B_{j,1}\rangle$ and
$\Pi_B|v^B_{j,0}\rangle=0$, we get
\[
\Pi_B|v^A\rangle \;=\; \sqrt{p_j}\,|v^B\rangle.
\]
Applying $\Pi_A$ again and using the second rotation relation
$|v^B\rangle=\sqrt{p_j}\,|v^A\rangle+\sqrt{1-p_j}\,|v^A_{j,0}\rangle$ together with
$\Pi_A|v^A_{j,0}\rangle=0$, we obtain
\[
\Pi_A|v^B\rangle \;=\; \sqrt{p_j}\,|v^A\rangle.
\]
Combining,
\[
\Pi_A\Pi_B\Pi_A|v^A\rangle
\;=\;
\Pi_A\Pi_B|v^A\rangle
\;=\;
\sqrt{p_j}\,\Pi_A|v^B\rangle
\;=\;
p_j\,|v^A\rangle,
\]
as claimed. The final implication follows since $p_j$ is an eigenvalue of
$\left.\Pi_A\Pi_B\Pi_A\right|_{\mathrm{im}(\Pi_A)}$.
\end{proof}

The following analysis is concerned with algorithms that alternately apply two binary
projective measurements $A=(\Pi_A, I-\Pi_A)$ and $B=(\Pi_B, I-\Pi_B)$ on a Hilbert space
$\mathcal{H}$, stopping after a prescribed number of steps or upon observing a desired outcome. 

\paragraph{A classical distribution for alternating outcomes.}
Fix $T\in\mathbb{N}$ and $p\in[0,1]$. Define a distribution $\mathrm{MWDist}(T,p)$ on bit strings
$(b_1,\dots,b_T)$ by:
\begin{enumerate}
  \item Sample independent $a_1,\dots,a_T\in\{0,1\}$ with $\Pr[a_i=1]=p$.
  \item Set $b_0:=1$ and update $b_i := b_{i-1} \oplus a_i$ for $i=1,\dots,T$.
  \item Output $(b_1,\dots,b_T)$.
\end{enumerate}

\begin{lemma}[Lemma~4.4 in~\cite{ChiesaRewinding}]
If the initial post-measurement state is $\ket{v^{B}_{j,1}}$ and we apply $T$ measurements
alternating $A,B,A,B,\ldots$, then the resulting outcome bits $(b_1,\ldots,b_T)$ are distributed as
$\mathrm{MWDist}(T,p_j)$.
\end{lemma}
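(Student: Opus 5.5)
The plan is to show, by induction on the number of measurements, that the post-measurement state always stays inside the two-dimensional block $S_j$ and is always one of the four basis vectors $\ket{v^A_{j,1}},\ket{v^A_{j,0}},\ket{v^B_{j,1}},\ket{v^B_{j,0}}$. The transition probabilities between consecutive outcomes then depend only on the previous outcome and on $p_j$. This makes the outcome sequence a two-state Markov chain. Its flip probabilities are i.i.d.\ Bernoulli, and that is exactly the process generating $\mathrm{MWDist}(T,p_j)$.

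\textbf{Step 1 (closure of $S_j$).} By Jordan's lemma, $S_j$ is invariant under both $\Pi_A$ and $\Pi_B$. Hence the unnormalized post-measurement vector $\Pi_X\ket{v}$ or $(I-\Pi_X)\ket{v}$ lies in $S_j$ for any $\ket{v}\in S_j$ and $X\in\{A,B\}$. On $S_j$, $\Pi_A$ is the rank-one projector onto $\ket{v^A_{j,1}}$ and $I-\Pi_A$ is the rank-one projector onto $\ket{v^A_{j,0}}$; the same holds for $B$. So after an $A$-measurement the state is $\ket{v^A_{j,1}}$ or $\ket{v^A_{j,0}}$ according to the outcome, and likewise for $B$.

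\textbf{Step 2 (one-step transitions).} Using the rotation relations from the Jordan-decomposition paragraph and their orthogonal completions, $\ket{v^A_{j,0}} = \sqrt{1-p_j}\,\ket{v^B_{j,1}} - \sqrt{p_j}\,\ket{v^B_{j,0}}$ up to phase. From these I compute the four overlaps:
\[
|\langle v^A_{j,1}|v^B_{j,1}\rangle|^2 = |\langle v^A_{j,0}|v^B_{j,0}\rangle|^2 = p_j,
\qquad
|\langle v^A_{j,1}|v^B_{j,0}\rangle|^2 = |\langle v^A_{j,0}|v^B_{j,1}\rangle|^2 = 1-p_j .
\]
By the Born rule, when measuring $A$ on a $B$-basis vector (or $B$ on an $A$-basis vector), the outcome is "the same label" with probability $p_j$ and "the opposite label" with probability $1-p_j$. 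Crucially, this probability does not depend on which label the previous state carried. The conventions must be matched to $\mathrm{MWDist}$: with $b_0:=1$ recording the initial outcome of $\Pi_B$, the increment $a_i = b_i\oplus b_{i-1}$ is $1$ exactly when the outcome label changes. I will align the labelling of outcomes with the convention implicit in \cite{ChiesaRewinding}, so that the event $a_i=1$ corresponds to the transition of probability $p_j$. This bookkeeping of which label/complement corresponds to $a_i=1$ is the one place I expect to need care. If necessary I will relabel the outcome bits; the structural argument is unaffected.

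\textbf{Step 3 (independence and conclusion).} By Steps 1 and 2, conditioned on the entire history $b_1,\dots,b_{i-1}$, the state before the $i$-th measurement is a specific basis vector determined by $b_{i-1}$ and the parity of $i$. The conditional law of $a_i=b_i\oplus b_{i-1}$ is therefore Bernoulli with the fixed parameter from Step 2, regardless of the history. A chain-rule induction on $i$ then shows that $(a_1,\dots,a_T)$ are i.i.d.\ with this parameter. Consequently the joint law of $(b_1,\dots,b_T)$ equals $\mathrm{MWDist}(T,p_j)$. Stopping early upon a desired outcome is just a function of this sequence, so it needs no separate treatment. The main obstacle is not analytic but notational: keeping the phase and label conventions consistent across the alternation, and in particular handling the first step, which starts from $\ket{v^B_{j,1}}$ and so fixes $b_0=1$.
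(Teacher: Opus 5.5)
Your Steps 1--3 are the standard argument, and your overlap computations are correct. The paper gives no proof of its own; it only cites \cite{ChiesaRewinding}, whose proof is exactly this invariant-block, two-state Markov-chain computation. The problem is the point you deferred as bookkeeping: it cannot be repaired by relabeling.

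Use the natural labeling, in which outcome $1$ means the state lands in $\mathrm{im}(\Pi_A)$ or $\mathrm{im}(\Pi_B)$. This is the labeling that gives $b_0=1$ for the start state $\ket{v^B_{j,1}}$, and it is the one used in the proof of Theorem~\ref{thm:sequential-rep}. Under it, your Step 2 shows that consecutive outcomes \emph{agree} with probability $p_j$. So $a_i=b_i\oplus b_{i-1}$ is i.i.d.\ $\mathrm{Ber}(1-p_j)$. Your Step 3 therefore yields $\mathrm{MWDist}(T,1-p_j)$ under the definition as printed ($b_i:=b_{i-1}\oplus a_i$ with $\Pr[a_i=1]=p$), not $\mathrm{MWDist}(T,p_j)$.

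Neither relabeling rescues this. Swapping $0\leftrightarrow 1$ for all outcomes leaves unchanged which consecutive pairs agree, so it changes nothing. Complementing only the $A$-outcomes would make the printed statement literally true, but it changes what $b_i$ means. The repeat fraction of those bits then has mean $1-p_j$, which is useless as the estimator in Theorem~\ref{thm:sequential-rep}.

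The real resolution is that the printed definition of $\mathrm{MWDist}$ has dropped a term. The intended update, as in \cite{ChiesaRewinding}, is $b_i:=b_{i-1}\oplus a_i\oplus 1$: the bit \emph{repeats} exactly when $a_i=1$, which happens with probability $p$. The paper's own Proposition~\ref{prop:mwDist} confirms this, since it is a purely classical statement that is false for the printed definition. There $b_i=b_{i-1}$ iff $a_i=0$, so $\NReps(1,b_1,\ldots,b_T)\sim\mathrm{Bin}(T,1-p)/T$. It is true for the corrected definition.

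Against the corrected definition, your argument proves the lemma verbatim: by your Steps 2--3, the repeat indicators $1\oplus b_i\oplus b_{i-1}$ are i.i.d.\ $\mathrm{Ber}(p_j)$. You should state this explicitly rather than promising an unspecified relabeling.

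A minor point: the next lemma (Lemma~4.5 in \cite{ChiesaRewinding}) also uses one-dimensional blocks inside $\mathrm{im}(\Pi_B)$. For these, set $p_j\in\{0,1\}$ according to whether the block lies in $\mathrm{im}(\Pi_A)$ or $\ker(\Pi_A)$, and the same computation goes through trivially.
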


\begin{lemma}[Lemma~4.5 in~\cite{ChiesaRewinding}]
If the initial state lies in $\mathrm{im}(\Pi_B)$, it must take the form 
$\ket{\psi}=\sum_j \alpha_j \ket{v^{B}_{j,1}}$. Then the outcome distribution equals:
sample an index $j$ with probability $|\alpha_j|^2$, then sample from $\mathrm{MWDist}(T,p_j)$.
\end{lemma}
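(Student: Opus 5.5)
The statement to prove is Lemma~4.5 of \cite{ChiesaRewinding}: an initial state $\ket{\psi}\in\mathrm{im}(\Pi_B)$, subjected to $T$ alternating measurements $A,B,A,B,\ldots$, produces outcomes distributed as the mixture $\sum_j|\alpha_j|^2\,\mathrm{MWDist}(T,p_j)$. I would reduce it to the preceding single-block lemma (Lemma~4.4) using the block-diagonal structure from Jordan's lemma. There are four steps.

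\textbf{Step 1 (form of the state).} Since $\mathcal{H}=\bigoplus_j S_j$ and each $S_j$ is invariant under $\Pi_B$, we have $\mathrm{im}(\Pi_B)=\bigoplus_j \mathrm{im}(\Pi_B|_{S_j})$. In a two-dimensional block, $\mathrm{im}(\Pi_B|_{S_j})$ is spanned by $\ket{v^B_{j,1}}$.

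One-dimensional blocks need a convention. A one-dimensional block lying in $\mathrm{im}(\Pi_B)$ is also an eigenvector of $\Pi_A$, with eigenvalue $p_j\in\{0,1\}$. Writing it as $\ket{v^B_{j,1}}$, the planar-rotation relation degenerates consistently. Hence $\ket{\psi}=\sum_j\alpha_j\ket{v^B_{j,1}}$ with $\sum_j|\alpha_j|^2=1$.

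\textbf{Step 2 (joint outcome probability).} Fix an outcome string $(b_1,\ldots,b_T)$. Let $\Pi^{(i)}_{b_i}$ denote the projector of the $i$-th measurement: $\Pi_A$ or $I-\Pi_A$ for odd $i$, and $\Pi_B$ or $I-\Pi_B$ for even $i$, according to $b_i$. The probability of the string is
\[
\bigl\|\Pi^{(T)}_{b_T}\cdots\Pi^{(1)}_{b_1}\ket{\psi}\bigr\|^2 .
\]
Every one of these projectors preserves each $S_j$. So the operator $O:=\Pi^{(T)}_{b_T}\cdots\Pi^{(1)}_{b_1}$ is block diagonal, and $O\ket{\psi}=\sum_j\alpha_j\,O\ket{v^B_{j,1}}$, where the $j$-th term lies in $S_j$. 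The $S_j$ are mutually orthogonal, so the cross terms vanish and
\[
\|O\ket{\psi}\|^2=\sum_j|\alpha_j|^2\,\|O\ket{v^B_{j,1}}\|^2 .
\]

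\textbf{Step 3 (invoke Lemma~4.4).} By Lemma~4.4, $\|O\ket{v^B_{j,1}}\|^2$ is exactly the probability that $\mathrm{MWDist}(T,p_j)$ outputs $(b_1,\ldots,b_T)$. Summing over $j$ with weights $|\alpha_j|^2$ gives the claimed mixture: first sample $j$ with probability $|\alpha_j|^2$, then sample from $\mathrm{MWDist}(T,p_j)$.

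\textbf{Step 4 (stopping rules).} If the algorithm halts early, for example upon a desired outcome, this is handled by the same identity applied to prefixes, because the joint law of any prefix is determined by the same products.

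\textbf{Main obstacle.} The only delicate point is the bookkeeping for one-dimensional Jordan blocks together with the phase conventions, so that Lemma~4.4 applies verbatim there. I would handle this by treating $p_j\in\{0,1\}$ explicitly. In that case the $\mathrm{MWDist}$ outcome is deterministic (all ones when $p_j=1$, alternating when $p_j=0$), and this matches the direct computation for an eigenvector of both projectors.
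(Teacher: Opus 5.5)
Your proof is correct and is the standard argument; the paper gives no proof of its own and imports the lemma from \cite{ChiesaRewinding}. The argument runs as follows: $\Pi_A$, $I-\Pi_A$, $\Pi_B$ and $I-\Pi_B$ all preserve every Jordan block $S_j$, so the block components of $O\ket{\psi}$ are orthogonal and $\|O\ket{\psi}\|^2=\sum_j|\alpha_j|^2\,\|O\ket{v^B_{j,1}}\|^2$; the single-block Lemma~4.4, together with your direct check for one-dimensional blocks, then finishes it.

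One remark concerns your claim that $\mathrm{MWDist}(T,1)$ is all ones and $\mathrm{MWDist}(T,0)$ alternates. This is the physically correct convention, in which consecutive outcomes repeat with probability $p$. It is the reverse of the paper's literal definition, $b_i:=b_{i-1}\oplus a_i$ with $\Pr[a_i=1]=p$. That written definition has a flipped convention: under it, the first $A$-outcome from $\ket{v^B_{j,1}}$ would be $1$ with probability $1-p_j$ instead of $p_j$, and Proposition~\ref{prop:mwDist} would give mean $1-p$. So your reading is the right one.
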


\noindent Let $\tilde b=(b_0,b_1,\ldots,b_n)\in\{0,1\}^{n+1}$. Define the normalized count of consecutive repeats
\[
\mathrm{NReps}(\tilde b) \;:=\; \frac{1}{n}\bigl|\{\,j\in\{1,\ldots,n\}: b_{j}=b_{j-1}\,\}\bigr|.
\]

\begin{proposition}[Proposition~4.7 in~\cite{ChiesaRewinding}] \label{prop:mwDist}
If $(b_1,\ldots,b_T)\sim \mathrm{MWDist}(T,p)$ , then
$\mathrm{NReps}(1,b_1,\ldots,b_T)$ has the same distribution as $\mathrm{Bin}(T,p)/T$. In particular, $\mathbb{E}[\NReps(1,b_1,\dots,b_T)]=p$.

\end{proposition}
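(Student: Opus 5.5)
The plan is to reduce $\mathrm{NReps}$ to a sum of the i.i.d.\ variables $a_1,\dots,a_T$ used to define $\mathrm{MWDist}(T,p)$. With $b_0:=1$ prepended, the recursion $b_i = b_{i-1}\oplus a_i$ shows that, for each $j\in\{1,\dots,T\}$, the event $b_j=b_{j-1}$ depends only on $a_j$. It holds exactly when $a_j$ takes its ``no-flip'' value. So I will first write
\[
T\cdot \mathrm{NReps}(1,b_1,\dots,b_T)\;=\;\sum_{j=1}^{T}\mathbf{1}\bigl[b_j=b_{j-1}\bigr]\;=\;\sum_{j=1}^{T}\mathbf{1}\bigl[a_j \text{ is a no-flip}\bigr].
\]
This is a pathwise identity. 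The map $(a_1,\dots,a_T)\mapsto(b_1,\dots,b_T)$ is a bijection, since $a_j=b_j\oplus b_{j-1}$, so no information is lost in the rewriting.

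The indicators on the right are independent, because the $a_j$ are independent, and they are identically distributed with a common success probability $q$. Hence the sum is distributed as $\mathrm{Bin}(T,q)$, and dividing by $T$ gives $\mathrm{NReps}\sim\mathrm{Bin}(T,q)/T$. Linearity of expectation then gives $\mathbb{E}[\mathrm{NReps}]=q$.

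The main obstacle is fixing the convention so that $q=p$. Taken literally, the definition as written has the flip event $a_j=1$ occurring with probability $p$, which would give $q=1-p$. I would therefore check against the physics that $\mathrm{MWDist}$ is meant to model in Lemma~4.4 of \cite{ChiesaRewinding}. Starting from $\ket{v^B_{j,1}}$, measuring $A$ returns outcome $1$, a repeat of the previous outcome $1$, with probability $|\langle v^A_{j,1}|v^B_{j,1}\rangle|^2=p_j$. By the rotation relations, every subsequent alternation likewise repeats the previous outcome with probability $p_j$ and flips with probability $1-p_j$. So the intended convention is that the flip bit $a_i$ equals $1$ with probability $1-p$, equivalently that a repeat occurs with probability $p$. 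Under that convention, and consistent with Lemmas~4.4 and~4.5 as used later, $q=p$ and the proposition follows. I would state this normalization explicitly at the start of the proof and then run the two-line argument above.
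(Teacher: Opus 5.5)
Your argument is correct. The paper gives no proof of its own and simply cites \cite{ChiesaRewinding}; the natural proof is the one you give, namely that $T\cdot\mathrm{NReps}(1,b_1,\dots,b_T)$ counts the no-flip steps, and these are i.i.d.\ Bernoulli trials. You are also right that the paper's transcription of $\mathrm{MWDist}(T,p)$ swaps $p$ and $1-p$: read literally, $\Pr[a_i=1]=p$ yields $\mathrm{Bin}(T,1-p)/T$. Your normalization, in which a repeat occurs with probability $p$, is the one forced by Lemma~4.4 of \cite{ChiesaRewinding}, since the first $A$-outcome from $\ket{v^B_{j,1}}$ is $1$ with probability $p_j$. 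It is also what the proof of \cref{thm:sequential-rep} needs to get $\mathbb{E}[\widetilde p]=q(\psi')$.
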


Hence, applying $T$ alternating measurements and outputting the repeat fraction
provides an estimator whose expectation (in block $j$) is $p_j$. 

\begin{proposition}[Additive Chernoff bound, Proposition~3.1 in~\cite{ChiesaRewinding}]
Let $X\sim \mathrm{Bin}(n,p)$. For $\varepsilon,\eta>0$, if
\[
n \;\ge\; \frac{\ln(1/2\eta)}{2\varepsilon^2},
\]
then
\[
\Pr\Big[\; \big|X/n - p\big|\le \varepsilon \;\Big]\;\ge\; 1-\eta.
\]
\end{proposition}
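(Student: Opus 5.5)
This is the additive Chernoff bound for $X\sim\mathrm{Bin}(n,p)$, which the paper quotes from \cite{ChiesaRewinding}. The plan is to derive it from Hoeffding's inequality for bounded independent variables.

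Write $X=\sum_{i=1}^n Y_i$, where the $Y_i$ are i.i.d.\ Bernoulli$(p)$ and take values in $[0,1]$. Hoeffding's inequality gives, for each tail,
\[
\Pr\bigl[X/n-p>\varepsilon\bigr]\le e^{-2n\varepsilon^2},
\qquad
\Pr\bigl[X/n-p<-\varepsilon\bigr]\le e^{-2n\varepsilon^2}.
\]
To prove the first tail, apply Markov's inequality to $e^{s(X-np)}$ for a parameter $s>0$. By independence, $\E[e^{s(X-np)}]$ factors as $\prod_i \E[e^{s(Y_i-p)}]$. Hoeffding's lemma bounds each factor by $e^{s^2/8}$, since $Y_i-p$ has mean zero and lies in an interval of length $1$. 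This gives
\[
\Pr\bigl[X-np>n\varepsilon\bigr]\le e^{-sn\varepsilon+ns^2/8}.
\]
Taking $s=4\varepsilon$ yields the bound $e^{-2n\varepsilon^2}$. The lower tail follows by the same argument applied to $n-X\sim\mathrm{Bin}(n,1-p)$.

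A union bound over the two tails then gives
\[
\Pr\bigl[\,|X/n-p|>\varepsilon\,\bigr]\le 2e^{-2n\varepsilon^2}.
\]
The two-sided bound $2e^{-2n\varepsilon^2}$ is at most $\eta$ exactly when $n\ge \ln(2/\eta)/(2\varepsilon^2)$. The stated hypothesis is $n\ge\ln(1/2\eta)/(2\varepsilon^2)$, and $\ln(1/(2\eta))<\ln(2/\eta)$, so this hypothesis is weaker than what the argument needs. The factor-of-$2$ gap matters: for $\eta<1/2$, the hypothesis gives only $2e^{-2n\varepsilon^2}\le 4\eta$, so the argument proves the statement with $1-4\eta$ in place of $1-\eta$, or equivalently with the sample bound $n\ge \ln(2/\eta)/(2\varepsilon^2)$.

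I do not expect to close this gap. A one-sided bound would suffice, and it does follow from the hypothesis as stated: $e^{-2n\varepsilon^2}\le 2\eta\le\eta$ fails for the same reason. Short of a sharper two-sided tail inequality, I would read the cited constant as a transcription slip for $\ln(2/\eta)$. This has no effect on later uses in the paper, because $n$ is always chosen polynomially large and only the $O(\log(1/\eta)/\varepsilon^2)$ scaling matters.

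Hoeffding's lemma is the only non-routine ingredient, and it is standard. So the one real obstacle is this bookkeeping of constants between the stated hypothesis and the two-sided union bound.
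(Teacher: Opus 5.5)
The paper gives no proof of this proposition; it only quotes it from \cite{ChiesaRewinding}. Your Hoeffding argument is the standard proof: moment generating function, Hoeffding's lemma, $s=4\varepsilon$, the lower tail via $n-X\sim\mathrm{Bin}(n,1-p)$, then a union bound. It correctly establishes the conclusion under the hypothesis $n\ge\ln(2/\eta)/(2\varepsilon^2)$.

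You are right not to try to close the gap, and you can say so more strongly: the proposition as written is false, so no sharper tail inequality could rescue it. Take $n=1$, $p=1/2$, $\varepsilon=2/5$, $\eta=9/20$.
\begin{itemize}
\item The hypothesis holds, since $\ln(1/(2\eta))/(2\varepsilon^2)=\tfrac{25}{8}\ln\tfrac{10}{9}<1$.
\item Yet $|X-1/2|=1/2>\varepsilon$ with probability $1$, so $\Pr[|X/n-p|\le\varepsilon]=0<11/20=1-\eta$.
\end{itemize}
Reading $\ln(1/2\eta)$ as $\ln(\eta/2)$ instead makes the hypothesis vacuous, which is worse. So the intended statement must use $\ln(2/\eta)$, and that is exactly what you prove.

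Two small corrections. First, your sentence about the one-sided bound contradicts itself ("it does follow" followed by "fails"). What you mean is that even the one-sided estimate $e^{-2n\varepsilon^2}\le\eta$ does not follow, because the hypothesis only gives $e^{-2n\varepsilon^2}\le 2\eta$.

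Second, your claim that the constant has no downstream effect holds only if the correction is carried into the proof of Theorem~\ref{thm:sequential-rep}. There the paper sets $n=\lceil\ln(1/2\eta)/(2\varepsilon^2)\rceil$ with $\varepsilon=\eta=\kappa/8$. With your weakened failure probability $4\eta$, the clipping step gives only $\tau\le 2\varepsilon+8\eta=5\kappa/4$, which is not enough. Defining $n$ with $\ln(2/\eta)$ instead (still polynomial) restores $\tau\le\kappa/2$.
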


Using Chernoff's bound, we have
$\Pr[|\mathrm{NReps}(\mathrm{MWDist}(T,p_j))-p_j|\le \varepsilon]\ge 1-\eta$ whenever
$T\ge \ln(1/2\eta)/(2\varepsilon^2)$.

\fi
\section{Tests of non-commutation}

\subsection{Definitions}

\begin{definition}[Test of Non-Commutation]\label{def:ToNC}
    An $(\alpha, \beta)$ classical test of non-commutation (ToNC) is a two-stage interactive protocol between a QPT (uniform) prover $P$ and a PPT (uniform) verifier $V$. The first stage is defined by interactive strategies $P_\prep$ and $V_\prep$, which we denote by 
    \[\left(\ket{\psi},(c,\state)\right) \gets \langle P_\prep(1^\secp),V_\prep(1^\secp)\rangle,\] where $\ket{\psi}$ is $P$'s output, $(c,\state)$ is $V$'s output, and $c \in \{0,1\}, \state \in \{0,1\}^*$. The second stage is defined by two binary-outcome observables $P_0, P_1$, where $P_c$ is applied in the case that the first stage outputs $c$, and a verification predicate $V_\out(\state,a) \to \{0,1\}$. It should satisfy the following properties.
    \begin{itemize}
        \item \textbf{Completeness}: It holds that\footnote{In an abuse of notation, we write $P(\ket{\psi})$ to denote the distribution induced by measuring state $\ket{\psi}$ with projective measurement $\{P,I-P\}$.} 
        \[\Pr\left[V_\out(\state,a) = 1 : \begin{array}{r}\ket{\psi},(c,\state) \gets \langle P_\prep(1^\secp),V_\prep(1^\secp)\rangle \\ a \gets P_c(\ket{\psi})\end{array}\right] \geq \beta.\]
        \item \textbf{Soundness}: For any non-uniform QPT strategy $\rho,\widetilde{P}_\prep, \widetilde{P}_0,\widetilde{P}_1$ such that $\widetilde{P}_0\widetilde{P}_1 = \widetilde{P}_1\widetilde{P}_0$, \[\Pr\left[V_\out(\state,a) = 1 : \begin{array}{r}\ket{\psi},(c,\state) \gets \langle \widetilde{P}_\prep(1^\secp,\rho),V_\prep(1^\secp)\rangle \\ a \gets \widetilde{P}_c(\ket{\psi})\end{array}\right] \leq \alpha + \negl(\secp).\]
    \end{itemize}
\end{definition}

\begin{definition}[Normal form test of non-commutation]\label{def:normal-form}
    We say that the test of non-commutation (\cref{def:ToNC}) is in \emph{normal form} if:
\begin{itemize}
    \item $\state = a^* \in \{0,1\}$, and 
    \item $V_\out(a^*,a) = 1$ iff $a = a^*$.
\end{itemize}
That is, there is always exactly one answer bit that the verifier accepts.
\end{definition}

\begin{remark}\label{remark:advantage}
    In any normal-form ToNC, both $\beta,\alpha \geq 1/2$ since there always exists a trivial strategy which outputs a uniformly random bit. Thus, it is often convenient to write the parameters of a normal-form ToNC using the \emph{advantage} of the prover as opposed to the raw probability of success. That is, we will often refer to a normal form $(\alpha,\beta)$-ToNC where $\alpha, \beta \in [1/2,1]$ as an $(\epsilon,\delta)$-ToNC, where $\epsilon \coloneqq 2\beta - 1, \delta \coloneqq 2\alpha - 1 \in [0,1]$. We will also assume without loss of generality that $\delta \geq \epsilon/2$, as justified in our discussion in \cref{subsec:results}.
\end{remark}

Next, we specify a stronger completeness guarantee stating that for \emph{all} preambles, the (honest) prover's advantage is at least $\epsilon$. Note that this is satisfied by the compiled CHSH game (with $\epsilon = 2\cos^2(\pi/8) - 1$) and the compiled magic square game (with $\epsilon = 1$).

\begin{definition}[Test of non-commutation with robust completeness]\label{def:robust-completeness}
    We say that a (normal form) $(\epsilon,\delta)$-ToNC has \emph{robust completeness} if with probability 1 over $\ket{\psi},(c,\state) \gets \langle P_\prep(1^\secp),V_\prep(1^\secp)\rangle,$
    \[\Pr\left[V_\out(\state,a) = 1 : a \gets P_c(\ket{\psi})\right] = \frac{1}{2} + \frac{\epsilon}{2}.\]
\end{definition}

% Normal-form reduction, streamlined version.
%\ifsubmission\section{Normal form compiler}\else

\subsection{Normal form compiler}\label{subsec:normal-form}

In this subsection, we present a procedure that transforms a less structured form of ToNC into a normal-form ToNC as in Definition~\ref{def:normal-form}. This transformation incurs an inverse-polynomial parameter loss. Whereas in the final round of a normal-form ToNC the verifier accepts exactly one of the two possible prover answers, in the less structured form, the verifier may alternatively accept both or neither. Since our constructions of oblivious transfer and key
agreement use normal-form ToNCs, the compilation procedure below shows that the
same implications hold for a broader class of ToNC protocols. Moreover, this compilation procedure suggests that normal-form ToNC is not merely a convenient technical restriction, but a natural abstraction: it captures, up to bounded parameter loss, a broad class of ToNC protocols.

\begin{theorem}[Normal form compiler]
\label{thm:tonc-normal-form-succinct}
Let $I$ be two-stage protocol satisfying Definition~\ref{def:ToNC}, except that the second-stage predicate
$V_\out(\state,\cdot)$ may (for some values of $\state$) accept both $a\in\{0,1\}$ or accept neither. Then for any inverse-polynomial function $\varepsilon(\secp)$, there exists a \emph{normal-form} test $I^{\mathsf{nf}}$ with parameters $(\alpha',\beta')$ such that
\[
\beta' - \alpha' \;\ge\; \beta - \alpha - O(\varepsilon).
\]
\end{theorem}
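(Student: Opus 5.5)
The plan is to have the compiled verifier $V^{\mathsf{nf}}$ run $I$ unchanged and then collapse every degenerate value of $\state$ to a uniformly random private answer. After the first stage, $V^{\mathsf{nf}}$ classifies $\state$ into one of three types: type $\mathsf{O}$, where $V_\out(\state,\cdot)$ accepts exactly one bit $a^*$; type $\mathsf{B}$, where it accepts both bits; and type $\mathsf{N}$, where it accepts neither. The verifier is PPT, so it can compute this type by evaluating $V_\out(\state,0)$ and $V_\out(\state,1)$. On type $\mathsf{O}$ it keeps $a^*$; on types $\mathsf{B}$ and $\mathsf{N}$ it sets $a^*\gets\{0,1\}$ uniformly. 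This is in normal form by construction, and the prover's second-stage operations $P_0,P_1$ are unchanged, so commuting strategies stay commuting.

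For a fixed strategy, write $p_{\mathsf{O}}^{+}$ and $p_{\mathsf{O}}^{-}$ for the probabilities of landing in type $\mathsf{O}$ and answering correctly or incorrectly, and $p_{\mathsf{B}},p_{\mathsf{N}}$ for the probabilities of the other two types. Then the success probability in $I$ is $p_{\mathsf{O}}^{+}+p_{\mathsf{B}}$, and the success probability in $I^{\mathsf{nf}}$ is $p_{\mathsf{O}}^{+}+\tfrac12(p_{\mathsf{B}}+p_{\mathsf{N}})$. For the honest prover this gives $\beta'=\beta-\tfrac12(p^h_{\mathsf{B}}-p^h_{\mathsf{N}})$. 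On the soundness side I will use three auxiliary commuting provers: the adversary itself, and the two constant provers that always output $0$ or always output $1$. These are commuting and QPT, so soundness of $I$ applies to each. Averaging the two constant provers gives $\tfrac12 p_{\mathsf{O}}+p_{\mathsf{B}}\le\alpha+\negl$. Combining this with the adversary's own bound should control $p^{+}_{\mathsf{O}}-p^{-}_{\mathsf{O}}$, which is exactly the normal-form advantage.

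The main obstacle is that the shift $\tfrac12(p_{\mathsf{B}}-p_{\mathsf{N}})$ depends on the strategy. An adversary may steer the first stage toward type-$\mathsf{N}$ transcripts, where the naive compiler gives it a free $\tfrac12$, or toward type $\mathsf{B}$. The two sides therefore shift by different amounts, and the gap is not preserved automatically. This is where the $O(\varepsilon)$ loss should enter.

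My first attempt is to make the shift essentially a fixed constant $\kappa$. The compiler outputs a random bit on $\mathsf{B}$ and $\mathsf{N}$ with tuned probabilities, and otherwise falls back to a fixed answer bit. It also mixes in, with probability $\varepsilon$, an execution whose type is decided by the verifier's own coins. The aim is to prove an inequality of the form $\Pr[\text{win }I^{\mathsf{nf}}]\le\Pr[\text{win }I]-\kappa+O(\varepsilon)$ for every commuting prover, together with $\ge\Pr[\text{win }I]-\kappa-O(\varepsilon)$ for the honest one. To get there I will do case analysis on which of $p_{\mathsf{B}},p_{\mathsf{N}}$ the adversary can increase. Wherever it increases a mass, I will replace the adversary by a hybrid commuting prover that answers with a constant bit on those transcripts, and then invoke soundness of $I$ for that hybrid. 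Each hybrid is still a valid commuting non-uniform QPT prover, since it only post-processes the classical transcript. If this case analysis does not close, I will fall back on the verifier estimating honest statistics to $\pm\varepsilon$ by sampling. This is possible only if those statistics are computable from classical data alone, which I would check for the protocols of interest.
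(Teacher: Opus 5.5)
Your bookkeeping for the honest prover is correct, and it pins down the right target. Write $\ell,\mu,\gamma$ for the honest probabilities of types $\mathsf{B},\mathsf{N},\mathsf{O}$. You then need $\beta'\approx\frac12+\beta-\ell-\frac{\gamma}{2}=\beta-\frac12(\ell-\mu)$, together with a matching bound $\alpha'\le\frac12+\alpha-\ell-\frac{\gamma}{2}+O(\varepsilon)$ for every commuting prover. However, the obstacle you identify in your third paragraph is the entire content of the theorem, and the proposal never resolves it.

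Your ``first attempt'' cannot work. Suppose each target bit is fixed from the type of a single execution (plus verifier coins). Then a prover that knows it is on an $\mathsf{N}$-transcript wins there with probability at least $\frac12$ by guessing at random, so steering into $\mathsf{N}$ is never penalized. Here is a concrete failure.
\begin{itemize}
\item Take a normal-form test $J$ with success probabilities $\beta_0>\alpha_0$ (e.g.\ compiled CHSH, $\beta_0\approx 0.854$, $\alpha_0=0.75$).
\item Let $I$ start with a prover bit $m$. On $m=1$, run $J$. On $m=0$, the verifier with probability $\alpha_0$ announces the unique accepted answer in the clear, and otherwise produces a reject-all transcript. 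This $I$ has completeness $\beta_0$ and soundness $\alpha_0$.
\item Suppose your compiler keeps a type-$\mathsf{O}$ target with probability $\theta$ and randomizes it otherwise. The honest prover wins with probability $\frac12+\theta(\beta_0-\frac12)$.
\item The commuting prover that sends $m=0$ wins with probability at least $\frac12+\frac{\theta\alpha_0}{2}$, which is larger.
\end{itemize}
Tuning the treatment of $\mathsf{B}$ and $\mathsf{N}$, or mixing in an $\varepsilon$-fraction of verifier-determined executions, changes this by at most $O(\varepsilon)$. The hybrid provers in your case analysis are only analysis devices; they cannot repair a compiler that already fails on this example.

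The missing idea is to force any prover that is not penalized to reproduce the honest type frequencies. This needs many executions and a verifier that knows $(\ell,\mu,\gamma)$. The paper's verifier works as follows.
\begin{itemize}
\item It runs a hidden, uniformly random number $w\gets[v]$ of blocks, each consisting of $r$ first-stage executions of $I$.
\item It flags, and thereafter uses a uniformly random target, if some block's empirical type frequencies deviate from $(\ell,\mu,\gamma)$ by more than $\varepsilon$. In the example above $\mu=0$, so the $m=0$ strategy is flagged.
\item It then runs one more block, takes a uniformly random coordinate as the real instance, and uses a random target if that coordinate is not unique-accepting.
\end{itemize}
Because $w$ is hidden, a prover that passes the checks with probability $u$ has a bad final block with probability at most $1/(uv)$, conditioned on passing. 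Soundness then splits into two cases.
\begin{itemize}
\item If $u\le\varepsilon$, the prover's success is at most $\frac12+\frac{\varepsilon}{2}$.
\item If $u>\varepsilon$, a commuting prover for $I$ rejection-samples an unflagged check phase and embeds its external instance at the random coordinate. Soundness of $I$ and the $O(\varepsilon)$-closeness of the frequencies then give $s\le\frac12+\alpha-\ell-\frac{\gamma}{2}+O(\varepsilon)$.
\end{itemize}

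You mention estimating honest statistics only as a fallback, without saying how the estimates would be used or analyzed. Classical computability of these statistics is also not needed. The paper either hard-codes $(\ell,\mu,\gamma)$ or lets the verifier estimate them in a quantum preprocessing phase that simulates the honest prover, noting that the downstream implications are unaffected. Checking only ``the protocols of interest'' would not prove the general statement.
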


\begin{remark}
    As we explain in the proof below, the verifier in the constructed normal-form test $I^{\mathsf{nf}}$ may include a \emph{quantum} polynomial-time preprocessing phase that takes as input the security parameter and outputs some classical information. After this, its interaction with the prover is entirely classical. While this does not exactly fit the syntax of \cref{def:ToNC}, we remark that all the implications of (normal-form) ToNC that we show (i.e. to classical-communication KA and OT) follow in exactly the same manner if we start with a ToNC with quantum verifier preprocessing.
\end{remark}

\begin{proof}

By the completeness of $I$, there exists a uniform QPT strategy $S=(P_\prep,P_0,P_1)$ that succeeds with probability at least $\beta$ for any value of the security parameter $\secp$. Let $(c,\state)$ be the verifier output in the preparation stage when interacting with $P_\prep$. For any such output, define the acceptance set
\[
A(c,\state) := \{a\in\{0,1\} : V_\out(c,\state,a)=1\},
\]
so that $|A(c,\state)|\in\{0,1,2\}$. Define
\[
\ell(\secp) := \Pr[|A(c,\state)|=2],\qquad
\mu(\secp) := \Pr[|A(c,\state)|=0],\qquad
\gamma(\secp) := \Pr[|A(c,\state)|=1], 
\]
where the probabilities are taken over random interactions between $S(\secp)$ and the verifier $V(\secp)$. 

Given the security parameter $\secp$, the verifier in our normal-form test $I^{\mathsf{nf}}$ must determine the numbers $\ell(\secp), \mu(\secp), \gamma(\secp)$. For all non-commutation tests we are aware of, these numbers are in fact constants independent of the security parameter, and thus they can be hard-coded into the verifier. However, in general, this may not be the case, so the verifier will \emph{estimate} these values by running a quantum pre-processing step in $I^{\mathsf{ns}}$ (Step 1 in the protocol described below).

Now, let $a$ be the answer obtained by applying $P_c$ to the prover state, and define the conditional success on
\emph{unique-accepting} transcripts
\[
\xi(\secp) := \Pr[a=a^\star(c,\state)\mid |A(c,\state)|=1],
\]

\noindent where if $|A(c,\state)|=1$, $a^\star(c,\state)$ is the unique accepted answer. Then the overall success of $S$ decomposes as
\begin{equation}
\label{eq:succ-decomp-succinct}
\Pr[I\text{ accepts with }S] \;=\; \ell + \gamma\,\xi \;\ge\; \beta.
\end{equation}

Consider the following strategy in $I$. It executes $S$ in the first stage of the protocol, and in the second stage it sends the verifier a uniformly random bit. This strategy succeeds in the protocol with probability $\ell+\frac{\gamma}{2}$, and applies perfectly commuting observables in the second stage. Therefore, the soundness of $I$ implies that $\ell+\frac{\gamma}{2}\le \alpha$. On the other hand, $\ell+\gamma\xi\ge\beta$. Thus, 

\[\beta-\alpha\le (\xi-\frac{1}{2})\gamma\rightarrow \gamma\ge\frac{\beta-\alpha}{\xi-\frac{1}{2}} \qquad \text{(and indeed}\; \xi>\frac{1}{2} \;\text{because}\; \beta>\alpha\text{)}.\]

\noindent We deduce that $\gamma>0$. Moreover, since $\alpha<\beta$ are constants, we get  $\gamma(\secp)\ge\frac{1}{\poly(\secp)}$.

\medskip
\noindent\textbf{The normal-form protocol $I^{\mathsf{nf}}$.}
Fix functions $\varepsilon(\secp)=\frac{1}{\poly(\secp)}$ such that $\varepsilon<\frac{\gamma}{4}$, and $\eta>0$, where $\eta(\lambda)=\negl(\secp)$, and let $r, v$ be two polynomials. The verifier operates as follows:
\begin{enumerate}
    %\item Runs $T(\secp)$ and obtains $\ell:=\ell(\secp),\mu:=\mu(\secp),\gamma:=\gamma(\secp)$.
    \item Obtain estimates of $\ell:=\ell(\secp),\mu:=\mu(\secp),\gamma:=\gamma(\secp)$, either as hard-coded values or by simulating interaction with the prover strategy $S$ several times and taking the averages. By running enough times, the verifier can obtain estimates that, with overwhelming probability, are arbitrarily inverse-polynomial close to the real values. We suppress mention of this approximation for the rest of the proof, and assume that the verifier obtains the exact values $\ell:=\ell(\secp),\mu:=\mu(\secp),\gamma:=\gamma(\secp)$.
    \item Sets a variable $\sf FLAG:=0$, draws $w\in[v(\secp)]$ uniformly at random. For $i=1,\ldots,w$:
        \begin{enumerate}
        \item Runs $r$ copies of $I$
    sequentially with the prover, using fresh verifier randomness at each copy. The verifier completes only the first step of the protocol at each copy, producing outputs
    $(c_i,\state_i)$.
        \item Computes $|A(c_i,\state_i)|$ by executing $V_\out(\state_i,0)$ and $V_\out(\state_i,1)$,
    and sets
    \[
    \widehat{\ell} := \frac{1}{r}|\{i:|A(c_i,\state_i)|=2\}|,\quad
    \widehat{\mu} := \frac{1}{r}|\{i:|A(c_i,\state_i)|=0\}|,\quad
    \widehat{\gamma} := \frac{1}{r}|\{i:|A(c_i,\state_i)|=1\}|.\]
    \item If $\widehat{\gamma}=0$, or if any of $|\widehat{\ell}-\ell|,|\widehat{\mu}-\mu|,|\widehat{\gamma}-\gamma|$ exceeds
$\varepsilon$, the verifier sets $\sf FLAG:=1$ and exits the loop.
 \end{enumerate}
    \item The verifier then runs the following steps.
        \begin{enumerate}
        \item Runs step 2 (a) once, and thus produces outputs $(c_i,\state_i)$.
        \item Samples $i\gets [r]$ uniformly at random, sets $c:=c_i$ and sends $i$ to the prover. If $|A(c_i,\state_i)|\ne 1$, sets $\sf FLAG =1$.
        \item If $\sf FLAG =1$, the verifier draws a uniformly random bit $\state\gets \{0,1\}$, sends $c$ to the prover and receives an answer bit $a$, and then accepts if and only if $\state=a$. 
        \item Otherwise,  the verifier sets $\state:=a^\star(c_i,\state_i)$, where $a^\star(c_i,\state_i)$ is the unique answer accepted by the verifier in the $i$-th instance. The prover sends an answer bit $a$, and the verifier accepts if and only if $\state=a$.
        \end{enumerate}
\end{enumerate}

\textbf{Completeness.} 
We consider the prover strategy $\widetilde{S}$ for $I^{\mathsf{nf}}$, which executes $S$ independently in each copy of $I$. By standard concentration bounds and a union bound over the three empirical frequencies, for large enough polynomials $r,v$ we obtain that the probability that $\sf{FLAG}=1$ at the end of stage $2$ of the protocol is at most $\eta(\secp)$.

Conditioned on the event that $\sf{FLAG}=0$ at the end of step 2, with probability $\gamma$, the verifier does not set $\sf FLAG=1$ in stage 3(b), and in that case the acceptance probability is $\xi=\frac{\beta-\ell}{\gamma}$. $\xi$ is the acceptance probability of a prover running the strategy $S$ in the original protocol $I$, conditioned on $|A(c,\sf{st})|=1$. If the verifier does set $\sf FLAG=1$ in stage 3(b), the acceptance probability is $\frac{1}{2}$.

Overall,

\begin{equation}
\label{eq:beta-prime-completeness}
\Pr[I^{\mathsf{nf}}\text{ accepts with }\widetilde{S}] \;\ge\; \xi \gamma + \frac{1}{2}(1-\gamma)-\eta,\; \text{where}\; \xi = \frac{\beta-\ell}{\gamma}.
\end{equation}

\textbf{Soundness.}
Let
\[
\widetilde S=(\widetilde P_\prep,\widetilde P_0,\widetilde P_1)
\]
be an arbitrary QPT strategy for \(I^{\mathsf{nf}}\) whose final observables commute:
\[
\widetilde P_0\widetilde P_1=\widetilde P_1\widetilde P_0 .
\]
Let \(s\) denote the success probability of \(\widetilde S\) in \(I^{\mathsf{nf}}\).

Recall that a block consists of \(r\) first-stage executions of \(I\). We call a block
\emph{good} if its empirical frequencies satisfy
\[
|\widehat{\ell}-\ell|\le \varepsilon,\qquad
|\widehat{\mu}-\mu|\le \varepsilon,\qquad
|\widehat{\gamma}-\gamma|\le \varepsilon .
\]
Since we chose \(\varepsilon<\gamma/4\), every good block also satisfies
\(\widehat{\gamma}>0\). Let \(F\) be the event that \(\sf FLAG=0\) at the end of stage \(2\), and write
\[
u:=\Pr[F].
\]
Thus, \(F\) is the event that every one of the \(w\) cut-and-choose blocks in stage \(2\)
is good.

We first record the following random-stopping fact. This is the point
where we use the fact that \(w\gets[v]\) is hidden from the prover.

\begin{claim}
\label{claim:random-stopping}
For every strategy \(\widetilde S\),
\[
\Pr[F \wedge \text{the stage-3 block is bad}]\le \frac{1}{v}.
\]
Consequently, if \(u>0\), then
\[
\Pr[\text{the stage-3 block is bad}\mid F]\le \frac{1}{uv}.
\]
\end{claim}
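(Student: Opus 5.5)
The stage-3 block is distributed exactly like the block the loop would have run at iteration $w+1$. So we couple the real protocol with a fictitious one in which the verifier runs blocks indefinitely. The prover's view up to the end of block $j$ is the same whether or not $w \ge j$. The prover does not learn $w$ until the loop ends; even then it only sees that the verifier proceeds to step 3, which happens when the loop ends or when $\mathsf{FLAG}$ is set. If $\mathsf{FLAG}$ is set, the stage-3 block does not matter for the event $F$.

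Fix the strategy $\widetilde S$ and consider the infinite sequence of blocks $B_1, B_2, \dots$ it produces against the fictitious verifier. Because the prover's behavior in block $j$ depends only on the transcript so far, which never depends on $w$, the blocks have a well-defined joint distribution independent of $w$. Let $G_j$ be the event that $B_1,\dots,B_j$ are all good, and set $g_j := \Pr[G_j]$, with $g_0 = 1$; these are non-increasing. Under the real protocol, conditioned on $w$, the event ``$F$ and the stage-3 block is bad'' is exactly the event ``$B_1,\dots,B_w$ are good and $B_{w+1}$ is bad''. This uses the fact that step 3(a) runs the same first-stage procedure as step 2(a), so the stage-3 block plays the role of $B_{w+1}$, and the prover cannot distinguish the two. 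That event has probability $g_w - g_{w+1}$.

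Averaging over $w \gets [v]$ uniformly gives
\[
\Pr[F \wedge \text{bad}] = \frac{1}{v}\sum_{w=1}^{v} (g_w - g_{w+1}) = \frac{g_1 - g_{v+1}}{v} \le \frac{1}{v},
\]
by telescoping. The consequence follows by dividing by $u = \Pr[F]$. For a general quantum prover, the sequence $g_j$ is still well defined: we purify the prover and note that the verifier's measurements in successive blocks induce a joint distribution on block outcomes that does not depend on $w$.

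The main obstacle is justifying that the stage-3 block is indistinguishable to the prover from a $(w+1)$-st loop block. The verifier's messages in 3(a) must match those in 2(a), and the index $i$ sampled in 3(b) must be sent only after the block is complete, so that it cannot influence the block's goodness. With that check done, the rest is the telescoping sum.
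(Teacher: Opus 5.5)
Your proof is correct and follows the paper's argument. Both proofs couple the real execution with a verifier that keeps generating blocks indefinitely, identify the event $F \wedge \text{(stage-3 block bad)}$ with the first bad block occurring at index $w+1$, and use that $w \gets [v]$ is uniform and hidden from the prover until the stage-3 block is complete; your telescoping sum $\frac{1}{v}\sum_{w=1}^{v}(g_w - g_{w+1})$ is an explicit evaluation of the paper's bound $\Pr[T = w+1] \le 1/v$, since $g_w - g_{w+1} = \Pr[T = w+1]$ where $T$ is the index of the first bad block.
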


\begin{proof}
Consider the coupled execution in which the verifier keeps generating blocks of \(r\)
first-stage executions using the same interface that the prover sees before the index \(i\)
is sent. Let \(T\) be the index of the first bad block in this coupled execution, with
\(T=\infty\) if no bad block ever occurs.

The stopping time \(w\gets[v]\) is sampled independently of the prover's behavior and is
not revealed before the stage-3 block has already been generated. Then, consider the event
\[
F \wedge \text{the stage-3 block is bad}.
\]

This is exactly the events that $T=w+1$. Since \(w\) is uniform in \([v]\), this event
has probability at most \(1/v\). The conditional bound follows by dividing by
\(\Pr[F]=u\).
\end{proof}

\noindent We now split into two cases.

\medskip
\noindent\emph{Case 1: \(u\le \varepsilon\).}
Conditioned on \(\neg F\), the verifier has already set \(\sf FLAG=1\), and therefore the final
target bit \(\state\) is uniform and independent of the prover's final answer. Thus the prover
succeeds with probability exactly \(1/2\) conditioned on \(\neg F\). Conditioned on \(F\), its
success probability is at most \(1\). Therefore
\[
s\le \frac{1-u}{2}+u
   = \frac12+\frac{u}{2}
   \le \frac12+\frac{\varepsilon}{2}.
\]
On the other hand, the strategy for \(I\) that runs the honest first-stage strategy \(S\) and
then answers with a uniformly random bit has commuting final observables and succeeds with
probability \(\ell+\gamma/2\). By the soundness of \(I\),
\[
\ell+\frac{\gamma}{2}\le \alpha+\negl(\secp).
\]
Hence, in this case,
\[
s
\le
\frac12+\frac{\varepsilon}{2}
\le
\frac12+\alpha-\ell-\frac{\gamma}{2}
+O(\varepsilon)+\negl(\secp).
\]

\medskip
\noindent\emph{Case 2: \(u>\varepsilon\).}
Choose \(v\) large enough so that \(v\ge 1/\varepsilon^2\). By
Claim~\ref{claim:random-stopping},
\[
\Pr[\text{the stage-3 block is bad}\mid F]
\le \frac{1}{uv}
\le \varepsilon .
\]

Let \(q\gets[r]\) be the uniformly random coordinate selected by the verifier in stage \(3\).
Conditioned on \(F\), define
\[
\overline{\ell}
:=
\Pr[|A(c_q,\state_q)|=2\mid F],
\qquad
\overline{\gamma}
:=
\Pr[|A(c_q,\state_q)|=1\mid F].
\]
Since, conditioned on \(F\), the stage-3 block is good except with probability at most
\(\varepsilon\), and since \(q\) is uniform in the stage-3 block, we have
\begin{equation}
\label{eq:bar-frequencies-close}
|\overline{\ell}-\ell|\le 2\varepsilon,
\qquad
|\overline{\gamma}-\gamma|\le 2\varepsilon .
\end{equation}

\noindent Next define
\[
\overline{\chi}
:=
\Pr[a=a^\star(c_q,\state_q)
\mid
F\wedge |A(c_q,\state_q)|=1],
\]
where \(a\) is the answer output by \(\widetilde S\) after receiving the selected index \(q\) and
the challenge \(c_q\). If the conditioning event has probability \(0\), set
\(\overline{\chi}=1/2\).

We now construct a commuting strategy \(K\) for the original protocol \(I\). The strategy \(K\)
internally runs \(\widetilde S\) and simulates the verifier of \(I^{\mathsf{nf}}\). It first simulates
stages \(1\)--\(2\) of \(I^{\mathsf{nf}}\). If \(\sf FLAG=1\) at the end of stage \(2\), it restarts this
simulation with a fresh copy of \(\widetilde S\). Since \(u>\varepsilon\), after
\[
z=\poly(1/\varepsilon,\secp)
\]
repetitions, \(K\) obtains an execution satisfying \(F\), except with negligible probability.
If all repetitions fail, \(K\) answers randomly.

Once an execution satisfying \(F\) has been obtained, \(K\) generates the stage-3 block. It
samples \(q\gets[r]\) uniformly at random and embeds its external interaction with the verifier
of \(I\) in the \(q\)-th coordinate of this block, while simulating all other coordinates honestly.
After the \(r\) first-stage executions have been completed, \(K\) sends the index \(q\) to
\(\widetilde S\). When the external verifier of \(I\) sends its challenge bit \(c\), the strategy \(K\)
forwards this challenge to \(\widetilde S\), obtains an answer \(a\), and returns \(a\) to the
external verifier.

The final observables of \(K\) are precisely the final observables induced by
\(\widetilde S\) after the simulated transcript and the classical index \(q\) have been fixed.
Therefore they commute, because \(\widetilde S\)'s final observables commute.

By construction, up to the negligible probability that \(K\) fails to obtain an execution
satisfying \(F\), the external instance of \(I\) is distributed as the selected coordinate of the
stage-3 block conditioned on \(F\). Therefore the success probability of \(K\) in \(I\) is
\[
\overline{\ell}+\overline{\gamma}\,\overline{\chi}-\negl(\secp).
\]
Indeed, if \(|A(c_q,\state_q)|=2\), every answer is accepted; if
\(|A(c_q,\state_q)|=0\), no answer is accepted; and if
\(|A(c_q,\state_q)|=1\), the success probability is exactly \(\overline{\chi}\). By the soundness of the original protocol \(I\), we obtain
\begin{equation}
\label{eq:reduction-soundness}
\overline{\ell}+\overline{\gamma}\,\overline{\chi}
\le
\alpha+\negl(\secp).
\end{equation}

We now relate this to the success probability \(s\) of \(\widetilde S\) in \(I^{\mathsf{nf}}\). Conditioned
on \(\neg F\), the verifier uses a uniformly random target bit, so the success probability is
\(1/2\). Conditioned on \(F\), the verifier uses the real unique accepting bit only if the selected
coordinate is unique-accepting; otherwise it sets \(\sf FLAG=1\) and uses a uniformly random
target bit. Hence
\[
s
=
\frac{1-u}{2}
+
u\left(
\overline{\gamma}\,\overline{\chi}
+
\frac{1-\overline{\gamma}}{2}
\right)
=
\frac12
+
u\,\overline{\gamma}\left(\overline{\chi}-\frac12\right).
\]
If \(\overline{\chi}\le 1/2\), then \(s\le 1/2\), and using
\(\ell+\gamma/2\le \alpha+\negl(\secp)\), we again get
\[
s
\le
\frac12+\alpha-\ell-\frac{\gamma}{2}
+\negl(\secp).
\]
Otherwise, \(\overline{\chi}>1/2\). Since \(u\le 1\),
\[
s
\le
\frac12+\overline{\gamma}\left(\overline{\chi}-\frac12\right)
=
\frac12+\overline{\gamma}\,\overline{\chi}
-\frac{\overline{\gamma}}{2}.
\]
Using \eqref{eq:reduction-soundness}, this gives
\[
s
\le
\frac12+\alpha-\overline{\ell}
-\frac{\overline{\gamma}}{2}
+\negl(\secp).
\]
Finally, by \eqref{eq:bar-frequencies-close},
\[
s
\le
\frac12+\alpha-\ell-\frac{\gamma}{2}
+O(\varepsilon)+\negl(\secp).
\]
Since \(\widetilde S\) was arbitrary, the soundness parameter \(\alpha'\) of
\(I^{\mathsf{nf}}\) satisfies
\begin{equation}
\label{eq:alpha-prime-final}
\alpha'
\le
\frac12+\alpha-\ell-\frac{\gamma}{2}
+O(\varepsilon)+\negl(\secp).
\end{equation}

\medskip
\paragraph{Gap analysis.}
From the completeness analysis, we have
\[
\beta'
\ge
\frac12+\gamma\xi-\frac{\gamma}{2}-\eta .
\]
Using the decomposition
\[
\ell+\gamma\xi\ge \beta,
\]
we get
\begin{equation}
\label{eq:beta-prime-final}
\beta'
\ge
\frac12+\beta-\ell-\frac{\gamma}{2}-\eta .
\end{equation}
Combining \eqref{eq:alpha-prime-final} and \eqref{eq:beta-prime-final},
\[
\beta'-\alpha'
\ge
\beta-\alpha
-O(\varepsilon)-\eta-\negl(\secp).
\]
Choosing the completeness parameters so that \(\eta\le O(\varepsilon)\), and absorbing the
negligible term into the \(O(\varepsilon)\) loss, we conclude that
\[
\beta'-\alpha'
\ge
\beta-\alpha-O(\varepsilon),
\]
as required.

Recall that $I^{\mathsf{nf}}$ is constructed with a choice of $\varepsilon$, and it remains efficient for any $\varepsilon(\secp)=\frac{1}{\poly(\secp)}$. Thus, the above gap is attainable for any such $\varepsilon$. This completes the proof.
\end{proof}

\section{Weak cryptography from tests of non-commutation} 
\label{sec:OT_from_anti_comm_test}
\subsection{Weak bit agreement}\label{subsec:WBA}

Weak bit agreement is a protocol run between two potentially quantum parties $A$ and $B$ that communicate classically. The protocol is parameterized by $\epsilon$, which determines the probability that $A$ and $B$ agree on a bit, and $\delta$, which determines how much information an eavesdropper can obtain about the shared bit.

\begin{theorem}\label{thm:wba_from_test}
    Given any normal-form $(\epsilon,\delta)$ test of non-commutation secure against adversaries with non-uniform classical (resp. quantum) advice (\cref{def:ToNC}, \cref{remark:advantage}), there exists a $\left(\epsilon, \delta'\right)$-BA  secure against adversaries with non-uniform classical (resp. quantum) advice (\cref{def:bit-agreement}) for any \[\delta' \geq \frac{1+4\delta-3\epsilon}{1+\epsilon}.\] 
    
    Moreover, given any normal-form $(\epsilon,\delta)$ test of non-commutation  secure against adversaries with non-uniform classical (resp. quantum) advice with \emph{robust} completeness (\cref{def:robust-completeness}), there exists a $(\epsilon,\delta')$-BA secure against adversaries with non-uniform classical (resp. quantum) advice for any \[\delta' \geq 2\delta - \epsilon.\] 

%These results hold for non-uniform adversaries with either classical or quantum non-uniform advice
\end{theorem}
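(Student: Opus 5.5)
The plan is to read the ToNC itself as the bit agreement protocol, with the verifier as $A$ and the prover as $B$. We run $\ket{\psi},(c,a^*)\gets\langle P_\prep(1^\secp),V_\prep(1^\secp)\rangle$, then send $c$ to the prover. The prover computes $a\gets P_c(\ket{\psi})$ but keeps $a$ private instead of sending it. We set $k_A:=a^*$ and $k_B:=a$.

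To obtain lack of bias, $A$ additionally samples a uniform $s\gets\{0,1\}$ and sends it in the clear, and both parties XOR their bit with $s$. This leaves unchanged both the agreement event and any eavesdropper's advantage, since $s$ is public. Correctness is then immediate from ToNC completeness: $\Pr[k_A=k_B]=p\ge \frac12+\frac\epsilon2$.

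\textbf{Step 1 (unconditional advantage bound).} Fix any eavesdropper $\rho,\widetilde E$ that outputs $e$ from the transcript $\tau$ (including $c$) with $\Pr[e=a^*]=\frac12+\frac\gamma2$. We build a ToNC prover that runs $P_\prep$ honestly. Before seeing $c$, it samples $c'\gets\{0,1\}$ and computes $e\gets\widetilde E(\rho,\tau)$ on the classical transcript. On the challenge, it applies the honest $P_{c}$ if $c=c'$ and outputs $e$ otherwise. Conditioned on the classical data $(c',e)$, one of its two final observables is the trivial observable outputting a fixed bit, so the two observables commute. Eve's classical advice becomes the prover's advice (quantum advice likewise), which is why the advice type is preserved. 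This prover succeeds with probability $\frac12\cdot p+\frac12(\frac12+\frac\gamma2)$. By soundness, writing $p=\frac{1+\epsilon_p}{2}$ with $\epsilon_p\ge\epsilon$, we get $\gamma\le 2\delta-\epsilon_p+\negl$.

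\textbf{Step 2 (general case, conditioning on agreement).} Write $\Pr[e=a^*]=p\,q_{\mathrm{agr}}+(1-p)q_{\mathrm{dis}}$. In the worst case $q_{\mathrm{dis}}=0$, so $q_{\mathrm{agr}}\le (\frac12+\frac\gamma2)/p\le \frac{1+2\delta-\epsilon_p}{1+\epsilon_p}$. This expression is decreasing in $\epsilon_p$, so we may substitute $\epsilon_p=\epsilon$. Setting it equal to $\frac12+\frac{\delta'}2$ yields $\delta'=\frac{1+4\delta-3\epsilon}{1+\epsilon}$, up to negligible terms. Any larger $\delta'$ is then also fine.

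\textbf{Step 3 (robust case).} The point needing care is the claim that agreement is independent of Eve's correctness. Robust completeness says that for every outcome of the preamble $(\ket\psi,c,a^*)$, and hence for every transcript $\tau$ and bit $a^*$, the event $a=a^*$ has probability exactly $\frac{1+\epsilon}{2}$ over the prover's final measurement. Eve's output $e$ is a function of $\tau$ and her own independent randomness and advice. Therefore the event $\{a=a^*\}$ is independent of the pair $(e,a^*)$. Hence
\[\Pr[e=a^*\mid k_A=k_B]=\Pr[e=a^*]\le \tfrac12+\tfrac{2\delta-\epsilon}{2}+\negl(\secp)\]
by Step 1 with $\epsilon_p=\epsilon$. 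This gives $\delta'=2\delta-\epsilon$.

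I expect the main obstacle to be verifying that the hybrid prover in Step 1 genuinely has commuting observables and is a legitimate (non-uniform) QPT strategy. Everything else is arithmetic.
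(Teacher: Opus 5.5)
Your proposal is correct and follows the paper's proof essentially step for step: the same masked protocol (verifier as $A$, prover keeping $a$ private, public random mask), the same commuting hybrid prover that answers honestly when $c=c'$ and with Eve's guess otherwise, giving $\gamma\le 2\delta-\epsilon$, the same division by $\Pr[k_A=k_B]$ to reach $\frac{1+4\delta-3\epsilon}{1+\epsilon}$, and the same independence argument under robust completeness. One small point of phrasing: since $\tau$ contains $c$, computing $e$ before the challenge arrives means running $\widetilde E$ on the transcript with the challenge set to $1-c'$, as in the paper's joint-measurement remark. Commutation would hold even without precomputing, since the two branches act on disjoint registers.
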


\begin{proof}
    The protocol is given in Protocol~\ref{prot:WBA_from_anti_comm_test}. The theorem follows by combining \cref{lem:wba_correctness} and \cref{lem:wba_security}.
\end{proof}

\begin{protocolbox}{Protocol: Weak bit agreement}
\label{prot:WBA_from_anti_comm_test}
\textbf{Parties.} Alice $A$ and Bob $B$ (classical communication; parties may be quantum).\\
\textbf{Ingredient.} Normal-form $(\epsilon,\delta)$ test of non-commutation $P_\prep,V_\prep,P_0,P_1,V_\Ver$ (\cref{def:ToNC}).

\medskip
\begin{enumerate}[leftmargin=*,label=\textbf{Step \arabic*:}]
  \item \textbf{Non-commutation test.}
  The parties run \[\ket{\psi},\left(c,a^*\right) \gets \langle P_\prep(1^\secp),V_\prep(1^\secp)\rangle,\] with $A$ playing the role of verifier and $B$ playing the role of prover. At its conclusion, $A$ obtains the ``correct'' answer bit $a^*$ and sends the challenge bit $c$ to $B$.

  \item \textbf{Random mask.} $A$ samples $r\gets\{0,1\}$ uniformly at random and sends it to $B$. 

  \item \textbf{Outputs.}
  $A$ outputs the bit $k_A \coloneqq a^* \oplus r$. Upon receiving $(c,r)$, $B$ applies $a \gets P_c(\ket{\psi})$ and outputs $k_B \coloneqq a \oplus r$.
\end{enumerate}
\end{protocolbox}

\begin{lemma}[Correctness]\label{lem:wba_correctness}
    Protocol~\ref{prot:WBA_from_anti_comm_test} satisfies Lack of bias and Correctness according to \cref{def:bit-agreement} with the same $\epsilon$ from the ToNC parameters.
\end{lemma}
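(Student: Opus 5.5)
The plan is to verify the two properties directly from the structure of Protocol~\ref{prot:WBA_from_anti_comm_test}, using only the completeness guarantee of the normal-form ToNC (\cref{def:normal-form}, \cref{remark:advantage}) and the fact that the mask $r$ is a fresh uniform bit independent of everything else.

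\textbf{Lack of bias.} First I will handle $k_A = a^* \oplus r$. Since $r \gets \{0,1\}$ is sampled by $A$ independently of the preamble output $(c,a^*)$, conditioning on any value of $a^*$ makes $k_A$ exactly uniform. Hence $\Pr[k_A = 0] = 1/2$ exactly.

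Next, $k_B = a \oplus r$, where $a \gets P_c(\ket{\psi})$. The measurement producing $a$ depends only on $(\ket{\psi},c)$, and $r$ is independent of $(\ket{\psi},c)$. Honest $B$ also does not let $r$ influence which observable it measures. So $a$ and $r$ are independent and $k_B$ is exactly uniform. Both bias terms are therefore $0$, which is in particular negligible.

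\textbf{Correctness.} The key observation is that $k_A = k_B$ holds if and only if $a^* \oplus r = a \oplus r$, that is, if and only if $a = a^*$. The experiment producing $(\ket{\psi},c,a^*)$ and then $a \gets P_c(\ket{\psi})$ is exactly the honest ToNC completeness experiment: $B$ runs $P_\prep$ against $A$'s $V_\prep$, receives $c$, and applies $P_c$. Revealing $r$ to $B$ does not affect how $a$ is sampled. In normal form, $V_\out(a^*,a)=1$ iff $a=a^*$. Completeness with advantage $\epsilon$ therefore gives
\[
\Pr[k_A = k_B] = \Pr[a = a^*] \geq \frac{1}{2} + \frac{\epsilon}{2}.
\]

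There is no real obstacle here. The only point needing care is to state explicitly that the ordering of messages (sending $c$ and $r$ before $B$ measures) leaves the joint distribution of $(c,a^*,a)$ identical to the ToNC completeness experiment. This holds because $r$ is independent and unused by $B$ before it measures.
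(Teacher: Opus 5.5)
Your proposal is correct and follows the paper's proof: both observe that $k_A = k_B$ holds exactly when $a = a^*$, so correctness follows from ToNC completeness, and lack of bias follows from the independent uniform mask $r$. Your remarks that $r$ is independent of $(\ket{\psi},c)$ and that message ordering leaves the completeness experiment unchanged make explicit what the paper leaves implicit.
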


\begin{proof}
The event $k_A=k_B$ is exactly the event that $a = a^*$,
and hence occurs with advantage $\epsilon$ due to the completeness of the ToNC. Moreover, due to the random masking, 
\[
\Pr[k_A=0]=\Pr[k_B=0]=\tfrac12.
\]
\end{proof}

\begin{lemma}[Security]
\label{lem:wba_security}
     Protocol~\ref{prot:WBA_from_anti_comm_test} satisfies Security according to \cref{def:bit-agreement} with $\delta' = \frac{1+4\delta-3\epsilon}{1+\epsilon}$. In the case that the ToNC has robust completeness, it satisfies Security with $\delta' = 2\delta - \epsilon$.

\end{lemma}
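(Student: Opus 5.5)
The plan is to reduce any eavesdropper against Protocol~\ref{prot:WBA_from_anti_comm_test} to a \emph{commuting} prover strategy for the ToNC. This first bounds Eve's \emph{unconditional} advantage in predicting $a^*$. We then convert it into the conditional bound required by \cref{def:bit-agreement}, crudely in general and exactly under robust completeness.

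\textbf{Step 1 (reduction to a commuting strategy).} The transcript is $\tau = (\tau_\prep, c, r)$. Since $r$ is public, guessing $k_A = a^*\oplus r$ is equivalent to guessing $a^*$. Let $\rho,\widetilde{E}$ be any QPT eavesdropper, and let $\frac{1}{2}+\frac{\gamma}{2}$ be its probability of guessing $k_A$. We define a prover $\widetilde{P}$ with advice $\rho$ as follows.
\begin{itemize}
    \item It runs the honest $P_\prep$ and keeps a classical copy of $\tau_\prep$.
    \item It samples its own uniform $r$ and a uniform bit $b'$.
    \item It defines $\widetilde{P}_{b'} := P_{b'}$ acting on $\ket{\psi}$.
    \item It defines $\widetilde{P}_{1-b'}$ as the observable that runs $\widetilde{E}(\rho,(\tau_\prep,1-b',r))$ on the classical register and outputs the guess XORed with $r$.
\end{itemize}
The two observables act on disjoint registers (the prover's state versus the classical transcript copy together with the advice), so they commute. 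Because $b'$ is independent of $c$, the success probability equals
\[
\tfrac12\Pr[\text{honest correct}] + \tfrac12\Pr[\widetilde{E}\text{ correct}] = \tfrac12 + \tfrac{\epsilon+\gamma}{4}.
\]
ToNC soundness then gives $\frac{\epsilon+\gamma}{2}\le\delta+\negl$, i.e.\ $\gamma\le 2\delta-\epsilon+\negl$. The eavesdropper's advice (classical or quantum) becomes the prover's advice, so the advice type is preserved.

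\textbf{Step 2 (general conditional bound).} Agreement $k_A=k_B$ is the event $a=a^*$, which has probability at least $\frac{1+\epsilon}{2}$. Using the trivial bound $\Pr[\widetilde{E}\text{ correct}\wedge a=a^*]\le\Pr[\widetilde{E}\text{ correct}]\le \frac{1+2\delta-\epsilon}{2}$, we get
\[
\Pr[\widetilde{E}\text{ correct}\mid k_A=k_B]\le \frac{1+2\delta-\epsilon}{1+\epsilon} = \frac12+\frac12\cdot\frac{1+4\delta-3\epsilon}{1+\epsilon}.
\]
If completeness holds with strictly larger probability, the bound only improves (or we simply use $\ge$ in the denominator). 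This yields the first claim.

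\textbf{Step 3 (robust case).} Under robust completeness, for every fixed preamble $(\tau_\prep,c)$ we have $\Pr[a=a^*\mid\tau_\prep,c] = \frac{1+\epsilon}{2}$. Moreover, Bob's measurement outcome $a$ and Eve's guess are conditionally independent given the classical transcript, since Eve only touches $\tau$ and $r$ is fresh. Hence
\[
\Pr[\widetilde{E}\text{ correct}\wedge a=a^*] = \tfrac{1+\epsilon}{2}\Pr[\widetilde{E}\text{ correct}],
\]
so the conditional success equals the unconditional one, which is at most $\frac12+\frac{2\delta-\epsilon}{2}$. The one point needing care, and the main subtlety of the whole argument, is this conditional independence given $\tau$ (Eve's view is classical and disjoint from $\ket{\psi}$), together with checking in Step 1 that $b'$ is independent of $c$ so that both branches are weighted by exactly $\frac12$.
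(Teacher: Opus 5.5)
Your proposal is correct and follows the paper's proof essentially step for step. It uses the same commuting prover that mixes the honest $P_{b'}$ with Eve run on the classical transcript, the same bound $\Pr[\widetilde{E}\text{ correct}]/\Pr[k_A=k_B]$, and the same independence argument under robust completeness. You are also slightly more careful than the paper in having the reduction sample $r$ itself.

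The one thing to tighten is the justification in Step 3, which is exactly the step you flagged. Conditioning only on the classical transcript and knowing $\Pr[a=a^*\mid \tau]=\frac{1+\epsilon}{2}$ does not make the two events independent, because both depend on the hidden bit $a^*$. For example, let $a^*$ be uniform given the transcript, and suppose Bob is always right when $a^*=0$ but right only with probability $\epsilon$ when $a^*=1$. Your averaged statement holds, and an Eve who simply outputs $r$ is trivially conditionally independent of $a$. Yet she succeeds with probability $\frac{1}{1+\epsilon}>\frac12$ conditioned on agreement.

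What you actually need is $\Pr[a=a^*\mid \ket{\psi},c,a^*]=\frac{1+\epsilon}{2}$, and \cref{def:robust-completeness} gives this because it holds with probability $1$ over the joint output $(\ket{\psi},c,\state)$, where $\state=a^*$. Condition on $(\ket{\psi},c,a^*)$ together with $r$. Then Eve's guess and Bob's outcome are independent, and averaging gives your identity $\Pr[\widetilde{E}\text{ correct}\wedge a=a^*]=\frac{1+\epsilon}{2}\Pr[\widetilde{E}\text{ correct}]$.
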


\begin{proof}
Fix a QPT adversary $\widetilde E$ and let $\gamma$ be such that $\Pr[\widetilde E(\tau)=k_A] = \frac{1}{2} + \frac{\gamma}{2}$ in an execution with honest $A,B$, where $\tau$ is the transcript produced by the interaction. By completeness of the ToNC, fix a QPT prover strategy $P_\prep,P_0,P_1$ that accepts with probability at least $\frac{1}{2} + \frac{\epsilon}{2}$. $E$ may or may not have non-uniform quantum advice $\rho$, which determines whether our new prover strategy does. We suppress mention of $\rho$ in what follows.

We build a new prover strategy $\widetilde P_\prep,\widetilde{P}_0,\widetilde{P}_1$ with commuting final-round observables, as follows.
$\widetilde P_\prep$ interacts with the verifier through the preparation stage by executing $P_\prep$. After receiving the final-round challenge $c\in\{0,1\}$,
it samples a fresh uniform bit $b\gets\{0,1\}$ and proceeds as follows:
\begin{itemize}
  \item If $c=b$, it answers using $P_c$.
  \item If $c\ne b$, it answers using $\widetilde E$ applied to the (classical) transcript $\tau$.
\end{itemize}

Equivalently, $\widetilde P_\prep$ can execute a \emph{joint} measurement producing both candidate answers $(\hat a_0,\hat a_1)$:
sample $b$; measure using $P_c$ to obtain $\hat a_b$; compute $\hat a_{1-b} \gets \widetilde E(\tau_{1-b})$
(where $\tau_{t}$ denotes the transcript with final challenge set to $t$). Then, on challenge $c$, output $\hat a_c$.
Since this procedure produces both answers in one joint measurement, it is clear that the induced final-round observables commute.

Conditioned on $b=c$ (probability $1/2$), $\widetilde P$ behaves like $P$ on the final round, and thus has advantage at least $\epsilon$. Conditioned on $b\ne c$ (probability $1/2$), $\widetilde P$ has advantage $\gamma$, by definition. Hence, the advantage of $\widetilde P$ is at least $\frac{\epsilon+\gamma}{2}$.

By soundness of the ToNC against commuting strategies,

\[
\frac{\epsilon+\gamma}{2}\le \delta \quad\Rightarrow\quad
\gamma \le 2\delta-\epsilon.
\]

Since $\Pr[k_A=k_B]\ge \frac{1}{2}+\frac{\epsilon}{2}$, we have
\begin{align*}
2\Pr[\widetilde E(\tau)=k_A \mid k_A=k_B] - 1
&\le
2\left(\frac{\Pr[\widetilde E(\tau)=k_A]}{\Pr[k_A=k_B]}\right) - 1
\\ & \le
2\left(\frac{\frac{1}{2} + \delta - \frac{\epsilon}{2}}{\frac{1}{2} + \frac{\epsilon}{2}}\right) - 1  \\ &= \frac{2 + 4\delta - 2\epsilon}{1+\epsilon} -1 \\ &=\frac{1+4\delta - 3\epsilon}{1+\epsilon},
\end{align*}
which completes the proof.

Now, in the case that the ToNC satisfies robust completeness, we have that the event $\widetilde{E}(\tau) = k_A$ is \emph{independent} of the event $k_A = k_B$, and thus \begin{align*}
2\Pr[\widetilde E(\tau)=k_A \mid k_A=k_B] - 1
=
2\Pr[\widetilde E(\tau)=k_A ] - 1 =\gamma \leq 2\delta - \epsilon.
\end{align*}
\end{proof}

\subsection{Weak (committed-bit) oblivious transfer}\label{subsec:WOT}

\begin{theorem}
     Assume the existence of an inefficiently-extractable commitment (\cref{def:com}) and a normal form $(\epsilon,\delta)$ test of non-commutation secure against adversaries with non-uniform quantum advice (\cref{def:ToNC}, \cref{remark:advantage}). Then for any $\delta' > 2\delta - \epsilon$, there exists an $(\epsilon,\delta',0)$-weak committed-bit OT (Definition \ref{def:committed-bit}).
\end{theorem}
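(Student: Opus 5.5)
We instantiate the skeleton protocol from the technical overview and compile it with commitments, coin-flipping, cut-and-choose, and random termination.

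\textbf{The protocol.} The sender $S$ plays the verifier and the receiver $R$ plays the prover.

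In the setup phase, $R$ and $S$ run many blocks. Each block contains two ToNC preparation stages, producing $(\ket{\psi_0},c_0,a_0^*)$ and $(\ket{\psi_1},c_1,a_1^*)$, with $(c_0,c_1)$ withheld. In each block the parties also do the following.
\begin{itemize}
\item They jointly sample $c_{R,0},c_{R,1}$ by coin-flipping: $R$ commits to $s_0,s_1$ using the inefficiently-extractable commitment of \cref{def:com}, $S$ sends $t_0,t_1$ in the clear, and $c_{R,j}=s_j\oplus t_j$.
\item $R$ measures $a_{R,j}\gets P_{c_{R,j}}(\ket{\psi_j})$, commits to $a_{R,0},a_{R,1}$, and announces $d=c_{R,0}\oplus c_{R,1}$.
\end{itemize}

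A hidden random index $w\gets[v]$ is chosen. On every block before $w$, $S$ reveals $c_0,c_1,a_0^*,a_1^*$ and $R$ opens all its commitments. $S$ checks consistency of $d$, and also checks a statistical test that the fraction of blocks with $a_{R,j}=a_j^*$ matches $\frac12+\frac\epsilon2$ up to $\varepsilon$; on failure, $S$ uses random output bits from then on. Block $w$ is the live one.

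In the OT phase, in the live block, if $d=c_0\oplus c_1$ the setup is redone. This is handled by random termination, so that each live block survives with probability $1/2$ independently of $R$'s view. On survival, $R$ with choice bit $b$ must have $c_{R,b}=c_b$. To make $b$ an input rather than a random outcome, $R$ sends $b\oplus b'$, where $b'$ is the bit determined by the block, and the sender's outputs are relabeled accordingly. $S$ sends $c_0,c_1$ and outputs $(r_0,r_1)=(a_0^*,a_1^*)$, relabeled; $R$ outputs $a_{R,b}$.

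$\Ver$ consists of $R$ opening the commitments to $s_{b},c_{R,b}$ so that $S$ can check $c_{R,b}=c_b$. $\TrapGen$ extracts all commitment values from $\tau_\Setup$, and $\Ext$ reads $b$ off the extracted $c_{R,\cdot}$ and the public $c_0,c_1$.

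\textbf{Correctness, completeness, and receiver security.} Correctness is $\epsilon$ because $c_{R,b}=c_b$ makes $a_{R,b}$ an honest ToNC answer. Completeness is immediate. Receiver security with $\gamma=0$ up to negligible follows from hiding: the transcript contains only commitments and $d$, which is independent of $b$ given $c_0\oplus c_1$. Since hiding holds against quantum-advice adversaries, a hybrid over the commitments suffices.

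\textbf{Sender security.} This is the main step. First, binding of $\Ext$ gives the second condition: opening to $1-b$ would require opening a commitment to a value other than the extracted one. For the first condition, suppose $\widetilde R$ guesses $r_{1-b}$ with advantage $\gamma>\delta'$. Random termination gives that, with probability $1-O(1/v)$ over the extracted setup, the live block is well-behaved: $c_{R,j}$ is uniform (from coin-flipping, since $t_j$ is chosen after $s_j$ is fixed), $d$ is honest, and $\Pr[a_{R,b}=a_b^*]\ge\frac12+\frac\epsilon2-O(\varepsilon)$. The last holds because the opened blocks are exchangeable with the live one.

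We then build the commuting ToNC prover $\widetilde P$ with quantum advice $\rho$. The advice is the receiver's post-setup internal state together with the extracted values $\td$; this is exactly where inefficient extraction is pushed into non-uniformity, which is why the ToNC must be secure against quantum advice. $\widetilde P$ embeds its external verifier in a uniformly random slot $j\in\{0,1\}$ of the live block and simulates the other slot itself. Conditioned on survival, if $j=b$ it answers with the extracted $a_{R,j}$; otherwise it runs $\widetilde R$ to guess $r_{1-b}$. Because the decision to branch is made from the extracted classical values, its final observables commute, exactly as in \cref{lem:wba_security}.

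The main obstacle is the step where $\widetilde P$ must know $b$, and hence $c_b$, before receiving its external challenge. Here $\widetilde P$ samples its guess for $c_j$ via the simulated slot, knows $c_{R,j}$ from extraction, and $b$ is determined once the external $c$ arrives. Both branches are classical post-processing of one joint measurement, with $a_{R,j}$ already measured during the setup, so the observables still commute. Uniformity of $c_{R,j}$ and of the slot choice makes each case occur with probability $\frac12$. The prover's advantage is therefore $\frac{\epsilon+\gamma}{2}-O(\varepsilon)-O(1/v)$, which exceeds $\delta$ once $\varepsilon$ and $1/v$ are small relative to $\delta'-(2\delta-\epsilon)$. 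This contradicts ToNC soundness.
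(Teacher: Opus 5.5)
There is a genuine gap. It concerns where things sit in the protocol relative to what the reduction must extract. Your high-level reduction matches the paper's: embed the external verifier in one slot, branch on whether its challenge equals the receiver's, and either output a stored classical bit or run $\widetilde R$'s guess, for advantage $\frac{\epsilon+\gamma}{2}$. The failure is in the protocol layout.

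First, you put the ToNC preparations, the coin-flips and the answer commitments into the formal $\Setup$ phase, so that $\TrapGen$ can read everything off $\tau_\Setup$. But in \cref{def:committed-bit} the setup is run with an \emph{unbounded} $\widetilde{R}_\init$, and ToNC soundness holds only against QPT provers. An unbounded $\widetilde R_\init$ can play the honest prover in both slots. For each slot and each value of the later-revealed challenge, it can store in $\init_R$ the Bayes-optimal guess of $a^*$ given the transcript. Each such guess is correct with probability at least $\frac12+\frac\epsilon2$, so the QPT $\widetilde R$ then predicts $r_{1-b}$ with advantage about $\epsilon$. Sender security therefore fails for every allowed $\delta'<\epsilon$. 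In the paper, $\Setup$ contains only $R$'s commitments to random bits $x_i$, and every ToNC instance runs in the $\OT$ phase against the QPT $\widetilde R$.

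Second, your reduction is not coherent as written. You take the advice to be the post-setup state plus $\td$. Yet you embed the external verifier in a slot of the live block, whose preparation (and the answer $a_{R,j}$, ``already measured during the setup'') is by then already part of the advice. Suppose instead you move the live block after the advice. Then $\widetilde P$ needs $c_{R,j}$ to branch and $a_{R,j}$ to answer when $j=b$. Both sit inside commitments created after the external interaction began. With commitments built only from one-way functions, a QPT $\widetilde P$ cannot open them. Nor can that extraction be pushed into the advice, which is fixed before the external interaction starts.

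The missing idea is the paper's masking trick. $R$ commits to random $x_i$ (in $\Setup$) and to random pads $y_i$ (at the start of $\OT$) before any ToNC instance runs. Afterwards its challenge is $c_{R,i}=x_i\oplus z_i$, with $z_i$ sent in the clear by $S$. It sends its answer in the clear as $\widehat a_i=a_{R,i}\oplus y_i$. The inefficient preprocessing $P_\init$ extracts all $x_i,y_i$ and simulates the first $k^*-1$ loops. So during the embedded instance, $\widetilde P$ computes $c_{R,i^*}$ and $a_{R,i^*}$ efficiently, and binding ties all later openings to the extracted values.

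A secondary point: your claim that opened and live blocks are ``exchangeable'' is unsupported for a malicious receiver. A cumulative fractional test with $O(1)$ samples per block also sits badly with completeness at early stopping times. The paper instead runs $2\secp$ instances per loop and checks pair commitments. It then opens all pairs except a randomly chosen pair $P^*$ on which the survival condition holds. The live pair is thus a random sample from a checked block, which also removes your need to ``redo the setup.''
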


\begin{proof}
    The protocol is given in Protocol \ref{prot:OT_from_anti_comm_test_classical_sender}. The theorem follows by combining \cref{lemma:correctness}, \cref{lemma:completeness}, \cref{lemma:rec-sec}, and \cref{lemma:sen-sec}. 
\end{proof}

\begin{protocolbox}{Protocol: Weak committed-bit OT}\label{prot:OT_from_anti_comm_test_classical_sender}
\textbf{Parties.} Sender $S$ and receiver $R$ (classical communication; parties may be quantum).\\
\textbf{Ingredients.} Inefficiently-extractable commitment $(\Com,\Rec,\Ver_\Com,\Ext_\Com)$ (\cref{def:com}) and normal-form $(\epsilon,\delta)$ test of non-commutation $(P_\prep,V_\prep,P_0,P_1,V_\out)$ (\cref{def:ToNC}).\\
%\textbf{Parameters.} Security parameter $\lambda$ and slack parameter $s = s(\lambda) = \secp^{1/3}$. \\
%\textbf{Notation.} Let $[r]=\{1,\dots,r\}$. For each instance $i\in[r]$, let $q_i\in\bits$ be $S$'s final-round challenge bit.\\
%\textbf{Abort.} Whenever $S$ aborts, it samples and outputs uniformly random $r_0,r_1 \gets \{0,1\}^2$.
\medskip

\noindent $\Setup\langle R(1^\secp),S(1^\secp)\rangle$:

\begin{itemize}
    \item \textbf{Challenge commitments.} For each $k \in [\secp]$, $i \in [2\secp]$, $R$ samples $x^{k}_i \gets \{0,1\}$ and the parties run \[\state^{k}_{\Com,i},\tau^{k}_i \gets \left\langle\Com\left(1^\secp,x^{k}_i\right),\Rec\left(1^\secp\right)\right\rangle,\] with $R$ playing the role of $\Com$ and $S$ playing the role of $\Rec$. 
    \item $R$ sets $\init_R \coloneqq \left\{\state^{k}_{\Com,i}\right\}_{k \in [\secp], i \in [2\secp]}$.
\end{itemize}

\noindent $\OT\langle R(b,\init_R),S(\init_S)\rangle$:

\begin{itemize}
    \item \textbf{Answer commitments.} For each $k \in [\secp]$, $i \in [2\secp]$, $R$ samples $y^{k}_i \gets \{0,1\}$ and the parties run \[{{\state'}^{k}_{\Com,i}},{{\tau'}^{k}_i} \gets \left\langle\Com\left(1^\secp,y^{k}_i\right),\Rec\left(1^\secp\right)\right\rangle,\] with $R$ playing the role of $\Com$ and $S$ playing the role of $\Rec$. 
    \item \textbf{Pair commitments.} For each $k \in [\secp]$ and unordered pair $\{i,j\}\subseteq[2\secp]$, $R$ sets $x^{k}_{i,j} \coloneqq x^{k}_i \oplus x^{k}_j$, and the parties run \[\state^{k}_{\Com,\{i,j\}},\tau^{k}_{\{i,j\}} \gets \left\langle\Com\left(1^\secp,x^{k}_{i,j}\right),\Rec\left(1^\secp\right)\right\rangle,\] with $R$ playing the role of $\Com$ and $S$ playing the role of $\Rec$.

    \item \textbf{Main loop.} $S$ samples $k^* \gets [\secp]$. For $k = 1$ to $k^*$:

    \begin{itemize}

  \item \textbf{Non-commutation tests.} For each $i \in [2\secp]$, the parties run \[\ket*{\psi^{k}_i},\left(c^{k}_i,a^{k}_i\right) \gets \langle P_\prep(1^\secp),V_\prep(1^\secp)\rangle,\] with $R$ playing the role of the prover and $S$ playing the role of the verifier. For each $i \in [2\secp]$, $S$ samples $z^{k}_i \gets \{0,1\}$ and sends $z^{k}_i$ to $R$. For each $i \in [2\secp]$, $R$ sets $c^{k}_{R,i}:=x^{k}_i\oplus z^{k}_i$, computes \[a^{k}_{R,i} \gets P_{c^{k}_{R,i}}\left(\ket*{\psi^{k}_i}\right)\] and sends $\widehat{a}^{k}_i \coloneqq a^{k}_{R,i} \oplus y^{k}_i$ to $S$.

  \item \textbf{Pairing and checks.}
  $S$ partitions $[2\secp]$ uniformly at random into a disjoint set of $\secp$ pairs $\mathcal{P}$, and sends $\mathcal{P}$ to $R$. For each $\{i,j\} \in \mathcal{P}$, $R$ sends $\left(x^{k}_{i,j},\state^{k}_{\Com,\{i,j\}}\right)$ to $S$, who aborts if \[\Ver_\Com\left(\state^{k}_{\Com,\{i,j\}},\tau^{k}_{\{i,j\}},x^{k}_{i,j}\right) = \bot.\]
  $S$ chooses a pair $P^* = \{i^*,j^*\} \in \mathcal{P}$ uniformly at random conditioned on $x^{k}_{i^*,j^*} \oplus z^{k}_{i^*} \oplus z^{k}_{j^*} \neq c^{k}_{i^*} \oplus c^{k}_{j^*}$ (or aborts if no such pair exists), and sends $P^*$ to $R$. For each $\{i,j\}\in \mathcal{P} \setminus \{P^*\}$:
  \begin{itemize}[leftmargin=1.2em]
    \item $R$ sends $x^{k}_i,\state^{k}_{\Com,i},x^{k}_j,\state^{k}_{\Com,j},y^{k}_i,{{\state'}^{k}_{\Com,i}},y^{k}_j,{{\state'}^{k}_{\Com,j}}$ to $S$, who aborts if any of the respective $\Ver_\Com$ checks fail. 
    \item $S$ aborts if $x^{k}_{i,j} \neq x^{k}_i \oplus x^{k}_j.$ Otherwise, $S$ sets \[c^{k}_{R,i} = x^{k}_i \oplus z^{k}_i, c^{k}_{R,j} = x^{k}_j \oplus z^{k}_j,  a^{k}_{R,i} = \widehat{a}^{k}_i \oplus y^{k}_i,  a^{k}_{R,j} = \widehat{a}^{k}_j \oplus y^{k}_j.\]  

\end{itemize}
 $S$ aborts if:
    \begin{itemize}[leftmargin=1.2em]
        \item For less than  $1/4$ fraction of $i \in [2\secp] \setminus P^*$, $c^{k}_{R,i} = c^{k}_i$.
        \item For less than $\frac{1}{2} + \frac{\epsilon}{2} - \secp^{1/3}$ fraction of $i \in [2\secp] \setminus P^*$ such that $c^{k}_{R,i} = c^{k}_i$, $a^{k}_{R,i} = a^{k}_i$.
    \end{itemize}

  \end{itemize}

  \item \textbf{Final transfer.} Re-define 
  \begin{align*}
    &c_0 = c^{k^*}_{i^*}, c_1 = c^{k^*}_{j^*}, \quad c_{R,0} = c^{k^*}_{R,i^*}, c_{R,1} = c^{k^*}_{R,j^*} \\
    &a_0 = a^{k^*}_{i^*}, a_1 = a^{k^*}_{j^*}, \quad  a_{R,0} = a^{k^*}_{R,i^*}, a_{R,1} = a^{k^*}_{R,j^*} \\
    &x_0 = x^{k^*}_{i^*}, x_1 = x^{k^*}_{j*}, \quad  z_0 = z^{k^*}_{i^*},z_1 = z^{k^*}_{j^*} \\
    &\state_{\Com,0} = \state^{k^*}_{\Com,i^*},\state_{\Com,1} = \state^{k^*}_{\Com,j^*}, \quad \tau_0 = \tau^{k^*}_{i^*}, \tau_1 = \tau^{k^*}_{j^*}.
  \end{align*}
  
  \begin{itemize}[leftmargin=1.2em]
      \item $S$ sends $\{c_0,c_1\}$ to $R$.
      \item Let $\widehat{b} \in \{0,1\}$ be the unique bit such that $c_{R,\widehat{b}} = c_{\widehat{b}}$. $R$ sends $b' = b \oplus \widehat{b}$ to $S$.
      \item $S$ outputs $(r_0 \coloneqq a_{b'},r_1 \coloneqq a_{1-b'})$ and $\state_S = (b',\tau_0,\tau_1,c_0,c_1,z_{0},z_{1})$.
      \item $R$ outputs $r \coloneqq  a_{R,\widehat{b}}$ and $\state_R = (x_0,x_1,\state_{\Com,0},\state_{\Com,1})$.
  \end{itemize}

\end{itemize}

$\Ver\langle R(\state_R),S(b,\state_S)\rangle$:
\begin{itemize}
    \item If $\Ver_\Com(\state_{S,0},\state_{R,0},x_0) = \bot$ or $\Ver_\Com(\state_{S,1},\state_{R,1},x_1) = \bot$, output $\bot$.
    \item Let $c_{R,0} = x_0 \oplus z_0$, $c_{R,1} = x_1 \oplus z_1$. If $c_0 \oplus c_1 = c_{R,0} \oplus c_{R,1}$, output $\bot$.
    \item Let $\widehat{b}$ be the unique bit such that $c_{R,\widehat{b}} = c_{\widehat{b}}$. Output $\top$ iff $b = \widehat{b} \oplus b'$.
\end{itemize}

$\TrapGen(\tau_\Setup)$:
\begin{itemize}
    \item For each $k \in [\secp], i \in [2\secp]$, run $x^k_i \coloneqq \Ext_\Com\left(\tau^k_i\right)$. Set $\td \coloneqq \left\{x^k_i\right\}_{i \in [2\secp]}$.

\end{itemize}

$\Ext(\td,\tau)$:
\begin{itemize}
    \item $\tau$ includes $z_0,z_1,c_0,c_1,b'$ and $\td$ includes $x_0 = x_{i^*}$ and $x_1 = x_{j^*}$.
    \item Let $c_{R,0} = x_0 \oplus z_0$, $c_{R,1} = x_1 \oplus z_1$. If $c_0 \oplus c_1 = c_{R,0} \oplus c_{R,1}$, output $0$.
    \item Let $\widehat{b}$ be the unique bit such that $c_{R,\widehat{b}} = c_{\widehat{b}}$. Output $\widehat{b} \oplus b'$.
\end{itemize}

\end{protocolbox}

\begin{lemma}[Correctness]\label{lemma:correctness}
    Protocol \ref{prot:OT_from_anti_comm_test_classical_sender} satisfies Correctness according to \cref{def:committed-bit} with $\epsilon = 2\beta - 1$.
\end{lemma}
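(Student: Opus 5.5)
To establish correctness, I fix an arbitrary choice bit $b$ and follow an honest sender and honest receiver through the protocol. The goal is to show $\Pr[r = r_b] \ge \frac12 + \frac\epsilon2 - \negl(\secp)$. The argument has three steps: show that honest parties essentially never abort, identify which ToNC instance the receiver's output comes from, and then apply completeness of that instance.

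\textbf{Step 1: honest parties essentially never abort.} By correctness of the commitment scheme, every $\Ver_\Com$ check passes except with negligible probability, and the pair commitments are consistent by construction. In each iteration $k$, the challenges $c^k_{R,i} = x^k_i \oplus z^k_i$ are uniform and independent of the verifier's challenges $c^k_i$. Hence $c^k_{R,i} = c^k_i$ holds independently with probability $1/2$ per coordinate. A Chernoff bound then shows that at least a $1/4$ fraction of the $2\secp - 2$ opened coordinates match, except with negligible probability. On the matching coordinates, the ToNC gives success $a^k_{R,i} = a^k_i$ with probability at least $\frac12 + \frac\epsilon2$. The runs are independent fresh executions, so a Chernoff bound again places the empirical success rate above $\frac12+\frac\epsilon2$ minus the slack, except with negligible probability. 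I will read the slack as $\secp^{-1/3}$, with roughly $\secp/2$ matching samples. Finally, a pair with $x_{i,j}\oplus z_i\oplus z_j \ne c_i\oplus c_j$ is needed. Each pair has this property independently with probability $1/2$, so one exists except with probability $2^{-\secp}$. A union bound over all $k \le \secp$ completes the no-abort claim.

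\textbf{Step 2: identify the relevant instance.} Condition on no abort and on the selected final pair $\{i^*,j^*\}$. In the final transfer, $c_{R,0}\oplus c_{R,1} \neq c_0\oplus c_1$, so $\widehat b$ is well defined and is the unique index with $c_{R,\widehat b} = c_{\widehat b}$. The receiver sends $b' = b\oplus\widehat b$, and the sender outputs $r_b = a_{b\oplus b'} = a_{\widehat b}$. The receiver outputs $r = a_{R,\widehat b}$. So the event $r = r_b$ is exactly the event that, in the instance indexed by $\widehat b$, the prover's answer to the challenge $c_{\widehat b}$ equals the correct bit.

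\textbf{Step 3 (main obstacle): the selection does not bias the instance.} Choosing $P^*$ and $\widehat b$ depends only on the challenge bits $c$, $c_R$, and $z$. Answers are computed only after these are fixed, and the sender's checks on the other coordinates involve answers from disjoint, independent ToNC runs. I need to argue carefully that conditioning on this selection does not change the joint distribution of $(\ket{\psi},c,a^*)$ in instance $\widehat b$ beyond fixing $c$. Each run is independent, and $x$ is uniform, so $c_R$ is uniform given $c$. Conditioning on the event $c_{R,\widehat b} = c_{\widehat b}$ therefore leaves $(\ket{\psi_{\widehat b}},c_{\widehat b},a^*_{\widehat b})$ distributed as a fresh honest ToNC execution.

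The one remaining dependence is the no-abort event, which does correlate with the other coordinates. It has probability $1-\negl(\secp)$, so conditioning on it changes probabilities only negligibly. Completeness of the ToNC then yields $\Pr[r=r_b] \ge \frac12+\frac\epsilon2-\negl(\secp)$, which is correctness with $\epsilon = 2\beta-1$.
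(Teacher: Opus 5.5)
Your proposal is correct and follows the same route as the paper: you show the honest sender aborts with only negligible probability (commitment correctness, Hoeffding bounds on the two fraction checks, a union bound over the $\secp$ iterations), and then invoke ToNC completeness on the selected instance. Your Steps 2--3 spell out what the paper leaves to ``by inspection'': that $r_b = a_{\widehat b}$, and that choosing $\{i^*,j^*\}$ and $\widehat b$ depends only on the uniformly random match pattern, so it does not bias the chosen ToNC run. Your reading of the slack as $\secp^{-1/3}$ agrees with the paper's own Hoeffding computation.
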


\begin{proof}
    By inspection, and the completeness of the ToNC (\cref{def:ToNC}), we have that $r = r_b$ with probability $\beta = \frac{1}{2} + \frac{\epsilon}{2}$, \emph{conditioned on} $S$ not aborting during one of the iterations of the main loop. So, it remains to show that $S$ aborts with $\negl(\secp)$ probability. We will show that $S$ aborts in an iteration for any $k \in [\secp]$ with $\negl(\secp)$ probability, which establishes the above by a union bound.
    
    First, consider the check that for $1/4$ fraction of $i \in [2\secp] \setminus P^*$, $c^k_{R,i} = c^k_i$. Since the bits $c^k_{R,i}$ are masked by the $z^k_i$, which are sampled uniform and independent of the rest of the protocol, this check fails with probability bounded by the following Hoeffding inequality: \[\Pr[\frac{\#\{i: c^k_{R,i} \neq c^k_i\}_{i \in [2\secp] \setminus P^*}}{2\secp-2} < \frac{1}{4}] \leq \exp\left(-2\frac{2\secp-2}{16}\right) = \negl(\secp).\]

    Next, consider the check that for less than $\frac{1}{2} + \frac{\epsilon}{2} - \secp^{1/3}$ fraction of $i \in [2\secp] \setminus P^*$ such that $c^k_{R,i} = c^k_i$, $a^k_{R,i} = a^k_i$. Conditioned on the above, we have that the number of such $i$ is at least $(2\secp-2)/4$. Due to the completeness of ToNC, each event  $a^k_{R,i} = a^k_i$ is an independent Bernoulli random variable with success $\frac{1}{2} + \frac{\epsilon}{2}$. Thus, this check fails with probability bounded by the following Hoeffding inequality:
    \[\Pr[\frac{\#\{i: c^k_{R,i} = c^k_i \wedge a^k_{R,i} = a^k_i\}}{\#\{i: c^k_{R,i} = c^k_i\}} < \frac{1}{2} + \frac{\epsilon}{2} - \secp^{1/3}] \leq \exp\left(-2\frac{2\secp-2}{4\secp^{2/3}}\right) = \negl(\secp).\]

    It is easy to check that all other abort conditions happen with $\negl(\secp)$ probability, so we can conclude that $r = r_b$ with probability at least $\frac{1}{2} + \frac{\epsilon}{2} - \negl(\secp)$.
\end{proof}

\begin{lemma}[Completeness]\label{lemma:completeness}
    Protocol \ref{prot:OT_from_anti_comm_test_classical_sender} satisfies Completeness according to \cref{def:committed-bit}.
\end{lemma}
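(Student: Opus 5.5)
We need to show that when both parties are honest, for any choice bit $b$, the verification protocol $\Ver\langle R(\state_R),S(b,\state_S)\rangle$ outputs $\top$ with probability $1-\negl(\secp)$. The plan is to go through the three checks performed by $\Ver$ in order and show each passes except with negligible probability.

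\emph{First check: commitment openings.} Here $\state_R=(x_0,x_1,\state_{\Com,0},\state_{\Com,1})$ consists of the honest openings of the commitments produced in the Setup phase for indices $i^*,j^*$ of iteration $k^*$. The sender's state contains the corresponding transcripts $\tau_0,\tau_1$. By correctness of the commitment (\cref{def:com}), each $\Ver_\Com$ call accepts with probability $1-\negl(\secp)$. A union bound over the two calls handles this check.

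\emph{Second check: $c_0\oplus c_1 \neq c_{R,0}\oplus c_{R,1}$.} Recall that $c_{R,0}=x_0\oplus z_0$ and $c_{R,1}=x_1\oplus z_1$. For an honest receiver, the committed pair value satisfies $x^{k^*}_{i^*,j^*}=x_0\oplus x_1$. The sender chooses $P^*$ precisely so that $x^{k^*}_{i^*,j^*}\oplus z_0\oplus z_1 \neq c_0\oplus c_1$. This is exactly the required inequality, so the check passes whenever the protocol did not abort. If the sender aborted (for instance because no such pair exists, or a loop check failed), I would invoke the argument of \cref{lemma:correctness}. For the nonexistence event specifically, each pair independently satisfies the condition with probability $1/2$ because the $z$'s are uniform, so no good pair exists with probability $2^{-\secp}$. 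Hence all aborts occur with negligible probability, and conditioned on no abort this check passes.

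\emph{Third check: the opened bit.} Given the second check, exactly one $\widehat{b}$ satisfies $c_{R,\widehat{b}}=c_{\widehat{b}}$. The sender computes this $\widehat{b}$ from the same values $x_0,x_1,z_0,z_1,c_0,c_1$ that the receiver used when it sent $b'=b\oplus\widehat{b}$. Therefore $\widehat{b}\oplus b'=b$, and $\Ver$ outputs $\top$.

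The only step needing care is the abort bookkeeping, since $\Ver$ is only meaningful when the protocol reached the final transfer. I expect this to be routine by reusing the Hoeffding bounds already established in \cref{lemma:correctness}.
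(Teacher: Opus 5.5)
Your proposal is correct and takes essentially the same approach as the paper. The paper's proof is the single line that completeness follows from correctness of the commitment scheme; the checks $c_0\oplus c_1\neq c_{R,0}\oplus c_{R,1}$ and $\widehat{b}\oplus b'=b$ are left implicit as holding by construction for an honest receiver, and the negligible abort probability is left to the analysis in \cref{lemma:correctness}. You have simply written out those implicit steps. One small point: since $k^*,i^*,j^*$ are selected during the execution, it is cleanest to take the commitment union bound over all $2\secp^2$ Setup commitments rather than over just the two that are opened.
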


\begin{proof}
    This follows directly from the correctness of the commitment scheme (\cref{def:com}).
\end{proof}

\begin{lemma}[Receiver security]\label{lemma:rec-sec}
    Protocol \ref{prot:OT_from_anti_comm_test_classical_sender} satisfies Receiver security according to \cref{def:committed-bit} with $\gamma = 0$ .
\end{lemma}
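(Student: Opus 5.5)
The plan is to show that, for any QPT malicious sender $\rho,\widetilde{S}$, the sender's entire view when $R$ uses $b=0$ is computationally indistinguishable from its view when $R$ uses $b=1$. This gives $\gamma = 0$ up to a negligible term.

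\textbf{Where $b$ enters.} The choice bit appears only through the final message $b' = b \oplus \widehat{b}$. So it suffices to show that $\widehat{b}$ is computationally hidden from $\widetilde{S}$, even given everything else it sees. Write $\{i^*,j^*\} = P^*$ in iteration $k^*$. By definition $c_{R,0} = x^{k^*}_{i^*} \oplus z_0$, and the protocol only reaches the final transfer when $c_{R,0}\oplus c_{R,1} \neq c_0\oplus c_1$. Hence $\widehat{b} = 0$ iff $x^{k^*}_{i^*} = c_0 \oplus z_0$. Here $c_0,z_0$ are chosen by the sender, and the parity $x^{k^*}_{i^*}\oplus x^{k^*}_{j^*}$ is opened. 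Consequently $\widehat{b}$ is a fixed function of $x^{k^*}_{i^*}$ and of values the sender already knows, and flipping both $x^{k^*}_{i^*}$ and $x^{k^*}_{j^*}$ flips $\widehat{b}$ while preserving every opened parity. (If $\widetilde{S}$ sends $c$'s with the wrong parity, it can only cause an abort, which is independent of $b$.)

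\textbf{Information about $x^{k^*}_{i^*},x^{k^*}_{j^*}$ in the sender's view.} These values reach the sender in three ways.
\begin{enumerate}
\item The commitments $\tau^{k^*}_{i^*},\tau^{k^*}_{j^*}$ from $\Setup$, which are never opened.
\item The opened pair parity $x^{k^*}_{i^*,j^*}$, which is invariant under the joint flip.
\item The masked answers $\widehat{a}^{k^*}_{i^*},\widehat{a}^{k^*}_{j^*}$. The answers $a_{R,\cdot}$ depend on the challenges $c_{R,\cdot} = x\oplus z$ through the honest prover's measurement, but they are masked by $y^{k^*}_{i^*},y^{k^*}_{j^*}$, whose commitments are also never opened.
\end{enumerate}
All other coordinates and iterations use independent fresh $x,y$ values and independent ToNC runs. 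Moreover, $P_\prep$ does not depend on $x$.

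\textbf{Hybrid sequence.} Starting from the $b=0$ execution, I would proceed as follows.
\begin{itemize}
\item Replace $\widehat{a}^{k^*}_{i^*},\widehat{a}^{k^*}_{j^*}$ by uniformly random bits. This is justified by hiding of the commitments to $y^{k^*}_{i^*},y^{k^*}_{j^*}$.
\item Replace the commitments to $x^{k^*}_{i^*},x^{k^*}_{j^*}$ by commitments to the flipped values. This again uses hiding and leaves all opened parities consistent.
\item Observe that the resulting distribution is exactly that of the $b=1$ execution run through the same hybrids in reverse.
\end{itemize}

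\textbf{Main obstacle: adaptivity.} I expect the main difficulty to be that $k^*$ and $P^*$ are chosen adaptively by the malicious sender. The sender sees the pairing and the receiver opens all other pairs, so a reduction to single-commitment hiding cannot know in advance which commitments remain unopened. I would handle this by guessing. The reduction guesses $k^*\in[\secp]$ and $P^*$ among the $\secp$ pairs. It embeds the external hiding challenge (plugging the challenge commitment into position $i^*$ or $j^*$) and simulates every other commitment honestly, since it knows their openings. If the sender's actual choice differs from the guess, the reduction outputs a random bit. Correct guesses occur with probability at least $1/\secp^2$, independently of the view given the guess. The distinguishing advantage therefore degrades only by a polynomial factor, and it remains negligible, including against senders with quantum advice, because the commitment hides against such receivers. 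Summing over the constant number of hybrids yields receiver security with $\gamma = 0$.
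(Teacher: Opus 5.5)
Your route is the paper's. The bit $b$ enters only through $b'=b\oplus\widehat b$, and $\widehat b$ is determined by $x^{k^*}_{i^*}$ together with public data. Jointly flipping $x^{k^*}_{i^*},x^{k^*}_{j^*}$ preserves the opened parity, so hiding of the commitments finishes the argument. You are in fact more careful than the paper's sketch about the masked answers $\widehat a$ and about the adaptive choice of $(k^*,P^*)$.

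However, your list of where $x^{k^*}_{i^*},x^{k^*}_{j^*}$ enter the sender's view misses a channel, so the claim that your final hybrid is \emph{exactly} the $b=1$ execution is false as written. The receiver commits to $x^{k^*}_i\oplus x^{k^*}_j$ for \emph{every} unordered pair $\{i,j\}\subseteq[2\secp]$, but only pairs in $\mathcal{P}$ are opened. Hence the commitments $\tau^{k^*}_{\{i^*,j\}}$ and $\tau^{k^*}_{\{j^*,j\}}$ for all $j\notin P^*$ remain in the view. Every $x^{k^*}_j$ with $j\notin P^*$ is opened, so each of these is effectively another commitment to $x^{k^*}_{i^*}$ or $x^{k^*}_{j^*}$. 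After you switch only the setup-phase commitments, these still commit to the unflipped values. Your remark that other coordinates are independent overlooks that the pair commitments couple $i^*,j^*$ to every other coordinate.

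The fix is routine. Add hybrids switching these $4\secp-4$ commitments to $\overline{x^{k^*}_{i^*}}\oplus x^{k^*}_j$ and $\overline{x^{k^*}_{j^*}}\oplus x^{k^*}_j$. None of them is ever opened when $P^*=\{i^*,j^*\}$ is selected in iteration $k^*$, so the same hiding-plus-guessing reduction applies. (The paper's one-line ``standard hybrid argument'' leaves this implicit too.)

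Some smaller points.
\begin{itemize}
\item All commitments are made before the sender reveals $\mathcal{P}$. So the reduction must guess $(k^*,\{i^*,j^*\})\in[\secp]\times\binom{[2\secp]}{2}$ rather than ``a pair among the $\secp$ pairs''. The loss is then $1/O(\secp^3)$ rather than $1/\secp^2$, which is still polynomial.
\item The sender's choice of $(k^*,P^*)$ depends on its view, so the probability that the guess is correct can vary across hybrids. It is cleaner to decompose over guesses $g$ and let the reduction output $1$ iff the guess is correct and $\widetilde b=0$, instead of outputting a random bit on a wrong guess.
\item Your argument tacitly needs $R$ to abort unless $\mathcal{P}$ is a perfect matching of $[2\secp]$ and $P^*\in\mathcal{P}$. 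Otherwise a malicious sender can get $x^{k^*}_{i^*},x^{k^*}_{j^*}$ opened and read $b$ off $b'$. The protocol leaves this check implicit, and it should be stated.
\end{itemize}
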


\begin{proof}
    This follows from the hiding of the commitment scheme (\cref{def:com}). Indeed, the only information that $S$ receives about $b$ is masked by $\widehat{b}$, which is determined by $x_{i^*},x_{j^*}$ as well as public information in the transcript. $S$ receives $x_{i^*} \oplus x_{j^*}$, which fixes $x_{i^*},x_{j^*}$ to either $\{00,11\}$ or $\{01,10\}$. In either case, $\widehat{b}$ (along with other public information) is determined by which of the two strings was committed to. By a standard hybrid argument, commitments to 00 and 11 can only be distinguished with $\negl(\secp)$ advantage and likewise for commitments to 01 and 10. This completes the proof. 
\end{proof}

\begin{lemma}[Sender security]\label{lemma:sen-sec}
    Protocol \ref{prot:OT_from_anti_comm_test_classical_sender} satisfies Sender security according to \cref{def:committed-bit} for any constant $\delta' > 2\delta - \epsilon$.
\end{lemma}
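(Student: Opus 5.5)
We argue by contradiction via a reduction to ToNC soundness, following the skeleton in the technical overview. Fix an unbounded $\widetilde{R}_\init$ and condition on a setup transcript $\tau_\Setup$ such that extraction succeeds. Concretely, $\td = \TrapGen(\tau_\Setup)$ determines bits $x^k_i$, and binding guarantees that $\widetilde{R}$ can open each $\tau^k_i$ only to $x^k_i$, except with negligible probability. The second sender-security condition follows directly from binding: $\Ver$ accepts $1-b$ only if $\widetilde{R}$ opens $\tau_0$ or $\tau_1$ to a value other than the extracted one. For the first condition, suppose a QPT $\widetilde{R}$ guesses $r_{1-b}$ with advantage $\gamma > \delta' > 2\delta-\epsilon$, where $b = \Ext(\td,\tau)$. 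We will build a non-uniform commuting prover for the ToNC with advantage roughly $\frac{\epsilon+\gamma}{2} - o(1) > \delta$, which contradicts ToNC soundness.

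\textbf{Step 1: well-behavedness from cut-and-choose with random termination.} We split into two cases. If $S$ aborts with overwhelming probability, the advantage is trivially zero, since aborts yield random bits. Otherwise, use $k^*\gets[\secp]$ hidden from $\widetilde{R}$, exactly as in \cref{claim:random-stopping}. Conditioned on no abort, the final iteration $k^*$ is, except with probability $O(1/\secp)$ times the inverse non-abort probability, an iteration that \emph{would have passed} all checks had it been opened. Within such an iteration, binding together with the pair checks shows the following. The values $c^k_{R,i} = x^k_i\oplus z^k_i$ are fixed by the extracted $x^k_i$ before $z^k_i$ and $c^k_i$ are seen, so they are uniform relative to the verifier's challenges. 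The committed pair parities $x^k_{i,j}$ are consistent with the $x^k_i$ on all but one pair. The committed answers $y^k_i$ (which are also extractable, although the answer commitments are made in $\OT$) give $a^k_{R,i}$ agreeing with $a^k_i$ on at least a $\frac12+\frac\epsilon2-\secp^{-1/3}$ fraction of matching positions. Since $\mathcal{P}$ and $P^*$ are uniformly random, a Hoeffding/sampling argument shows the unopened pair $P^*$ inherits these statistics. In particular, on the coordinate $\widehat{b}$ with $c_{R,\widehat b}=c_{\widehat b}$, the committed answer agrees with $a_{\widehat b}$ with probability $\ge \frac12+\frac\epsilon2 - o(1)$.

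\textbf{Step 2: the commuting prover.} The prover $\widetilde{P}$ receives as non-uniform quantum advice the advice of $\widetilde{R}$, $\tau_\Setup$, and $\td$, so all inefficient extraction happens before the protocol. For the answer commitments, we instead have $\widetilde{P}$ obtain $y$ by running the honest-looking simulation. If needed, we can strengthen the advice to a polynomial-size description sufficient to recover $a_{R,\widehat b}$, or note that the overview says the extraction is pushed to preprocessing. $\widetilde{P}$ simulates $S$, samples $k^*$ and $\mathcal{P}$ in advance, and picks a uniformly random position inside a random pair to be forwarded to the external verifier, simulating all others. The challenge $c$ of the external verifier is needed only at the final transfer. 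Before it arrives, $\widetilde{P}$ jointly computes two candidate answers. The first is the committed/extracted answer $a_{R,\cdot}$ for the forwarded slot. The second is $\widetilde{R}$'s guess $\widetilde{R}(\state_R,r_b)$ for $r_{1-b}$ in the branch where the forwarded slot is the non-matching one, using the simulated slot's $a$ as $r_b$. This is done for both values of $c$ by determining, from $x$ and $z$, which slot matches $c$. Because $c_{R,0}\oplus c_{R,1}\ne c_0\oplus c_1$, exactly one value of $c$ makes the forwarded slot the matching one. As in \cref{lem:wba_security}, producing both answers in one joint measurement makes $\widetilde P_0,\widetilde P_1$ commute. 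One subtlety remains: running $\widetilde{R}$ requires sending $\{c_0,c_1\}$, which depends on $c$. We handle this by coherently branching only on the final bit, i.e.\ by defining $\widetilde{P}_c$ as the whole remaining computation given $c$. Commutation then needs the two branches to act on disjoint information; we ensure this by having one branch use only classical data already fixed (the extracted commitment), which is a measurement commuting with anything.

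\textbf{Step 3: accounting.} Each branch occurs with probability $1/2$ by the uniformity from Step 1, giving advantage $\frac{\epsilon+\gamma}{2}-o(1)$, hence $\gamma\le 2\delta-\epsilon+o(1)<\delta'$. The main obstacle is Step 1: we must rigorously transfer the aggregate cut-and-choose statistics to the single hidden pair $P^*$ in the final iteration, conditioned on non-abort, while keeping the conditioning compatible with embedding the external ToNC instance in a random slot. We would first try to bound everything by the union of a sampling bound and the random-stopping bound of \cref{claim:random-stopping}.
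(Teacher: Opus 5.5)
Your architecture matches the paper's. You embed the external ToNC instance in a random slot $i^*$ of iteration $k^*$. One observable outputs the stored bit $\widehat{a}^{k^*}_{i^*}\oplus y^{k^*}_{i^*}$, and the other runs $\widetilde{R}$ to the end and outputs its guess for $r_{1-b}$. Commutation holds because one branch only reads classical memory.

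The gap is in how your prover reaches iteration $k^*$. With advice consisting only of $\widetilde{R}$'s advice, $\tau_\Setup$ and $\td$, two things break.

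\emph{First, $y$ is unavailable.} The answer commitments are created during $\OT$, interactively with $\widetilde{R}$, and are only inefficiently extractable. So ``running the honest-looking simulation'' does not produce $y$. Nor can a classical string be hardwired in advance, since $y$ depends on $\widetilde{R}$'s execution.

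\emph{Second, online simulation dilutes the $\epsilon$-branch.} Suppose loops $1,\dots,k^*-1$ are simulated online and the prover answers randomly after an abort, and let $p$ be the probability of surviving those loops. The $\epsilon$-branch then contributes advantage $p(\epsilon-o(1))$. The $\gamma$-branch contributes $\gamma$: conditioned on survival its advantage is about $\gamma/p$. The total is therefore $\frac{p\epsilon+\gamma}{2}-o(1)$, not $\frac{\epsilon+\gamma}{2}-o(1)$. Your Step 3 silently takes $p=1$. Since one only knows $p\ge\gamma$, this does not give $\gamma\le 2\delta-\epsilon$.

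The paper resolves both issues with the inefficient preprocessing $P_\init$. It runs $\Setup$ and $\TrapGen$, then the answer-commitment phase, setting $y^k_i:=\Ext_{\Com}({\tau'}^k_i)$. It then runs the pair commitments and loops $1,\dots,k^*-1$, repeating until there is no abort. The resulting post-selected joint state of $\widetilde{R}$ and the classical data is handed to $P_\prep$ as quantum advice.

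Termination, and the usefulness of the random-stopping bound, both rest on the survival probability being at least $\gamma$, a constant, because the advantage is $0$ on abort. Your dichotomy ``aborts with overwhelming probability or not'' is too weak. A survival probability that is non-negligible but $o(1)$ makes your error term, $O(1/\secp)$ times the inverse survival probability, useless. (For the second issue alone there is another fix: on a prefix abort, run the honest $P_\prep$ with the external verifier and use the $\epsilon/2$ commuting strategy, which restores $\frac{\epsilon+\gamma}{2}$. Extracting $y$ still forces preprocessing.)

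A secondary problem is your claim that the external challenge $c$ is needed ``only at the final transfer''. $S$ selects $P^*$ using $c_{i^*}\oplus c_{j^*}$, so the pair containing the forwarded slot depends on $c$. If you fix $\mathcal{P}$ with the forwarded slot in a random pair before the ToNC instances are run, its partner is non-matching about half the time. In that case, for $c=1-c_{R,i^*}$ the pair is ineligible, $S$ transfers on a different pair, and $\widetilde{R}$'s guess is useless. The paper forms $P^*=\{i^*,j^*\}$ inside the branch $c=1-c_{R,i^*}$, choosing $j^*$ uniformly among simulated slots with $c_{R,j^*}=c_{j^*}$ and partitioning the rest randomly. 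Your later remark that $\widetilde{P}_c$ should be ``the whole remaining computation given $c$'' is the right repair; with it, your commutation argument coincides with the paper's.
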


\begin{proof}
    Sender security for committed-bit OT has two requirements. The second follows directly from the binding of the commitment scheme (\cref{def:com}), so it remains to establish the first.

    We exhibit a reduction that takes any $(\widetilde{R}_\init,\widetilde{R})$ that has $\delta'$ advantage in the first part of sender security and produces a non-uniform commuting prover that has $\frac{1}{2}(\epsilon + \delta') - o(1)$ advantage in the soundness game of the ToNC. When $\delta' = 2\delta - \epsilon + \Omega(1)$, this yields an adversary with advantage $\delta + \Omega(1)$, a contradiction.
    
    Our non-uniform ToNC prover $(P_\init,P_\prep,P_0,P_1)$ is specified as follows, where $P_\init$ is a potentially-inefficient operation that runs prior to the beginning of the interaction with the verifier, and outputs a quantum state. Note that this model of non-uniform adversary with quantum advice is identical to the model where we allow $P_\prep$ to be initialized with a (potentially inefficiently-computable) mixed state $\rho$.  \\

    \noindent \underline{$P_\init$}:
    \begin{itemize}
        \item Run the following loop until it succeeds:
        \begin{itemize}
             \item Run $\init_R,\tau_\Setup \gets \Setup\langle\widetilde{R}_\init,S(1^\secp)\rangle$ and $\td \coloneqq \TrapGen(\tau_\Setup)$. Parse $\td$ as $\left\{x^k_i\right\}_{k \in [\secp], i \in [2\secp]}$.
            \item Interact with $\widetilde{R}$ during the ``answer commitment'' phase of $\OT$.
            \item For each $k \in [\secp], i \in [2\secp]$, run $y^k_i \coloneqq \Ext_\Com(\tau_i')$.
            \item Interact with $\widetilde{R}$ during the
            ``pair commitment'', and ``main loop'' phases of $\OT$ up until the beginning of the $k^*$'th loop.
            \item If there has not been an abort yet, then declare success and move on, otherwise try the loop again.
        \end{itemize}

    \end{itemize}

    \noindent \underline{$P_\prep$}:
    \begin{itemize}
        \item Sample $i^* \gets [\secp]$. For each $i \in [2\secp]$, interact with $\widetilde{R}$ as $V_\prep$ in the protocol \[\ket{\psi_i},\left(c^{k^*}_i,a^{k^*}_i\right) \gets \langle \widetilde{R},V_\prep(1^\secp)\rangle,\] except that for the $i^*$'th instance, $P$ forwards the message to its own external challenger. Note that $P$ knows $\left\{c^{k^*}_i,a^{k^*}_i\right\}_{i \neq i^*}$ but not $c^{k^*}_{i^*}, a^{k^*}_{i^*}$.
        \item Sample $z^{k^*} \gets \{0,1\}^{2\secp}$, send $z^{k^*}$ to $\widetilde{R}$, and receive $\left\{\widehat{a}^{k^*}_{i}\right\}_{i \in [2\secp]}$. Set $c^{k^*}_{R,i^*} \coloneqq x^{k^*}_{i^*} \oplus z^{k^*}_{i^*}$ and $a^{k^*}_{R,i^*} \coloneqq \widehat{a}^{k^*}_{i^*} \oplus y^{k^*}_{i^*}$.
    \end{itemize}

    \noindent \underline{$P_{c_{R,i^*}}$}
    \begin{itemize}
        \item In this case, the external challenge $c = c^{k^*}_{R,i^*}$. 
        \item Output $a' \coloneqq a^{k^*}_{R,i^*}$.
    \end{itemize}

    \noindent \underline{$P_{1-c_{R,i^*}}$}
    \begin{itemize}
        \item In this case, the external challenge $c = 1-c^{k^*}_{R,i^*}$. 
        \item In what follows, if there is an abort, then output a uniformly random bit.
        \item Set $c^{k^*}_{i^*} \coloneqq c$ and for $i \in [2\secp]$, set $c^{k^*}_{R,i} \coloneqq x^{k^*}_i \oplus z^{k^*}_i$. Sample $j^*$ uniformly at random such that $c^{k^*}_{R,i^*} \oplus c^{k^*}_{R,j^*} \neq c^{k^*}_{i^*} \oplus c^{k^*}_{j^*}$, meaning $c^{k^*}_{R,j^*} = c^{k^*}_{j^*}$, and set $P^* \coloneqq \{i^*,j^*\}$. Partition the rest of $[2\secp] \setminus P^*$ uniformly at random to form $\mathcal{P}$.
        \item Send $\mathcal{P}$ to $\widetilde{R}$, receive $\left\{x^{k^*}_{i,j}\right\}_{\{i,j\}\in \mathcal{P}}$, and abort if \[\Ver_\Com\left(\state^{k^*}_{\Com,\{i,j\}},\tau^{k^*}_{\{i,j\}},x_{i,j}\right) = \bot\] for any $\{i,j\} \in \mathcal{P}$. Send $P^*$ to $\widetilde{R}$ and receive \[x^{k^*}_i,\state^{k^*}_{\Com,i},x^{k^*}_j,\state^{k^*}_{\Com,j},y^{k^*}_i,{\state'}^{k^*}_{\Com,i},y^{k^*}_j,{\state'}^{k^*}_{\Com,j}\] for each $\{i,j\} \in \mathcal{P} \setminus P^*$. Perform the remainder of the abort checks implemented by $S$.
        \item Send $\left\{c^{k^*}_{i^*},c^{k^*}_{j^*}\right\}$ and $r_{b} \coloneqq a^{k^*}_{j^*}$ to $\widetilde{R}$, and let $r_{1-b}$ be $\widetilde{R}'s$ final output. Output $a' \coloneqq r_{1-b}$. 
    \end{itemize}

    Note that the operations $P_0$ and $P_1$ are commuting by definition, since one of them directly outputs a classical bit stored in memory. The fact that this adversary has advantage $\frac{1}{2}(\epsilon + \delta) - o(1)$ in guessing the external verifier's bit $a^*$ follows from the following sequence of observations.
    \begin{itemize}
        \item $\widetilde{R}$'s view given by the reduction is the same as its view in its real interaction with $S$ except that:
        \begin{itemize}
            \item It is conditioned on not aborting (that is, causing $S$ to abort) during the first $k^*-1$ runs of the main loop.
            \item It potentially differs in the event that $x^{k^*}_{i^*,j^*} \neq x^{k^*}_{i^*} \oplus x^{k^*}_{j^*}$ but $S$ does not abort in the $k^*$'th loop.
        \end{itemize}
        Note that the probability that the above event occurs is $1/\secp + \negl(\secp) = o(1)$ due to the cut-and-choose on pair commitments and the binding of the commitment scheme.
        \item Note that, conditioned on $S$ aborting, any adversary's advantage in the first part of sender security is exactly 0. Thus, since $\widetilde{R}$ has non-zero constant advantage, it has a non-zero constant probability of not aborting, and thus $P_\init$'s loop will terminate in finite time. Moreover, since $k^*$ is sampled uniformly at random, $\widetilde{R}$'s probability of aborting in the $k^*$'th loop \emph{conditioned on not aborting in the first $k^*-1$ loops} is bounded by $1/\secp = o(1)$. Finally, this also implies that $\widetilde{R}$'s advantage \emph{conditioned on not aborting} must be at least $\delta$.
        \item Since $z^{k^*}_{i^*}$ is uniformly random, $c^{k^*} = c^{k^*}_{R,i^*}$ with probability exactly 1/2, and thus the prover runs $P_{c^{k^*}_{R,i^*}}$ and $P_{1-c^{k^*}_{R,i^*}}$ each with probability 1/2.
        \item First condition on $c = c^{k^*}_{R,i^*}$. Due to the fact that $S$ does not abort except with probability $o(1)$, the binding of the commitment scheme, and the fact that $i^*$ is uniformly random conditioned on $c^{k^*}_{R,i^*} = c^{k^*}_{i^*}$, we have that $a' = a^*$ with advantage $\epsilon - o(1)$. %\james{actually this seems to assume that the adversary is non-aborting with probability $1-o(1)$. I believe we'll need to do a sequential repetition with random stopping time in order to guarantee this}
        \item Next condition on $c = 1 - c^{k^*}_{R,i^*}$. By the fact that $\widetilde{R}$'s advantage conditioned on not aborting is at least $\delta$ and the fact that $\widetilde{R}$'s view is consistent with its real view conditioned on not aborting except with probability $o(1)$, we have that conditioned on $c = 1 - c^{k^*}_{R,i^*}$, $a' = a^*$ with advantage $\delta - o(1)$.
    \end{itemize}

\end{proof}

\ifsubmission\else\section{Hardness amplification}

\subsection{Post-quantum hard-core measure theorem}\label{subsec:hardcore}

Fix a deterministic predicate $P:\{0,1\}^n\rightarrow \bits$ and any randomized classical circuit $C(x,b)$ that is intended to output $1$ when $b=P(x)$ and $0$ otherwise. Theorem~3 in \cite{HolensteinPuzzles} (which is similar in spirit to Lemma 2.4 from \cite{Holenstein}) shows that one can
\emph{decompose} the input space relative to this particular $C$ into a large
``hard'' subset of the domain $S\subseteq \bits^n$ of probability mass $\delta$ and its complement. That is, on $S$, $C$'s acceptance probability is
essentially \emph{insensitive} to whether $b=P(x)$ or $b=1-P(x)$, meaning that flipping
$P$ on $S$ is nearly indistinguishable to $C$. Moreover,
one can convert $C$ into an explicit predictor $Q$ that recovers $P(x)$ with
an overall success at least $1-\delta/2$ on the entire domain. 

We prove a variation of \cite[Theorem 3]{HolensteinPuzzles} which differs in the following respects.
\begin{itemize}
    \item \textbf{Post-quantum setting.} We allow the distinguisher $C$ to be a \emph{quantum} circuit, and thus the resulting predictor $Q$ is quantum.
    \item \textbf{Arbitrary input distributions.} Rather than working only with
    the uniform distribution over $\bits^n$, we formulate the statement with
    respect to an arbitrary distribution $\mathcal{D}$ over inputs.

    \item \textbf{A measure rather than a hard set.} As a consequence of allowing a quantum distinguisher $C$, we construct a quantum predictor which maps a pair $(x,P(x))$ \emph{probabilistically} to a bit. Since the predictor is non-deterministic, it does not immediately give rise to a hard subset. Instead, its prediction statistics constitute a 
    \emph{sub-distribution measure} of $\mathcal{D}$ (as in Definition \ref{def:measure})
    that designates the portion of the domain on which $C$ is
    insensitive to flipping $P$. The measure is explicitly constructed in Corollary \ref{cor:soft-hardcore-random-labels}.
\end{itemize}

The measure vs. set discrepancy comes from the fact that the proof in \cite{HolensteinPuzzles} relies on derandomizing the distinguisher $C$ through fixing a set $R$ of random seeds. Evaluations of $C(x,r)$ on any $r\in R$ are used to compute a criterion which determines whether $x\in S$---thus $S$ becomes a deterministic set. Such derandomization is impossible in our setting, where $C$ is no longer classical. However, shifting to working with a measure enables recovering analogous results in this setting as well. 

Next, we define a function that will be used in the Theorem's statement.

\paragraph{Definition.}
For parameters $\tau\in\mathbb R$ and $\xi>0$, define
\[
\Soft_{\tau,\xi}(t)\ :=\
\begin{cases}
1, & t\le \tau-\xi,\\[1mm]
\frac{\tau-t}{\xi}, & \tau-\xi < t < \tau,\\[1mm]
0, & t\ge \tau.
\end{cases}
\]
$\Soft_{\tau,\xi}(t)\in[0,1]$ and it is $1/\xi$-Lipschitz in $t$ and also $1/\xi$-Lipschitz in $\tau$
(for each fixed $t$).

$\Soft_{\tau,\xi}(t)$ can be thought of as a ``soft indicator'' of being in the hard region: it is $1$ when $t$ is clearly
below $\tau$, it is $0$ when $t$ is above $\tau$, and it interpolates linearly in a narrow boundary band.

\begin{theorem}[A generalization of Theorem 3 in \cite{HolensteinPuzzles}]\label{thm:Q3}
Let there be:
\begin{itemize}
    \item A function $n:\N\rightarrow \N$, such that $n(\secp)=\poly(\secp)$.
    \item A family of predicates $\{P_\secp\}_{\secp\in\N}$, where $P_\secp:\bits^{n(\secp)}\to\bits$.
    \item A family of distributions $\{\mathcal D_\secp\}_{\secp \in \N}$, where $\mathcal D_\secp$ is over $\bits^{n(\secp)}$, such that the following min-entropy condition holds:
    for $m(\secp):=\max\{D_\secp(z) \mid z\in \bits^{n(\secp)}\}$, it holds that $m(\secp)\le \negl(\secp)$.
    
    \item A family of samplers $\{\mathsf{Samp}_\secp\}_{\secp \in \N}$ such that $\mathsf{Samp}_\secp$ outputs $(x,P_\secp(x))$ where $x$ is sampled from $D_\secp$. $\mathsf{Samp}$ runs in time $\poly(\secp)$.
\end{itemize}

Consider any function $\varepsilon:\mathbb{N}\to(0,1)$ satisfying
$\varepsilon(\secp)=1/\poly(\secp)$.
Then there exists a \emph{uniform quantum polynomial-time oracle algorithm}
$\mathsf{Gen}$ such that for every security parameter value $\secp$ and every quantum circuit
\[
\mathcal C:\{0,1\}^{\,n(\secp)+1}\to\{0,1\},
\]
on input $(\varepsilon(\secp),1^\secp)$ and with oracle access to $\mathcal C$,
$\mathsf{Gen}$ outputs:

 \begin{itemize}
\item A parameter $\xi>0$ and a threshold $\tau\in[0,1+\xi]$.
\item Quantum circuits $S,Q$ with oracle access to $\mathcal C$,
\item a parameter $\delta\in[0,1]$,
\end{itemize}
such that except with probability $\negl(\secp)$ over measurements of $\mathsf{Gen}^{\mathcal C}$,
the following properties hold\footnote{Fix $\secp$ and suppress the subscript $\secp$.}.

\noindent For $x\in\bits^{n(\secp)}$ and $b\in\bits$ define the \emph{pointwise gaps}
\[
p_x(b)\ :=\ \Pr[\mathcal C(x,b)=1],
\qquad
\Delta_x\ :=\ p_x(P(x)) - p_x(1-P(x)),
\]
and define the \emph{global gap}
\[
\Delta\ :=\ \Pr_{x\leftarrow \mathcal D}\bigl[\mathcal C(x,P(x))=1\bigr]\ -\ \Pr_{x\leftarrow \mathcal D}\bigl[\mathcal C(x,1-P(x))=1\bigr]
\ =\ \E_{x\leftarrow\mathcal D}[\Delta_x].
\]

\noindent Additionally, denote
\[
w_\tau(x)\ :=\ \Soft_{\tau,\xi}(\Delta_x)\in[0,1],
\qquad
\delta_\tau\ :=\E_{x\leftarrow\mathcal D}[w_\tau(x)].
\]

\begin{enumerate}
\item \textbf{(Correct mass).}
For every fixed $x$,
\[
\Pr[S(x,P(x))=1]\ =\ w_\tau(x)\ \pm\ (\varepsilon/200+ \negl(\secp)).
\]

Moreover, $\mathsf{Gen}$'s output $\delta$ satisfies
$\delta_\tau - \epsilon/200 \leq \delta\ \le\ \delta_\tau.$

\item \textbf{(Indistinguishability).}
Let $P'(x)$ be the randomized predicate $P'(x):=P(x)\oplus S(x,P(x))$, then
\[
\left|
\Pr_{x\leftarrow\mathcal D}\bigl[\mathcal C(x,P(x))=1\bigr]
-
\Pr_{x\leftarrow\mathcal D}\bigl[\mathcal C(x,P'(x))=1\bigr]
\right|
\ \le\ \varepsilon.
\]

\item \textbf{(Predictability).}
The circuit $Q$ satisfies
\[
\Pr_{x\leftarrow \mathcal D}\bigl[Q(x)=P(x)\bigr]\ \ge\ 1-\frac{\delta}{2}-\negl(\secp).
\]

\item \textbf{(Obliviousness).}

The circuit $Q(x)$ can be written as $Q^{C(x,\cdot),C(\cdot,\cdot)}$. 
That is, the underlying oracle circuit $Q$ is fixed and independent of $x$, and it utilizes $x$ only within a fixed, $x$-independent procedure of oracle queries to $C(x,\cdot)$. 
In this sense, $Q$ is oblivious to $x$.

% It forwards its input $x$ only to its first oracle (via queries of the form $(x,b)$ for $b \in \{0,1\}$); all other oracle queries, as well as the circuit structure and query schedule, are independent of $x$. Only the oracle responses—and thus the resulting wire values—depend on $x$.

% The circuit $Q(x)$ can be written as $Q^{C(x,\cdot),C(\cdot,\cdot)}$. That is, $Q$ is oblivious to its input $x$--

% $Q$ uses $x$ only by forwarding it unchanged to its oracle gates: there exists a fixed (i.e., $x$-independent) oracle-query procedure such that on input $x$, $Q$ only ever queries $\mathcal C$ on inputs of the form $(x,0)$ and $(x,1)$ with a query schedule independent of $x$, and the remaining computation is independent of $x$.\james{I think I know what you mean, but this could also be better clarified}

\end{enumerate}

There exist fixed polynomials $q(\cdot,\cdot)$ and $r(\cdot,\cdot)$ such that the oracle circuits $S,Q$:
\begin{itemize}
  \item make at most $q(\secp,1/\varepsilon)$ oracle calls to $\mathcal C$, and
  \item the remaining circuit has size at most $r(\secp,1/\varepsilon)$.
\end{itemize}
Consequently, after hardwiring $\mathcal C$ and replacing each oracle gate by a copy of $\mathcal C$,
the compiled circuit $\widehat Q$ satisfies
\[
|\widehat Q|\ \le\ q(\secp,1/\varepsilon)\cdot|\mathcal C| \;+\; r(\secp,1/\varepsilon),
\]
and similarly for $\widehat S$.

\end{theorem}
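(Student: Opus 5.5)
We follow the structure of Holenstein's decomposition, but we work with per-input estimated gaps and a soft threshold in place of a fixed random seed. For each $x$, the quantity $\Delta_x = p_x(P(x)) - p_x(1-P(x))$ can be estimated to additive accuracy $\xi\varepsilon/1000$, except with probability $\negl(\secp)$, by running $\mathcal C(x,0)$ and $\mathcal C(x,1)$ independently $\poly(\secp,1/\varepsilon)$ times. This uses only oracle queries to $\mathcal C(x,\cdot)$ on a fixed schedule, which is what will give obliviousness.

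\textbf{The circuit $S$.} On input $(x,P(x))$, $S$ computes the estimate $\widehat\Delta_x$ and outputs $1$ with probability $\Soft_{\tau,\xi}(\widehat\Delta_x)$. Since $\Soft$ is $1/\xi$-Lipschitz, this gives the correct-mass property with error $\varepsilon/200+\negl(\secp)$.

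\textbf{Choosing $\xi$ and $\tau$.} $\mathsf{Gen}$ sets $\xi = \Theta(\varepsilon)$. It then picks $\tau$ from a grid $\{0,\xi,2\xi,\dots\}\cap[0,1+\xi]$ according to the following criterion. Define
\[
g(\tau) := \E_x\bigl[w_\tau(x)\,\Delta_x\bigr].
\]
Flipping $P$ to $P'$ changes $\Pr[\mathcal C=1]$ by exactly $g(\tau)$, because on input $x$ the label flips with probability $w_\tau(x)$ and the acceptance probability then drops by $\Delta_x$. Indistinguishability therefore requires $|g(\tau)| \le \varepsilon$. Since $w_\tau$ is supported on $\Delta_x < \tau$, we have $g(\tau) \le \tau\,\delta_\tau$. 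Taking $\tau$ near $0$ makes $g$ small, but it also makes $\delta_\tau$ small, and predictability needs $\delta$ to be as small as the true hardness forces.

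\textbf{The predictor $Q$.} On input $x$, $Q$ estimates $p_x(0)$ and $p_x(1)$. If $\Soft_{\tau,\xi}(\widehat\Delta)$ says $x$ is clearly outside the hard region, $Q$ outputs the $b$ with the larger $\widehat p_x(b)$. Otherwise it outputs a random bit, mixed with probability $w_\tau$.

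Bounding the error of $Q$ is the delicate part. Outside the region, $\Delta_x \ge \tau > 0$ only tells us that $\mathcal C$ prefers the correct label. But $Q$ does not know $P(x)$, so it cannot compute $\Delta_x$; it can only compute $|\widehat p_x(0)-\widehat p_x(1)|$. The plan is therefore to define the weight through the symmetric quantity $|p_x(0)-p_x(1)|$, which is label-independent, thresholded at $\tau$, together with the sign.

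The correct choice of $\tau$ follows Holenstein's argument. Consider the quantity $\E_x[\min(\max(\Delta_x,0),\tau)]$, which is continuous and increasing in $\tau$. Pick the smallest grid $\tau$ at which $\Pr_x[\Delta_x < \tau]$ weighted softly, minus the correction, balances so that the flip gap equals $\varepsilon/2$. The grid is estimated empirically using $\mathsf{Samp}$, and the min-entropy condition $m(\secp)=\negl$ ensures these sample averages concentrate.

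Then $Q$ errs mostly on points with $\Delta_x<\tau$ (contributing about $\delta_\tau/2$) and on points where the sign is wrong. Points with $\Delta_x \le -\tau$ are counted in the hard region, since $\Soft=1$ there. We set $\delta \approx \delta_\tau - \varepsilon/200$, so that $1-\delta/2$ holds up to $\negl$. Points with $\Delta_x<0$ lie in the hard region and contribute $-\Delta_x w < 0$ to $g$, which is why the threshold is chosen as a balance rather than simply at $0$.

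\textbf{Main obstacle.} The crux, and where I expect the most work, is reconciling predictability with indistinguishability at a single threshold. I would first try the Holenstein-style potential argument: sweep $\tau$ from $1+\xi$ downward, track $g(\tau)$ by sampling, and stop at the first grid point where $g \le \varepsilon/2$. The Lipschitz steps of size $\xi$ in $\tau$ change $g$ by at most $O(\xi)$, so the stopping point also has $g \ge -\varepsilon$. Then verify that $Q$'s errors on the complement region are charged against $g$. The circuit-size bounds follow directly from the number of estimation queries.
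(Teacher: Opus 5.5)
There is a genuine gap exactly where you flagged the main obstacle: neither your predictor nor your rule for choosing $\tau$ works. The statement fixes $w_\tau(x)=\Soft_{\tau,\xi}(\Delta_x)$ with the \emph{signed} gap. So points with $\Delta_x\le-\tau$ carry weight $1$ and are budgeted an error of only $1/2$ inside $\delta_\tau/2$. Yet your $Q$ (output the label with larger $\widehat p_x(b)$ when the estimated gap is large, a coin otherwise) errs there with probability $\approx 1$. Rebuilding the weight from $|p_x(0)-p_x(1)|$ is not an option, since correct mass and indistinguishability are stated for the signed $w_\tau$.

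The missing idea is the paper's soft linear predictor. Estimate $g_x=p_x(1)-p_x(0)$ and output $1$ with probability $\frac12(1+\widehat g_x/\tau)$, clipped to $[0,1]$. Its success is $\approx s_\tau(\Delta_x)$, where $s_\tau(u)$ is $\frac12(1+u/\tau)$ clipped to $[0,1]$. A three-case check gives
\[
s_\tau(\Delta_x)\ \ge\ 1-\frac{w_\tau(x)}{2}+\frac{\Delta_x\,w_\tau(x)}{2\tau}-\frac{\xi}{8\tau}.
\]
Averaging yields success at least $1-\frac{\delta_\tau}{2}+\frac{F(\tau)-\xi}{2\tau}-\nu$, where $F(\tau)=\E[\Delta_x w_\tau(x)]$ is your $g(\tau)$. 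In other words, the losses on negative-gap points are paid for by $F(\tau)$.

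Hence you need $F(\tau)\ge\Omega(\varepsilon)>0$, to absorb $\xi$ and $\delta_\tau-\delta\le\varepsilon/200$; merely $|F(\tau)|\le\varepsilon$ is not enough. This is not an artifact of the analysis. Put mass $a<1/2$ at gap $-t$ and $1-a$ at gap $+t$ (say $t\ge\xi$, labels balanced), and take $\tau$ just above $t$. Then $\delta_\tau\approx a$ and $F(\tau)\approx -at$. Every predictor that sees $x$ only through $\mathcal C(x,\cdot)$ errs with probability at least $a>\delta_\tau/2$, because it cannot tell $(+t,P=1)$ points from $(-t,P=0)$ points.

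Your downward sweep can stop at exactly such a $\tau$, because your grid is too coarse. $F$ is only $1/\xi$-Lipschitz in $\tau$, so one step of size $\xi$ can move it by $\Theta(1)$ (here from $(1-2a)t$ to $-at$). This contradicts your $O(\xi)$ claim, and with it your bound $g\ge-\varepsilon$.

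The paper handles this in four steps:
\begin{enumerate}
\item It normalizes the global gap. If $|\widehat\Delta|\le\varepsilon/10$ it outputs $\tau=1+\xi$, $S\equiv 1$, $Q$ uniform, $\delta=1$. If $\widehat\Delta<0$ it replaces $\mathcal C$ by $1-\mathcal C$.
\item After normalization, $F(0)\le 0$ and $F(1+\xi)=\Delta\ge\varepsilon/20$, so the intermediate value theorem gives $F(\tau^\star)=\varepsilon/25$.
\item It takes $\xi=\varepsilon^2/1000$ and grid mesh $\xi\varepsilon/800$, so adjacent grid values of $F$ differ by at most $\varepsilon/800$.
\item It selects the smallest grid $\tau$ with $\widehat F(\tau)\in[\varepsilon/100,\varepsilon/20]$. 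This guarantees $F(\tau)\in[\varepsilon/150-\nu,\varepsilon/10+\nu]$ and $\tau\ge F(\tau)$.
\end{enumerate}
A minor point: concentration of the sample averages comes from Hoeffding on i.i.d.\ draws from $\mathsf{Samp}$; the min-entropy condition plays no role there.
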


\begin{proof}
Fix $\secp$. All probabilities over $\mathcal C,S,Q$ are over their internal measurements. Fix $\nu \le \negl(\secp)$.

Given $(x,b)$ and accuracy $\alpha>0$, using Chernoff's bound, one can estimate $p_x(b)$ to within $\pm \alpha$ with failure probability at most $\nu$ by running $\mathcal C(x,b)$ independently
\[
T=\Theta(\log(1/\nu)/\alpha^2)
\]
times and taking the empirical mean. The same holds for the gap
\[
\Delta_x^{(b)}:=p_x(b)-p_x(1-b),
\]
up to constant factors.

\paragraph{Normalizing the global gap.}
Using $\poly(\secp,1/\varepsilon)$ samples $(x,P(x))\leftarrow \mathsf{Samp}$ and a constant number of calls to $\mathcal C$ per sample, estimate $\Delta$ to within $\pm \varepsilon/20$ except with probability $\negl(\secp)$. If the estimate $\widehat\Delta$ satisfies $|\widehat\Delta|\le \varepsilon/10$, output $\delta:=1$, $\xi:=\varepsilon^2/1000$, $\tau:=1+\xi$, let $S$ always output $1$, and let $Q$ output a uniform bit. All four conclusions are immediate. Otherwise, if $\widehat\Delta<-\varepsilon/10$, replace $\mathcal C$ by $1-\mathcal C$. Henceforth assume
\[
\Delta\ge \varepsilon/20.
\]

\paragraph{Defining a function $F$.}
Fix
\[
\xi := \varepsilon^2/1000.
\]
For $\tau\in\mathbb R$, recall
\[
w_\tau(x):=\Soft_{\tau,\xi}(\Delta_x)
\]
and define the associated function
\[
F(\tau)
:= \mathbb E_{x\leftarrow\mathcal D}\bigl[\Delta_x\cdot w_\tau(x)\bigr].
\]

\noindent We will apply the intermediate value theorem on the interval $[0,1+\xi]$.
First,
\[F(0)\le 0,\]
since for any $x$ such that $w_0(x)\ne 0$, necessarily $\Delta_x<0$. Second, for all $\Delta_x\in[-1,1]$ we have $\Soft_{1+\xi,\xi}(\Delta_x)=1$, hence
\[
F(1+\xi)=\mathbb E_{x\leftarrow\mathcal D}[\Delta_x]=\Delta.
\]
The map $\tau\mapsto F(\tau)$ is continuous, and by Step~0 we have $\Delta\ge \varepsilon/20$. Therefore there exists $\tau^\star\in[0,1+\xi]$ such that
\begin{equation}\label{eq:target_window}
F(\tau^\star)=\frac{\varepsilon}{25}.
\end{equation}

\paragraph{A mean-correct estimator for $F(\tau)$.}
Fix
\[
\alpha:=\xi\varepsilon/800.
\]
For each sampled $(x,P(x))$, compute two \emph{independent} estimates of $\Delta_x$:
\[
\widehat\Delta_x^{(A)}
\quad\text{and}\quad
\widehat\Delta_x^{(B)},
\]
each satisfying $|\widehat\Delta_x^{(\cdot)}-\Delta_x|\le \alpha$, except with probability at most $\nu$.
Define
\[
Y_\tau(x)
:= \widehat\Delta_x^{(B)}\cdot \Soft_{\tau,\xi}\bigl(\widehat\Delta_x^{(A)}\bigr)
\in [-1,1].
\]
Let $X\sim\mathcal D$ and define $\mu_\tau:=\mathbb E[Y_\tau(X)]$.

\begin{lemma}[Mean correctness]\label{lem:mean_correct}
For every $\tau\in\mathbb R$,
\[
|\mu_\tau - F(\tau)| \le \alpha/\xi + \nu.
\]
\end{lemma}

\begin{proof}
Condition on $X=x$. Because $\widehat\Delta_x^{(B)}$ is an unbiased estimator of $\Delta_x$ conditioned on $x$ and is independent of $\widehat\Delta_x^{(A)}$, we have
\[
\mathbb E[Y_\tau(X)\mid X=x]
= \Delta_x\cdot
\mathbb E\!\left[\Soft_{\tau,\xi}\bigl(\widehat\Delta_x^{(A)}\bigr)\,\middle|\,X=x\right].
\]
The function $t\mapsto \Soft_{\tau,\xi}(t)$ is $1/\xi$-Lipschitz, hence on the event
$|\widehat\Delta_x^{(A)}-\Delta_x|\le \alpha$ we have
\[
\left|\Soft_{\tau,\xi}(\widehat\Delta_x^{(A)})-\Soft_{\tau,\xi}(\Delta_x)\right|
\le \alpha/\xi.
\]
This event fails with probability at most $\nu$, and $\Soft_{\tau,\xi}(\cdot)\in[0,1]$. Therefore
\[
\left|
\mathbb E\!\left[\Soft_{\tau,\xi}(\widehat\Delta_x^{(A)})\mid X=x\right]
-\Soft_{\tau,\xi}(\Delta_x)
\right|
\le \alpha/\xi + \nu.
\]
Multiplying by $|\Delta_x|\le 1$ and averaging over $x\sim\mathcal D$ yields the claim.
\end{proof}

\paragraph{Choosing $\tau$.}
Choose a grid step
\[
\rho:=\xi\varepsilon/800
\]
and define a grid
\[
\mathcal T
:= \{j\rho : j=0,1,\ldots,N\}\cup \{1+\xi\},
\qquad
N:=\left\lfloor\frac{1+\xi}{\rho}\right\rfloor,
\]
so that $\mathcal T\subseteq[0,1+\xi]$. For each $\tau\in\mathcal T$, estimate $\mu_\tau$ by sampling $m= \poly(\secp,1/\varepsilon)$  independent points from $\mathsf{Samp}$, computing $Y_\tau(\cdot)$ for each point independently, and averaging:
\[
\widehat F(\tau)
:= \frac1m\sum_{i=1}^m Y_\tau(x_i).
\]

\noindent Output the smallest $\tau\in\mathcal T$ such that
\begin{equation}\label{eq:select_rule}
\widehat F(\tau)\in\Bigl[\frac{\varepsilon}{100},\frac{\varepsilon}{20}\Bigr].
\end{equation}

Hoeffding's bound (since each $Y_\tau\in[-1,1]$) implies that there exists $m= \poly(\secp,1/\varepsilon)$ such that
$|\widehat F(\tau)-\mu_\tau|\le \varepsilon/800$ except with probability $\nu$. Union bounding over $\tau\in\mathcal T$, and using Lemma \ref{lem:mean_correct}, we obtain that except with probability $\negl(\secp)$,
\begin{equation}\label{eq:uniform_estimation}
|\widehat F(\tau)-F(\tau)|\le |\widehat F(\tau)-\mu_\tau|+|\mu_\tau-F(\tau)|\le \varepsilon/400+\nu
\quad\text{for all }\tau\in\mathcal T.
\end{equation}

Denote this \emph{good} event by $\mathcal G$. We next argue that \eqref{eq:select_rule} has at least one solution.
By \eqref{eq:target_window} there exists $\tau^\star$ with $F(\tau^\star)=\varepsilon/25$.
Since for each fixed $t$ the map $\tau\mapsto \Soft_{\tau,\xi}(t)$ is $1/\xi$-Lipschitz, we have
\[
|F(\tau)-F(\tau')|\le |\tau-\tau'|/\xi
\quad\text{for all }\tau,\tau'.
\]
Hence there exists $\widetilde\tau\in\mathcal T$ with
\[
|F(\widetilde\tau)-F(\tau^\star)|\le \rho/\xi=\varepsilon/800.
\]
Together with \eqref{eq:uniform_estimation}, this implies
\[
\widehat F(\widetilde\tau)\in[\varepsilon/100,\varepsilon/20].
\]
\noindent 
For any $\tau\in\mathcal T$ satisfying \eqref{eq:select_rule}, on $\mathcal G$,
\[
F(\tau)\ge \widehat F(\tau)-\varepsilon/400-\nu\ge \varepsilon/100-\varepsilon/400-\nu=\frac{3\varepsilon}{400}-\nu\ge \frac{\varepsilon}{150}-\nu,\]
\[
F(\tau)\le \widehat F(\tau)+\varepsilon/400+\nu\le \varepsilon/20+\varepsilon/400+\nu\le \frac{\varepsilon}{10}+\nu.
\]

\noindent Therefore, except with probability $\negl(\secp)$, the $\tau$ output by $\mathsf{Gen}^{\mathcal C}$ satisfies \eqref{eq:good_tau}:

\begin{equation}\label{eq:good_tau}
\frac{\varepsilon}{150}-\nu\ \le\ F(\tau)\ \le\ \frac{\varepsilon}{10} +\nu
\qquad\text{and}\qquad
\tau\ \ge\ \frac{\varepsilon}{150}-\nu.
\end{equation}
The lower bound on $\tau$ follows since on the support of $w_\tau$ we have $\Delta_x\le \tau$, hence
\[
F(\tau)=\mathbb E[\Delta_x w_\tau(x)]
\le \tau\cdot \mathbb E[w_\tau(X)]
\le \tau.
\]

\paragraph{Defining $S$ and outputting $\delta$.} We define the circuit $S$:\\

\noindent\underline{A quantum circuit S}

The circuit takes an input $(x,b)\in\bits^{n(\secp)+1}$.
\begin{enumerate}
\item estimates $\Delta_x^{(b)}:=p_x(b)-p_x(1-b)$ within $\pm \alpha$ except with probability $\negl(\secp)$,
obtaining $\widehat\Delta_x^{(b)}$;
\item outputs $1$ with probability $\Soft_{\tau,\xi}(\widehat\Delta_x^{(b)})$.
\end{enumerate}

\noindent Let
\[
w(x):=\Soft_{\tau,\xi}(\Delta_x)
\quad\text{and}\quad
\delta_\tau:=\mathbb E[w_\tau(X)].
\]
Because $\Soft_{\tau,\xi}$ is $1/\xi$-Lipschitz, for every fixed $x$:
\begin{equation}
\label{eq:S_is_good}
\Pr[S(x,P(x))=1\mid x]
= \Soft_{\tau,\xi}(\Delta_x)
\pm (\alpha/\xi+\negl(\secp))
= w_\tau(x)\pm(\varepsilon/800+\negl(\secp)).
\end{equation}
Independently sample $m_\delta=\poly(n,1/\varepsilon)$ points
$x_1,\dots,x_{m_\delta}\sim\mathcal D$ via $\mathsf{Samp}$.
For each $x_i$, estimate $\Delta_{x_i}$ within $\pm\alpha$
(except with probability $\negl(\secp)$),
obtaining $\widehat\Delta_{x_i}$, and form
\[
Z_i := \Soft_{\tau,\xi}(\widehat\Delta_{x_i})\in[0,1].
\]
Let
\[
\widehat\delta := \frac1{m_\delta}\sum_{i=1}^{m_\delta} Z_i.
\]
Recall $\delta_\tau=\E_{X\sim\mathcal D}[w_\tau(X)]$. Conditioning on $x_i=x$, on the good event $|\widehat\Delta_x-\Delta_x|\le \alpha$ the $1/\xi$-Lipschitz property gives
$\bigl|\Soft_{\tau,\xi}(\widehat\Delta_x)-\Soft_{\tau,\xi}(\Delta_x)\bigr|\le \alpha/\xi$,
whereas on the bad event we have $\Soft_{\tau,\xi}(\cdot)\in[0,1]$.
Hence
\[
\left|\E[Z_i\mid x_i=x]-w(x)\right|
=
\left|\E[\Soft_{\tau,\xi}(\widehat\Delta_x)\mid x_i=x]-\Soft_{\tau,\xi}(\Delta_x)\right|
\le \alpha/\xi+\nu,
\]
and averaging over $x\sim\mathcal D$ yields
\[
\bigl|\E[Z_i]-\delta_\tau\bigr|\le \alpha/\xi+\nu.
\]
Moreover, since $Z_i\in[0,1]$ are i.i.d., Hoeffding's inequality implies that there exists $m_\delta=\poly(n,1/\varepsilon)$
such that, except with probability $\nu$,
\[
\left|\widehat\delta-\E[Z_i]\right|\le \varepsilon/800.
\]
Thus, by the triangle inequality,
except with probability $\negl(\secp)$,
\[
|\widehat\delta-\delta_\tau|
\le
|\widehat\delta-\E[Z_i]|+|\E[Z_i]-\delta_\tau|
\le
\varepsilon/800+\alpha/\xi+\nu
\le
\varepsilon/400+\nu,
\]
where we used $\alpha/\xi=\varepsilon/800$. Define
\[
\delta := \max\{0,\ \widehat\delta-\varepsilon/400\}.
\]
Then (except with probability $\negl(\secp)$) we have $\delta\le \delta_\tau$ and
\[
0\le \delta_\tau-\delta\le \varepsilon/200,
\]
hence also $|\delta-\delta_\tau|\le \varepsilon/200$.

\paragraph{Indistinguishability.}
Define
\[
P'(x):=P(x)\oplus S(x,P(x))
\]

\noindent By definition of $P'$, for a fixed $x$,
\[
\Pr[\mathcal C(x,P(x))=1]
-\Pr[\mathcal C(x,P'(x))=1]
= \Pr[S(x,P(x))=1\mid x]\cdot \Delta_x.
\]
Averaging over $x\sim\mathcal D$ gives
\[
\Pr_{x\sim\mathcal D}[\mathcal C(x,P(x))=1]
-\Pr_{x\sim\mathcal D}[\mathcal C(x,P'(x))=1]
= \mathbb E_{x\sim\mathcal D}
\bigl[\Delta_x\cdot \Pr[S(x,P(x))=1\mid x]\bigr].
\]
Due to \eqref{eq:S_is_good}, replacing
$\Pr[S(x,P(x))=1\mid x]$ by $w(x)=\Soft_{\tau,\xi}(\Delta_x)$
incurs an error smaller than $\varepsilon/200+\negl(\secp)$,
hence the right-hand side equals
$F(\tau)\pm(\varepsilon/200+\negl(\secp))$.
Using \eqref{eq:good_tau}, its absolute value is at most
\[
\varepsilon/10+\varepsilon/200+\negl(\secp)\le \varepsilon.
\]

\paragraph{Predictability.}
Define
\[
g_x := p_x(1)-p_x(0).
\]
Then for every $x$,
\[
g_x=(2P(x)-1)\Delta_x,
\qquad\text{equivalently}\qquad
(2P(x)-1)\,g_x=\Delta_x,
\qquad |g_x|=|\Delta_x|.
\]

\noindent\underline{A quantum circuit Q}

The circuit takes an input $x\in\bits^{n(\secp)}$.
\begin{enumerate}
\item Estimates $g_x$ to within $\pm\alpha$ with failure probability at most
$\nu\le \negl(\secp)$, obtaining a random estimate $\widehat g_x$ satisfying
\begin{equation}\label{eq:E_x_def}
\Pr\bigl[\,|\widehat g_x-g_x|\le \alpha \,\bigm|\, x\bigr]\ \ge\ 1-\nu.
\end{equation}
\item Outputs a bit according to the following randomized rule (depending only on $\widehat g_x$):
\[
Q(x)
:= \begin{cases}
1, & \widehat g_x\ge \tau,\\[1mm]
0, & \widehat g_x\le -\tau,\\[1mm]
1\ \text{with prob.}\ \frac12\Bigl(1+\frac{\widehat g_x}{\tau}\Bigr),
& \text{otherwise}.
\end{cases}
\]
\end{enumerate}

\ifsubmission
\emph{A Lipschitz bound in the estimate.}
\else
\paragraph{ A Lipschitz bound.}
\fi
Fix $x$ and $b\in\{0,1\}$. Define
\[
\phi_b(t)\ :=\ \Pr\bigl[Q(x)=b\ \bigm|\ \widehat g_x=t\bigr].
\]
By inspection, $\phi_b$ is piecewise linear with slope in
$\bigl[-\tfrac{1}{2\tau},\,\tfrac{1}{2\tau}\bigr]$; hence it is $1/(2\tau)$-Lipschitz :
\begin{equation}\label{eq:phi_lip}
|\phi_b(t)-\phi_b(t')|\ \le\ \frac{|t-t'|}{2\tau}\qquad\forall\,t,t'\in\mathbb R.
\end{equation}
Let $\mathcal E_x:=\{|\widehat g_x-g_x|\le \alpha\}$. On the event $\mathcal E_x$,
\[
\phi_{P(x)}(\widehat g_x)
\ \ge\ \phi_{P(x)}(g_x)-\frac{\alpha}{2\tau},
\]
using \eqref{eq:phi_lip}. Taking expectation conditioned on $x$ and $\mathcal E_x$ gives
\[
\Pr\bigl[Q(x)=P(x)\mid x,\mathcal E_x\bigr]
= \E\bigl[\phi_{P(x)}(\widehat g_x)\mid x,\mathcal E_x\bigr]
\ \ge\ \phi_{P(x)}(g_x)-\frac{\alpha}{2\tau}.
\]
Removing the conditioning on $\mathcal E_x$ incurs a loss of at most $\nu$:
\begin{equation}\label{eq:succ_lip}
\Pr\bigl[Q(x)=P(x)\mid x\bigr]
\ \ge\ \phi_{P(x)}(g_x)-\frac{\alpha}{2\tau}-\nu.
\end{equation}

\ifsubmission
\emph{Exact success when $\widehat g_x=g_x$.}
\else
\paragraph{Exact success when $\widehat g_x=g_x$.}
\fi
Define the function $s_\tau:\mathbb R\to[0,1]$ by
\[
s_\tau(u)\ :=\
\begin{cases}
1, & u\ge \tau,\\[1mm]
\frac12\left(1+\frac{u}{\tau}\right), & |u|<\tau,\\[1mm]
0, & u\le -\tau.
\end{cases}
\]

\noindent Observe that
\begin{equation}\label{eq:phi_equals_s}
\phi_{P(x)}(g_x)\ =\ s\bigl((2P(x)-1)g_x\bigr)\ =\ s_\tau(\Delta_x),
\end{equation}
using $(2P(x)-1)g_x=\Delta_x$.
Combining \eqref{eq:succ_lip} and \eqref{eq:phi_equals_s} yields, for every $x$,
\begin{equation}\label{eq:succ_vs_s_fixed}
\Pr\bigl[Q(x)=P(x)\mid x\bigr]\ \ge\ s_\tau(\Delta_x)-\frac{\alpha}{2\tau}-\nu.
\end{equation}

\ifsubmission
\emph{A pointwise comparison using the soft weight.}
\else
\paragraph{A pointwise comparison using the soft weight.}
\fi
Assume, as ensured by our parameter choices in Step~3, that $\tau\ge \xi$.
We claim that for every $\Delta_x\in[-1,1]$,
\begin{equation}\label{eq:key_pointwise_rewrite}
s_\tau(\Delta_x)\ \ge\
1-\frac{\Soft_{\tau,\xi}(\Delta_x)}{2}
+\frac{\Delta_x\,\Soft_{\tau,\xi}(\Delta_x)}{2\tau}
-\frac{\xi}{8\tau}.
\end{equation}

\begin{proof}[Proof of \eqref{eq:key_pointwise_rewrite}]
We consider three cases.

\emph{Case 1: $\Delta_x\ge \tau$.}
Then $\Soft_{\tau,\xi}(\Delta_x)=0$ and the right-hand side equals $1-\xi/(8\tau)\le 1=s_\tau(\Delta_x)$.

\emph{Case 2: $\Delta_x\le \tau-\xi$.}
Then $\Soft_{\tau,\xi}(\Delta_x)=1$ and the right-hand side is
\[
\frac12\left(1+\frac{\Delta_x}{\tau}\right)-\frac{\xi}{8\tau}.
\]
If $\Delta_x\le -\tau$ then $s_\tau(\Delta_x)=0$ and the inequality holds since the displayed quantity is $\le 0$.
If $-\tau<\Delta_x\le\tau-\xi$ then $s_\tau(\Delta_x)=\frac12(1+\Delta_x/\tau)$ and the inequality holds with slack $\xi/(8\tau)$.

\emph{Case 3: $\Delta_x\in(\tau-\xi,\tau)$.}
Write $u:=\tau-\Delta_x\in(0,\xi)$. Then $\Soft_{\tau,\xi}(\Delta_x)=u/\xi$ and the right-hand side of
\eqref{eq:key_pointwise_rewrite} (without the $-\xi/(8\tau)$ term) equals
\[
1-\frac{u}{2\xi}+\frac{\Delta_x}{2\tau}\cdot\frac{u}{\xi}
=
1-\frac{u}{2\xi}+\frac{\tau-u}{2\tau}\cdot\frac{u}{\xi}
=
1-\frac{u^2}{2\tau\xi}.
\]

\noindent Since $\tau\ge \xi$ and $u\in(0,\xi)$, we have $\Delta_x=\tau-u>0$, hence $\Delta_x\in(-\tau,\tau)$ and therefore
\[
s_\tau(\Delta_x)=\frac12\left(1+\frac{\Delta_x}{\tau}\right)=1-\frac{u}{2\tau}.
\]
Thus,
\[
s_\tau(\Delta_x)-\left(1-\frac{u^2}{2\tau\xi}\right)
=
\frac{u}{2\tau}\left(\frac{u}{\xi}-1\right)
\ge -\frac{\xi}{8\tau},
\]
because the minimum over $u\in[0,\xi]$ occurs at $u=\xi/2$ and equals $-\xi/(8\tau)$.
This proves \eqref{eq:key_pointwise_rewrite} in all cases.
\end{proof}

\noindent Combining \eqref{eq:succ_vs_s_fixed} and \eqref{eq:key_pointwise_rewrite} gives,
for every $x$,
\begin{equation}\label{eq:predict_pointwise_fixed_rewrite}
\Pr\bigl[Q(x)=P(x)\mid x\bigr]
\ge
1-\frac{w_\tau(x)}{2}+\frac{\Delta_x w_\tau(x)}{2\tau}
-\frac{\xi}{8\tau}
-\frac{\alpha}{2\tau}
-\nu.
\end{equation}

\ifsubmission
\emph{Averaging and concluding.}
\else
\paragraph{Averaging and concluding.}
\fi
Taking expectation of \eqref{eq:predict_pointwise_fixed_rewrite} over $x\leftarrow\mathcal D$ yields
\[
\Pr_{x\sim\mathcal D}\bigl[Q(x)=P(x)\bigr]
\ge
1-\frac{\delta_\tau}{2}
+\frac{F(\tau)}{2\tau}
-\frac{\xi}{8\tau}
-\frac{\alpha}{2\tau}
-\nu,
\]
since $\delta_\tau=\E[w(X)]$ and $F(\tau)=\E[\Delta_X w(X)]$.

Using the concrete choice $\alpha=\xi\varepsilon/800$ and $\varepsilon\le 1$ without loss of generality, we have $\alpha\le \xi/800$, and hence
\begin{equation}\label{eq:slack_bound}
\frac{\xi}{8\tau}+\frac{\alpha}{2\tau}
\ \le\
\frac{\xi}{8\tau}+\frac{\xi}{1600\tau}
\ <\
\frac{\xi}{2\tau}.
\end{equation}
Therefore,
\[
\Pr_{x\sim\mathcal D}\bigl[Q(x)=P(x)\bigr]
\ge
1-\frac{\delta_\tau}{2}
+\frac{F(\tau)-\xi}{2\tau}
-\nu.
\]
Finally, by Step~4 we have $0 \le \delta_\tau-\delta\le \varepsilon/200$ except with probability $\negl(\secp)$,
and by Step~3 we have $F(\tau)\ge \varepsilon/150$ and $\tau\le 1+\xi$.
Since $\xi=\varepsilon^2/1000$, for all sufficiently large $\secp$,
\[
\frac{F(\tau)-\xi}{2\tau}
\ \ge\
\frac{\varepsilon/150-\varepsilon^2/1000}{2(1+\xi)}
\ \ge\
\frac{\varepsilon}{400}
\ \ge\
\frac{\delta_\tau-\delta}{2}.
\]
Plugging this in gives
\[
\Pr_{x\sim\mathcal D}\bigl[Q(x)=P(x)\bigr]
\ge
1-\frac{\delta_\tau}{2}+\frac{\delta_\tau-\delta}{2}-\nu
=
1-\frac{\delta}{2}-\nu,
\]
which completes the predictability claim.

\paragraph{Obliviousness.} The obliviousness property stems directly from the definition of $Q$.

\paragraph{Efficiency.}
All estimation steps use $\poly(\secp,1/\varepsilon)$ oracle calls to
$\mathcal C$, and the grid has $\poly(\secp,1/\varepsilon)$ points.
Thus $\mathsf{Gen}$ runs in $\poly(\secp,1/\varepsilon)$ time. By inspection, $S$ and $Q$ are obtained by composing a $\poly(\secp,1/\varepsilon)$-number
of invocations of $\mathcal C$ arising only from the estimation subroutines, with
$\poly(\secp,1/\varepsilon)$ additional non-oracle computation. Hence there exist fixed
polynomials $q(\cdot,\cdot)$ and $r(\cdot,\cdot)$ such that $S$ and $Q$ make at most
$q(\secp,1/\varepsilon)$ oracle calls to $\mathcal C$ and have non-oracle size at most
$r(\secp,1/\varepsilon)$. Therefore, the compiled circuits satisfy
$|\widehat Q|\le q(\secp,1/\varepsilon)\cdot|\mathcal C|+r(\secp,1/\varepsilon)$, and
similarly for $\widehat S$.

\end{proof}

\noindent Next, we state and prove a corollary of Theorem \ref{thm:Q3}.

\begin{corollary}
\label{cor:soft-hardcore-random-labels}
Let $\lambda\in\N$ and let $n=n(\lambda)=\poly(\lambda)$.  Let $D_\lambda$ be a distribution over
$\cX_\lambda\times\{0,1\}$ where $\cX_\lambda:=\{0,1\}^{n}$, and assume $D_\lambda$ is samplable in $\poly(\lambda)$ time.
Write $(X,B)\gets D_\lambda$. Assume the following min-entropy condition:
\[
m(\lambda)\;:=\;\max_{x\in\cX_\lambda}\Pr_{D_\secp}[X=x]\;=\;\negl(\lambda).
\]
Let $\delta=\delta(\lambda)\in(0,1)$ and let $s(\lambda)$ be a circuit size bound.  Suppose that for every uniform (resp. non-uniform) quantum circuit
$Q:\cX_\lambda\to\{0,1\}$ of size at most $s(\lambda)$,
\[
\Pr_{(X,B)\gets D_\lambda}\bigl[Q(X)=B\bigr]\ \le\ 1-\frac{\delta}{2}.
\]
Fix any $\epsilon=\epsilon(\lambda)=1/\poly(\lambda)$ with $\epsilon\le \delta/2$.  Let $\varepsilon:=\delta\epsilon$,
and let $C:\cX_\lambda\to\{0,1\}$ be a uniform (resp. non-uniform) quantum circuit such that
\begin{equation}
\label{eq:size-loss-condition}
q\!\Bigl(\lambda,\tfrac{1}{\varepsilon}\Bigr)\cdot\bigl(|C|+n\bigr)
\;+\;
r\!\Bigl(\lambda,\tfrac{1}{\varepsilon}\Bigr)
\ \le\ s(\lambda),
\end{equation}
where $r(\cdot,\cdot),q(\cdot,\cdot)$ are the polynomials defined in the statement of Theorem \ref{thm:Q3}. Then, there exists a measure $M_\lambda\preceq D_\lambda$ with total mass
$\mu(M_\lambda)\ge \delta-\epsilon$ such that
\[
\Pr_{(X,B)\leftarrow M_\lambda}\bigl[C(X)=B\bigr]\ \le\ \frac12+2\epsilon,
\]
where $(X,B)\leftarrow M_\lambda$ denotes sampling from the normalized
distribution $M_\lambda/\mu(M_\lambda)$.
\end{corollary}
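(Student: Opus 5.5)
We will apply Theorem~\ref{thm:Q3} on a lifted domain, as sketched in the technical overview. Let the lifted inputs be $z=(x,b_0)\in\{0,1\}^{n+1}$, with distribution $\mathcal D$ equal to the law of $(X,B)$ under $D_\lambda$, and take the deterministic predicate $P(x,b_0):=b_0$. The sampler for $(z,P(z))$ simply runs the sampler for $D_\lambda$ and copies $B$. The min-entropy condition transfers, since $\Pr[(X,B)=(x,b_0)]\le \Pr[X=x]\le m(\lambda)$. Define the distinguisher $\mathcal C((x,b_0),b):=1$ iff $C(x)=b$. This ignores $b_0$ and has size about $|C|+n$. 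We run $\mathsf{Gen}^{\mathcal C}$ with accuracy $\varepsilon=\delta\epsilon$ and obtain $\tau,\xi,S,Q,\delta_{\mathsf{out}}$ satisfying the four properties except with negligible probability. Fixing a good outcome is enough for an existence statement.

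\textbf{The measure.} Set $M_\lambda(x,b_0):=D_\lambda(x,b_0)\cdot w_\tau(x,b_0)$, where $w_\tau=\Soft_{\tau,\xi}(\Delta_{(x,b_0)})\in[0,1]$. This gives $M_\lambda\preceq D_\lambda$ and $\mu(M_\lambda)=\delta_\tau$.

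\textbf{Mass lower bound.} Suppose for contradiction that $\mu(M_\lambda)<\delta-\epsilon$. Then $\delta_{\mathsf{out}}\le\delta_\tau<\delta-\epsilon$. By Predictability, $Q$ predicts $b_0$ from $(x,b_0)$ with probability at least $1-\delta_{\mathsf{out}}/2-\negl > 1-\delta/2$, with noticeable slack $\epsilon/2$. By Obliviousness, $Q$ touches its input only through queries to $\mathcal C((x,b_0),\cdot)$. Each such query equals $[C(x)=\cdot]$, which does not depend on $b_0$. So $Q$ is really a circuit $Q'(x)$ predicting $B$ from $X$. Its size is bounded by $q(\lambda,1/\varepsilon)(|C|+n)+r(\lambda,1/\varepsilon)\le s(\lambda)$ by \eqref{eq:size-loss-condition}. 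Uniformity is also preserved, since $\mathsf{Gen}$ is uniform; in the uniform case we fold $\mathsf{Gen}$ into $Q'$ and absorb its negligible failure probability. This contradicts the hardness hypothesis.

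\textbf{Hardness on $M_\lambda$.} We use Indistinguishability. First,
\[\Pr_{\mathcal D}[\mathcal C(z,P(z))=1]-\Pr_{\mathcal D}[\mathcal C(z,P'(z))=1]=\E[\Delta_z\Pr[S=1\mid z]].\]
Second,
\[\E[\Delta_z \Pr[S=1\mid z]]\approx \E_{\mathcal D}[\Delta_z w_\tau(z)],\]
with error $\varepsilon/200+\negl$. Third, for fixed $z$ we have $\Delta_z=\Pr[C(x)=b_0]-\Pr[C(x)=1-b_0]=2\Pr[C(x)=b_0]-1$. Combining these,
\[\E_{\mathcal D}[w_\tau(2\Pr[C(X)=B]-1)]=\mu(M_\lambda)\bigl(2\Pr_{M_\lambda}[C(X)=B]-1\bigr).\]
Hence
\[\Pr_{M_\lambda}[C(X)=B]\le \tfrac12+\frac{\varepsilon+\varepsilon/200+\negl}{2\mu(M_\lambda)}.\]
With $\mu(M_\lambda)\ge\delta-\epsilon\ge\delta/2$ and $\varepsilon=\delta\epsilon$, the error term is at most about $1.01\,\epsilon$, which is at most $2\epsilon$.

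\textbf{Main obstacle.} The delicate step is making the obliviousness argument rigorous, i.e.\ that the compiled $Q$ genuinely yields a predictor depending only on $x$ within the stated size budget. A second point is the degenerate branch of $\mathsf{Gen}$. There $\delta_{\mathsf{out}}=1$ and $S\equiv1$, so $M_\lambda=D_\lambda$ with mass $1$. In this branch the global gap $|\Delta|$ is below $\varepsilon/5$, so the same calculation gives hardness directly. The complement-flip $\mathcal C\mapsto 1-\mathcal C$ only changes signs and is handled symmetrically.
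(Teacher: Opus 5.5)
Your proposal is correct and follows the paper's proof essentially step for step. It uses the lifted predicate $P(x,b_0)=b_0$ with a $b_0$-ignoring distinguisher and the measure $M=D\cdot w_\tau$, bounds the mass via Predictability plus Obliviousness (the paper's $\widetilde Q(x):=Q((x,0))$), and bounds the hardness via Indistinguishability plus Correct mass. The only loose point is your uniform-case aside, which the paper does not address either: folding $\mathsf{Gen}$ into $Q'$ no longer contradicts a single fixed good outcome (you would have to argue about the average of $\delta_{\mathsf{out}}$ over $\mathsf{Gen}$'s coins), and $\mathsf{Gen}$'s own cost is not covered by the size budget $q(\lambda,1/\varepsilon)(|C|+n)+r(\lambda,1/\varepsilon)\le s(\lambda)$.
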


\begin{proof}
Fix $\lambda$ and suppress the subscript $\lambda$.

\paragraph{Define a deterministic predicate.}
Define the lifted domain $\cX':=\cX\times\{0,1\}$ and define the deterministic predicate $P:\cX'\to\{0,1\}$ by
\[
P(x,b_0)\ :=\ b_0.
\]
A sampler for $(X',P(X'))$ is immediate: sample $(x,b_0)\gets D$ and output $\bigl((x,b_0),b_0\bigr)$.

\paragraph{Build an oracle circuit for Theorem~\ref{thm:Q3}.}
Given the predictor $C:\cX\to\{0,1\}$, define a quantum circuit

\[
\mathcal C:\cX'\times \bits\to\{0,1\}
\]
that on input $\bigl((x,b_0),b\bigr)$ outputs $1$ iff $C(x)=b$.
Crucially, $\mathcal C$ ignores $b_0$.

Invoke $\mathsf{Gen}^{\mathcal C}(\varepsilon,1^\lambda)$ from Theorem~\ref{thm:Q3} on the distribution $D$ with the accuracy parameter set to $\varepsilon := \delta\epsilon$, obtaining
$\xi,\tau$, circuits $S,Q$ with oracle access to $\mathcal C$, and an output parameter $\delta_{\sf out}$. Importantly, $D$ meets the min-entropy condition in the statement of Theorem \ref{thm:Q3}. With probability $1-\negl(\secp)$, the properties correct mass, indistinguishability, predictability, and obliviousness all hold. 

As in Theorem~\ref{thm:Q3},
let the pointwise gaps  be
\[
p_{x'}(b):=\Pr[\mathcal C(x',b)=1],
\qquad
\Delta_{x'}:=p_{x'}(P(x'))-p_{x'}(1-P(x')).
\]
For $x'=(x,b_0)$ we have
\[
p_{(x,b_0)}(b)=\Pr[C(x)=b],
\qquad
\Delta_{(x,b_0)} = \Pr[C(x)=b_0]-\Pr[C(x)=1-b_0] = 2\Pr[C(x)=b_0]-1.
\]
Define the soft weight and its mean:
\[
w_\tau(x'):=\Soft_{\tau,\xi}(\Delta_{x'})\in[0,1],
\qquad
\delta_\tau:=\E_{X'\sim \cD}\bigl[w_\tau(X')\bigr].
\]

\paragraph{Define a measure $M$.}
Note that for any $x'$, $0 \le w_\tau(x')\le 1$. Define the measure $M\preceq D$ by
\begin{equation}
\label{eq:dist_and_meas}
M(x,b)\ :=\ D(x,b)\cdot w_\tau(x,b).
\end{equation}
Then, the total mass of $M$ is as follows

\begin{equation}
\label{eq:measure_mass}
\mu(M)=\sum_{x,b}M(x,b)=\E_{X'\sim \mathcal D}[w_\tau(X')]=\delta_\tau. 
\end{equation}

\paragraph{Lower bound the total mass $\mu(M)$.}
We claim $\delta_\tau\ge \delta-\epsilon$. By the \textit{correct mass} property from Theorem~\ref{thm:Q3}, except with negligible probability,
\[
\delta_{\sf out}\ \le\ \delta_\tau
\qquad\text{and}\qquad
|\delta_{\sf out}-\delta_\tau|\le \varepsilon/200.
\]
Hence it suffices to show $\delta_{\sf out}\ge \delta-\epsilon$.

Assume toward contradiction that $\delta_{\sf out}<\delta-\epsilon$ and that the Theorem~\ref{thm:Q3} guarantees hold.
By the \textit{predictability} property from Theorem \ref{thm:Q3},
\[
\Pr_{X'\sim \cD}\bigl[Q(X')=P(X')\bigr]\ \ge\ 1-\frac{\delta_{\sf out}}{2}-\negl(\lambda)
\ >\ 1-\frac{\delta}{2}+\frac{\epsilon}{2}-\negl(\lambda).
\]
Define $\widetilde Q:\cX\to\bits$ by $\widetilde Q(x):=Q((x,0))$.
We claim that for every $(x,b_0)\in\cX'$,
\begin{equation}\label{eq:oblivious_freeze_b0}
\Pr\bigl[Q((x,b_0))=b_0\bigr]\ =\ \Pr\bigl[Q((x,0))=b_0\bigr].
\end{equation}

Indeed, by \emph{obliviousness} of $Q$ (Theorem~\ref{thm:Q3}), on any input $x'=(x,b_0)$ the circuit $Q$ utilizes $x$ only within a fixed, $x$-independent procedure of oracle queries to $C(x,\cdot)$.

Since $\mathcal C((x,b_0),\cdot)$ ignores $b_0$, the joint distribution of oracle answers returned to $Q$
is identical for inputs $(x,0)$ and $(x,1)$.
Therefore the output distribution of $Q((x,b_0))$ is the same as that of $Q((x,0))$, establishing
\eqref{eq:oblivious_freeze_b0}. Averaging \eqref{eq:oblivious_freeze_b0} over $(X,B)\sim D$ yields
\[
\Pr_{(X,B)\sim D}\bigl[Q((X,B))=B\bigr]
\;=\;
\Pr_{(X,B)\sim D}\bigl[Q((X,0))=B\bigr]
\;=\;
\Pr_{(X,B)\sim D}\bigl[\widetilde Q(X)=B\bigr].
\]
Consequently,
\[
\Pr_{(X,B)\sim D}\bigl[\widetilde Q(X)=B\bigr]
\;=\;
\Pr_{X'\sim\cD}\bigl[Q(X')=P(X')\bigr]
\ >\ 1-\frac{\delta}{2}+\frac{\epsilon}{2}-\negl(\lambda).
\]

\paragraph{Size of the compiled predictor.}
We now argue that $\widetilde Q$ can be realized by a (non-oracle) quantum circuit of size at most $s(\lambda)$.
Recall that $Q$ is produced by invoking Theorem~\ref{thm:Q3} on the oracle circuit $\mathcal C$ with accuracy parameter
$\varepsilon=\delta\epsilon$.

By Theorem~\ref{thm:Q3}, the circuit $Q$ makes at most
$q(\lambda,1/\varepsilon)$ oracle calls to $\mathcal C$ on any input and has non-oracle size at most
$r(\lambda,1/\varepsilon)$. By hardwiring $\mathcal C$ and replacing each oracle gate by a copy of $\mathcal C$,
we obtain an explicit (non-oracle) circuit $\widehat Q$ of size at most
\[
|\widehat Q|\ \le\ q(\lambda,1/\varepsilon)\cdot|\mathcal C| \;+\; r(\lambda,1/\varepsilon).
\]
In our construction, $\mathcal C$ on input $\bigl((x,b_0),b\bigr)$ evaluates $C(x)$ and compares it to $b$ while
ignoring $b_0$, and hence
\[
|\mathcal C|\ \le\ |C|+n.
\]
Therefore, the size condition~\eqref{eq:size-loss-condition} implies $|\widehat Q|\le s(\lambda)$. This contradicts the assumed hardness of $D$ against size-$s(\lambda)$ circuits. Therefore $\delta_{\sf out}\ge \delta-\epsilon$, and since $\delta_{\sf out}\le \delta_\tau$ we get
\begin{equation}
\label{eq:delta_tau_and_delta}
\mu(M)=\delta_\tau\ \ge\ \delta-\epsilon.
\end{equation}

\paragraph{An upper bound on the success of $C$ on $M$.}
We analyze $C$ under the normalized distribution $M/\mu(M)$. Using $\Delta_{(x,b)} = 2\Pr[C(x)=b]-1$, we have
\begin{equation}
\label{eq:success_over_meas}
\begin{aligned}
\Pr_{(X,B)\leftarrow M}\bigl[C(X)=B\bigr]
=
\frac12+\frac12\cdot \E_{(X,B)\leftarrow M}\bigl[\Delta_{(X,B)}\bigr]
=
\frac12+\frac{1}{2\delta_\tau}\cdot \E_{(X,B)\sim D}\bigl[\Delta_{(X,B)}\,w_\tau(X,B)\bigr].
\end{aligned}
\end{equation}

where the last equality stems from \eqref{eq:dist_and_meas} and \eqref{eq:measure_mass}. Let $F(\tau):=\E_{X'\sim\cD}[\Delta_{X'}\,w_\tau(X')]$, as in the proof of Theorem~\ref{thm:Q3}. Let $P'(x'):=P(x')\oplus S(x',P(x'))$.  Fixing $x'$, we get
\begin{equation*}
\begin{aligned}
\Pr[\mathcal C(x',P(x'))=1 \mid x']-\Pr[\mathcal C(x',P'(x'))=1 \mid x']&=\\\Pr[P'(x')=1-P(x')]\cdot\left(\Pr[\mathcal C(x',P(x'))=1 \mid x']-\Pr[\mathcal C(x',1-P(x'))=1 \mid x']\right)
&=\\
\Pr[S(x',P(x'))=1\mid x']\cdot \Delta_{x'}.
\end{aligned}
\end{equation*}
Averaging over $X'\sim\cD$ and using the \textit{indistinguishability} property from Theorem \ref{thm:Q3} gives
\begin{equation}
\label{eq:indist-gap}
A:=\left|\E_{X'\sim\cD}\Bigl[\Delta_{X'}\cdot \Pr[S(X',P(X'))=1\mid X']\Bigr]\right|
\ \le\ \varepsilon.
\end{equation}
By the \textit{correct mass} property from Theorem \ref{thm:Q3}, for each fixed $x'$ we have
\[\Pr[S(x',P(x'))=1\mid x']=w_\tau(x')\pm(\varepsilon/200+\negl(\lambda)).\]
Combining with \eqref{eq:indist-gap} and observing that $|\Delta_{x'}|\le 1$, 
\[
|F(\tau)| \ = \  |\E_{X'\sim\cD}[\Delta_{X'}\,w_\tau(X')]|\ \le \ A + \varepsilon/200+\negl(\secp)   \ \le\ \varepsilon+\varepsilon/200+\negl(\lambda) \  \le\ 2\varepsilon
\]
for all sufficiently large $\lambda$. From \eqref{eq:success_over_meas} we have the identity
\[
\Pr_{(X,B)\leftarrow M}\bigl[C(X)=B\bigr]
=
\frac12+\frac{1}{2\delta_\tau}\cdot F(\tau),
\]
Because $\delta_\tau>0$ this results in the first inequality below:

\[
\Pr_{(X,B)\leftarrow M}\bigl[C(X)=B\bigr]
\ \le\
\frac12+\frac{|F(\tau)|}{2\delta_\tau}
\ \le\
\frac12+\frac{2\varepsilon}{2(\delta-\epsilon)}
\ =\
\frac12+\frac{\delta\epsilon}{\delta-\epsilon}
\ \le\ \frac12+2\epsilon.
\]
Here, the second inequality relies on \eqref{eq:delta_tau_and_delta}, and the last inequality uses $\epsilon\le \delta/2$, so $\delta/(\delta-\epsilon)\le 2$. This completes the proof.
\end{proof}

\noindent In the next proof, we make use of the following lemma.

\begin{lemma}[\cite{10.1007/978-3-540-45198-3_18}]\label{thm:convex-combo}
    Let $D$ be a distribution on $\{0,1\}^n$, $F$ be a set of functions $f : \{0,1\}^n \to [-1,1]$, and $\overline{f}$ be a convex combination of functions from $F$. Then for any $\epsilon \in (0,1)$ and for some $k \leq n/2\epsilon^2$, there exist functions $f_1,\dots,f_k$ such that \[\max_{x \in \{0,1\}^n} \Bigg| \overline{f}(x)-\left(\frac{1}{k}\sum_{i =1}^k f_i(x)\right)\Bigg| \leq \epsilon.\]
\end{lemma}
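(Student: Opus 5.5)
The plan is to prove this by sampling, in the style of Maurey's empirical method. The convex combination $\overline{f}$ can be written as $\overline{f}=\sum_j \lambda_j g_j$ with $g_j\in F$, $\lambda_j\ge 0$ and $\sum_j\lambda_j=1$, so it defines a distribution $\Lambda$ on functions in $F$. I will draw $f_1,\dots,f_k$ independently from $\Lambda$. Then for every fixed $x$, the values $f_i(x)$ are i.i.d. random variables in $[-1,1]$ with mean exactly $\overline{f}(x)$. The desired uniform bound $\max_x|\overline{f}(x)-\frac1k\sum_i f_i(x)|\le\epsilon$ will follow from a pointwise tail bound combined with a union bound over all $x\in\{0,1\}^n$. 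Showing that the bad event has probability strictly less than $1$ then gives the existence of a suitable tuple.

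\textbf{Pointwise concentration.} Fix $x$. The summands have range $2$, so Hoeffding's inequality gives
\[
\Pr\Bigl[\bigl|\tfrac1k\textstyle\sum_i f_i(x)-\overline{f}(x)\bigr|>\epsilon\Bigr]\le 2\exp(-k\epsilon^2/2).
\]

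\textbf{Union bound.} Summing over the $2^n$ points gives a total failure probability of at most $2^{n+1}e^{-k\epsilon^2/2}$. This is below $1$ once $k>2(n+1)\ln 2/\epsilon^2$.

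\textbf{Matching the constant.} This is the main obstacle. The statement asks for $k\le n/2\epsilon^2$, and $2(n+1)\ln2\approx 1.39(n+1)$ exceeds $n/2$. I would first check whether the intended parsing is $k\le n/(2\epsilon^2)$ or $(n/2)\epsilon^2$; the cited source presumably uses a constant such as $O(n/\epsilon^2)$. If a sharper constant really is needed, I see two routes. One is a one-sided Hoeffding bound for each of the two deviation directions, with the union bound taken over the $2^{n+1}$ (point, sign) pairs. The other is to note that the distribution $D$ and the regime of interest (for example $n$ large) allow lower-order terms to be absorbed.

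Since downstream uses of the lemma (the minimax fixing step in Lemma~\ref{lem:minimax-fix}) only need $k=\poly(n,1/\epsilon)$, the fallback is to prove the statement with $k=\lceil 2(n+1)/\epsilon^2\rceil$. I would then remark that only polynomiality in $n$ and $1/\epsilon$ is used. The distribution $D$ plays no role beyond being over $\{0,1\}^n$, since the guarantee is a maximum over all $x$.
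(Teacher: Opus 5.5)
Your sampling argument is the standard proof behind the cited lemma: draw $f_1,\dots,f_k$ i.i.d.\ according to the weights of $\overline{f}$, apply Hoeffding pointwise, and union bound over the $2^n$ inputs. The paper gives no proof of its own. Your computation is correct, and it proves the claim for every integer $k>2(n+1)\ln 2/\epsilon^2$, e.g.\ $k=\lceil 2(n+1)/\epsilon^2\rceil$.

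What does not work is your plan for reaching $k\le n/(2\epsilon^2)$. Taking one-sided tails over the $2^{n+1}$ (point, sign) pairs gives exactly the same total $2^{n+1}e^{-k\epsilon^2/2}$. The gap between $2\ln 2\approx 1.39$ and $1/2$ sits in the leading constant, so there is no lower-order term to absorb.

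More fundamentally, no argument can give that bound, because the statement is false as written. Here the $f_i$ are read as elements of $F$, as intended and as your sampling produces; otherwise $f_1=\overline{f}$ trivializes it. Take $n=2$, $\epsilon=0.8$, $F=\{\mathbf{1},-\mathbf{1}\}$ (the two constant functions) and $\overline{f}=\tfrac12\mathbf{1}+\tfrac12(-\mathbf{1})\equiv 0$. Then $n/(2\epsilon^2)<2$ forces $k=1$, and each $f_1\in F$ has error $1>\epsilon$.

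This is not just a small-$n$ artifact. Let $F$ be the characters $x\mapsto(-1)^{\langle x,y\rangle}$ and $\overline{f}$ their uniform average, which is the indicator of $x=0$. An approximation with $\epsilon=1/2$ and $k\le 2n$ amounts to a $\tfrac12$-biased multiset $\{y_1,\dots,y_k\}$. That is a binary linear code of rate at least $1/2$ and relative distance at least $1/4$, which the Elias--Bassalygo bound rules out for large $n$. The constant $n/(2\epsilon^2)$ is plausibly inherited from a $[0,1]$-valued version. Rescaling $f\mapsto(1+f)/2$ turns accuracy $\epsilon$ for $[-1,1]$-valued functions into accuracy $\epsilon/2$ for $[0,1]$-valued ones, a factor-$4$ loss in $k$.

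So your ``fallback'' is really the correct statement, and your proof of it is complete. When you record it, note that Lemma~\ref{lem:minimax-fix} uses $k\le\lceil n/\gamma^2\rceil$ for the $[-1,1]$-valued functions $h_{Q'}$, which the corrected bound does not support. With $k\le\lceil 2(n+1)/\gamma^2\rceil$ instead, the following changes are needed:
\begin{itemize}
\item $t''$ in Lemma~\ref{lem:minimax-fix} grows by a constant factor;
\item in Theorem~\ref{thm:hardcore-Q3}, the choice $s'=\lfloor\gamma^2 s''/(4n)\rfloor$ should be shrunk to, e.g., $\lfloor\gamma^2 s''/(8(n+1))\rfloor$, so that $|Q''|\le s''$ still holds;
\item the choice of $s$ in Theorem~\ref{thm:wba-to-ka} changes to match.
\end{itemize}
These adjustments are harmless, since only polynomiality in $n$ and $1/\epsilon$ is used downstream.
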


\begin{remark} We note that Lemma~\ref{thm:convex-combo} does not guarantee that the functions $f_1,\ldots,f_k$ can be computed efficiently given $\overline{f}$. Therefore, the following theorems we prove in this section, which all rely on Lemma~\ref{thm:convex-combo}, are stated in the non-uniform security model.\end{remark}

\begin{lemma}[Minimax fixing lemma]
\label{lem:minimax-fix}
Let $\{D_\secp\}_{\secp\in\bbN}$ be a family of distributions over
$\{0,1\}^{n(\secp)}\times\{0,1\}$.
Fix functions $\alpha(\secp)\in(0,1]$ and $\gamma(\secp)\in(0,1)$, and a size bound
$t(\secp)$.

Assume that for all sufficiently large $\secp$, for every measure
$M_\secp \preceq D_\secp$ with $\mu(M_\secp)\ge \alpha(\secp)$, there exists a
quantum circuit $Q'_\secp$ of size at most $t(\secp)$ such that
\[
\Pr_{(X,B)\leftarrow M_\secp}\bigl[\,Q'(X)=B\,\bigr]
\ >\ \frac12+\gamma(\secp).
\]
Define
\[
t''(\secp)\ :=\ \Bigl(\frac{n(\secp)}{\gamma(\secp)^2}+1\Bigr)\cdot t(\secp)
\;+\; \log\!\Bigl(\frac{n(\secp)+1}{\gamma(\secp)^2}\Bigr).
\]

Then for all sufficiently large $\secp$, there exists a quantum circuit
$Q''_\secp$ of size at most $t''(\secp)$ such that for every measure
$M_\secp \preceq D_\secp$ with $\mu(M_\secp)\ge \alpha(\secp)$,
\[
\Pr_{(X,B)\leftarrow M_\secp}\bigl[\,Q''_\secp(X)=B\,\bigr]
\ >\ \frac12+\frac{\gamma(\secp)}{2}.
\]
\end{lemma}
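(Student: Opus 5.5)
We prove the minimax fixing lemma by the standard game-theoretic argument, following the classical hard-core set proofs, with the min-player's strategy space being measures rather than sets. Fix $\secp$ and suppress it. Let $\mathcal{A}$ be the (finite) set of quantum circuits of size at most $t$; each such circuit $Q'$ induces the function $f_{Q'}(x,b):=2\Pr[Q'(x)=b]-1\in[-1,1]$, which depends only on the circuit's output distribution. Let $Y:=\{M\preceq D:\mu(M)\ge\alpha\}$, a nonempty compact convex subset of $\R^{2^{n+1}}$. Because normalization by $\mu(M)$ is not affine, we use the unnormalized payoff
\[
u(Q',M):=\sum_{x,b}M(x,b)\,f_{Q'}(x,b)-\gamma'\mu(M),
\]
with $\gamma'$ chosen slightly below $2\gamma$. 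This payoff is affine (indeed linear) in $M$. The hypothesis, rewritten as $\E_{M/\mu(M)}[f_{Q'}]>2\gamma$, gives $\max_{Q'}u(Q',M)>0$ for every $M\in Y$. Since $Y$ is compact and $u$ is continuous, the minimum over $Y$ is attained, so $\min_{M\in Y}\max_{Q'}u(Q',M)\ge 0$.

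Applying von Neumann's minimax theorem yields a distribution $p^\star$ over $\mathcal{A}$ with $\E_{Q'\sim p^\star}u(Q',M)\ge 0$ for all $M\in Y$. Equivalently, the averaged function $\overline{f}:=\E_{p^\star}f_{Q'}$ satisfies $\E_{M/\mu(M)}[\overline f]\ge\gamma'$ for every $M\in Y$.

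Next we sparsify $p^\star$ using \cref{thm:convex-combo} on the domain $\{0,1\}^{n+1}$ with accuracy $\gamma/2$. This gives circuits $Q'_1,\dots,Q'_k$ with $k\le (n+1)/(2(\gamma/2)^2)=O(n/\gamma^2)$ such that $\frac1k\sum_i f_{Q'_i}$ is uniformly within $\gamma/2$ of $\overline f$. The bound on $k$ should be matched to the stated $n/\gamma^2+1$ by fixing the constants appropriately. Define $Q''$ to sample $i\gets[k]$ using $\lceil\log k\rceil$ fresh qubits (a Hadamard layer plus measurement) and run $Q'_i$ coherently controlled on $i$, or simply run all $k$ circuits and select one; this uses size about $k\cdot t+\log k$, which matches $t''$. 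Then $2\Pr[Q''(x)=b]-1=\frac1k\sum_i f_{Q'_i}(x,b)$, so for every $M\in Y$ we get
\[
\E_{M/\mu(M)}\bigl[2\Pr[Q''(X)=B]-1\bigr]\ge\gamma'-\gamma/2.
\]
Taking $\gamma'$ close to $2\gamma$ makes this strictly greater than $\gamma$, that is, a success probability above $\tfrac12+\tfrac\gamma2$.

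The main obstacle is bookkeeping rather than concept. First, the strictness must survive minimax: the hypothesis gives strict inequality pointwise, and compactness upgrades this to a uniform positive margin $\min_M\max_{Q'}u>0$, so we may even take $\gamma'=2\gamma$. Second, the sparsification accuracy and the resulting $k$ must be calibrated so that the final circuit size matches $t''$ exactly. A subtlety we must handle is that $\mathcal{A}$ ranges over circuits of a bounded gate set, so it is finite, and that the functions $f_{Q'}$ are well-defined real-valued functions of $(x,b)$. Both points make \cref{thm:convex-combo} and the finite-$\mathcal{A}$ minimax theorem directly applicable, with the non-uniformity coming solely from the non-constructive choice of $Q'_1,\dots,Q'_k$.
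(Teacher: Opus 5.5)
Your proof is correct in outline and has the same architecture as the paper's: von Neumann's minimax over the finite set of size-$t$ circuits against a compact convex set of measures, then sparsification via the convex-combination lemma, then a circuit that samples $i\gets[k]$ and runs $Q'_i$. The differences are in two technical details.

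\begin{itemize}
\item \emph{Linearizing the normalization.} The paper rescales every admissible $M$ to exact mass $\alpha$, so that the payoff $\frac{1}{\alpha}\sum_{x,b}M(x,b)g_{Q'}(x,b)$ is linear and literally equals the normalized advantage. You instead keep all $M$ with $\mu(M)\ge\alpha$ and use the linear surrogate $\sum M f_{Q'}-\gamma'\mu(M)$, whose sign records the comparison with a fixed threshold. Both work; yours needs $\mu(M)\ge\alpha>0$ to divide back, which you have. Your explicit compactness step is also what justifies the strict inequality that the paper asserts with ``equivalently.''
\item \emph{Domain for sparsification.} The paper uses $g_{Q'}(x,1)=-g_{Q'}(x,0)$ to sparsify over $\{0,1\}^n$. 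You sparsify over $\{0,1\}^{n+1}$, which costs one extra bit in the bound on $k$.
\end{itemize}

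The one thing you leave open is the calibration, and as written it does not meet the stated bound. With accuracy $\gamma/2$ the lemma gives $k\le 2(n+1)/\gamma^2$, which exceeds the $n/\gamma^2+1$ copies that $t''$ allows. The fix is already available in your own argument:
\begin{itemize}
\item Take $\gamma'=2\gamma$. This is legitimate because compactness gives $\min_{M}\max_{Q'}u>0$, hence $\E_{M/\mu(M)}[\overline f]>2\gamma$ for every admissible $M$.
\item Sparsify at accuracy $\gamma$ rather than $\gamma/2$. Then $\E_{M/\mu(M)}[f_{Q''}]>2\gamma-\gamma=\gamma$, which is exactly the required strict bound.
\item On $\{0,1\}^{n+1}$ this gives $k\le (n+1)/(2\gamma^2)$. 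The inequality $(n+1)/(2\gamma^2)\le n/\gamma^2+1$ is equivalent to $1\le n+2\gamma^2$, which holds for $n\ge 1$. Hence $|Q''|\le kt+\log k\le t''$.
\end{itemize}
This is precisely the paper's choice of slack: all of the margin between $2\gamma$ and $\gamma$ is spent on sparsification, and none on the minimax step.
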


\begin{proof}
Fix $\secp\in\mathbb{N}$ and suppress it from the notation. 

\paragraph{Mass-normalization.}
Throughout this proof, the notation $(X,B)\leftarrow M$ denotes sampling from the
\emph{normalized} distribution $M/\mu(M)$, so $\E_{(x,b)\leftarrow M}[\cdot]$ is not linear in $M$.
To apply minimax, we may w.l.o.g.\ restrict Bob to measures of \emph{exact} mass $\alpha$.
Indeed, for any measure $M\preceq D$ with $\mu(M)\ge \alpha$, define the scaled measure
\[
\widetilde M \ :=\ \frac{\alpha}{\mu(M)}\cdot M.
\]
Then $\widetilde M\preceq D$, $\mu(\widetilde M)=\alpha$, and the normalized distributions coincide:
\[
\frac{\widetilde M}{\mu(\widetilde M)} \ =\ \frac{M}{\mu(M)}.
\]
Hence, for every circuit $Q$,
\begin{equation}
\label{eq:equivalence_of_measures}
\Pr_{(X,B)\leftarrow M}[Q(X)=B]\ =\ \Pr_{(X,B)\leftarrow \widetilde M}[Q(X)=B].
\end{equation}
Therefore, the lemma’s assumption and desired conclusion are unchanged if we quantify Bob only over
\[
R_\alpha \ :=\ \{\,M\preceq D \mid \mu(M)=\alpha\,\}.
\]
Note that $R_\alpha$ is nonempty (since $\alpha D\in R_\alpha$), convex, and compact when identified
with a subset of $\R^{\{0,1\}^n\times\{0,1\}}$.

\paragraph{The game.}
For any circuit $Q'$, define the function $g_{Q'}:\{0,1\}^n\times\{0,1\}\to[-1,1]$ by
\[
g_{Q'}(x,b)\ :=\ 2\Pr[Q'(x)=b]-1 \ =\ \Pr[Q'(x)=b]-\Pr[Q'(x)=1-b].
\]
Then for any measure $M$,
\[
\E_{(x,b)\leftarrow M}[g_{Q'}(x,b)] \ =\ 2\Pr_{(x,b)\leftarrow M}[Q'(x)=b]-1.
\]

Consider the following zero-sum game between Alice and Bob.
Alice chooses a circuit $Q'$ of size at most $t$;
Bob chooses a measure $M\in R_\alpha$;
and Alice's payoff is
\[
\Pi(Q',M)\ :=\ \frac{1}{\alpha}\sum_{x,b} M(x,b)\cdot g_{Q'}(x,b)
\qquad\text{for } M\in R_\alpha.
\]
For $M\in R_\alpha$ we have $\Pi(Q',M)=\E_{(x,b)\leftarrow M}[g_{Q'}(x,b)]$, and $\Pi(Q',\cdot)$ is
linear in $M$ on $R_\alpha$. By the lemma assumption, for every $M\in R_\alpha$ there exists a circuit $Q'$ of size at most $t$
such that
\[
\Pr_{(X,B)\leftarrow M}[Q'(X)=B]\ >\ \frac12+\gamma,
\]
and hence
\[
\Pi(Q',M)\ =\ 2\Pr_{(X,B)\leftarrow M}[Q'(X)=B]-1\ >\ 2\gamma.
\]
Equivalently,
\[
\min_{M\in R_\alpha}\ \max_{Q':\,|Q'|\le t}\ \Pi(Q',M)\ >\ 2\gamma.
\]

\paragraph{Minimax.}
Apply von Neumann's minimax theorem with Alice's finite action set being the set of size-$t$ circuits (with a finite universal gate set),
Bob's action set $Y=R_\alpha$ (compact convex), and payoff $u(Q',M)=\Pi(Q',M)$
(affine in $M$).
We obtain a mixed strategy for Alice, i.e.\ a distribution over size-$t$ circuits which we denote
$\overline Q$, such that for all $M\in R_\alpha$,
\[
\E_{Q'\sim \overline Q}\bigl[\Pi(Q',M)\bigr]\ >\ 2\gamma.
\]
Define the (pointwise) averaged payoff function
\[
g_{\overline Q}(x,b)\ :=\ \E_{Q'\sim \overline Q}\bigl[g_{Q'}(x,b)\bigr].
\]
By linearity,
\[
\E_{Q'\sim \overline Q}\bigl[\Pi(Q',M)\bigr]
\;=\; \frac{1}{\alpha}\sum_{x,b} M(x,b)\,g_{\overline Q}(x,b)
\;=\; \E_{(x,b)\leftarrow M}\bigl[g_{\overline Q}(x,b)\bigr],
\]
so we have for every $M\in R_\alpha$,
\begin{equation}
\label{eq:minimax_mixed}
\E_{(x,b)\leftarrow M}\bigl[g_{\overline Q}(x,b)\bigr]\ >\ 2\gamma.
\end{equation}

\paragraph{Sparsifying the mixed strategy.}
Note that $g_{Q'}(x,1)=-g_{Q'}(x,0)$ for every $x$, since $\Pr[Q'(x)=0]+\Pr[Q'(x)=1]=1$.
Define the one-argument function $h_{Q'}:\{0,1\}^n\to[-1,1]$ by
\[
h_{Q'}(x)\ :=\ g_{Q'}(x,0),
\]
so that for all $(x,b)$,
\[
g_{Q'}(x,b)\ =\ (1-2b)\,h_{Q'}(x).
\]
Define $h_{\overline Q}(x):=\E_{Q'\sim\overline Q}[h_{Q'}(x)]$; then
$g_{\overline Q}(x,b)=(1-2b)\,h_{\overline Q}(x)$.

Now apply Lemma~\ref{thm:convex-combo} to the convex combination $h_{\overline Q}$
over the domain $\{0,1\}^n$:
there exist circuits $Q'_1,\ldots,Q'_k$ of size at most $t$ with
\[
k\ \le \ \left\lceil \frac{n}{\gamma^2}\right\rceil
\]
such that
\[
\max_{x\in\{0,1\}^n}
\left|
h_{\overline{Q}}(x)\ -\ \frac1k\sum_{i=1}^k h_{Q'_i}(x)
\right|
\ \le\ \gamma.
\]
Multiplying by $(1-2b)\in\{\pm1\}$ shows that, for all $(x,b)$,
\[
\left|
g_{\overline{Q}}(x,b)\ -\ \frac1k\sum_{i=1}^k g_{Q'_i}(x,b)
\right|
\ \le\ \gamma.
\]
Define $Q''$ as follows: on input $x$, sample $i\leftarrow [k]$ uniformly and output $Q'_i(x)$.
Then for every $(x,b)$ we have
\[
g_{Q''}(x,b)\ =\ \frac1k\sum_{i=1}^k g_{Q'_i}(x,b),
\]
and hence for every $M\in R_\alpha$,
\[
\E_{(x,b)\leftarrow M}[g_{Q''}(x,b)]
\ \ge\
\E_{(x,b)\leftarrow M}[g_{\overline{Q}}(x,b)]\ -\ \gamma
\ >\ 2\gamma-\gamma\ =\ \gamma,
\]
where the strict inequality uses \eqref{eq:minimax_mixed}.

\paragraph{Extending back to all $\mu(M)\ge\alpha$.}
Let
\[
R \ :=\ \{\,M\preceq D \mid \mu(M)\ge\alpha\,\}.
\]
For any $M\in R$, let $\widetilde M=\frac{\alpha}{\mu(M)}M\in R_\alpha$.
By \eqref{eq:equivalence_of_measures}, sampling from $M$ or $\widetilde M$ yields the same normalized
distribution, so the same success probability for $Q''$.
Therefore, for every $M\in R$,
\[
\Pr_{(x,b)\leftarrow M}[Q''(x)=b]
\ =\ \frac12+\frac12\,\E_{(x,b)\leftarrow M}[g_{Q''}(x,b)]
\ >\ \frac12+\frac{\gamma}{2}.
\]

\paragraph{Size bound.}
By construction, $Q''$ can be implemented by wiring in the $k$ circuits $Q'_1,\dots,Q'_k$
and using $\log k$ bits to sample $i$; thus
\[
|Q''|\ \le\ k\cdot t\ +\ \log k.
\]
With $k\le\lceil n/\gamma^2\rceil$, we have $k\le n/\gamma^2+1$, and since $\gamma\in(0,1)$,
\[
\log k\ \le\ \log\!\Bigl(\frac{n}{\gamma^2}+1\Bigr)\ \le\ \log\!\Bigl(\frac{n+1}{\gamma^2}\Bigr).
\]
Hence
\[
|Q''|
\ \le\ \Bigl(\frac{n}{\gamma^2}+1\Bigr)\cdot t\ +\ \log\!\Bigl(\frac{n+1}{\gamma^2}\Bigr)
\ =\ t'',
\]
as required.
\end{proof}

\begin{theorem}[Post-quantum hard-core measure]
\label{thm:hardcore-Q3}
Let $\{D_\secp\}_{\secp\in\bbN}$ be a family of distributions over
$\{0,1\}^{n(\secp)}\times\{0,1\}$ that are sampleable in time $\poly(\secp)$, and assume the min-entropy condition
\[
\max_{(x,b)\in\{0,1\}^{n(\secp)}\times\{0,1\}} D_\secp(x,b)\ \le\ \negl(\secp).
\]
Fix a function $\delta(\secp)\in(0,1)$ and a size bound $s(\secp)$.
Let $q(\cdot,\cdot)$ and $r(\cdot,\cdot)$ be the polynomials from Theorem~\ref{thm:Q3}. Assume that for all sufficiently large $\secp$, for every quantum circuit
$Q:\{0,1\}^{n(\secp)}\to\{0,1\}$ of size at most $s(\secp)$,
\[
\Pr_{(X,B)\gets D_\secp}\bigl[\,Q(X)=B\,\bigr]
\ \le\ 1-\frac{\delta(\secp)}{2}.
\]

Let $\gamma(\secp)$ be any function such that $\gamma(\secp)\in [1/\poly(\secp),\delta(\secp)]$, and define
\[
\epsilon(\secp)\ :=\ \frac{\gamma(\secp)}{4},
\qquad
\varepsilon(\secp)\ :=\ \delta(\secp)\cdot \epsilon(\secp)
\ =\ \frac{\delta(\secp)\gamma(\secp)}{4},
\]
and
\[
s''(\secp)\ :=\ \left\lfloor
\frac{s(\secp)-r\!\bigl(\secp,1/\varepsilon(\secp)\bigr)}
     {q\!\bigl(\secp,1/\varepsilon(\secp)\bigr)}
\right\rfloor
-
n(\secp),
\qquad
s'(\secp)\ :=\ \left\lfloor
\frac{\gamma(\secp)^2}{4\,n(\secp)}\cdot s''(\secp)
\right\rfloor.
\]
Then for all sufficiently large $\secp$, there exists a measure
$M_\secp \preceq D_\secp$ such that
\[
\mu(M_\secp)\ \ge\ \delta(\secp)-\epsilon(\secp)
\ =\ \delta(\secp)-\frac{\gamma(\secp)}{4},
\]
and for every quantum circuit
$Q':\{0,1\}^{n(\secp)}\to\{0,1\}$ of size at most $s'(\secp)$,
\[
\Pr_{(X,B)\leftarrow M_\secp}\bigl[\,Q'(X)=B\,\bigr]
\ \le\ \frac12+\gamma(\secp).
\]
\end{theorem}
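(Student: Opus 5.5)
The plan is to combine Corollary~\ref{cor:soft-hardcore-random-labels} with Lemma~\ref{lem:minimax-fix} in a proof by contradiction. Fix a sufficiently large $\secp$ and suppress it. Suppose no such measure exists. That is, for every $M\preceq D$ with $\mu(M)\ge \delta-\epsilon$ there is a circuit $Q'$ of size at most $s'$ with $\Pr_{(X,B)\leftarrow M}[Q'(X)=B]>\frac12+\gamma$.

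First, I would apply Lemma~\ref{lem:minimax-fix} with $\alpha:=\delta-\epsilon$, with the same $\gamma$, and with size bound $t:=s'$. This yields a single circuit $Q''$ of size at most
\[
t''=\Bigl(\frac{n}{\gamma^2}+1\Bigr)s'+\log\Bigl(\frac{n+1}{\gamma^2}\Bigr).
\]
The circuit $Q''$ satisfies $\Pr_{(X,B)\leftarrow M}[Q''(X)=B]>\frac12+\frac{\gamma}{2}$ simultaneously for all $M\preceq D$ with $\mu(M)\ge\delta-\epsilon$. Since $\gamma\le 1\le n$, we have $n/\gamma^2+1\le 2n/\gamma^2$. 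Together with $s'\le \gamma^2 s''/(4n)$, this gives $t''\le s''/2+O(\log(n/\gamma^2))\le s''$ for large $\secp$ (we may assume $s''$ is large, else the statement is vacuous because $s'=0$).

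Next, I would apply Corollary~\ref{cor:soft-hardcore-random-labels} to $C:=Q''$, with hardness parameter $\delta$ and slack $\epsilon=\gamma/4$. Its hypotheses hold as follows.
\begin{itemize}
\item The condition $\epsilon\le\delta/2$ follows from $\gamma\le\delta$.
\item The quantity $\epsilon=1/\poly$ because $\gamma\ge 1/\poly$.
\item Sampleability and the min-entropy condition are inherited from the assumptions of the theorem. The corollary's min-entropy condition is on the marginal of $X$, whereas here it is stated on pairs $(x,b)$. However, Theorem~\ref{thm:Q3} is applied on the lifted domain $\cX\times\{0,1\}$, so the pairwise condition is exactly what is needed; I would note this explicitly.
\item The size condition \eqref{eq:size-loss-condition} requires $q(\secp,1/\varepsilon)(|Q''|+n)+r(\secp,1/\varepsilon)\le s$. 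This follows from $|Q''|\le s''$ and the definition of $s''$ as $\lfloor (s-r)/q\rfloor-n$.
\end{itemize}

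The corollary then produces a measure $M\preceq D$ with $\mu(M)\ge\delta-\epsilon$ and $\Pr_{(X,B)\leftarrow M}[Q''(X)=B]\le \frac12+2\epsilon=\frac12+\frac{\gamma}{2}$. This contradicts the strict inequality $>\frac12+\frac{\gamma}{2}$ guaranteed by the minimax step, which proves the theorem.

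The main obstacle is purely bookkeeping. The exact constants must line up: the corollary's bound $\frac12+2\epsilon$ must equal the minimax guarantee $\frac12+\gamma/2$ (forcing $\epsilon=\gamma/4$), and the circuit-size chain $s'\to t''\to s''\to s$ must absorb the additive $\log$ term and the factor $n/\gamma^2+1$. The factor $\frac14$ in the definition of $s'$ gives slack of a factor $2$ for this. I would also check that the uniform versus non-uniform variants match, since Lemma~\ref{lem:minimax-fix} inherently yields a non-uniform $Q''$. For this reason the corollary is invoked in its non-uniform form.
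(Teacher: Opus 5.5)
For $s'\ge 1$ your argument is the paper's proof: assume no hard-core measure exists, apply Lemma~\ref{lem:minimax-fix} with $\alpha=\delta-\epsilon$ and $t=s'$, then apply Corollary~\ref{cor:soft-hardcore-random-labels} to $C=Q''$ with $\epsilon=\gamma/4$, and get a contradiction at $\frac12+\frac{\gamma}{2}$.

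The gap is in how you dispose of small $s''$. You say that if $s''$ is not large, then $s'=0$ and the statement is vacuous. But $s'=0$ is not vacuous. Circuits of size $0$ are the constant predictors $Q_b(x)\equiv b$. The theorem then asks for a measure of mass at least $\delta-\epsilon$ on which both constants succeed with probability at most $\frac12+\gamma$. This does not come for free. For instance, $M=(\delta-\epsilon)D$ keeps the label bias of $D$, and the hypothesis only gives $\Pr_D[B=b]\le 1-\delta/2$, which can exceed $\frac12+\gamma$.

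Your main argument does not reach this case either. With $t=0$, the minimax circuit $Q''$ still has size up to $\log((n+1)/\gamma^2)$ (the random index into the mixture). This can exceed $s''$, e.g.\ when $s''=0$, so the corollary's size condition \eqref{eq:size-loss-condition} fails. Relatedly, ``for large $\secp$'' is not what absorbs the logarithmic term, since $s$ is arbitrary. What absorbs it is $s'\ge 1$: this forces $s''\ge 4n/\gamma^2$, hence $\log((n+1)/\gamma^2)\le 2n/\gamma^2\le s''/2$. This uses $n\ge 1$, which itself follows from the min-entropy condition.

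The missing piece is a separate treatment of $s'=0$, as in the paper. Applying the hypothesis to the two constant circuits gives $p_b:=\Pr_{D}[B=b]\ge \delta/2$ for both $b$. Then define $M(x,b):=\frac{\delta-\epsilon}{2p_b}D(x,b)$. This satisfies $M\preceq D$ (since $\frac{\delta-\epsilon}{2p_b}\le 1$) and $\mu(M)=\delta-\epsilon$. Its labels are exactly balanced, so every constant predictor succeeds with probability exactly $\frac12$. Only $s'<0$ is truly vacuous; there you can take $M=(\delta-\epsilon)D$.

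A smaller point: to check the corollary's min-entropy hypothesis you do not need to reach into its proof. Simply note that $\Pr[X=x]=\sum_b D(x,b)\le 2\max_{(x,b)}D(x,b)=\negl(\secp)$.
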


\begin{proof}
Fix a sufficiently large security parameter $\secp$ and suppress it from the notation.
Write
$n:=n(\secp)$, $\delta:=\delta(\secp)$, $\gamma:=\gamma(\secp)$,
$\epsilon:=\gamma/4$, and $\varepsilon:=\delta\epsilon=\delta\gamma/4$.
Note that $0<\gamma<\delta<1$ and also $n\ge 1$: indeed, if $n=0$ then the domain has
size $2$ and hence $\max_{x,b}D(x,b)\ge 1/2$, contradicting
$\max_{x,b}D(x,b)\le \negl(\secp)$. We first dispose of the degenerate case $s'\le 0$.

\medskip\noindent
\textbf{Case 1: $s'<0$.}
Then there is no circuit of size at most $s'$, so the prediction condition is vacuous.
Take $M:=(\delta-\epsilon)\cdot D$. Then $M\preceq D$ and $\mu(M)=\delta-\epsilon$, as required.

\medskip\noindent
\textbf{Case 2: $s'=0$.}
Let $p_b:=\Pr_{(X,B)\leftarrow D}[B=b]$ for $b\in\{0,1\}$.
By the hypothesis of the theorem (applied to the constant circuits of size $0$),
\[
p_b=\Pr_{(X,B)\leftarrow D}[Q_b(X)=B]\ \le\ 1-\frac{\delta}{2}
\qquad\text{for}\qquad
Q_b(x)\equiv b.
\]
Hence $p_{1-b}=1-p_b\ge \delta/2$, so in fact $p_0,p_1\ge\delta/2$.
Define a measure $M\preceq D$ by reweighting labels:
\[
M(x,b)\ :=\ \frac{\delta-\epsilon}{2p_b}\,D(x,b)\qquad b\in\{0,1\}.
\]
Since $p_b\ge\delta/2\ge(\delta-\epsilon)/2$, we have $(\delta-\epsilon)/(2p_b)\le 1$,
and therefore $M\preceq D$. Moreover,
\[
\mu(M)=\sum_{b\in\{0,1\}}\sum_x M(x,b)
=\sum_{b\in\{0,1\}}\frac{\delta-\epsilon}{2p_b}\sum_x D(x,b)
=\sum_{b\in\{0,1\}}\frac{\delta-\epsilon}{2p_b}\,p_b
=\delta-\epsilon.
\]
Finally, under $M$ the label is perfectly balanced:
\[
\Pr_{(X,B)\leftarrow M}[B=b]
=\frac{\sum_x M(x,b)}{\mu(M)}
=\frac{(\delta-\epsilon)/2}{\delta-\epsilon}
=\frac12.
\]
Thus any size-$0$ circuit is constant and succeeds with probability exactly $1/2$,
which is at most $1/2+\gamma$. This establishes the theorem when $s'=0$.

\medskip\noindent
\textbf{Case 3: $s'\ge 1$.}
Assume towards contradiction that for every measure $M\preceq D$ with
$\mu(M)\ge \delta-\epsilon$ there exists a quantum circuit $Q'$ of size at most $s'$
such that
\begin{equation}\label{eq:assume-no-hardcore-fixed}
\Pr_{(X,B)\leftarrow M}\bigl[Q'(X)=B\bigr]\ >\ \frac12+\gamma.
\end{equation}
Apply Lemma~\ref{lem:minimax-fix} with parameters
$\alpha:=\delta-\epsilon$, $t:=s'$, and $\gamma$ (as above).
We obtain a single quantum circuit $Q''$ such that for all measures $M\preceq D$
with $\mu(M)\ge\delta-\epsilon$,
\begin{equation}\label{eq:fixed-good-fixed}
\Pr_{(X,B)\leftarrow M}\bigl[Q''(X)=B\bigr]\ >\ \frac12+\frac{\gamma}{2},
\end{equation}
and whose size satisfies
\begin{equation}\label{eq:Qpp-size-start}
|Q''|
\ \le\
\left(\frac{n}{\gamma^2}+1\right)s'
\;+\;
\log\!\left(\frac{n+1}{\gamma^2}\right)\le 2\left(\frac{n}{\gamma^2}+1\right)s'.
\end{equation}
Recall the definition $s'=\left\lfloor\frac{\gamma^2}{4n}s''\right\rfloor$and note that $s'\ge 1$ implies $s''>0$, so multiplying by $s''$ below preserves inequalities.
Thus, we get
\[
|Q''|
\ \le\
2\left(\frac{n}{\gamma^2}+1\right)\cdot \frac{\gamma^2}{4n}\,s''
=
\frac12\left(1+\frac{\gamma^2}{n}\right)s''.
\]
Finally, since $0<\gamma<1$ and $n\ge 1$, we have $\gamma^2/n\le 1$, hence
$\frac12(1+\gamma^2/n)\le 1$, and therefore
\begin{equation}\label{eq:Qpp_size_bound_fixed}
|Q''|\ \le\ s''.
\end{equation}

\medskip\noindent
\textbf{Applying Corollary~\ref{cor:soft-hardcore-random-labels}.}
We verify that the min-entropy premise of Corollary~\ref{cor:soft-hardcore-random-labels}
holds. Indeed, for every $x\in\{0,1\}^n$,
\[
\Pr[X=x]=\sum_{b\in\{0,1\}} D(x,b)\ \le\ 2\cdot \max_{(x,b)}D(x,b)=\negl(\secp).
\]
Apply Corollary~\ref{cor:soft-hardcore-random-labels} to the circuit
$C:=Q''$ with parameter $\epsilon:=\gamma/4$.
We verify the corollary's size condition~\eqref{eq:size-loss-condition}.
By \eqref{eq:Qpp_size_bound_fixed} and the definition
$s''=\left\lfloor\frac{s-r(\secp,1/\varepsilon)}{q(\secp,1/\varepsilon)}\right\rfloor-n$,
we have
\[
|C|+n\ \le\ s''+n\ \le\ \left\lfloor\frac{s-r(\secp,1/\varepsilon)}{q(\secp,1/\varepsilon)}\right\rfloor.
\]
Let $q_\secp:=q(\secp,1/\varepsilon)$ and $r_\secp:=r(\secp,1/\varepsilon)$.
Since $q_\secp\ge 1$ (interpreting $q_\secp$ as the integer oracle-call bound from Theorem~\ref{thm:Q3}$)$,
multiplying by $q_\secp$ and adding $r_\secp$ yields
\[
q_\secp\cdot(|C|+n)+r_\secp
\ \le\
q_\secp\cdot
\left\lfloor\frac{s-r_\secp}{q_\secp}\right\rfloor
+r_\secp\le s.
\]
which is exactly the needed size condition. Therefore the corollary yields a measure $M^\star\preceq D$ with
$\mu(M^\star)\ge\delta-\epsilon$ such that
\[
\Pr_{(X,B)\leftarrow M^\star}\bigl[Q''(X)=B\bigr]
\ \le\ \frac12+2\epsilon
\ =\ \frac12+\frac{\gamma}{2}.
\]
This contradicts \eqref{eq:fixed-good-fixed} applied to $M^\star$,
and hence the assumption \eqref{eq:assume-no-hardcore-fixed} was false.
Thus there exists a measure $M\preceq D$ with $\mu(M)\ge\delta-\epsilon$
such that every size-$s'$ circuit predicts $B$ from $X$ under $M$ with probability
at most $1/2+\gamma$, completing the proof.
\end{proof}
To conclude this subsection, we state a corollary that follows immediately from \ref{thm:hardcore-Q3} by taking the contrapositive.

\begin{corollary}[Contrapositive of Theorem~\ref{thm:hardcore-Q3}]
\label{cor:hardcore-contrap}
Let $\{D_\secp\}_{\secp\in\bbN}$ be a family of distributions over
$\{0,1\}^{n(\secp)}\times\{0,1\}$ that are sampleable in time $\poly(\secp)$, and assume
\[
\max_{(x,b)\in\{0,1\}^{n(\secp)}\times\{0,1\}} D_\secp(x,b)\ \le\ \negl(\secp).
\]
Fix functions $\delta(\secp)\in(0,1)$ and $s(\secp)\in\bbN$, and let
$q(\cdot,\cdot)$ and $r(\cdot,\cdot)$ be the polynomials from Theorem~\ref{thm:Q3}.
Let $\gamma(\secp)$ be any function such that $\gamma(\secp)\in [1/\poly(\secp),\delta(\secp)]$ and define
\[
\epsilon(\secp)\ :=\ \frac{\gamma(\secp)}{4},
\qquad
\varepsilon(\secp)\ :=\ \delta(\secp)\cdot \epsilon(\secp)
\ =\ \frac{\delta(\secp)\gamma(\secp)}{4},
\]
and
\[
s''(\secp)\ :=\ \left\lfloor
\frac{s(\secp)-r\!\bigl(\secp,1/\varepsilon(\secp)\bigr)}
     {q\!\bigl(\secp,1/\varepsilon(\secp)\bigr)}
\right\rfloor
-
n(\secp),
\qquad
s'(\secp)\ :=\ \left\lfloor
\frac{\gamma(\secp)^2}{4\,n(\secp)}\cdot s''(\secp)
\right\rfloor.
\]
Suppose that for infinitely many $\secp$, for \emph{every} measure $M_\secp\preceq D_\secp$
satisfying
\[
\mu(M_\secp)\ \ge\ \delta(\secp)-\epsilon(\secp)
\ =\ \delta(\secp)-\frac{\gamma(\secp)}{4},
\]
there exists a quantum circuit $Q'_\secp:\{0,1\}^{n(\secp)}\to\{0,1\}$ of size at most
$s'(\secp)$ such that
\[
\Pr_{(X,B)\leftarrow M_\secp}\bigl[\,Q'_\secp(X)=B\,\bigr]\ >\ \frac12+\gamma(\secp).
\]
Then for infinitely many $\secp$, there exists a quantum circuit
$Q_\secp:\{0,1\}^{n(\secp)}\to\{0,1\}$ of size at most $s(\secp)$ such that
\[
\Pr_{(X,B)\gets D_\secp}\bigl[\,Q_\secp(X)=B\,\bigr]\ >\ 1-\frac{\delta(\secp)}{2}.
\]
\end{corollary}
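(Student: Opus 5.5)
The plan is a direct logical contrapositive of Theorem~\ref{thm:hardcore-Q3}. The parameters $\epsilon,\varepsilon,s'',s'$ are defined exactly as there, with the same $q,r$, the same admissible range for $\gamma$, and the same sampleability and min-entropy hypotheses. So the only work is to check that negating the conclusion and the hypothesis of that theorem, with the quantifiers over $\secp$ handled correctly, gives the stated corollary.

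Concretely, I would argue by contradiction and suppose the corollary's conclusion fails. Then there is $\secp_0$ such that for all $\secp\ge\secp_0$, every quantum circuit $Q$ of size at most $s(\secp)$ satisfies
\[
\Pr_{(X,B)\gets D_\secp}[Q(X)=B]\le 1-\delta(\secp)/2 .
\]
This is exactly the ``for all sufficiently large $\secp$'' hardness hypothesis of Theorem~\ref{thm:hardcore-Q3}. Applying that theorem with the given $\gamma$, I obtain some $\secp_1\ge\secp_0$ such that for every $\secp\ge\secp_1$ there is a measure $M_\secp\preceq D_\secp$ with the following two properties:
\[
\mu(M_\secp)\ge\delta(\secp)-\gamma(\secp)/4 ,
\]
and every circuit $Q'$ of size at most $s'(\secp)$ has success at most $\frac12+\gamma(\secp)$ under $M_\secp$.

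The corollary's hypothesis holds for infinitely many $\secp$, so I can pick some $\secp\ge\secp_1$ at which it holds. Its hypothesis then yields, for this particular measure $M_\secp$, a circuit of size at most $s'(\secp)$ whose success under $M_\secp$ is strictly greater than $\frac12+\gamma(\secp)$. This contradicts the previous paragraph, which completes the proof.

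There is no real obstacle here. The only point needing care is the quantifier bookkeeping: ``for all sufficiently large $\secp$'' in the theorem versus ``infinitely many $\secp$'' in the corollary. I would also double-check that every derived quantity ($\epsilon$, $\varepsilon$, $s''$, $s'$) is literally the same function of $\secp$ in both statements, so that the theorem's guarantee applies verbatim.
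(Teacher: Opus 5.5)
Your proof is correct and takes the same route as the paper, which says only that the corollary follows immediately from Theorem~\ref{thm:hardcore-Q3} by contraposition. Your explicit quantifier bookkeeping supplies exactly the detail that makes the contraposition go through: the negated conclusion gives the theorem's ``for all sufficiently large $\lambda$'' hypothesis, and you then pick a $\lambda$ from the infinite set beyond the theorem's threshold.
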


\subsection{Key agreement amplification}\label{subsec:KA-amp}
In this section, we prove a theorem which enables generically amplifying a weak bit agreement protocol to a full-fledged key agreement protocol, in a manner that preserves post-quantum security.

\begin{theorem}
    \label{thm:ba-to-ka-params}
    Assuming there exists a post-quantum $(\epsilon,\delta)$-BA protocol secure against adversaries with non-uniform classical advice for constants $\epsilon,\delta\in[0,1]$ such that \[\delta < \frac{2\epsilon}{1+\epsilon},\] there exists a $(1,0)$-BA protocol secure against adversaries with non-uniform classical advice.
\end{theorem}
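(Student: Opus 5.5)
Our plan is to follow Holenstein's two-layer strategy: an information-theoretic compiler for secure random variables, plus a computational-to-information-theoretic transfer that uses the post-quantum hard-core measure theorem (\cref{thm:hardcore-Q3}, via \cref{cor:hardcore-contrap}) in place of Holenstein's hard-core set lemma.

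\textbf{Step 1: parameter translation.} First note that the BA security parameter $\delta$ bounds Eve's advantage conditioned on agreement. Her success probability is therefore at most $1-\frac{\delta_{\mathrm{hc}}}{2}$ with $\delta_{\mathrm{hc}} := 1-\delta$. The condition $\delta < \frac{2\epsilon}{1+\epsilon}$ is then equivalent to $\delta_{\mathrm{hc}} > \frac{1-\epsilon}{1+\epsilon}$, which is exactly Holenstein's threshold for amplifying an $(\epsilon,\delta_{\mathrm{hc}})$-secure random variable. We fix a constant slack $\gamma>0$ so that $\delta_{\mathrm{hc}}-2\gamma$ still exceeds the threshold.

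\textbf{Step 2: the amplified protocol.} Take Holenstein's efficient compiler $\Pi$, which operates on $\ell=\poly(\secp)$ independent $(\epsilon,\delta_{\mathrm{hc}}-2\gamma)$-secure random variables and uses only public classical communication. Run $\ell$ independent copies of the weak BA and then apply $\Pi$ to the outputs $(x_i,y_i)$. Correctness and lack of bias follow from Holenstein's analysis, since the joint distribution of $(x_i,y_i)$ is unchanged.

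\textbf{Step 3: the hard-core measure.} Apply \cref{thm:hardcore-Q3} to the distribution $D_\secp$ of (transcript, $k_A$) conditioned on $k_A=k_B$. This requires two checks. The distribution is efficiently sampleable by rejection sampling, since $\Pr[k_A=k_B]$ is bounded away from $0$. It also satisfies the min-entropy condition; to guarantee this we append $\secp$ random public bits to the transcript. The theorem yields a measure $M\preceq D$ of mass $\ge \delta_{\mathrm{hc}}-\gamma/4$ on which size-$s'$ quantum circuits have success at most $\frac12+\gamma$.

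We then define an \emph{ideal} experiment. In each instance, conditioned on agreement and with probability $M(z,b)/D(z,b)$, the shared bit is resampled uniformly and independently of $z$. This gives an $(\epsilon,\approx\delta_{\mathrm{hc}})$-secure random variable in the sense of \cref{def:secure-random-variable}, where the event $E$ is "selected by the measure." Holenstein's analysis then shows that in the ideal world any adversary has negligible advantage on the final key.

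\textbf{Step 4: hybrid argument (main obstacle).} Swap real instances for ideal ones one at a time. A distinguisher between adjacent hybrids must become a predictor on $M$ with advantage beyond $\gamma$, contradicting hard-core-ness. The difficulty is that hybrids use resampled bits indistinguishable only under the measure, and the resampling weight $M/D$ is inefficient. We plan to handle this with non-uniform classical advice, hardwiring the other instances' transcripts and bits. This is why the result is stated for classical advice: $M$ may not be efficiently computable, but a fixed good choice of the other coordinates can be fixed by averaging. We will use the standard Holenstein/Yao argument that predicting the final key with advantage $\eta$ yields an $\Omega(\eta/\ell)$ predictor on some coordinate under $M$.

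This hardwiring must be done carefully, because the adversary in the KA game sees quantum-computed transcripts. In particular, we need the predictor's success under $M$ to relate linearly to the hybrid gap. This follows from the same identity as in \eqref{eq:success_over_meas}.

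Finally, we take $\gamma$ to be an inverse polynomial that decreases with the adversary's advantage, and use \cref{cor:hardcore-contrap} to convert a successful KA adversary into a circuit predicting weak-BA bits with success $>1-\delta_{\mathrm{hc}}/2$. This contradicts weak BA security and yields $(1,0)$-BA. The tightness (the "only if" part) is inherited directly from Holenstein's unconditional construction.
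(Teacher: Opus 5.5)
Your outline follows the paper's route. The translation $\delta_{\mathrm{hc}}=1-\delta$ turns $\delta<\frac{2\epsilon}{1+\epsilon}$ into Holenstein's threshold $\delta_{\mathrm{hc}}>\frac{1-\epsilon}{1+\epsilon}$, and you then use the compiler of \cref{claim:info_key_agreement_params}. Your hybrids resample the agreed bit on the part selected with probability $M/D$. Hardwiring the index and the other coordinates as classical advice gives an $\Omega(\beta/\ell)$ predictor under $M$, and \cref{cor:hardcore-contrap} then contradicts weak-BA security. This is \cref{thm:wba-to-ka} combined with \cref{claim:info_key_agreement_params}.

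However, one step fails as you wrote it. Your ideal variables need not be secure random variables in the sense of \cref{def:secure-random-variable}, which requires \emph{exactly} uniform marginals. Resampling the shared bit uniformly on the $M$-selected part shifts $\Pr[X=0]$ by $\Pr[X=Y]\cdot\mu(M)\cdot\bigl(\tfrac12-\Pr_{(Z,B)\leftarrow M}[B=0]\bigr)$. This is nonzero whenever the labels under $M$ are unbalanced. For a genuine hard-core measure the imbalance is only bounded by $\gamma$, not zero, and for the arbitrary measures quantified over in \cref{cor:hardcore-contrap} it can be large. Weak BA also only guarantees negligible, not zero, bias in the first place.

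For the same reason, the compiler's input distribution, and hence $\Pr[k_A=k_B]$, is no longer the same in every hybrid. The paper uses that invariance both to obtain correctness of the real protocol from the ideal one and to factor the agreement probability out of the endpoint gap.

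The paper fixes this with a public one-time mask. $A$ sends a fresh uniform bit $G_i$ per coordinate, the compiler is run on $X_i\oplus G_i$ and $Y_i\oplus G_i$, and $G_i$ is appended to the transcript. This has three effects:
\begin{itemize}
\item It preserves per-coordinate security.
\item It makes every compiler input exactly unbiased.
\item It makes $(X_i\oplus G_i,Y_i\oplus G_i)$ and $(R_i\oplus G_i,R_i\oplus G_i)$ identically distributed, so all hybrids feed the compiler the same input distribution. Meanwhile, conditioned on the selection event, the masked bit in the ideal world is independent of $(Z_i,G_i)$.
\end{itemize}

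A smaller point: keep the constant slack separate from the inverse-polynomial hard-core parameter $\gamma$. In the paper this slack is $\kappa$, entering through the density parameter $\delta_*=\delta_{\mathrm{hc}}-\kappa/2$ in \cref{cor:hardcore-contrap}. As written, your final predictor only beats $1-\delta_{\mathrm{hc}}/2$. That does not contradict a security bound that allows an additive $\negl(\secp)$.
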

A proof of this theorem is presented at the end of this section, after developing the necessary tools. We begin by defining the notion of a $(\epsilon,\delta)$-secure random variable, and of an information theoretic key-agreement amplification for a set of such independent variables.

\begin{definition}[$(\epsilon,\delta)$-secure random variable \cite{Holenstein}] \label{def:secure-random-variable}
    A triple $X \times Y \times Z$ of random variables over $\{0,1\} \times \{0,1\} \times \{0,1\}^*$ is $(\epsilon,\delta)$-secure for $\epsilon = \epsilon(\secp)$ and $\delta = \delta(\secp)$ if 
    \begin{itemize}
        \item $\Pr[X=0]=\Pr[X=1]=\Pr[Y=0]=\Pr[Y=1] = 1/2$.
        \item $\Pr[X=Y] \geq 1/2 + \epsilon/2$.
        \item There exists an event $E$ which implies $X=Y$ such that $\Pr[E \ | \ X=Y] \geq \delta$ and $I(X;Z \ | \ E)=0$. That is, conditioned on $E$, $Z$ gives no information about $X$.
    \end{itemize}
\end{definition}

\begin{definition}[Information-theoretic key agreement protocol for $(\epsilon,\delta)$-secure random variables \cite{Holenstein}]\label{def:info_protocol}
    Let $\epsilon = \epsilon(\secp),\delta = \delta(\secp)$, $n = n(\secp)$, and $\{(X_i,Y_i,Z_i)\}_{i \in [\ell]}$ be $\ell$ independent $(\epsilon,\delta)$-secure random variables. Let $X = \{X_i\}_{i \in [\ell]}$, $Y = \{Y_i\}_{i \in [\ell]}$, $Z = \{Z_i\}_{i \in [\ell]}$. An information-theoretic key agreement protocol for $(\epsilon,\delta)$-secure random variables is a protocol where $A$ has input $(1^\secp,X)$, $B$ has input $(1^\secp,Y)$, and they communicate (classically) to produce a transcript $T$, an output $k_A$ for $A$, and an output $k_B$ for $B$. The guarantee is that $(k_A,k_B,(T,Z))$ is an $(1-2^{-\secp},1-2^{-\secp})$-secure random variable, and the protocol is called \emph{efficient} if the running time for both $A$ and $B$ is $\mathsf{poly}(\secp)$.
\end{definition}

\noindent We will rely on the following claim:

\begin{claim} [\cite{Holenstein}] \label{claim:info_key_agreement_params}
    For any constants $\epsilon,\delta$ such that $\delta > \frac{1-\epsilon}{1+\epsilon}$, there exists an efficient information-theoretic key agreement protocol for $(\epsilon,\delta)$-secure random variables.
\end{claim}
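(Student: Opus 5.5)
Since this claim is a purely information-theoretic statement about independent triples $(X_i,Y_i,Z_i)$, I would prove it along the lines of \cite{Holenstein}. The plan is to reinterpret the parameter condition, then apply two classical compilation stages: advantage distillation via a repetition code, followed by information reconciliation plus privacy amplification.

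\textbf{Step 1: rewrite the condition.} For each instance, $\Pr[X_i\neq Y_i]\le \frac{1-\epsilon}{2}$. The event $E_i$ satisfies $\Pr[E_i]=\Pr[E_i\mid X_i=Y_i]\Pr[X_i=Y_i]\ge \delta\cdot\frac{1+\epsilon}{2}$. The hypothesis $\delta>\frac{1-\epsilon}{1+\epsilon}$ is therefore equivalent to
\[\Pr[E_i] > \Pr[X_i\neq Y_i].\]
In words, the ``perfectly secret and correct'' mass strictly exceeds the ``error'' mass by a constant factor $\rho:=\Pr[E_i]/\Pr[X_i\ne Y_i]>1$. The whole proof exploits this single inequality.

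\textbf{Step 2: advantage distillation.} Fix a constant block length $k$. For a block of $k$ fresh instances, $A$ publicly sends $c_j:=X_1\oplus X_j$ for $j=2,\dots,k$. $B$ accepts the block only if $Y_1\oplus Y_j=c_j$ for all $j$, and then both parties keep $X_1$, respectively $Y_1$. Conditioned on acceptance, the block key is wrong only if all $k$ instances erred, which has probability proportional to $\Pr[X\ne Y]^k$. Eve, who holds the $Z_j$ and the $c_j$, can learn the key whenever some instance is not in $E$. However, if all $k$ instances lie in $E$, then the key is uniform and independent of her view, since conditioned on $E$ each $X_j$ is independent of $Z_j$ and the $c_j$ only fix relative parities. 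This happens with probability proportional to $\Pr[E]^k$.

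After normalizing by the acceptance probability, the distilled variable therefore has error rate $e_k\approx \rho^{-k}\cdot(\text{secure fraction } \delta_k)$. Hence the ratio of error to secret mass decays exponentially. A constant $k$ suffices to make $e_k$ an arbitrarily small constant relative to $\delta_k$, while $\delta_k$ stays a positive constant. I would also re-randomize with a public one-time mask so that the distilled bits stay unbiased, keeping them in the form of Definition~\ref{def:secure-random-variable}.

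\textbf{Step 3: reconciliation and privacy amplification.} Given many independent distilled triples with error $e\ll\delta'$, where $\delta'$ is the secure fraction, $A$ sends a syndrome of a linear error-correcting code of rate about $h(e)$ per bit (plus hashing to detect failures). $B$ then recovers $A$'s string except with probability $2^{-\Omega(\secp)}$. Eve's conditional min-entropy about $A$'s string is, by a Chernoff bound on the number of $E$-instances, about $\delta'\ell-o(\ell)$, and the syndrome leaks at most about $h(e)\ell$ more bits. Since $h(e)<\delta'$ once $k$ is large enough, a universal hash (leftover hash lemma) extracts a bit that is $2^{-\secp}$-close to uniform given Eve's view, using $\ell=\poly(\secp)$.

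Converting statistical closeness into the exact ``event $E$'' formulation of a $(1-2^{-\secp},1-2^{-\secp})$-secure random variable uses the standard coupling between statistical distance and a good event. Everything runs in polynomial time.

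\textbf{Main obstacle.} I expect the main obstacle to be Step 3's quantitative bookkeeping. It has to do two things at once: show that $h(e_k)<\delta_k$ for some constant $k$, which needs care because $e_k$ and $\delta_k$ are both normalized by the acceptance probability, and convert entropy/statistical bounds into the exact event-based definition. I would first try choosing $k$ so that $e_k\le \delta_k^2/100$, which makes the entropy inequality immediate.
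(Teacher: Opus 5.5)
The paper gives no proof of this claim; it is imported from Holenstein, so your reconstruction has to stand on its own. The plan is sound: repetition-code distillation, then one-way reconciliation and privacy amplification. Your Step~1 inequality $\Pr[E]>\Pr[X\neq Y]$ is exactly the right invariant. But the way you propose to settle the step you yourself call the crux fails.

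Write $a=\Pr[X=Y]$, $e=\Pr[X\neq Y]$, $s=\Pr[E]$, $\rho=s/e$. Conditioned on acceptance, the error probability is $\tilde e=e^k/(a^k+e^k)$ and the secure mass is $\tilde s=s^k/(a^k+e^k)$, while $\delta_k=(s/a)^k$. Both sides decay exponentially in $k$, so a polynomial loss in the comparison is fatal. Concretely, $\tilde e\le\delta_k^2/100$ would require $(ea/s^2)^k\le 1/50$, hence $s^2>ea$. In the worst case permitted by the hypotheses ($a=\frac{1+\epsilon}{2}$, $s=\delta a$) this reads $\delta^2>\frac{1-\epsilon}{1+\epsilon}$. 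That is strictly stronger than the claim's hypothesis (e.g.\ $\epsilon=\tfrac13$, $\delta=0.6$). In that range $\tilde e/\delta_k^2\to\infty$, so no choice of $k$ works.

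What rescues the argument is that the binary entropy $h$ is only logarithmically worse than linear. Use $h(x)\le x\log_2(1/x)+2x$ (the right side is increasing in $x$), together with $\tilde e\le (e/a)^k$ and $\tilde s\ge s^k/(2a^k)$. This gives
\[
\frac{h(\tilde e)}{\tilde s}\;\le\;2\rho^{-k}\bigl(k\log_2(a/e)+2\bigr)\;\xrightarrow[k\to\infty]{}\;0,
\]
so the entropy cost, linear in $k$, is absorbed by $\rho^k$. This is where the full, non-quadratic strength of Step~1 is needed.

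Two further points need fixing.

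\emph{Uniformity on $E$.} Your assertion that the block key is uniform when all instances lie in $E$ needs $X$ to be \emph{uniform} conditioned on $E$ (i.e.\ $H(X\mid Z,E)=1$). Definition~\ref{def:secure-random-variable} literally states only $I(X;Z\mid E)=0$, and under that reading the claim is false: $X=Y=Z$ uniform with $E=\{X=0\}$ is $(1,\tfrac12)$-secure, yet Eve knows everything. State explicitly that you use the stronger condition. It is what the hybrid $\mathcal{H}_\ell$ in the proof of Theorem~\ref{thm:wba-to-ka} actually supplies.

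\emph{Efficient decoding.} ``Efficient'' requires an explicit, efficiently decodable code family with redundancy $h(\tilde e)+\gamma$, for a constant $\gamma<\tilde s-h(\tilde e)$, and failure probability $2^{-\Omega(\secp)}$. Forney-style concatenated codes would do. An unspecified or random linear syndrome code does not give polynomial-time decoding.

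If you would rather avoid codes, hashing, and the final statistical-distance-to-event conversion altogether, you can alternate your repetition step with XOR-ing independent instances. For the XOR, ``even number of errors and at least one $E_i$'' is an exact secure event, so every intermediate stage remains a secure random variable in the sense of the definition. Alternating the two steps drives $\frac{1-\epsilon}{1+\epsilon}\to 0$ and $\delta\to 1$ with polynomial overhead whenever $\delta>\frac{1-\epsilon}{1+\epsilon}$.
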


\noindent Next, we state and prove an amplification theorem of weak bit-agreement to key agreement.

\begin{theorem}[Adaptation of Theorem 2.17 from \cite{Holenstein}]\label{thm:wba-to-ka}
    Let $\epsilon(\secp),\delta(\secp) : \bbN \to (0,1)$. If there exists a post-quantum $(\epsilon(\secp),\delta(\secp))$-BA protocol secure against adversaries with non-uniform classical advice, and an efficient information-theoretic key agreement protocol for
$(\epsilon(\secp),1-\delta(\secp)-\tfrac{1}{\poly(\secp)})$-secure random variables, then there exists a $(1,0)$-WBA, and thus a standard (computationally-secure) key agreement protocol.
\end{theorem}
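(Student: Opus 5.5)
The plan follows Holenstein's Theorem 2.17, with the post-quantum hard-core measure (\cref{thm:hardcore-Q3} and its contrapositive \cref{cor:hardcore-contrap}) taking the place of the classical hard-core set lemma.

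\textbf{Construction.} The amplified protocol runs $\ell=\poly(\secp)$ independent sequential copies of the weak BA, giving pairs $(x_i,y_i)$ and transcripts $z_i$. The parties then run the assumed information-theoretic protocol $\Pi$ on $X=(x_i)$, $Y=(y_i)$. Lack of bias and correctness hold because, for each $i$, the triple $(x_i,y_i,z_i)$ agrees with a genuine $(\epsilon,1-\delta-1/\poly)$-secure random variable on the marginal of $(x_i,y_i)$. The guarantee of $\Pi$ is then invoked only on that marginal. Any small bias is absorbed by a standard unbiasing step, so lack of bias and correctness transfer directly.

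\textbf{Security.} Let $D$ be the distribution of $(z,x)$ conditioned on $x=y$. Weak BA security says every efficient quantum predictor succeeds with probability at most $\frac12+\frac{\delta}{2}=1-\frac{1-\delta}{2}$. This matches the hypothesis of \cref{thm:hardcore-Q3} with hardness parameter $1-\delta$.

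The min-entropy condition needs care, since a transcript could have a heavy atom. I will enforce it by having each party append $\secp$ fresh public random bits to the transcript. This does not change security.

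Fix any polynomial size bound $s'$ and any $\gamma=1/\poly$. \cref{thm:hardcore-Q3} gives a measure $M\preceq D$ with $\mu(M)\ge 1-\delta-\gamma/4$ on which all size-$s'$ circuits have advantage at most $\gamma$. Define the event $E$ on one instance by sampling $\mathbf 1_E\sim\mathrm{Bern}(M(z,x)/D(z,x))$ given agreement. Then $\Pr[E\mid x=y]=\mu(M)$, and conditioned on $E$, $x$ is computationally $\gamma$-unpredictable from $z$.

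The ideal triple replaces $x$ (and $y$) by a fresh uniform bit whenever $E$ holds. It is exactly an $(\epsilon,1-\delta-\gamma/4)$-secure random variable, provided the label under $M$ is balanced. I will fix balance as in Case~2 of the proof of \cref{thm:hardcore-Q3}, by reweighting the labels, at the cost of a slightly smaller mass. On the $\ell$ ideal instances, $\Pi$'s output key is statistically hidden from $(T,Z)$.

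\textbf{Hybrid argument.} Suppose a non-uniform adversary $\widetilde E$ predicts $k_A$ with advantage $\eta$ in the real protocol. Switch the instances from real to ideal one at a time. Some instance $i$ must then see a gap of $\eta/\ell$.

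The distinguisher for that hybrid must simulate the other instances, including the hard-core coins on real-side instances, which require the inefficient weights $M(z,x)/D(z,x)$. These samples are independent of instance $i$, so they can be fixed as non-uniform \emph{classical} advice. That is why the theorem is stated for classical-advice adversaries.

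The resulting circuit takes $(z_i,\text{candidate bit})$, runs $\Pi$, and runs $\widetilde E$. It distinguishes whether $x_i$ is the true bit or a random bit on $E$, which yields a predictor of $x_i$ from $z_i$ with advantage about $\eta/\ell$ on $M$. Choosing $\gamma\ll\eta/\ell$ contradicts the hard-core property. Note that the measure $M$ depends on $s'$, which in turn depends on the size of $\widetilde E$; this is fine because $M$ is fixed before the hybrid argument begins.

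\textbf{Main obstacle.} The hardest step is the distinguisher-to-predictor conversion. It has to be carried out conditioned on the soft event $E$, where the label is only defined through a measure. It also requires that the ideal replacement bit be exactly what the secure-random-variable definition demands, jointly with $y$ (which equals $x$ on $E$). I would first try Holenstein's argument verbatim, treating $M/\mu(M)$ as the hard-core distribution, and check that the Yao-style conversion loses only a factor of $2$ because both parties' bits are replaced consistently.
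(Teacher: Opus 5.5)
Your proposal is correct and follows essentially the paper's route. The paper likewise applies the hard-core measure theorem (via its contrapositive, \cref{cor:hardcore-contrap}, with density $1-\delta-\kappa/2$ to absorb the negligible slack in WBA security), samples the flag with probability $M(z,x)/D(z,x)$, hybridizes over coordinates, hard-codes the other coordinates as non-uniform classical advice, and turns the one-step gap into a predictor on $M$. The only real difference is how you get balance: instead of reweighting the labels of $M$ and adding a separate unbiasing step, the paper XORs each coordinate with a public uniform bit $G_i$ \emph{after} the hard-core replacement, so the real pair $(X_i\oplus G_i,Y_i\oplus G_i)$ and the replaced pair $(R_i\oplus G_i,R_i\oplus G_i)$ are identically distributed and exactly unbiased for any $M$, which avoids the small mass and hardness losses your reweighting incurs.
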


\begin{proof}
Let $\ell = \ell(\secp)$ be the number of invocations of the WBA protocol required by the information-theoretic key agreement protocol, and let $n = n(\secp)$ be the number of bits in the transcript of each WBA protocol. Let $D = \{D_\secp\}_{\secp \in \bbN}$ be the distribution over transcripts and output bits $(Z,X) \in \{0,1\}^n \times \{0,1\}$ in the WBA protocol, conditioned on $A$ and $B$ obtaining the same output bit $X$. Let
\[
    \delta_{\sf hc}(\secp)\ :=\ 1-\delta(\secp).
\]
By the theorem's assumption, there exists an inverse-polynomial function $\kappa(\secp)=1/\poly(\secp)$ such that the information-theoretic key agreement protocol works for
\[
    (\epsilon(\secp),\delta_{\sf hc}(\secp)-\kappa(\secp))
\]
-secure random variables. By decreasing $\kappa$ if necessary, we assume that $\kappa(\secp)\le \delta_{\sf hc}(\secp)/2$ for all sufficiently large $\secp$.

Now consider the following key agreement protocol. $A$ and $B$ first use the WBA protocol $\ell$ times to obtain outputs
\[
    X_1,\dots,X_\ell
    \qquad\text{and}\qquad
    Y_1,\dots,Y_\ell
\]
along with transcripts $Z_1,\dots,Z_\ell$. Then, for every $i\in[\ell]$, party $A$ publicly samples and sends a uniformly random bit $G_i\gets\bits$, and the parties define
\[
    \overline X_i := X_i\oplus G_i,\qquad
    \overline Y_i := Y_i\oplus G_i,\qquad
    \overline Z_i := (Z_i,G_i).
\]
They use $\overline X_1,\dots,\overline X_\ell$ and $\overline Y_1,\dots,\overline Y_\ell$ as inputs to the information-theoretic key agreement protocol and run it, producing a transcript $T$. Their outputs $k_A$ and $k_B$ are the outputs of the information-theoretic key agreement protocol.

The public masks preserve correctness, since $\overline X_i=\overline Y_i$ if and only if $X_i=Y_i$. They also preserve the WBA security of each coordinate: any predictor for $\overline X_i=X_i\oplus G_i$ given $(Z_i,G_i)$ can be converted into a predictor for $X_i$ given $Z_i$ by XORing the predictor's output with $G_i$. Suppose for contradiction that there exists a QPT adversary $\{\Adv_\secp\}_{\secp \in \bbN}$ such that
\[
    \Pr\left[
        \Adv_\secp(\overline Z_1,\dots,\overline Z_\ell,T) = k_A
        \mid k_A=k_B
    \right]
    \ge
    \frac12+\nonnegl(\secp).
\]
Define its conditional key-guessing advantage by
\[
    \beta(\secp)
    :=
    \Pr\left[
        \Adv_\secp(\overline Z_1,\dots,\overline Z_\ell,T) = k_A
        \mid k_A=k_B
    \right]
    -
    \frac12.
\]
By assumption, $\beta(\secp)=\nonnegl(\secp)$. Our goal is to obtain a QPT adversary $\{\Adv'_\secp\}_{\secp\in\bbN}$ such that, for infinitely many $\secp$,
\begin{equation}
    \label{eq:adv_assumption}
    \Pr_{(Z,X)\gets D_\secp}
    \left[
        \Adv'_\secp(Z)=X
    \right]
    \ge
    \frac12+\frac{\delta(\secp)}2+\nonnegl(\secp).
\end{equation}

\paragraph{Min-entropy alignment.}
We denote
\[
    \eta(\secp)
    :=
    \max\{
        \Pr_{(Z,X)\gets D_\secp}[Z=z]
        \mid z\in\{0,1\}^{n(\secp)}
    \}.
\]
If it is not the case that $\eta(\secp)\le\negl(\secp)$, we modify $\Pi_{\mathsf{WBA}}$ so that, at the end of the protocol, one of the parties sends a uniformly random bitstring $U\leftarrow\{0,1\}^{\secp}$, which is then ignored. We include $U$ as part of the WBA transcript. Since $U$ is independent of the parties' outputs, this leaves correctness and security unchanged and ensures that
\[
    \max_{(z,x)}D_\secp(z,x)\le \negl(\secp).
\]
Henceforth, we assume without loss of generality that this min-entropy condition holds.

\paragraph{Parameter translation.} The WBA security guarantee can be rewritten as \[ \Pr\bigl[\widetilde E(\tau)=k_A \mid k_A=k_B\bigr] \le \frac12+\frac{\delta(\secp)}2+\negl(\secp) = 1-\frac{\delta_{\sf hc}(\secp)}2+\negl(\secp). \] 
Define \[ \delta_*(\secp) := \delta_{\sf hc}(\secp)-\frac{\kappa(\secp)}2. \]
Earlier, we defined $\kappa(\secp)\le \delta_{\sf hc}(\secp)/2$ for all sufficiently large $\secp$. Hence $\delta_*(\secp)\in(0,1)$ for all sufficiently large $\secp$.

\paragraph{Fixing the hard-core parameter.} Since $\beta(\secp)=\nonnegl(\secp)$, there exists a polynomial $p(\cdot)$ and infinitely many security parameters $\secp$ such that \[ \beta(\secp)\ge \frac1{p(\secp)}. \] In the remainder of the proof, we restrict attention to this infinite set of security parameters. Define \begin{equation} \label{eq:gamma-def} \gamma(\secp) := \min\left\{ \frac{\delta_*(\secp)}8,\, \frac{\kappa(\secp)}4,\, \frac{1}{8\ell(\secp)p(\secp)} \right\}. \end{equation} Then $\gamma(\secp)=\Omega(1/\poly(\secp))$ on this infinite set, and for all sufficiently large $\secp$ we have \[ 0<\gamma(\secp)\le \delta_*(\secp), \] as required in Corollary~\ref{cor:hardcore-contrap}.

\paragraph{Invoking the hard-core contrapositive.} We will apply Corollary~\ref{cor:hardcore-contrap} to the distribution $D_\secp$ over $(Z,X)$ with density parameter $\delta_*(\secp)$ and hard-core parameter $\gamma(\secp)$. It suffices to show that for every measure $M_\secp\preceq D_\secp$ satisfying \begin{equation} \label{eq:measure-mass-threshold} \mu(M_\secp) \ge \delta_*(\secp)-\frac{\gamma(\secp)}4, \end{equation} there exists a non-uniform QPT predictor $\Adv''_\secp$ such that \begin{equation} \label{eq:measure-predictor-goal} \Pr_{(Z,X)\leftarrow M_\secp} \left[ \Adv''_\secp(Z)=X \right] > \frac12+\gamma(\secp), \end{equation} where $(Z,X)\leftarrow M_\secp$ denotes sampling from the normalized distribution $M_\secp/\mu(M_\secp)$. Indeed, once this is shown, Corollary~\ref{cor:hardcore-contrap} yields a non-uniform QPT predictor $\Adv'_\secp$ such that, for infinitely many $\secp$,

\[
    \Pr_{(Z,X)\gets D_\secp}
    \left[
        \Adv'_\secp(Z)=X
    \right]
    >
    1-\frac{\delta_{*}(\secp)}2
    =
    \frac12+\frac{\delta(\secp)}2+\frac{\kappa(\secp)}4.
\]
contradicting WBA security.

\paragraph{A hybrid argument.}
Fix any such measure $M:=M_\secp\preceq D:=D_\secp$, and define
\[
    \rho(z,x)
    :=
    \begin{cases}
        \frac{M(z,x)}{D(z,x)}, & D(z,x)>0,\\
        0, & D(z,x)=0.
    \end{cases}
\]
Since $M\preceq D$, we have $\rho(z,x)\in[0,1]$ for every $(z,x)$. For $i\in\{0,\dots,\ell\}$, define the hybrid $\cH_i$ as follows. First, run the WBA protocol $\ell$ times to produce
\[
    X_1,\dots,X_\ell,\qquad
    Y_1,\dots,Y_\ell,\qquad
    Z_1,\dots,Z_\ell.
\]
Set initially
\[
    (\widehat X_j,\widehat Y_j):=(X_j,Y_j)
    \qquad\text{for every }j\in[\ell].
\]
For every $j\in[\ell]$, if $X_j=Y_j$, sample a flag bit $B_j$ according to
\[
    \Pr[B_j=1\mid X_j,Y_j,Z_j]=\rho(Z_j,X_j).
\]
If $X_j\ne Y_j$, set $B_j:=0$. For every $j\le i$ such that $B_j=1$, replace $   (\widehat X_j,\widehat Y_j)$ by $(R_j,R_j)$,
where $R_j\gets\bits$ is freshly sampled and uniform. After this replacement step, sample independent public bits
\[
    G_1,\dots,G_\ell\gets\bits,
\]
and define
\[
    \overline X_j:=\widehat X_j\oplus G_j,\qquad
    \overline Y_j:=\widehat Y_j\oplus G_j,\qquad
    \overline Z_j:=(Z_j,G_j).
\]
Run the information-theoretic key agreement protocol on
\[
    \overline X_1,\dots,\overline X_\ell
    \qquad\text{and}\qquad
    \overline Y_1,\dots,\overline Y_\ell
\]
to obtain transcript $T$ and output bits $(k_A,k_B)$. Run
\[
    \Adv_\secp(\overline Z_1,\dots,\overline Z_\ell,T)
\]
to obtain $c$. The hybrid outputs $1$ if and only if
\[
    c=k_A
    \qquad\text{and}\qquad
    k_A=k_B.
\]
We first note that the joint distribution of the input bits
\[
    (\overline X_1,\overline Y_1),\dots,(\overline X_\ell,\overline Y_\ell)
\]
to the information-theoretic key agreement protocol is identical in all hybrids. Indeed, the only replacement occurs when $X_j=Y_j$, and after the public mask $G_j$ is applied, both
\[
    (X_j\oplus G_j,Y_j\oplus G_j)
    \qquad\text{and}\qquad
    (R_j\oplus G_j,R_j\oplus G_j)
\]
are distributed as a uniformly random equal pair. Therefore, if
\[
    p_{\sf agr}:=\Pr[k_A=k_B],
\]
then $p_{\sf agr}$ is the same in all hybrids. The hybrid $\cH_0$ is exactly the real attack experiment for the amplified protocol. Hence, by the contradiction assumption,
\begin{equation}
    \label{eq:h0-prob}
    \Pr[\cH_0=1]
    =
    p_{\sf agr}\cdot
    \left(\frac12+\beta(\secp)\right).
\end{equation}

We next argue that in $\cH_\ell$, the variables
\[
    (\overline X_j,\overline Y_j,\overline Z_j)_{j\in[\ell]}
\]
are independent $  (\epsilon(\secp),\delta_{\sf hc}(\secp)-\kappa(\secp))$-secure random variables.

Fix any $j\in[\ell]$. Since $G_j$ is uniform and independent, we have
\[
    \Pr[\overline X_j=0]
    =
    \Pr[\overline X_j=1]
    =
    \Pr[\overline Y_j=0]
    =
    \Pr[\overline Y_j=1]
    =
    \frac12.
\]
Moreover, equality is preserved by the mask, and the replacement step only occurs when $X_j=Y_j$. Therefore
\[
    \Pr[\overline X_j=\overline Y_j]
    =
    \Pr[X_j=Y_j]
    \ge
    \frac12+\frac{\epsilon(\secp)}2-\negl(\secp).
\]
Define the event
\[
    F_j:=\{X_j=Y_j\ \wedge\ B_j=1\}.
\]
Then $F_j$ implies $\overline X_j=\overline Y_j$. Furthermore, because $\overline X_j=\overline Y_j$ if and only if $X_j=Y_j$, and conditioned on $X_j=Y_j$ the pair $(Z_j,X_j)$ is distributed according to $D$, we have
\[
    \Pr[F_j\mid \overline X_j=\overline Y_j]
    =
    \E_{(Z,X)\gets D}[\rho(Z,X)]
    =
    \mu(M).
\]

By \eqref{eq:measure-mass-threshold} and the definitions of $\delta_*$ and $\gamma$, we have \[ \mu(M) \ge \delta_*(\secp)-\frac{\gamma(\secp)}4 = \delta_{\sf hc}(\secp)-\frac{\kappa(\secp)}2-\frac{\gamma(\secp)}4. \] Since $\gamma(\secp)\le \kappa(\secp)/4$, this implies \[ \mu(M) \ge \delta_{\sf hc}(\secp)-\frac{\kappa(\secp)}2-\frac{\kappa(\secp)}{16} \ge \delta_{\sf hc}(\secp)-\kappa(\secp). \] Thus the event $F_j$ has sufficiently large conditional probability to witness that the variables in $\cH_\ell$ are $(\epsilon(\secp),\delta_{\sf hc}(\secp)-\kappa(\secp))$ -secure random variables. Finally,
\[
    I(\overline X_j;\overline Z_j\mid F_j)=0.
\]
Indeed, conditioned on $F_j$, in $\cH_\ell$ we have
\[
    \overline X_j=R_j\oplus G_j,
\]
where $R_j$ is uniform and independent of $(Z_j,G_j)$.

Thus the variables in $\cH_\ell$ satisfy the input condition for the information-theoretic key agreement protocol. Hence the output triple
\[
    (k_A,k_B,(T,\overline Z_1,\dots,\overline Z_\ell))
\]
is an $(1-2^{-\secp},1-2^{-\secp})$-secure random variable. In particular,
\[
    p_{\sf agr}=\Pr[k_A=k_B]\ge 1-\negl(\secp),
\]
and any predictor of $k_A$ from $(T,\overline Z_1,\dots,\overline Z_\ell)$ has success probability at most $1/2+\negl(\secp)$ conditioned on $k_A=k_B$. Therefore
\begin{equation}
    \label{eq:hl-prob}
    \Pr[\cH_\ell=1]
    =
    p_{\sf agr}\cdot
    \left(\frac12+\negl(\secp)\right).
\end{equation}

Combining \eqref{eq:h0-prob} and \eqref{eq:hl-prob}, we obtain
\[
    \tau(\secp)
    :=
    \Pr[\cH_0=1]-\Pr[\cH_\ell=1]
    =
    p_{\sf agr}\cdot
    \left(\beta(\secp)-\negl(\secp)\right).
\]
Since $p_{\sf agr}\ge 1-\negl(\secp)$, for all sufficiently large $\secp$ in the infinite set under consideration,
\begin{equation}
    \label{eq:tau-lower-bound}
    \tau(\secp)\ge \frac{\beta(\secp)}2.
\end{equation}

Therefore, there exists an index $i=i(\secp)\in[\ell]$ such that
\begin{equation}
    \label{eq:one-step-gap}
    \Pr[\cH_{i-1}=1]-\Pr[\cH_i=1]
    \ge
    \alpha(\secp),
    \qquad
    \text{where }
    \alpha(\secp):=\frac{\beta(\secp)}{2\ell(\secp)}.
\end{equation}

Let $W$ denote the tuple of post-randomization values in all coordinates except $i$:
\[
    W
    :=
    \bigl((\overline X_j,\overline Y_j,\overline Z_j)\bigr)_{j\ne i}.
\]
The distribution of $W$ is the same in $\cH_{i-1}$ and $\cH_i$. By averaging, there exists a value $w=w(\secp)$ in the support of $W$ such that
\begin{equation}
    \label{eq:fixed-w-gap}
    \Pr[\cH_{i-1}=1\mid W=w]
    -
    \Pr[\cH_i=1\mid W=w]
    \ge
    \alpha(\secp).
\end{equation}

\paragraph{Defining $\Adv''_\secp$.}
Hard-code $i=i(\secp)$ and $w=w(\secp)$ into $\Adv''_\secp$. Given input a WBA transcript $z$, the predictor $\Adv''_\secp$ operates as follows.

\begin{enumerate}
    \item Sample independent uniform bits $R_i,G_i\gets\bits$, and set
    \[
        \overline x_i=\overline y_i:=R_i\oplus G_i,
        \qquad
        \overline z_i:=(z,G_i).
    \]
    Combine this coordinate with the hard-coded values $w$ in all coordinates $j\ne i$, and run the information-theoretic key agreement protocol to obtain transcript $T$ and outputs $(k_A,k_B)$.

    \item Query
    \[
        \Adv_\secp(\overline z_1,\dots,\overline z_\ell,T)
    \]
    to obtain $c$.

    \item If $c=k_A$ and $k_A=k_B$, output $R_i$. Otherwise, output $1\oplus R_i$.
\end{enumerate}

Let $s_{\Adv}(\secp)$ be an upper bound on the circuit size of $\Adv_\secp$, and let $s_{\sf KA}(\secp)$ upper bound the circuit size needed to execute the efficient information-theoretic key-agreement protocol once on $\ell(\secp)$ input bits, producing $(T,k_A,k_B)$. Both $s_{\Adv}$ and $s_{\sf KA}$ are polynomial in $\secp$. Hard-coding $w,i$ contributes
\[
    O(\ell(\secp)\cdot(n(\secp)+3))
\]
bits of advice and hence $O(\ell(\secp)\cdot n(\secp))$ circuit-size overhead. Therefore the constructed non-uniform predictor $\Adv''_\secp$ can be implemented by a quantum circuit of size at most
\begin{equation}
    \label{eq:adv-circuit-size}
    t(\secp)
    :=
    s_{\Adv}(\secp)+s_{\sf KA}(\secp)+O(\ell(\secp)\cdot n(\secp))+\poly(\secp)
    =
    \poly(\secp).
\end{equation}

Recall
\[
    F_i=\{X_i=Y_i\ \wedge\ B_i=1\}.
\]
By construction, conditioned on $W=w$, the hybrids $\cH_{i-1}$ and $\cH_i$ are identical conditioned on $\neg F_i$. Conditioned on $F_i$, they differ only in the value fed to the information-theoretic protocol in coordinate $i$: in $\cH_{i-1}$ it is $X_i\oplus G_i$, while in $\cH_i$ it is $R_i\oplus G_i$ for an independent uniform $R_i$. Therefore
\[
\begin{aligned}
    &\Pr[\cH_{i-1}=1\mid W=w]
    -
    \Pr[\cH_i=1\mid W=w] \\
    &\quad =
    \Pr[F_i\mid W=w]\cdot
    \Bigl(
        \Pr[\cH_{i-1}=1\mid W=w,F_i]
        -
        \Pr[\cH_i=1\mid W=w,F_i]
    \Bigr).
\end{aligned}
\]
Combining this identity with \eqref{eq:fixed-w-gap}, and using $\Pr[F_i\mid W=w]\le 1$, yields
\begin{equation}
    \label{eq:conditional-gap}
    \Pr[\cH_{i-1}=1\mid W=w,F_i]
    -
    \Pr[\cH_i=1\mid W=w,F_i]
    \ge
    \alpha(\secp).
\end{equation}

Let $G_{\sf out}$ denote the event that the hybrid outputs $1$, namely
\[
    G_{\sf out}:=\{c=k_A\ \wedge\ k_A=k_B\}.
\]
Then \eqref{eq:conditional-gap} says
\[
    \Pr[G_{\sf out}\mid W=w,F_i,\overline X_i=X_i\oplus G_i]
    -
    \Pr[G_{\sf out}\mid W=w,F_i,\overline X_i\gets R_i\oplus G_i]
    \ge
    \alpha(\secp).
\]
Since $R_i$ is uniform and independent of $X_i$, the second term is the average of the two cases
\[
    \overline X_i=X_i\oplus G_i
    \qquad\text{and}\qquad
    \overline X_i=(1-X_i)\oplus G_i.
\]
Hence
\begin{equation}
    \label{eq:G-gap}
\begin{aligned}
    \Delta_G
    &:=
    \Pr[G_{\sf out}\mid W=w,F_i,\overline X_i=X_i\oplus G_i] \\
    &\quad -
    \Pr[G_{\sf out}\mid W=w,F_i,\overline X_i=(1-X_i)\oplus G_i]
    \ge
    2\alpha(\secp).
\end{aligned}
\end{equation}

Conditioned on $F_i$, the pair $(Z_i,X_i)$ is distributed according to the normalized measure $M$. Indeed, for every $(z,x)$ in the support of $D$,
\[
\begin{aligned}
    \Pr[(Z_i,X_i)=(z,x)\mid F_i]
    &=
    \frac{
        D(z,x)\rho(z,x)
    }{
        \E_{(Z,X)\gets D}[\rho(Z,X)]
    } \\
    &=
    \frac{M(z,x)}{\mu(M)}.
\end{aligned}
\]

Therefore, by construction of $\Adv''_\secp$,
\[
\begin{aligned}
    \Pr_{(Z_i,X_i)\leftarrow M}
    \left[
        \Adv''_\secp(Z_i)=X_i
    \right]
    &=
    \Pr[G_{\sf out}\wedge R_i=X_i\mid W=w,F_i] \\
    &\quad+
    \Pr[\neg G_{\sf out}\wedge R_i=1-X_i\mid W=w,F_i] \\
    &=
    \frac12
    \Bigl(
        \Pr[G_{\sf out}\mid W=w,F_i,R_i=X_i] \\
    &\qquad\qquad+
        \Pr[\neg G_{\sf out}\mid W=w,F_i,R_i=1-X_i]
    \Bigr) \\
    &=
    \frac12(1+\Delta_G) \\
    &\ge
    \frac12+\alpha(\secp).
\end{aligned}
\]
On the infinite set under consideration, $\beta(\secp)\ge 1/p(\secp)$, and hence
\[
    \alpha(\secp)
    =
    \frac{\beta(\secp)}{2\ell(\secp)}
    \ge
    \frac{1}{2\ell(\secp)p(\secp)}
    >
    \gamma(\secp),
\]
by \eqref{eq:gamma-def}. Thus
\[
    \Pr_{(Z_i,X_i)\leftarrow M_\secp}
    \left[
        \Adv''_\secp(Z_i)=X_i
    \right]
    >
    \frac12+\gamma(\secp),
\]
which establishes \eqref{eq:measure-predictor-goal}.

Finally, choose the size bound $s(\secp)$ in Corollary~\ref{cor:hardcore-contrap} so that for all sufficiently large $\secp$, \[ s(\secp) \ge q(\secp,1/\varepsilon(\secp))\cdot \Bigl( n(\secp)+ \Big\lceil \frac{4n(\secp)}{\gamma(\secp)^2}\,(t(\secp)+1) \Big\rceil \Bigr) + r(\secp,1/\varepsilon(\secp)), \] where \[ \varepsilon(\secp) = \frac{\delta_*(\secp)\gamma(\secp)}4. \] Then the resulting parameter $s'(\secp)$ from Corollary~\ref{cor:hardcore-contrap} satisfies $s'(\secp)\ge t(\secp)$, so $\Adv''_\secp$ is a valid witness circuit in the premise of the corollary. Therefore Corollary~\ref{cor:hardcore-contrap}, invoked with density parameter $\delta_*(\secp)$ and hard-core parameter $\gamma(\secp)$, gives a predictor $\Adv'_\secp$ such that, for infinitely many $\secp$, \[ \Pr_{(Z,X)\gets D_\secp} \left[ \Adv'_\secp(Z)=X \right] > 1-\frac{\delta_*(\secp)}2. \] By definition of $\delta_*(\secp)$, \[ 1-\frac{\delta_*(\secp)}2 = 1-\frac{\delta_{\sf hc}(\secp)}2+\frac{\kappa(\secp)}4 = \frac12+\frac{\delta(\secp)}2+\frac{\kappa(\secp)}4. \] Since $\kappa(\secp)=1/\poly(\secp)$, this exceeds the WBA security threshold \[ \frac12+\frac{\delta(\secp)}2+\negl(\secp) \] by an inverse-polynomial amount for all sufficiently large $\secp$ in the infinite set under consideration. This contradicts WBA security.

This contradiction proves that the amplified protocol is computationally secure. Its correctness and lack of bias follow from the guarantee of the information-theoretic key agreement protocol, and hence the protocol is a fully secure bit-agreement protocol. This completes the proof.

\end{proof}

We conclude this section by relating our key-agreement amplification result to non-commutation tests, and prove that non-commutation tests in a certain parameter regime imply key agreement. We also conclude that \emph{any} non-commutation test with robust completeness implies key agreement.

\begin{corollary}
\label{cor:ToNc_to_BA_params_epsdelta}
Given any normal-form $(\epsilon,\delta)$-ToNC such that
\[
\epsilon \;>\; \frac{1+4\delta}{5} \;+\; \frac{1}{\poly(\secp)},
\]
there exists classical-communication key agreement.
\end{corollary}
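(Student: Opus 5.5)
The plan is to chain \cref{thm:wba_from_test} with \cref{thm:ba-to-ka-params}. First, \cref{thm:wba_from_test} (via \cref{lem:wba_correctness} and \cref{lem:wba_security}) turns the normal-form $(\epsilon,\delta)$-ToNC into an $(\epsilon,\delta')$-BA with
\[
\delta' \;=\; \frac{1+4\delta-3\epsilon}{1+\epsilon}.
\]
We take the ToNC soundness to hold against adversaries with non-uniform classical advice, which gives a BA secure in that model. This is exactly the hypothesis required by \cref{thm:ba-to-ka-params}.

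Second, we check the amplification condition $\delta' < \frac{2\epsilon}{1+\epsilon}$. Since $1+\epsilon>0$, this is equivalent to
\[
1+4\delta-3\epsilon \;<\; 2\epsilon,
\]
that is, $\epsilon > \frac{1+4\delta}{5}$. The hypothesis gives strictly more than this, namely a $1/\poly(\secp)$ margin. Hence $\delta'$ sits a noticeable amount below the threshold. Equivalently, $1-\delta'$ exceeds $\frac{1-\epsilon}{1+\epsilon}$ by a noticeable amount, since $\frac{2\epsilon}{1+\epsilon} = 1-\frac{1-\epsilon}{1+\epsilon}$. This margin is what is needed to leave room for the $\frac{1}{\poly(\secp)}$ slack in \cref{thm:wba-to-ka}.

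Third, we invoke \cref{thm:ba-to-ka-params}. If $\epsilon,\delta$ are constants, it applies directly. Otherwise we go one level lower:
\begin{itemize}
\item We apply \cref{thm:wba-to-ka} directly.
\item This requires an efficient information-theoretic key agreement protocol for $(\epsilon, 1-\delta' - 1/\poly)$-secure random variables.
\item \cref{claim:info_key_agreement_params} supplies such a protocol whenever $1-\delta'-1/\poly > \frac{1-\epsilon}{1+\epsilon}$, which follows from the margin established in the second step.
\end{itemize}
The output of \cref{thm:wba-to-ka} is a $(1,0)$-BA, that is, classical-communication key agreement.

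The only delicate point is this last step when the parameters are not constants. Holenstein's information-theoretic protocol (\cref{claim:info_key_agreement_params}) is stated for constants. My first attempt would be to treat $\epsilon,\delta$ as constants, as in \cref{thm:informal-KA}, so that the $1/\poly$ term merely makes the inequality strict. If non-constant parameters must be handled, I would argue that the information-theoretic protocol still runs in polynomial time when the gap is inverse-polynomial. The remaining steps are routine algebra.
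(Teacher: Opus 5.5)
Your proposal is correct and matches the paper's proof. The paper also obtains $\delta'=\frac{1+4\delta-3\epsilon}{1+\epsilon}$ from \cref{thm:wba_from_test}. It then applies \cref{thm:wba-to-ka} together with \cref{claim:info_key_agreement_params}, checking that $1-\delta'-\zeta(\secp)>\frac{1-\epsilon}{1+\epsilon}$ for a $1/\poly$ slack $\zeta$, which is exactly your fallback route and what \cref{thm:ba-to-ka-params} packages. The constants-versus-non-constants subtlety you flag is glossed over in the paper too, since it applies the claim directly to the non-constant parameter $1-\delta'-\zeta(\secp)$.
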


\begin{proof}

By \cref{thm:wba_from_test}, a normal-form $(\epsilon,\delta)$-ToNC implies $(\epsilon,\gamma)$-WBA with \[\gamma = \frac{1+4\delta-3\epsilon}{1+\epsilon}= \frac{4(1+\delta)}{1+\epsilon}-3.\]

Next, by Theorem~\ref{thm:wba-to-ka}, it suffices that for some slack
$\zeta(\secp)=1/\poly(\secp)$, there exists an efficient information-theoretic
key-agreement protocol for $(\epsilon,1-\gamma-\zeta(\secp))$-secure random variables.
By Claim~\ref{claim:info_key_agreement_params}, such a protocol exists whenever
\[
1-\gamma-\zeta(\secp)\;>\;\frac{1-\epsilon}{1+\epsilon}.
\]
Substituting $\gamma=\frac{4(1+\delta)}{1+\epsilon}-3$ yields
\[
4-\frac{4(1+\delta)}{1+\epsilon}-\zeta(\secp)\;>\;\frac{1-\epsilon}{1+\epsilon}.
\]
Multiplying by $1+\epsilon$ and rearranging gives
\[
(5-\zeta(\secp))\,\epsilon \;>\; 1+4\delta+\zeta(\secp),
\]
equivalently
\[
\epsilon \;>\; \frac{1+4\delta+\zeta(\secp)}{5-\zeta(\secp)}.
\]
Since $\zeta(\secp)=1/\poly(\secp)$, this is implied by
\[
\epsilon \;>\; \frac{1+4\delta}{5} \;+\; \frac{1}{\poly(\secp)}.
\]
Under this condition we obtain key agreement, completing the proof.
\end{proof}

\begin{corollary}
    Given any normal-form $(\epsilon,\delta)$-ToNC with robust completeness and $\epsilon > \delta + 1/\poly(\secp)$, there exists classical-communication key agreement.
\end{corollary}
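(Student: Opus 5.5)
This corollary follows the same route as \cref{cor:ToNc_to_BA_params_epsdelta}, but uses the robust-completeness half of \cref{thm:wba_from_test} in place of the general bound. First, the ``moreover'' clause of \cref{thm:wba_from_test} turns the normal-form $(\epsilon,\delta)$-ToNC with robust completeness into an $(\epsilon,\gamma)$-BA with $\gamma = 2\delta-\epsilon$. The security is against adversaries with non-uniform classical advice, which is the setting needed for the amplification step. By \cref{remark:advantage} we may assume $\delta\ge\epsilon/2$, so $\gamma\in[0,1)$. The hypothesis $\epsilon>\delta+1/\poly(\secp)$ then gives
\[\gamma = 2\delta-\epsilon < \epsilon - 2/\poly(\secp).\]

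Second, I apply \cref{thm:wba-to-ka}. It suffices to exhibit an efficient information-theoretic key agreement protocol for $(\epsilon,1-\gamma-\zeta(\secp))$-secure random variables, for some slack $\zeta=1/\poly(\secp)$. By \cref{claim:info_key_agreement_params}, this holds whenever
\[1-\gamma-\zeta > \frac{1-\epsilon}{1+\epsilon},\]
which is equivalent to
\[\gamma < \frac{2\epsilon}{1+\epsilon}-\zeta.\]
Since $\epsilon\le 1$, we have $\frac{2\epsilon}{1+\epsilon}\ge\epsilon$. The first step already gives $\gamma<\epsilon-2/\poly(\secp)$, so the condition holds once we choose $\zeta$ to be, say, half the polynomial gap. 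This is Holenstein's tight threshold $\gamma<2\epsilon/(1+\epsilon)$ from \cref{thm:ba-to-ka-params}, with a lot of room to spare.

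The only delicate point is that \cref{claim:info_key_agreement_params} is stated for constants $\epsilon,\delta$, while the gap here is only inverse-polynomial. I would handle this the same way \cref{cor:ToNc_to_BA_params_epsdelta} implicitly does: when $\epsilon,\delta$ are constants with $\epsilon>\delta$, the gap is a constant, and choosing $\zeta$ a small constant puts us inside the claim. Otherwise I would appeal to Holenstein's protocol, whose repetition count is polynomial in the inverse gap. With that settled, \cref{thm:wba-to-ka} yields a $(1,0)$-BA, which is classical-communication key agreement.
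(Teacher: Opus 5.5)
Your proposal is correct and follows the paper's proof essentially verbatim: invoke the robust-completeness clause of \cref{thm:wba_from_test} to get $\gamma = 2\delta-\epsilon < \epsilon - 1/\poly(\secp)$, then combine \cref{claim:info_key_agreement_params} with \cref{thm:wba-to-ka}. Your explicit check fills in arithmetic the paper leaves implicit: the condition $1-\gamma-\zeta > \frac{1-\epsilon}{1+\epsilon}$ reduces to $\gamma+\zeta < \frac{2\epsilon}{1+\epsilon}$, which follows from $\frac{2\epsilon}{1+\epsilon}\ge\epsilon$.
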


\begin{proof}
    By \cref{thm:wba_from_test}, a normal-form $(\epsilon,\delta)$-ToNC with robust completeness implies $(\epsilon,\gamma)$-WBA with $\gamma = 2\delta - \epsilon.$ Therefore, if $\epsilon > \delta + 1/\poly(\secp)$, there exists a $(\epsilon,\gamma)$-WBA with $\gamma < \epsilon-1/\poly(\secp)$. Then by \cref{claim:info_key_agreement_params} and \cref{thm:wba-to-ka}, there exists a standard key agreement protocol.

\end{proof}

\subsection{Post-quantum interactive XOR lemma}\label{subsec:XOR-lemma}

The multiplicative Isolation Lemma---introduced and proved by Levin---is used in his work to prove Yao's XOR lemma~\cite{levinIsolation85}. Halevi and Rabin~\cite[Lemma~3.1]{HR08} extend Levin's proof of Yao's XOR lemma, which relies on the Isolation Lemma, to the setting of classical interactive protocols. In particular, they show that if every computationally bounded adversary can predict the protocol's single secret bit with correlation advantage at most $\varepsilon$, then the $t$-fold sequential XOR-composition amplifies secrecy so that every efficient adversary's advantage drops to roughly $\varepsilon^{t}$ (up to negligible terms).

We extend their result to the post-quantum setting, by proving an analogous theorem which enables amplifying security against computationally-bounded \emph{quantum} adversaries. The main challenge in proving such a theorem, is the reliance of the classical proof on rewinding the adversary, which is non-trivial and at times potentially impossible when the adversary is quantum, due to the no-cloning principle. In order to overcome it, we utilize an approach for quantum rewinding developed by Marriott and Watrous~\cite{marriott2005} and further analyzed by \cite{ChiesaRewinding} (see our Preliminaries section for an introduction to this approach).

\begin{theorem}\label{thm:sequential-rep}
Let $\Gamma_1,\Gamma_2$ be two protocols such that for any $i\in\{1,2\}$, for any QPT adversary $A$ and state $\ket{\psi}$,

\[\E_{\substack{r \gets \{0,1\}^*, \\ b_A, \ \xi \gets \Gamma_i\langle A(\ket{\psi}),V_i(r)\rangle}}\left[b_A \cdot P_i(r, \xi)\right] \leq \delta_i,\] where $V_i$ is the verifier in $\Gamma_i$, $r_i$ are its random bits, $\xi$ is the interaction transcript, $P_i(r, \xi) \in \{-1,+1\}$ is a $\poly(\secp)$-time computable predicate, and $b_A \in \{-1,+1\}$ is the adversary's output.

%, and $\delta_i= 1/\poly(\lambda)$. 

Then for any QPT adversary $A$ and state $\ket{\psi}$,

\[\E_{\substack{r_1,r_2 \gets \{0,1\}^*, \\ \ket{\psi'},\ \xi_1 \gets \Gamma_1\langle A_1(\ket{\psi}),V_1(r_1)\rangle, \\ b_A,\ \xi_2 \gets \Gamma_2\langle A_2(\ket{\psi'}),V_2(r_2)\rangle}}\left[b_A \cdot P_1(r_1,\xi_1) \cdot P_2(r_2,\xi_2)\right] \leq \delta_1\cdot\delta_2 + \negl(\secp),\]

where we view $A$ as a sequential adversary $A=(A_1,A_2)$: first $A_1$ interacts with $V_1(r_1)$ on input $\ket{\psi}$, producing a transcript $\xi_1$ and leaving a residual state $\ket{\psi'}$; then $A_2$, initialized with $\ket{\psi'}$, interacts with $V_2(r_2)$ producing a transcript $\xi_2$ and output bit $b_A$.
\end{theorem}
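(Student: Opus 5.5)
We argue by contradiction, following the outline in the technical overview. Suppose $A=(A_1,A_2)$ with initial state $\ket{\psi}$ achieves $\E[b_A P_1 P_2] \ge \delta_1\delta_2 + \eta$ for some non-negligible $\eta$. We will build an adversary $A'$ against $\Gamma_1$ with advantage larger than $\delta_1$ by a non-negligible amount. Conditioning on the first-stage outcome gives
\[
\E[b_A P_1 P_2] = \E_{r_1,\xi_1}\bigl[P_1(r_1,\xi_1)\cdot \beta(\psi')\bigr], \qquad \beta(\psi') := \E_{r_2}\bigl[b_A P_2 \mid \psi'\bigr].
\]
The hypothesis for $\Gamma_2$ is assumed to hold for every state. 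We treat $\ket{\psi'}$ as quantum advice and also allow sign-flipping of $b_A$. This gives $|\beta(\psi')|\le \delta_2+\negl$ for every residual state $\psi'$. Note that, conditioned on $(r_1,\xi_1)$, the residual state is a fixed state, so this bound applies pointwise.

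\textbf{Step 1: the estimator.} The adversary $A'$ runs $A_1$ against the real $V_1$ to get $\ket{\psi'}$. On its own register it then implements a Marriott--Watrous procedure on $\ket{\psi'}\otimes(\text{fresh register for } r_2)$. The first projector is $\Pi_A$, which projects onto the uniform superposition over $r_2$ in the verifier-randomness register. The second is $\Pi_B = U^\dagger(\text{``}b_AP_2=+1\text{''})U$, where $U$ coherently runs the (purified) interaction of $A_2$ with $V_2(r_2)$ and computes $b_A\cdot P_2(r_2,\xi_2)$. Measuring $\Pi_A$ first projects the state into $\mathrm{im}(\Pi_A)$; since $\Pi_A$ and $\Pi_B$ are swapped relative to the preliminaries, we use the symmetric version of Lemma~4.5. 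Then $T=\poly$ alternating measurements are made, and the algorithm outputs $\hat p=\NReps$ and sets $\alpha := 2\hat p -1$.

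By Corollary~\ref{cor:jordan-param-eig} and Proposition~\ref{prop:mwDist}, $\E[\hat p]=\bra{\psi'}\Pi_A\Pi_B\Pi_A\ket{\psi'}$, so $\E[\alpha] = \beta(\psi')$ exactly. The key point is that the initial state decomposes over Jordan blocks with parameters $p_j$. By the Chernoff bound, $\hat p\in[p_j-\varepsilon,p_j+\varepsilon]$ with probability $1-\eta'$.

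\textbf{Step 2: support of the spectrum.} This is the main obstacle. We must show that every eigenvalue $p_j$ of $\Pi_A\Pi_B\Pi_A|_{\mathrm{im}\Pi_A}$ satisfies $2p_j-1\in[-\delta_2-\negl,\delta_2+\negl]$. Each eigenvector $\ket{v^A_{j,1}}$ has the form $\ket{\phi_j}\otimes\ket{+_{r_2}}$ for some state $\ket{\phi_j}$ of $A_2$'s register. Running $A_2$ on $\ket{\phi_j}$ as quantum advice gives advantage exactly $2p_j-1$. The concern is that $\ket{\phi_j}$ may not be efficiently preparable, but non-uniform quantum advice allows it, and the theorem quantifies over all states $\ket{\psi}$, so the $\Gamma_2$ bound applies. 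The bound for $-b_A$ gives the lower limit. The negligible slack must be uniform over $j$. I expect this to need care: for instance, we argue by contradiction that if some $j$ exceeds the bound by a non-negligible amount, that $\ket{\phi_j}$ is a bad advice state for infinitely many $\secp$.

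\textbf{Step 3: finishing.} We clip $\alpha$ to $[-\delta_2,\delta_2]$. Clipping changes the expectation by at most $\varepsilon+\eta'$ plus the negligible term, since values leave the interval only when the estimate is off. The adversary $A'$ then outputs $b'=+1$ with probability $\frac12(1+\alpha/\delta_2)$. This gives
\[
\E[b'P_1]=\E[P_1\,\alpha]/\delta_2 \ge (\delta_1\delta_2+\eta-\varepsilon-\eta')/\delta_2 > \delta_1
\]
for $\varepsilon,\eta'\ll\eta\delta_2$. This contradicts the $\Gamma_1$ bound. If $\delta_2$ is negligible, the claim instead follows directly, since $\E[P_1\beta]\le\max|\beta|$.

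Efficiency: $T=O(\log(1/\eta')/\varepsilon^2)$ is polynomial when $\eta$ is inverse-polynomial, so $A'$ is QPT given $A$'s advice. The only step that needs real care is Step 2, which needs the security of $\Gamma_2$ to be stated against arbitrary quantum advice.
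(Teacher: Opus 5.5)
Your proposal is correct and follows essentially the same route as the paper's proof: contradiction, the bound $|T(\ket{\psi'})|\le\delta_2$ via sign-flipping, and a Marriott--Watrous estimator built from a clean-start projector and the conjugated match projector $U^\dagger\Pi_{\mathsf{match}}U$. As in the paper, you then confine the Jordan parameters to $[\tfrac12-\tfrac{\delta_2}{2},\tfrac12+\tfrac{\delta_2}{2}]$ using security of $\Gamma_2$ for arbitrary states (the paper bounds the quadratic form $\langle\phi|\Pi_0\Pi_1\Pi_0|\phi\rangle$ over all $|\phi\rangle\in\mathrm{im}(\Pi_0)$ rather than arguing eigenvector by eigenvector), and finish with clipping and a Bernoulli output.

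Two small points. First, your start projector should also fix the verifier's workspace and message ancillas to $|0\rangle$ (the paper takes $\Pi_0=I_X\otimes|0\rangle\langle 0|_Y$ with $Y$ all ancillas), and the $r_2$ register should be prepared directly in the uniform superposition rather than ``measured into'' $\mathrm{im}(\Pi_A)$. Second, the uniformity worry in Step~2 disappears, since the theorem's hypothesis is an exact bound $\le\delta_i$; in any case, using the top eigenvector for each $\secp$ as advice yields a single non-uniform adversary.
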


\begin{proof}
Suppose for contradiction that there exist a QPT adversary $A$ and a quantum state $\ket{\psi}$ such that \[\E_{\substack{r_1,r_2 \gets \{0,1\}^*, \\ \ket{\psi'},\ \xi_1 \gets \Gamma_1\langle A_1(\ket{\psi}),V_1(r_1)\rangle, \\ b_A,\ \xi_2 \gets \Gamma_2\langle A_2(\ket{\psi'}),V_2(r_2)\rangle}}\left[b_A \cdot P_1(r_1,\xi_1) \cdot P_2(r_2,\xi_2)\right] > \delta_1\cdot\delta_2 + \kappa(\secp).\] 

for a function $\kappa = \Omega(1/\poly(\secp))$. We can write this as \begin{equation}\label{eq:expIn}\E_{\substack{r_1 \gets \{0,1\}^* \\ \ket{\psi'}, \ \xi_1 \gets \Gamma_1\langle A_1(\ket{\psi}),V_1(r_1)\rangle}}\left[P_1(r_1,\xi_1) \cdot \E_{\substack{r_2 \gets \{0,1\}^* \\b_A,\ \xi_2 \gets \Gamma_2\langle A_2(\ket{\psi'}),V_2(r_2)\rangle}}\left[b_A \cdot P_2(r_2, \xi_2) \right]\right] > \delta_1\cdot\delta_2 + \kappa(\secp),\end{equation}

Now, for any $\ket{\psi'}$, let \[T(\ket{\psi'}) \coloneqq \E_{\substack{r_2 \gets \{0,1\}^* \\b_A,\ \xi_2 \gets \Gamma_2\langle A_2(\ket{\psi'}),V_2(r_2)\rangle}}\left[b_A \cdot P_2(r_2, \xi_2) \right].\] 

Due to the security guarantee of $\Gamma_2$, for any state $\ket{\psi'}$

\begin{equation} \label{eq:Tbound}
|T(\ket{\psi'})|\le \delta_2.
\end{equation}

Note that if $T(\ket{\psi'})< -\delta_2$ using an adversary $A_2$, we obtain an adversary which breaks the security by flipping the sign of the predictions of $A_2$. This justifies the bound on the \emph{absolute value} of $T(\ket{\psi'})$.

Due to \eqref{eq:expIn}, we have that 
\begin{equation}
\label{eq:advantage}
\E_{\substack{r_1 \gets \{0,1\}^* \\ \ket{\psi'},\ \xi \gets \Gamma_1\langle A_1(\ket{\psi}),V_1(r_1)\rangle}}\left[P_1(r_1, \xi_1) \cdot \frac{T(\ket{\psi'})}{\delta_2} \right] > \delta_1 + \frac{\kappa(\secp)}{\delta_2}.
\end{equation}

To complete the proof, it suffices to give a procedure $B$ that, given $\ket{\psi'}$, outputs a real number $\alpha \in [-\delta_2,\delta_2]$ so that for any $\ket{\psi'}$,
\[\tau(\secp):=\left|\E[B(\ket{\psi'})] - T(\ket{\psi'})\right|\] 

such that 

\begin{equation}
\label{eq:kappa_tau}    
\frac{\kappa-\tau}{\delta_2}=\Omega(\frac{1}{\poly(\lambda)}).\end{equation}
Indeed, given such a $B$, an adversary $A$ could perform the following steps:

\begin{enumerate}
    \item Run $A_1$ in the interaction with $v_1$ and keep the intermediate state $\ket{\psi'}$.
    \item Run $B(\ket{\psi'})$ and obtain $\alpha$.
    \item Output $b_A\in\{1,-1\}$ sampled as a Bernoulli random variable which takes the value $1$ with probability $\frac{\alpha/\delta_2+1}{2}$.
\end{enumerate}
In that case, we would have

\begin{equation}
\begin{aligned}
&\E_{\substack{r \gets \{0,1\}^*, \\ b_A,\ \xi \gets \Gamma_1\langle A ,V_1(r)\rangle}}
  \left[b_A \cdot P(r, \xi)\right] \\
&\quad= \E_{\substack{r \gets \{0,1\}^*, \\ \ket{\psi'},\ \xi \gets \Gamma_1\langle A_1,V_1(r)\rangle}}
  \left[\E_{\substack{\alpha\gets B(\ket{\psi'}),\\ b_A\gets \mathsf{Ber}(\frac{\alpha/\delta_2+1}{2})}}
  [b_A] \cdot P_1(r, \xi)\right] \\
&\quad= \E_{\substack{r_1 \gets \{0,1\}^* \\ \ket{\psi'},\ \xi_1 \gets \Gamma_1\langle A_1,V_1(r_1)\rangle}}
  \left[\frac{\E[B(\ket{\psi'})]}{\delta_2}\cdot P_1(r_1,\xi_1)\right] \\
&\quad= \E_{\substack{r_1 \gets \{0,1\}^* \\ \ket{\psi'},\ \xi_1 \gets \Gamma_1\langle A_1,V_1(r_1)\rangle}}
  \left[\left(\frac{T(\ket{\psi'})}{\delta_2}+\frac{\E[B(\ket{\psi'})]-T(\ket{\psi'})}{\delta_2}\right)\cdot
  P_1(r_1,\xi_1)\right] \\
&\quad\ge \E_{\substack{r_1 \gets \{0,1\}^* \\ \ket{\psi'},\ \xi_1 \gets \Gamma_1\langle A_1,V_1(r_1)\rangle}}
  \left[\frac{T(\ket{\psi'})}{\delta_2}\cdot P_1(r_1,\xi_1)\right]
  \;-\; \frac{1}{\delta_2}\,
  \E_{\substack{r_1 \gets \{0,1\}^* \\ \ket{\psi'},\ \xi_1 \gets \Gamma_1\langle A_1,V_1(r_1)\rangle}}
  \left[\left|\E[B(\ket{\psi'})]-T(\ket{\psi'})\right|\right] \\
&\quad\ge \E_{\substack{r_1 \gets \{0,1\}^* \\ \ket{\psi'},\ \xi_1 \gets \Gamma_1\langle A_1,V_1(r_1)\rangle}}
  \left[\frac{T(\ket{\psi'})}{\delta_2}\cdot P_1(r_1,\xi_1)\right]
  \;-\; \frac{\tau(\secp)}{\delta_2}.
  \end{aligned}
\end{equation}

a contradiction with the hardness of $\Gamma_1$. In the second equality, we rely on the following identity
\[\E_{\substack{\alpha\gets B(\ket{\psi'}),\\ b_A\gets \mathsf{Ber}(\frac{\alpha/\delta_2+1}{2})}}[b_A]=\E_{\substack{\alpha\gets B(\ket{\psi'})}}\left[2\E[\mathsf{Ber}(\frac{\alpha/\delta_2+1}{2})]-1]\right]=\E_{\substack{\alpha\gets B(\ket{\psi'})}}\left[\alpha/\delta_2\right]=\frac{\E[B(\ket{\psi'})]}{\delta_2}\]

Combined with \eqref{eq:advantage},
we get:

\begin{equation}
\E\left[b_A\cdot P_1(r_1,\xi_1)\right]
\;=\;
\E\left[\frac{\E[B(\ket{\psi'})]}{\delta_2}\cdot P_1(r_1,\xi_1)\right]
\;>\;
\delta_1 + \frac{\kappa(\secp)}{\delta_2} - \frac{\tau(\secp)}{\delta_2}=\delta_1+\Omega(\frac{1}{\poly(\lambda)}),
\end{equation}
contradicting the security of $\Gamma_1$. The last equality is due to \eqref{eq:kappa_tau}.

We next define $B$. Let $X$ be the adversary register holding the intermediate state $|\psi'\rangle$ at the start of the
\emph{second} execution. Let $Y$ be an auxiliary register that contains the ancilla qubits to be used by both parties for computations and communication. It includes one designated qubit $Z$
that indicates whether the final outputs match. By purification and deferred measurement, there exists a unitary
\[
U : X\otimes Y \to X\otimes Y
\]
that coherently implements the full second execution of $\Gamma_2$ between $A_2$ and $V_2$.

Define the match projector by
\[
\Pi_{\mathsf{match}} \;:=\; I_{X,Y\setminus Z}\otimes |0\rangle\!\langle 0|_Z,
\]
and define
\[
\Pi_1 \;:=\; U^\dagger \Pi_{\mathsf{match}} U.
\]
Also define the projector onto the clean start state of $Y$:
\[
\Pi_0 \;:=\; I_X \otimes |0\rangle\!\langle 0|_Y.
\]

Fix $\varepsilon,\eta>0$ and let

\[n \;:=\; \left\lceil \frac{\ln(1/2\eta)}{2\varepsilon^2}\right\rceil\].

On input $|\psi'\rangle_X$, algorithm $B$:
\begin{enumerate}
  \item Prepares $Y$ in $|0\rangle_Y$, so the joint state is
        \(
        |\Psi_0\rangle = |\psi'\rangle_X|0\rangle_Y \in \mathrm{im}(\Pi_0).
        \)
  \item Performs $n$ alternating binary projective measurements, starting with $\Pi_1$:
        \[
        (\Pi_1,I-\Pi_1),\;(\Pi_0,I-\Pi_0),\;(\Pi_1,I-\Pi_1),\;(\Pi_0,I-\Pi_0),\ldots
        \]
        Let $b_i\in\{0,1\}$ be the outcome bit of the $i$-th measurement.
  \item Sets
        \[
        \widetilde p := \NReps(1,b_1,\ldots,b_{n}),
        \qquad
        \widetilde\alpha := 2\widetilde p - 1.
        \]
  \item Outputs the clipped value
        \[
        \alpha \;:=\; \max\{-\delta_2,\;\min\{\delta_2,\;\widetilde\alpha\}\}\in[-\delta_2,\delta_2].
        \]
\end{enumerate}

Let $q(\psi') := \Pr[b_A=P(r_2, \xi_2)\,:\, \Gamma_2\langle A_2(|\psi'\rangle),V(r_2)\rangle]$ be the
probability in the second execution, so that
\[
T(|\psi'\rangle)\;=\;\mathbb{E}[b_A\cdot P_2(r_2, \xi_2)]\;=\;2q(\psi')-1.
\]
By Proposition \ref{prop:mwDist} applied with $B=(\Pi_0,\mathbb{I}-\Pi_0)$ and $A=(\Pi_1,\mathbb{I}-\Pi_1)$,
the estimator $\widetilde p$ satisfies $\mathbb{E}[\widetilde p]=q(\psi')$, hence
\[
\mathbb{E}[\widetilde\alpha] \;=\; 2\mathbb{E}[\widetilde p]-1 \;=\; 2q(\psi')-1 \;=\; T(|\psi'\rangle).
\]
Furthermore, the hardness of $\Gamma_2$ implies that all eigenvalues of $\Pi_0\Pi_1\Pi_0$ restricted to $\mathrm{im}(\Pi_0)$
lie in $\big[\,\tfrac12-\tfrac{\delta_2} {2},\tfrac12+\tfrac{\delta_2}{2}\,\big]$. Let us show this.  Fix any unit vector $|\phi\rangle \in \mathrm{im}(\Pi_0)$. Since
$\Pi_0 = I_X \otimes |0\rangle\!\langle 0|_Y$, every such vector is of the form
$|\phi\rangle = |\psi'\rangle_X|0\rangle_Y$ for some unit $|\psi'\rangle_X$, and hence
\[
\langle \phi|\Pi_0\Pi_1\Pi_0|\phi\rangle \;=\; \langle \phi|\Pi_1|\phi\rangle.
\]
By construction of $\Pi_1 := U^\dagger \Pi_{\mathsf{match}} U$, the quantity $q(\psi')$ is equal to $\langle \psi'0|\Pi_1|\psi'0\rangle$. Moreover, due to \eqref{eq:Tbound}, $|T(|\psi'\rangle)|\le \delta_2$, and recall that $T(|\psi'\rangle)=2q(\psi')-1$. Hence, $q(\psi')\in[\tfrac12-\tfrac{\delta_2}{2},\tfrac12+\tfrac{\delta_2}{2}]$.
Therefore, for every unit $|\phi\rangle\in\mathrm{im}(\Pi_0)$,
\[
\tfrac12-\tfrac{\delta_2}{2} \;\le\; \langle \phi|\Pi_0\Pi_1\Pi_0|\phi\rangle \;\le\; \tfrac12+\tfrac{\delta_2}{2},
\]
which implies the stated restriction of the eigenvalues.

Therefore, Corollary \ref{cor:jordan-param-eig} implies that every Jordan parameter $p_j$
lies in this interval, so $2p_j-1\in[-\delta_2,\delta_2]$. By Chernoff's bound,
$\Pr[|\widetilde p-p_j|\le \varepsilon]\ge 1-\eta$ in each Jordan block, hence
$|\widetilde\alpha-(2p_j-1)|\le 2\varepsilon$ except with probability $\eta$.
Therefore, the clipping step changes $\widetilde\alpha$ by at most $2\varepsilon$ on the
good event and by at most $2$ always, yielding the expectation bound
\[
\big|\mathbb{E}[\alpha]-\mathbb{E}[\widetilde\alpha]\big|
\;\le\; 2\varepsilon + 2\eta.
\]
Combining the last two displays,
\[
\tau(\secp):=\big|\mathbb{E}[B(|\psi'\rangle)] - T(|\psi'\rangle)\big|
\;=\;
\big|\mathbb{E}[\alpha]-T(|\psi'\rangle)\big|
\;\le\; 2\varepsilon + 2\eta.
\]
Choosing, for example, $\varepsilon(\secp),\eta(\secp):=\frac{\kappa(\secp)}{8}$, implies $\frac{\kappa-\tau}{\delta_2}=\Omega(\frac{1}{\poly(\secp)})$ as required.
\end{proof}

\begin{remark}
    We stress that our Theorem holds in the non-uniform computational model, where the security of the protocol holds against any QPT adversary with non-uniform \emph{quantum} advice. Levin's Isolation Lemma is also proven in the non-uniform model, and Goldreich et al.\ extend it to the uniform model under an assumption on sampling~\cite{GoldreichOnYao2011}. We leave the question of extending our result to the uniform model for future work.
\end{remark}

\noindent By applying \cref{thm:sequential-rep} repeatedly, we immediately obtain the following corollary.

\begin{corollary}\label{cor:xor-amplification}
Let $\{\Gamma_i\}_{i \in [\ell]}$ be $\ell = \poly(\secp)$ many protocols such that for each $i\in [\ell]$, any QPT adversary $A$, and any state $\ket{\psi}$,

\[\E_{\substack{r \gets \{0,1\}^*, \\ b_A, \ \xi \gets \Gamma_i\langle A(\ket{\psi}),V_i(r)\rangle}}\left[b_A \cdot P_i(r, \xi)\right] \leq \delta_i,\] where $V_i$ is the verifier in $\Gamma_i$, $r$ are its random bits, $\xi$ is the interaction transcript, $P_i(r, \xi) \in \{-1,+1\}$ is a $\poly(\secp)$-time computable predicate, and $b_A \in \{-1,+1\}$ is the adversary's output.

%, and $\delta_i= 1/\poly(\lambda)$. 

Then for any QPT adversary $A$ and state $\ket{\psi}$,

\[\E_{\substack{\{r_i \gets \{0,1\}^*\}_{i \in [\ell]}, \\ \ket{\psi'},\ \xi_1 \gets \Gamma_1\langle A_1(\ket{\psi}),V_1(r_1)\rangle, \\ \vdots \\ b_A,\ \xi_\ell \gets \Gamma_\ell\langle A_\ell(\ket{\psi'}),V_\ell(r_\ell)\rangle}}\left[b_A \cdot P_1(r_1,\xi_1) \cdot \dots \cdot P_\ell(r_\ell,\xi_\ell)\right] \leq \prod_{i \in [\ell]}\delta_i + \negl(\secp),\]

where we view $A$ as a sequential adversary $A=(A_1,\dots,A_\ell)$: first $A_1$ interacts with $V_1(r_1)$ on input $\ket{\psi}$, producing transcript $\xi_1$ and leaving a residual state $\ket{\psi'}$; this repeats $\ell$ times sequentially until $A_\ell$, initialized with $\ket{\psi'}$, interacts with $V_\ell(r_\ell)$ producing transcript $\xi_\ell$ and final output bit $b_A$.

\end{corollary}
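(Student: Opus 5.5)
We will prove the corollary by induction on $\ell$, applying \cref{thm:sequential-rep} once per step. The idea is to view the first $\ell-1$ executions as a single composite protocol $\Gamma_{\le \ell-1}$ whose verifier $V_{\le \ell-1}$ runs $V_1,\dots,V_{\ell-1}$ sequentially with independent randomness $r_{\le \ell-1}=(r_1,\dots,r_{\ell-1})$. Its predicate is $P_{\le \ell-1}(r_{\le\ell-1},\xi_{\le\ell-1}) := \prod_{i<\ell} P_i(r_i,\xi_i)$. This predicate is still $\{-1,+1\}$-valued and $\poly(\secp)$-time computable, since $\ell=\poly(\secp)$.

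The inductive hypothesis says that every QPT adversary (with any quantum advice state $\ket{\psi}$) has correlation at most $\delta_{\le \ell-1} := \prod_{i<\ell}\delta_i + \nu_{\ell-1}(\secp)$ with $P_{\le\ell-1}$, where $\nu_{\ell-1}$ is negligible. This is exactly the hypothesis format of \cref{thm:sequential-rep} with the bound $\delta_{\le\ell-1}$, the negligible slack absorbed into the bound. We then apply \cref{thm:sequential-rep} with $\Gamma_1 := \Gamma_{\le\ell-1}$ and $\Gamma_2 := \Gamma_\ell$. The sequential adversary $(A_1,\dots,A_\ell)$ is regrouped as $(A_{\le\ell-1},A_\ell)$, where $A_{\le \ell-1}$ runs $A_1,\dots,A_{\ell-1}$ and passes its residual state to $A_\ell$. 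This yields the bound
\[
\delta_{\le\ell-1}\cdot\delta_\ell + \negl(\secp) \le \prod_{i\in[\ell]}\delta_i + \nu_{\ell-1}+\negl(\secp),
\]
using $\delta_\ell\le 1$. The base case $\ell=1$ is the hypothesis itself.

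\textbf{Main obstacle: the negligible terms.} Since $\ell=\poly(\secp)$ is not constant, we need the accumulated error $\nu_\ell$ to remain negligible, which requires a uniform bound on the per-step slack. In the proof of \cref{thm:sequential-rep}, the argument is by contradiction against a gap $\kappa=1/\poly$. So we should instead argue the whole corollary by contradiction.

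Suppose some adversary exceeds $\prod_i\delta_i+\kappa$ with $\kappa$ inverse polynomial. Telescoping over $j$, set $D_j := \E[b\cdot\prod_{i\le j}P_i]-\prod_{i\le j}\delta_i$, where $b$ is the optimal guess obtained via the Marriott--Watrous estimator on the suffix. At some index $j$ the excess must increase by at least $\kappa/\ell$. For that $j$, we run the reduction of \cref{thm:sequential-rep} with gap $\kappa/\ell$, still inverse polynomial, which gives a violation of either the inductive bound at $j-1$ or the hypothesis for $\Gamma_j$.

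Concretely, it is simplest to run the induction with explicit slack $j\kappa/\ell$ at step $j$. Inspecting the proof of \cref{thm:sequential-rep}, the reduction loses only an additive $2\varepsilon+2\eta$, which we can set to $\kappa/(8\ell)$ at polynomial cost. Hence the total loss is at most $\kappa$, and any adversary beating $\prod\delta_i+\kappa$ for infinitely many $\secp$ yields a contradiction at some step.

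Finally, we check that the composite adversaries remain QPT with polynomial-size quantum advice. The rewinding in $B$ acts only on $\Gamma_\ell$'s unitary, and composing $\poly(\secp)$ polynomial-time stages is polynomial.
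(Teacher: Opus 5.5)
Your plan (iterate \cref{thm:sequential-rep}) is the same as the paper's, whose entire proof of this corollary is the sentence ``apply the theorem repeatedly.'' You are right that $\ell=\poly(\secp)$ needs care. However, the slack bookkeeping ($j\kappa/\ell$ per level, estimator error $\kappa/(8\ell)$) only handles the additive losses. It misses the real obstruction, which comes from your choice of grouping.

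With $\Gamma_1:=\Gamma_{\le j-1}$ and $\Gamma_2:=\Gamma_j$, each application of the theorem contradicts the \emph{inductive} bound for the composite prefix, not an actual hypothesis. Because the number of levels grows with $\secp$, that bound is not an already-established fact you can cite: induction on $j$ only yields each fixed level, each with its own negligible term. So you must unroll the reductions into one. But the level-$j$ reduction produces an adversary whose last stage is ``$A_{j-1}$, then the Marriott--Watrous estimator for $\Gamma_j$.'' The next reduction then runs the estimator for $\Gamma_{j-1}$ on the unitary of exactly that stage, $n$ times. The estimators are therefore \emph{nested}, not sequentially composed. The adversary that finally contradicts the security of $\Gamma_1$ has size roughly $n^{\ell-1}$ with $n=\Theta(\ell^2\log(\ell/\kappa)/\kappa^2)$, which is superpolynomial for $\ell=\poly(\secp)$. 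Likewise, the eigenvalue bounds at low levels invoke $\Gamma_j$'s hypothesis on stages of superpolynomial size.

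So your final paragraph's efficiency claim is false for the unrolled argument, which is valid only for constant $\ell$. The telescoping sketch with $D_j$ does not repair this: it still appeals to ``the inductive bound at $j-1$,'' and normalizing a suffix estimator presupposes the suffix bound being proved.

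The repair is to group the other way: the single protocol is the one reduced to, and the composite suffix is the one rewound. Fix the adversary and let $M_j:=\max_\phi|T_j(\phi)|$, where $T_j(\phi)$ is the correlation of $b_A$ with $\prod_{i\ge j}P_i$ when the \emph{original} stages $A_j,\dots,A_\ell$ run from state $\phi$. Now run the proof of \cref{thm:sequential-rep} with first protocol $\Gamma_j$ (a maximizing $\phi$ as quantum advice) and second protocol $\Gamma_{j+1}\cdots\Gamma_\ell$.
\begin{itemize}
\item The estimator acts on the unitary of the original $A_{j+1},\dots,A_\ell$, so no nesting occurs.
\item Its Jordan parameters lie in $[\tfrac{1-M_{j+1}}{2},\tfrac{1+M_{j+1}}{2}]$ simply by definition of $M_{j+1}$, and you normalize by $M_{j+1}$ (hardwired).
\item This gives $M_j\le\delta_jM_{j+1}+\kappa/(2\ell)$ via a one-level, polynomial-size reduction to $\Gamma_j$ alone.
\end{itemize}
Unrolling this \emph{numerical} recursion from $M_\ell\le\delta_\ell$ gives $M_1\le\prod_i\delta_i+\kappa/2$. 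Hence any adversary beating $\prod_i\delta_i+\kappa$ breaks some single $\Gamma_{j(\secp)}$ by $\Omega(\kappa/\ell)$ with a QPT reduction. This relies essentially on the hypothesis holding for arbitrary starting states. It also needs the per-protocol bounds to hold uniformly over the index $j(\secp)$, which is the case, e.g., when the $\Gamma_i$ are instances of one protocol, as in the paper's applications.
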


\subsection{Oblivious transfer amplification}\label{subsec:OT-amp}

First, we give an OT amplification theorem that follows fairly immediately from \cref{cor:xor-amplification}.

\begin{theorem}
    Given any $(1,\delta,0)$-OT or $(1,0,\delta)$-OT secure against adversaries with non-uniform quantum advice (\cref{def:standard-OT}) for $\delta \leq 1-1/\poly(\secp)$, there exists standard OT secure against adversaries with non-uniform quantum advice. 
\end{theorem}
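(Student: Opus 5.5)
We handle the $(1,\delta,0)$ case first and then reduce the $(1,0,\delta)$ case to it. Given a $(1,\delta,0)$-OT $\Pi$, define $\Pi^{\ell}$ for $\ell=\secp\cdot p(\secp)$, where $\delta\le 1-1/p(\secp)$. The receiver with choice bit $b$ runs $\ell$ sequential executions of $\Pi$, always using the same choice bit $b$, and obtains $r^{(i)}$ in execution $i$. The sender obtains $(r^{(i)}_0,r^{(i)}_1)$ in execution $i$ and outputs
\[
R_0\coloneqq\bigoplus_i r^{(i)}_0,\qquad R_1\coloneqq\bigoplus_i r^{(i)}_1 .
\]
The receiver outputs $R\coloneqq\bigoplus_i r^{(i)}$.

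\emph{Correctness.} Each execution gives $r^{(i)}=r^{(i)}_b$ except with negligible probability. A union bound over the polynomially many executions then gives $R=R_b$ except with negligible probability.

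\emph{Receiver security.} A malicious sender sees only the $\ell$ sequential executions. A standard hybrid argument switches the choice bit one execution at a time. Each switch costs $\negl(\secp)$ by receiver security of $\Pi$, where the sender's state from earlier executions is absorbed into its non-uniform quantum advice. This is why we need security against quantum advice.

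\emph{Sender security.} Observe that $R_0\oplus R_1=\bigoplus_i (r^{(i)}_0\oplus r^{(i)}_1)$. We apply \cref{cor:xor-amplification} with $\Gamma_i=\Pi$, where the verifier is the honest sender $S$ and
\[
P_i(r,\xi)=(-1)^{r^{(i)}_0\oplus r^{(i)}_1}.
\]
This predicate is a poly-time function of the sender's randomness and the transcript, because an honest $S$ computes its outputs from these. One subtlety: the paper's parties may be quantum. If $S$ is a quantum algorithm, I would use a purified sender and treat its private register as part of the challenger. The Marriott--Watrous argument in \cref{thm:sequential-rep} only needs the match projector to be efficiently implementable, and that still holds.

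Sender security of $\Pi$ says that any QPT receiver with quantum advice achieves correlation
\[
\E\bigl[b_A\cdot P\bigr]\le \delta+\negl(\secp).
\]
Here the negligible term must be folded into $\delta_i$; I would set $\delta_i=\delta+\nu$ for a suitable inverse-polynomial $\nu$ with $\delta+\nu\le 1-1/(2p)$. The corollary then gives advantage at most
\[
(1-1/(2p))^{\ell}+\negl(\secp)\le e^{-\secp/2}+\negl(\secp)=\negl(\secp).
\]
A malicious receiver in $\Pi^{\ell}$ is exactly a sequential adversary $(A_1,\dots,A_\ell)$, so this bound applies to it directly.

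\emph{The $(1,0,\delta)$ case.} We first apply the standard OT reversal: the original receiver acts as sender and the original sender as receiver, using one OT call plus classical messages (Wolf--Wullschleger style). This turns $(1,0,\delta)$-OT into $(1,\delta,0)$-OT while preserving correctness. Its analysis is a simple reduction with no rewinding, so it goes through against quantum adversaries. Then we apply the first case.

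\emph{Main obstacle.} The step I expect to need the most care is the application of \cref{cor:xor-amplification}. That corollary is proved by recursively invoking \cref{thm:sequential-rep}, which loses $\negl$ at each step. The errors must stay negligible over $\ell=\poly(\secp)$ steps. I would either choose inverse-polynomial slack so that the summed errors stay negligible, or group the repetitions so that only $O(\log)$-depth recursion is needed.
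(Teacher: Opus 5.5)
\textbf{Overall assessment.} Your overall route matches the paper's. You reduce the $(1,0,\delta)$ case to $(1,\delta,0)$ by OT reversal, repeat $\ell\approx\secp/(1-\delta)$ times sequentially, and XOR the sender's outputs. Sender security then comes from \cref{cor:xor-amplification} with predicate $r_{i,0}\oplus r_{i,1}$. Two of your remarks go beyond what the paper spells out, and both are fine: purifying a possibly quantum sender inside the Marriott--Watrous projectors, and absorbing the $\negl$ term into $\delta_i$ via inverse-polynomial slack.

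\textbf{The gap: the choice bit is an output, not an input.} In \cref{def:standard-OT} the choice bit is an \emph{output} of the receiver, $(b,r),(r_0,r_1),\tau\gets\langle R(1^\secp),S(1^\secp)\rangle$. Likewise $(r_0,r_1)$ are outputs of the sender, and receiver security is a guessing bound on this random $b$. So ``the receiver runs all $\ell$ executions with the same choice bit $b$'' is not something $R'$ can do, because each run produces its own random $b_i$.

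\textbf{The missing step: derandomization.} This is what the paper does. After the $\ell$ runs, $R$ samples a fresh $b$, sends $b'_i\coloneqq b\oplus b_i$ for every $i$, and outputs $(b,\bigoplus_i r_i)$. The sender outputs $R_0\coloneqq\bigoplus_i r_{i,b'_i}$ and $R_1\coloneqq\bigoplus_i r_{i,1\oplus b'_i}$.

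\textbf{What survives and what needs replacing.} Your sender-security argument survives unchanged. The relabeling gives $R_0\oplus R_1=\bigoplus_i(r_{i,0}\oplus r_{i,1})$, and the $b'_i$ are sent only after all runs, so the corollary applies exactly as you describe. Your receiver-security argument, however, no longer applies, since it switches a chosen input one run at a time. The malicious sender now sees all of $b\oplus b_1,\dots,b\oplus b_\ell$, and hence every $b_i\oplus b_j$. You need a different hybrid:
\begin{enumerate}
\item Replace $b\oplus b_i$ by $b\oplus u_i$, for a fresh uniform $u_i$, one index at a time.
\item Turn any change in the sender's success at guessing $b$ into a predictor for $b_i$. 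The reduction simulates the other runs itself, since it knows their $b_j$, and it knows $b$.
\item Invoke the guessing-form receiver security of the underlying OT for $b_i$.
\end{enumerate}
In the final hybrid all messages are uniform and independent of $b$, so the guessing probability is exactly $1/2$.
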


\begin{proof}
    Note that $(1,0,\delta)$-weak OT implies $(1,\delta,0)$-weak OT by standard OT reversal. So let $S',R'$ be an $(1,\delta,0)$-weak OT. We perform standard sequential repetition to achieve the standard OT:\\
    
    \noindent $\underline{(1,0,0) \ \OT}$
    \begin{itemize}
        \item Let $\ell = \secp/(1-\delta)$. For $i \in [\ell]$, the parties run \[(b_i,r_i),(r_{0,i},r_{1,i}) \gets \langle R'(1^\secp),S'(1^\secp)\rangle.\]
        \item $R$ samples $b \gets \{0,1\}$, sends $\{b \oplus b_i\}_{i \in [\ell]}$ to $S$, and outputs $r_b \coloneqq \bigoplus_{i \in [\ell]} r_i$.
        \item $S$ receives $\{b'_i\}_{i \in [\ell]}$ and outputs \[r_0 \coloneqq \bigoplus_{i \in [\ell]} r_{i,b'_i}, \quad r_1 \coloneqq \bigoplus_{i \in [\ell]} r_{i,1-b'_i}.\]
    \end{itemize}

    Correctness follows by inspection, and receiver security follows from a standard hybrid argument. Sender security follows from \cref{cor:xor-amplification}. Indeed, we fix each of $\Pi_1,\dots,\Pi_\ell$ to be the interactive protocol between honest sender $S'$ and adversarial receiver, where the predicate $P$ outputs the XOR of the sender outputs $r_{0,i} \oplus r_{1,i}$. Then note that $\delta^{\ell} + \negl(\secp) \leq e^{-\secp} =  \negl(\secp)$.

\end{proof}

Next, we amplify our committed-bit OT to the point where we can apply the above theorem. As discussed in \cref{subsec:OT-amp-overview}, our strategy is inspired by a strategy from \cite{10.1007/978-3-540-72540-4_32} in the classical semi-honest case.

\begin{lemma}
    Given any $(\epsilon,\delta,0)$ committed-bit OT (\cref{def:committed-bit}) for constants $\epsilon,\delta$ such that $\delta < \epsilon^2$, there exists a $(1,1/2,0)$ committed-bit OT.
\end{lemma}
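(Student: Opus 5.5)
Following the two-step strategy sketched in \cref{subsec:OT-amp-overview} (adapted from the classical semi-honest amplification of \cite{10.1007/978-3-540-72540-4_32}), I would first boost security at the cost of correctness, and then boost correctness at a controlled cost in security.

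\textbf{Step 1 (security via sequential XOR).} Run $k$ sequential copies of the committed-bit OT. The receiver uses choice bits $b_1,\dots,b_k$ satisfying $\bigoplus_i b_i = b$ (a fresh random sharing of $b$). The sender outputs the XOR-combinations of its bits indexed as in the proof of the previous theorem, so that $r = \bigoplus_i r_i$ is correct exactly when an even number of the coordinates fail. Correctness advantage then becomes $\epsilon^k$. For sender security, apply \cref{cor:xor-amplification} to the sender-security predicate $r_{i,0}\oplus r_{i,1}$, giving advantage at most $\delta^k+\negl(\secp)$. Committedness is preserved: the verification phase opens every coordinate, and $\Ext$ and $\TrapGen$ act coordinatewise. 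Because the sender-security notion in \cref{def:committed-bit} is stated relative to the extracted bit, I expect to invoke \cref{cor:xor-amplification} with the predicate ``guess $r_{1-b}$ given $r_b$'', where $b$ is computed by the (non-uniformly hardwired) trapdoor.

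\textbf{Step 2 (correctness via same-input repetition).} Take $m$ independent copies of the Step 1 protocol, all run with the \emph{same} receiver bit $b$. The receiver recovers $r_b$ by majority vote. For this to work the sender's outputs must also coincide across copies. The sender therefore picks fresh random $(s_0,s_1)$ and, in copy $j$, sends the masks $s_0\oplus r^{(j)}_0$ and $s_1\oplus r^{(j)}_1$. Correctness then fails only when the majority is wrong, which by Chernoff happens with probability $\exp(-\Omega(m\epsilon^{2k}))$.

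Security requires the receiver to use the same $b$ in every copy. I would enforce this by cut-and-choose with random termination, as in Protocol \ref{prot:OT_from_anti_comm_test_classical_sender}. Concretely, run many batches; in every batch except the secretly chosen final one, the receiver runs $\Ver$ to open the choice bit of each copy, and the sender aborts on any inconsistency. By the second condition of committed-bit sender security (binding to $\Ext(\td,\tau)$), the receiver's extracted bits in the final batch then agree except on a small fraction. Given consistent extracted $b$, learning $s_{1-b}$ requires predicting every $r^{(j)}_{1-b}$, so a union bound (or a hybrid) gives advantage at most $m\delta^k$, up to the slack from the inconsistent coordinates.

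\textbf{Parameter choice and main obstacle.} The parameters are set as follows. Choose $m \approx C\lambda'/\epsilon^{2k}$, which makes the security loss $m\delta^k \approx C\lambda'(\delta/\epsilon^2)^k$. Since $\delta<\epsilon^2$, a constant $k$ drives this below $1/2$ for a suitable constant $\lambda'$, while keeping the correctness error below a small constant. Finally, a $(1-\text{const},1/2,0)$ protocol is pushed to correctness $1-\negl$ with polynomial $m$, which is the regime the preceding theorem needs. The hardest step is the same-input enforcement in Step 2. Binding only holds relative to an inefficient trapdoor, and the final, unopened batch must be argued consistent via random termination, mirroring the $P_\init$ non-uniformity trick. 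I would handle this by giving the reduction the trapdoor as quantum advice, aborting when any coordinate is inconsistent, and bounding the probability of inconsistency by $O(1/\#\text{batches})$. Receiver security stays $0$ throughout because each repetition uses fresh receiver commitments and hiding applies coordinatewise.
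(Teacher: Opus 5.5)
Your overall plan matches the paper's, and the security reduction would go through, but the parameter choice has a genuine gap. The shared plan is:
\begin{itemize}
\item XOR $k$ instances of the committed-bit OT, so that \cref{cor:xor-amplification} drives sender security down to $\delta^k$.
\item Send one pair $(s_0,s_1)$ under $m$ such XOR-masks, so the receiver can take a majority.
\item Force a single choice bit by opening all but one secretly chosen batch with $\Ver$.
\item Bound security by a hybrid over the $m$ masks, each step costing $\delta^k$.
\end{itemize}

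\textbf{The parameter gap.} Negligible correctness error needs $m\epsilon^{2k}=\omega(\log\lambda)$, while the hybrid needs $m\delta^k\le 1/2$. Since $m\delta^k=m\epsilon^{2k}(\delta/\epsilon^2)^k$, these two requirements are incompatible for constant $k$. Your fallback is to reach $(1-\mathrm{const},1/2,0)$ first and then push correctness to $1-\negl(\lambda)$ with polynomial $m$. This fails because same-input repetition multiplies leakage. If the only guarantee is advantage $1/2$ in predicting each mask $r^{(j)}_{1-b}$, a receiver achieving this in every copy recovers $s_{1-b}$ by majority vote. So the hybrid bound $m\cdot\tfrac12$ is not merely vacuous; it reflects a real attack.

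The missing step is to let the XOR depth grow with $\lambda$ and do everything in one shot. The paper takes $k=\ell=\log_{\epsilon^2/\delta}(2\lambda)=\Theta(\log\lambda)$ and $m=t=1/(2\delta^k)=\poly(\lambda)$. Then $m\delta^k=1/2$ and $m\epsilon^{2k}=\tfrac12(\epsilon^2/\delta)^k=\lambda$, so the majority errs with probability $e^{-\Omega(\lambda)}$. \cref{cor:xor-amplification} already covers any $\ell=\poly(\lambda)$, so a growing $k$ costs nothing.

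\textbf{Repairs to the security argument.}

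First, Step 1 is not a black-box $(\epsilon^k,\delta^k,0)$ committed-bit OT. The first condition of committed-bit sender security must hold for every receiver, including one whose alignment bits $a_i$ and per-coordinate extracted bits $\hat b_i$ give different implied bits $a_i\oplus\hat b_i$. For such a receiver, only the coordinates agreeing with the composite extracted bit contribute a factor of $\delta$. Consistency must therefore be enforced by the cut-and-choose on all $mk$ underlying instances jointly. The paper does this by running all $t\ell$ instances of a block with a single $b_i$, opening every block except $i^*$, and only then applying the XOR lemma to the $\ell$ instances of each copy $j$ in block $i^*$.

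Relatedly, the final batch must be \emph{fully} consistent, since even a few copies run with $1-b$ leak $s_{1-b}$. The all-or-nothing $O(1/\#\text{batches})$ bound gives this; ``agree except on a small fraction'' does not.

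Second, the opened batches cannot use the real $b$, because $\Ver$ reveals the choice bit. Each batch needs a fresh random $b_i$, and the receiver sends $b\oplus b_{i^*}$ only once the final batch is fixed, as in the paper.

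Also, the previous theorem uses independent $b_i$ together with revealed offsets $b\oplus b_i$, not an XOR-sharing $\bigoplus_i b_i=b$. With a sharing and no offsets, no sender combination lets the receiver learn $r_b$.
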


\begin{proof}
    Let $\Setup',\OT',\Ver',\TrapGen',\Ext'$ be the committed-bit OT. Set $\ell = \log_{\epsilon^2 / \delta}(2\secp)$, and $t = \frac{1}{2\delta^\ell}$. Note that $\ell = O(\secp)$ and $t = \poly(\secp)$. Our protocol between receiver $R$ and sender $S$ is defined as follows.\\

    \noindent\underline{$(1,1/2,0)$ committed-bit OT}

    \begin{itemize}
        \item $\Setup\langle R(1^\secp),S(1^\secp)\rangle$:
        \begin{itemize}
            \item For $i \in [\secp], j \in [t], k \in [\ell]$, the parties run $\init_{R,i,j,k},\tau_{\Setup,i,j,k} \gets \Setup'\langle R'(1^\secp),S'(1^\secp)\rangle$.
            \item Set $\init_R \coloneqq \{\init_{R,i,j,k}\}_{i,j,k}$.
        \end{itemize}

        \item $\OT\langle R(b,\init_R),S(\init_S)\rangle$:
        \begin{itemize}
            \item For $i \in [\secp]$:
            \begin{itemize}
                \item $R$ samples $b_i \gets \{0,1\}$.
                \item For $j \in [t],k \in [\ell]$, the parties run \[(r_{i,j,k},\state_{R,i,j,k}),(r_{i,j,k,0},r_{i,j,k,1},\state_{S,i,j,k}),\tau_{i,j,k} \gets \OT'\langle R'(b_i,\init_{R,i,j,k}),S'(\tau_{\Setup,i,j,k})\rangle.\]
            \end{itemize}
            \item $S$ samples $i^* \gets [\secp]$. For $i \in [\secp] \setminus \{i^*\}$:
            \begin{itemize}
                \item $R$ sends $b_i$.
                \item For $j \in [t]$, $k \in [\ell]$, the parties run \[\Ver'\langle R'(\state_{R,i,j,k}),S'(b_i,\state_{S,i,j,k})\rangle,\] and $S$ aborts if any of the results are $\bot$.
            \end{itemize}
            \item $R$ sends $b' \coloneqq b \oplus b_i$.
            \item $S$ samples $r_0,r_1 \gets \{0,1\}$, sends \[\left\{r'_{j,b'} \coloneqq r_0 \oplus \bigoplus_{k \in [\ell]}r_{i^*,j,k,b'}, \quad r'_{j,1 \oplus b'} \coloneqq r_1 \oplus \bigoplus_{k \in [\ell]}r_{i^*,j,k,1\oplus b'}\right\}_{j \in [t]}\] to $R$ and outputs $r_0,r_1$ and $\state_S \coloneqq b',i^*,\{\state_{S_{i^*,j,k}}\}_{j,k}$.
            \item $R$ outputs
        \[r \coloneqq \maj\left\{r'_{j,b_{i^*}} \oplus \bigoplus_{k \in [\ell]} r_{i^*,j,k}\right\}_{j \in [t]}\] and $\state_R \coloneqq \{\state_{R_{i^*,j,k}}\}_{j,k}$.
        \end{itemize}
        
    \item $\Ver\langle R(\state_R),S(b,\state_S)\rangle$:
    \begin{itemize}
        \item For $j \in [t], k \in [\ell]$, the parties run \[\Ver'\langle R'(\state_{R,i^*,j,k}),S'(b \oplus b',\state_{S,i^*,j,k})\rangle,\] and output $\top$ only if all are $\top$.
    \end{itemize}

    \item $\TrapGen(\tau_\Setup)$: For $i \in [\secp], j \in [t], k \in [\ell]$, run $\td_{i,j,k} \coloneqq \TrapGen'(\tau_{\Setup,i,j,k})$ and output $\td \coloneqq \{\td_{i,j,k}\}_{i,j,k}$.

    \item $\Ext(\td,\tau)$: For $j \in [t], k \in [\ell]$, run $b_{j,k} \coloneqq \Ext'(\td_{i^*,j,k},\tau_{i^*,j,k})$. If there exists $b^*$ such that $b_{j,k} = b^*$ for all $j,k$, then output $b^* \oplus b'$, and otherwise output 0. 
    
    \end{itemize}

     First, we argue correctness. By correctness of $\OT'$, we have that for any fixed $i^*,j,k$,
        \[\Pr[\forall k \in [\ell] \ r_{i^*,j,k} = r_{i^*,j,k,b_i}] \geq \epsilon^\ell - \negl(\secp).\]
      Thus, by a Hoeffding inequality, we have that 
     \[\Pr\left[\maj\left\{\bigoplus_{k \in [\ell]}r_{i^*,j,k} \oplus \bigoplus_{k \in [\ell]}r_{i^*,j,k,b_i}\right\} = 0\right] \geq 1-e^{-2(\epsilon^\ell - \negl(\secp))^2 t} \geq 1-e^{-(\epsilon^2/\delta)^\ell}-\negl(\secp) = 1-\negl(\secp).\] By construction, this event implies that $r = r_{b}$, which completes the proof.

     Next, completeness follows from the completeness of $\OT'$, and receiver security follows from the receiver security of $\OT'$ and a standard hybrid argument. Thus, it remains to argue sender security.

     We define $\Hyb_\iota$ for each $\iota = 0,\dots,t$, as follows. Here, $(\widetilde{R}_\init,R)$ is the adversarial receiver, and we describe their interaction with a challenger running a variant of the honest sender algorithms. To avoid notational clutter, we don't explicitly notate the adversary's state as it evolves throughout this process. \\

     \noindent\underline{$\Hyb_\iota$}

     \begin{itemize}
         \item For $i \in [\secp], j \in [t], k \in [\ell]$, run $\tau_{\Setup,i,j,k} \gets \Setup'\langle \widetilde{R}_\init,S'(1^\secp)\rangle$ and $\td_{i,j,k} \coloneqq \TrapGen'(\tau_{\Setup,i,j,k})$.
        \item For $i \in [\secp], j \in [t], k \in [\ell]$, run $(r_{i,j,k,0},r_{i,j,k,1},\state_{S,i,j,k}),\tau_{i,j,k} \gets \OT'\langle \widetilde{R},S'(\tau_{\Setup,i,j,k})\rangle$ and $b_{i,j,k} \coloneqq \Ext(\td_{i,j,k},\tau_{i,j,k})$.
        \item Sample $i^* \gets [\secp]$. For $i \in [\secp] \setminus \{i^*\}$:
        \begin{itemize}
            \item Receive $b_i$ from $\widetilde{R}$.
            \item For $j \in [t]$, $k \in [\ell]$, run $\Ver'\langle \widetilde{R},S'(b_i,\state_{S,i,j,k})\rangle,$ and abort if any of the results are $\bot$.
        \end{itemize}
        \item Receive $b'$ from $\widetilde{R}$.
        \item For each $j \in [t],k \in [\ell]$, run $b_{j,k} \coloneqq \Ext'(\td_{i^*,j,k},\tau_{i^*,j,k})$. If there does not exist $b^*$ such that $b_{j,k} = b^*$ for all $j,k$, then abort (and the adversary wins with probability exactly 1/2, i.e. 0 advantage). Otherwise, set $b = b' \oplus b^*$.
        \item Sample $r_b \gets \{0,1\}$, set $r_{b,j} \coloneqq r_b$ for all $j \in [t]$, set $r_{1\oplus b,j} \coloneqq 1$ for all $j \leq \iota$, and set $r_{1\oplus b,j} \coloneqq 0$ for all $j > \iota$.
        \item Send \[\left\{r'_{j,b'} \coloneqq r_{b',j} \oplus \bigoplus_{k \in [\ell]}r_{i^*,j,k,b'}, \quad r'_{j,1 \oplus b'} \coloneqq r_{1 \oplus b',j} \oplus \bigoplus_{k \in [\ell]}r_{i^*,j,k,1\oplus b'}\right\}_{j \in [t]}\] and $r_b$ to $R$.
     \end{itemize}

     Now, note that $\Hyb_0$ is the same as the sender security game for $(\widetilde{R}_\init,\widetilde{R})$ conditioned on:
     \begin{itemize}
         \item $r_b = 0$, and 
         \item there exists $b^*$ such that $b_{j,k} = b^*$ for all $j,k$ (and the sender did not abort earlier).
     \end{itemize}

     Moreover, $\Hyb_t$ is the same as the sender security game for $(\widetilde{R}_\init,\widetilde{R})$ conditioned on:
     \begin{itemize}
         \item $r_b = 1$, and 
         \item there exists $b^*$ such that $b_{j,k} = b^*$ for all $j,k$ (and the sender did not abort earlier).
     \end{itemize}

     Note that the event that there does \emph{not} exist $b^*$ such that $b_{j,k} = b^*$ for all $j,k$, but the sender did not abort earlier occurs with probability at most $1/\secp$ due to the sender security (in particular the second part) of $\OT'$. Thus, letting $p_i$ be the probability that $(\widetilde{R}_\init,\widetilde{R})$ outputs 0 in $\Hyb_i$, we have that the receiver's advantage in the sender security game is bounded by $|p_0 - p_t| + o(1)$.

     To complete the proof, we argue that for any $\iota \in [t]$, $|p_{\iota-1} - p_\iota| \leq \delta^\ell + \negl(\secp)$, as this implies that $|p_0 - p_t| \leq t \cdot \delta^\ell + \negl(\secp) \leq 1/2 + \negl(\secp)$. This follows by a reduction to \cref{cor:xor-amplification}. Indeed, we note that the only difference between $\Hyb_{\iota-1}$ and $\Hyb_\iota$ is the bit $r_{1\oplus b,\iota}$, which is masked by $\bigoplus_{k \in [\ell]} r_{i^*,\iota,k,1 \oplus b}$. Thus, it suffices to argue that an appropriate $\ell$-wise sequential repetition of the sender security game of $\OT'$ yields advantage $\delta^\ell + \negl(\secp)$. In particular, it suffices to show that adversary has $\delta^\ell + \negl(\secp)$ advantage in the following game.

     \begin{itemize}
         \item For $k \in [\ell]$, run $\tau_{\Setup,k} \gets \Setup'\langle \widetilde{R}_\init,S'(1^\secp)\rangle$ and $\td_k \gets \TrapGen'(\tau_{\Setup,k})$.
         \item For $k \in [\ell]$, run $(r_{k,0},r_{k,1}),\tau_k \gets \OT'\langle\widetilde{R},S'(\tau_{\Setup,k})\rangle$, $b_k \coloneqq \Ext(\td_k,\tau_k)$, and send $r_{b_k}$ to $\widetilde{R}$.
         \item $\widetilde{R}$ outputs a bit $r^*$ and wins if $r^* = \bigoplus_{k \in [\ell]} r_{1\oplus b_k}$.
     \end{itemize}

     By the sender security of $\OT'$, we know that with $1-\negl(\secp)$ probability over \[\left\{\tau_{\Setup,k} \gets \Setup'\langle \widetilde{R}_\init,S'(1^\secp)\rangle, \quad \td_k \coloneqq \TrapGen'(\tau_{\Setup,k})\right\}_{k \in [\ell]},\] $\widetilde{R}$ has advantage at most $\delta + \negl(\secp)$ in guessing $r_{1\oplus b_k}$ for any fixed $k \in [\ell]$, even given arbitrary inefficiently-computable advice about $\tau_{\Setup,1},\dots,\tau_{\Setup,\ell}$. Thus, it suffices to show that for any such $\tau_{\Setup,1},\dots,\tau_{\Setup,\ell}$, $\widetilde{R}$ has advantage at most $\delta^\ell + \negl(\secp)$ in guessing $\bigoplus_{k \in [\ell]} r_{1\oplus b_k}$. This follows directly from \cref{cor:xor-amplification}, by setting each $\Gamma_k$ to run $(r_{k,0},r_{k,1}),\tau_k \gets \OT'\langle \widetilde{R},S'(\tau_{\Setup,k})\rangle$ and the predicate $P$ to compute $b_k \coloneqq \Ext(\td_k,\tau_k)$ and output $r_{1\oplus b_k}$.

\end{proof}\fi

\ifsubmission
\bibliographystyle{plain}
\else
\bibliographystyle{alpha}
\fi

\newcommand{\etalchar}[1]{$^{#1}$}

\ifsubmission
\newpage
\appendix

\else
\fi

\end{document}